\newtheorem{theorem}{Theorem}[section]
\newtheorem{lemma}[theorem]{Lemma}
\newtheorem{corollary}[theorem]{Corollary}
\newtheorem{definition}[theorem]{Definition}
\newtheorem{remark}[theorem]{Remark}
\newtheorem{claim}[theorem]{Claim}
\def\calT{{\mathcal{T}}}
\newcommand{\BO}{\mathcal{O}}
\newcommand{\MMHphi}{(1-\eps)}
\newcommand{\Ge}{{G_{1-\eps}}}  
\newcommand{\Ga}{{G_{1-2\eps}}}  
\newcommand{\Ger}{{G_{1-\eps}}}  
\newcommand{\Gar}{{G_{1-2\eps}}}  
\newcommand{\Ree}{{\R_{1-\eps}}}  
\newcommand{\Ra}{{\R_{1-2\eps}}}  
\newcommand{\Ee}{{E_{1-\eps}}}  
\newcommand{\R}{R} 		
\newcommand{\Ratio}{\Lambda}
\newcommand{\poly}{\text{poly }}
\newcommand*\polylogf[1]{\text{polylog}\left(#1\right)}
\newcommand*\eps{\varepsilon}
\newcommand*\fack{f_{ack}}
\newcommand*\fprog{f_{prog}}
\newcommand*\fCONS{f_{CONS}}
\newcommand*\fapprog{f_{approg}}
\newcommand*\epsprog{\eps_{prog}}
\newcommand*\epstask{\eps_{task}}
\newcommand*\epsMMB{\eps_{MMB}}
\newcommand*\epsSMB{\eps_{SMB}}
\newcommand*\epsCONS{\eps_{CONS}}
\newcommand*\epsapprog{\eps_{approg}}
\newcommand*\epsack{\eps_{ack}}
\newcommand{\SWn}{\log^*(\Ratio/\epsapprog)}
\newcommand{\SW}{c\SWn}
\def\fb{\textsc{FallBack}}
\def\lp{\textsc{LowPower}}
\def\succ{\textsc{Success}}
\def\Pro{{\mathbb{P}}}
\newcommand{\hPhi}{c\cdot 4^\Phi\cdot\SWn}
\newcommand\footnoteref[1]{\protected@xdef\@thefnmark{\ref{#1}}\@footnotemark}
\newcommand*{\StartNewContent}{%
    \let\OrigLabel\label%
    \let\OrigRef\ref%
    \renewcommand{\label}[1]{\OrigLabel{FULL:##1}}%
    \renewcommand{\ref}[1]{\OrigRef{FULL:##1}}%
    \renewcommand{\label@in@display}[1]{%
        \ifx\df@label\@empty\else
            \@amsmath@err{Multiple \string\label's:
                label '\df@label' will be lost}\@eha
        \fi
        \gdef\df@label{FULL:##1}%
    }%
}
\newif\iffull
\newif\ifshort
\begin{document}

\title{A Local Broadcast Layer for the SINR Network Model
}

\author{
 \makebox[.2\linewidth]{Magn\'us M. Halld\'orsson\thanks{Supported by Icelandic Research Fund grants 120032011 and 152679-051.}}\\
{\small mmh@ru.is}\\
Reykjavik University
\and
 \makebox[.2\linewidth]{Stephan Holzer\thanks{Supported by the following grants: AFOSR Contract Number FA9550-13-1-0042, NSF Award 0939370-CCF, NSF Award CCF-1217506, NSF Award CCF-AF-0937274.}} \\
{\small holzer@csail.mit.edu}\\
MIT
\and
\makebox[.2\linewidth]{Nancy Lynch\footnotemark[2]}\\
{\small lynch@csail.mit.edu}\\
MIT
}

\date{}

\begin{titlepage}
\maketitle
\thispagestyle{empty}

\begin{abstract}
We present the first algorithm that implements an \emph{abstract MAC 
(absMAC)} layer in the \emph{Signal-to-Interference-plus-Noise-Ratio 
(SINR)} wireless network model.
We first prove that efficient SINR implementations are not possible 
for the standard absMAC specification.
We modify that specification to an "approximate" version that better 
suits the SINR model. We give an efficient algorithm to implement the modified 
specification, and use it to derive efficient algorithms for 
higher-level problems of global broadcast and consensus.

In particular, we show that the absMAC \emph{progress} property has no 
efficient implementation in terms of the SINR strong connectivity 
graph $\Ge$, which contains edges between nodes of distance at most 
$(1-\eps)$ times the transmission range, where $\eps>0$ is a small constant that can be chosen by the user.
This progress property bounds the time until a node is guaranteed to 
receive \emph{some} message when at least one of its neighbors is 
transmitting.
To overcome this limitation, we introduce the slightly weaker notion 
of \emph{approximate progress} into the absMAC specification.
We provide a fast implementation of the modified specification, based 
on decomposing the algorithm of~\cite{daum2013broadcast} into local 
and global parts.
We analyze our algorithm in terms of local parameters such as node 
degrees, rather than global parameters such as the overall number of 
nodes.
A key contribution is our demonstration that such a local analysis is 
possible even in the presence of global interference.

Our absMAC algorithm leads to several new, efficient algorithms for 
solving higher-level problems in the SINR model.
Namely, by combining our algorithm with high-level algorithms 
from~\cite{DBLP:journals/adhoc/KhabbazianKKL14}, we obtain an improved (compared to~~\cite{daum2013broadcast}) 
algorithm for global single-message broadcast in the SINR model, and 
the first efficient algorithm for multi-message broadcast in that model.
We also derive the first efficient algorithm for network-wide 
consensus, using a result of~\cite{DBLP:conf/podc/Newport14}.
This work demonstrates that one can develop efficient algorithms for 
solving high-level problems in the SINR model, using graph-based 
algorithms over a local broadcast abstraction layer that hides the 
technicalities of the SINR platform such as global interference.
Our algorithms do not require bounds on the network size, nor the 
ability to measure signal strength, nor carrier sensing, nor 
synchronous wakeup.
\end{abstract}

\end{titlepage}

\fulltrue 

\section{Introduction}

Two active areas in Distributed Computing Theory are the  
attempts to understand wireless network algorithms in the  
\emph{Signal-to-Interference-plus-Noise-Ratio (SINR) model} and  
\emph{abstract Medium Access Control layers (absMAC)}.
\begin{itemize}
\item
The SINR model captures wireless networks in a more precise way than  
traditional graph-based models, taking into account the fact that
signal strength decays according to geometric rules and interference  
and does not simply stop at a certain border.
\item
Abstract MAC layers (a.k.a Local Broadcast Layers), express guarantees  
for local broadcast while hiding the complexities of managing message  
contention. These guarantees include message delivery latency bounds:  
an \emph{acknowledgment bound} on the time for a sender's message to  
be received by all neighbors, and a \emph{progress bound} on the time  
for a receiver to receive some message when at least one neighbor is  
sending.
\end{itemize}
In this paper we combine the strengths of both models by abstracting and modularizing broadcast with respect to global interference and decay via the SINR formula. This marks the start of a systematic study that simplifies the development of algorithms for the SINR model. At the same time we provide an example that modularizing and abstracting broadcast using MAC layers is beneficial and does not necessarily result in worse time-bounds than those of the broadcast algorithm being decomposed.

Traditionally, SINR platforms are quite complicated (compared to  
graph-based platforms), and consequently are very difficult
to use directly for designing and analyzing algorithms for higher-level problems\footnote{We refer by \textit{higher-level problems} to e.g. network-wide broadcast, consensus, or computing fast relaying-routes, max-flow and other problems whose solution requires a good understanding of lower-level problems. Here, we refer by \textit{lower-level problems} to e.g. achieving connectivity, minimizing schedules and capacity maximization, which are better understood by now.}.
We show how absMACs can help to mask their complexity  
and make algorithms easier to design.
This demonstrates the potential power of absMACs with  
respect to algorithm design for the SINR model.
During this process we point out and overcome inherent difficulties  
that at first glance seem to  
separate the MAC layers from the SINR model and other physical models.
These difficulties arise because absMACs are graph-based interference  
models, while physical models capture (global) interference by specific  
signal-propagation formulas.
Overcoming this mismatch is a key difficulty addressed in this work.

We tackle this mismatch by introducing the concept of  
\emph{approximate progress} into the absMAC specification and analysis.
The definition of approximate progress enables us to obtain a good  
implementation of an absMAC, which enables anyone to immediately transform  
generic algorithms designed for an absMAC into algorithms  
for the SINR model. The main observation that inspired the definition of approximate progress is a proof, that no SINR absMAC implementation is able to guarantee fast progress in an SINR-induced graph $G$, while fast progress can be guaranteed with respect to an approximation $\tilde{G}$ of $G$. Roughly speaking, as SINR-induced strong connectivity graphs are defined based on discs representing transmission ranges, we choose $\tilde{G}:=\Ga$ to approximate $G:=\Ge$ by making the disc a tiny bit smaller than in $G$.

This abstraction makes it easier to design algorithms for higher-level problems in the SINR model and has further benefits. One of the most intriguing properties of abstract MAC layers is their separation of global from local computation. This is beneficial in two ways.
On the one hand this separation allows us to expose useful SINR techniques in the simple setting of local broadcast. On the other hand this separation provides the basic structure to perform an analysis based on local parameters, such as the number of nodes in transmission/communication range and the distance-ratios between them, which is beneficial as pointed out in 
\ifshort
the full version of this paper~\cite{halldorsson2015local-arxiv}.
\fi
\iffull
Section~\ref{sec:demo}.
\fi
 Due to this, and the plug-and-play nature of the absMAC theory, we obtain a faster algorithms for global single-message broadcast than~\cite{daum2013broadcast} and fast algorithms for global multi-message broadcast and consensus in the SINR model. To achieve these results, we simply plug our absMAC implementation and bounds into the results of~\cite{DBLP:journals/adhoc/KhabbazianKKL14} and~\cite{DBLP:conf/podc/Newport14}.

\paragraph{Future Benefits of Abstract MAC Layers in the SINR Model.}
Many higher-level problems such as global broadcast, routing and reaching  
consensus are not yet well understood in the  
SINR model
and recently gained more attention~\cite{daum2013broadcast,even2012multi,DBLP:conf/tamc/HobbsWHYL12,jurdzinski2013distributed2,jurdzinski2013distributed,DBLP:conf/sirocco/YuHWTL12,DBLP:conf/infocom/YuHWYL13}.
Many of these problems in the algorithmic SINR can be attacked in a structured way by using and implementing absMACs that hide all complications arising from the SINR model and global interference. Using MAC layers, graph-based algorithms can be analyzed in the SINR model even without knowledge of the SINR model and might still lead to almost optimal algorithms as we demonstrate here.

\section{Contributions and Related Work}\label{sec:con}

We devote large parts of this article to prove theorems on implementing an absMAC in the SINR model and how to modify the absMAC specification to get better results. Based on these theorems we derive results on higher-level problems in the SINR model. 
\ifshort
We provide more details on contributions in 
the full version of this paper~\cite{halldorsson2015local-arxiv}. 
Our model assumptions in the SINR model and absMAC are listed in Section~\ref{sec:modelassumpt} and are adapted from~\cite{daum2013broadcast} and~\cite{DBLP:conf/podc/JurdzinskiKRS14}. Table~\ref{tab:results} summarizes our algorithmic contributions. 
\fi 
\iffull
Table~\ref{tab:results} summarizes our algorithmic contributions. 
In the following $\Ge$ and $\Ga$ denote two versions of strong connectivity SINR-induced graphs. By $\Delta_\Ge$ and $\Delta_\Ga$ we denote their degree and by $D_\Ge$ and $D_\Ga$ their diameter. The network size is denoted by $n$ and the ratio of the minimum distance to the smallest distance between nodes connected by an edge in $\Ge$ is denoted by $\Ratio$. Parameter $\alpha$ denotes the path-loss exponent of the SINR model. We state more detailed definitions in Section~\ref{sec:model}.
\fi
\ifshort \\\\\indent \textbf{Efficient implementation of acknowledgments. }\fi
\iffull\paragraph{Efficient implementation of acknowledgments. } \fi
Theorem~\ref{thm:ack} transfers Algorithm~1 of~\cite{halldorsson2012towards} and its analysis to implement fast acknowledgments of the absMAC and modifies it to use local parameters. 
\iffull
The resulting algorithm performs acknowledgments with probability at least $1-\epsack$ in time $\BO\left(\Delta_{\Ger} \log \left(\frac{\Ratio}{\epsack}\right) \ \ +\ \   \log(\Ratio)\log\left(\frac{\Ratio}{\epsack}\right)\right)$. 
\fi
\ifshort
The full version of this paper~\cite{halldorsson2015local-arxiv} 
\fi
\iffull
Remark~\ref{rem:ackopt}  
\fi
 provides a close lower bound.

\iffull\paragraph{Proof of impossibility of efficient progress. }\fi
\ifshort\textbf{Proof of impossibility of efficient progress. }\fi
Theorem~\ref{thm:fprogLB} shows that one cannot expect an efficient implementation of progress using the standard definition of absMAC. 
\ifshort
In particular one cannot implement an absMAC in the SINR model that achieves progress much faster than acknowledgments.
\fi
\iffull
In particular one cannot implement an absMAC in the SINR model that achieves progress in time $\Delta_\Ge$ or less. This is not much better than our bound on acknowledgments and therefore inefficient. This lower bound is even true when an optimal schedule for transmissions in the network is computed by a central entity that has full knowledge of all node positions and can choose arbitrary transmission powers for each node. In contrast, all algorithms presented here are fully distributed, use uniform transmission power and do not know the positions of nodes.
\fi

\ifshort\textbf{The notion of approximate progress. }\fi
\iffull\paragraph{The notion of approximate progress.}\fi 
Achieving progress faster than acknowledgment is key to several algorithms designed for absMACs. Motivated by the above lower bound, we relax the notion of progress in the specification of an absMAC to \emph{approximate progress}. Definition~\ref{def:appr} introduces approximate progress with respect to an approximation (or some subgraph) of the graph in which local broadcast is performed. Although this new notion of approximate progress is weaker than the usual (single-graph) notion of progress, bounds on approximate progress turn out to be strong enough to yield, e.g., good bounds for global broadcast as long as $G$ is, e.g., connected---see Theorem~\ref{thm:combKuhn}. The introduction of approximate progress is the main conceptual contribution of this article.

\ifshort
\textbf{Efficient implementation of approximate progress.} 
We modify the global single-message broadcast algorithm of~\cite{daum2013broadcast} to guarantee approximate progress in an absMAC (a local multi-message environment). Our modifications make this algorithm suitable for a localized analysis, which bounds the runtime in terms of local parameters and the desired success probability, see Theorem~\ref{thm:approg}. This analysis, which removes the parameter $n$ from the runtime is the main technical contribution of this article and leads to the improved global broadcast algorithms mentioned below.
\fi
\iffull
\paragraph{Efficient implementation of approximate progress. }
We propose an algorithm that implements approximate progress in time $\BO\left(\left(\log^\alpha(\Ratio) + \log^*\left(\frac{1}{\epsapprog}\right)\right)\log(\Ratio)\log\left(\frac{1}{\epsapprog}\right)\right)$ with probability at least $1-\epsapprog$, see Theorem~\ref{thm:approg}. This algorithm is a modification of the global single-message broadcast algorithm of~\cite{daum2013broadcast} to guarantee approximate progress in a local multi-message environment. This also makes this algorithm suitable for a localized analysis, which enables us to bound on approximate progress depending only on local parameters and the desired success probability. A key issue is that transmissions made below the MAC layer to implement its broadcast service might be highly unsuccessful due to being performed randomly and being prone to interference. Although the absMAC implementation is guaranteed to perform approximate progress with arbitrarily high probability guarantee $1-\epsapprog$ (specified by by the user), it is crucial to use very low probability guarantees below the MAC layer. Fast approximate progress for large values of $\epsapprog$ can only be achieved when this is reflected in probability guarantees below the MAC layer (e.g. by avoiding network wide union bounds, as these require w.h.p. guarantees). This is an important step towards the improved bounds on global broadcast stated below. To argue that despite constant success probability of transmissions during our constructions we can still achieve the desired probability guarantee $1-\epsapprog$ for correctness of approximate progress, we 1) argue that global interference from nodes that erroneously participate in the protocol due to previously unsuccessful transmissions does not affect local broadcast much, and 2) bound the local effects of previously unsuccessful transmissions by studying the probability of correct execution of the algorithm in a receiver's neighborhood. This analysis is arguably the main technical contribution of this article.
\fi

\ifshort\textbf{Global consensus, single-message and multi-message broadcast in the SINR model. }\fi
\iffull\paragraph{Global consensus, single-message and multi-message broadcast in the SINR model.}\fi
We immediately derive an algorithm for global consensus (CONS) in Corollary~\ref{cor:cons}  by combining our acknowledgment-bound with a result of~\cite{DBLP:conf/podc/Newport14}. 
\iffull 
CONS can be achieved with probability at least $1-\epsCONS$ in time $\BO\left(D_{\Ge}(\Delta_{\Ge}+\log(\Ratio))\log\left(\frac{n\Ratio}{\epsCONS}\right)\right)$.
\fi
 Section~\ref{sec:comb2} combines our absMAC implementation with results of~\cite{DBLP:journals/adhoc/KhabbazianKKL14} in a straightforward way to derive algorithms for global single-message broadcast (SMB) and global multi-message broadcast (MMB). 
\iffull
Global SMB can be performed in time $\BO\left(\left(D_\Ga\ \ +\ \ \log\left( \frac{n}{\epsSMB}\right)\right)\log^{\alpha+1}(\Ratio)\right)$ with probability at least $1-\epsMMB$.
Global MMB can be performed  with probability at least $1-\epsMMB$ in time $\BO\left(D_\Ga \log^{\alpha+1}(\Ratio) \ \ + \ \ 
k\left(\Delta_{\Ger}+\polylogf{\frac{nk\Ratio}{\epsMMB}}\right)\log \left(\frac{nk}{\epsMMB}\right)\right)$.
\fi
\\
{\begin{table}[ht]
\small
\center{
\iffull 
\renewcommand{\arraystretch}{1.8}
\fi
\begin{tabular}{lcc}
\bottomrule
Task/Bound			& Lower bound & Upper bound presented here \\
\toprule
$\fack$ 				& $\Delta_\Ge$ $^{(+)}$ & $\BO\left(\Delta_{\Ger}\cdot \log \left(\frac{\Ratio}{\epsack}\right) \ \ +\ \   \log(\Ratio)\log\left(\frac{\Ratio}{\epsack}\right)\right)$\\
$\fprog$				& $\Delta_\Ge^{(*)}$ & $\BO\left(\Delta_{\Ger}\cdot \log \left(\frac{\Ratio}{\epsack}\right) \ \ +\ \   \log(\Ratio)\log\left(\frac{\Ratio}{\epsack}\right)\right)$\\
$\fapprog$			&  -- & $\BO\left(\left(\log^\alpha(\Ratio)+ \log^*\left(\frac{1}{\epsapprog}\right)\right)\log(\Ratio)\log\left(\frac{1}{\epsapprog}\right)\right)$ 
\\
global SMB			& $\Omega\Big(D_\Ge\log \left(\frac{n}{D_\Ge}\right)$ & $\BO\left(\left(D_\Ga\ \ +\ \ \log\left( \frac{n}{\epsSMB}\right)\right)\log^{\alpha+1}(\Ratio)\right)^{\text{(\textdagger)}}$ \\
& \ \ \ \ \ \ \ \ \ \ \ \ $ \ \ \ \ +\ \ \log^2 (n)\Big)^{(\ddagger)}$
\\
global MMB			& $\Omega\Big(D_\Ge\log \left(\frac{n}{D_\Ge}\right)$ & $\BO$\footnotesize$\Big(D_\Ga \log^{\alpha+1}(\Ratio) \ \  + \ \ k\Delta_{\Ger}\log \left(\frac{nk}{\epsMMB}\right) $\ \ \ \ 
\\
& $ \ \ \  +\ \  k\log(n) + \log^2 (n)\Big)^{(\ddagger)}$  &  \footnotesize \ \ \ \ \ \  \ \ \  \ \ \ \ \ \ \ \ \ \  \ \ \ \ \ \ \  \ \ \  \ \ \ \ \ \ \ $\ \ + \ \ \polylogf{\frac{nk\Ratio}{\epsMMB}}\Big)
^{\text{(\textdagger)}}$ 
\\
global CONS& -- & $\BO\left(D_{\Ge}(\Delta_{\Ge}+\log(\Ratio))\log\left(\frac{n\Ratio}{\epsCONS}\right)\right)^{\text{(\textdagger)}}$
 \\
\midrule
\end{tabular}
\caption{\small Summary of algorithmic results, see Section~\ref{sec:model} for details on notation. The table compares our new upper bounds to (known and new) lower bounds. Known lower bounds are graph-based and transfer to our setting, as we use weaker assumptions. To compare graph-based lower bounds with our upper bounds, one might choose $\Ratio=n$ to account for possible high degree and choose $\epsSMB=\epsMMB=n^{-c}$ to achieve w.h.p. correctness. (*) Lower bound proven in this paper using absMAC assumptions of~\cite{DBLP:journals/adhoc/KhabbazianKKL14}. (\textdagger) Lower bounds require runtimes of global broadcast to depend on $n$ even though we perform a local analysis. $(\ddagger)$ Combinations of lower bounds of~\cite{ABLP91,DBLP:journals/corr/abs-1302-0264, KM98}
 for graph based models. 
(+) Trivial lower bound 
\ifshort
(the full version of this paper~\cite{halldorsson2015local-arxiv} provides more details.)
\fi
\iffull
(Remark~\ref{rem:ackopt}).
\fi
\label{tab:results}
}
}
\end{table}
}
\iffull
\begin{remark}
All assumptions that we make in the SINR model and in absMACs are listed in Section~\ref{sec:modelassumpt} and are mainly adapted from~\cite{daum2013broadcast} and~\cite{DBLP:conf/podc/JurdzinskiKRS14}. Our SINR-related assumptions are rather weak. We do neither require ability to measure signal strength, nor carrier sensing, nor synchronous wakeup nor knowledge of positions. We do assume, e.g., (arbitrary) bounds on the minimal physical distance between nodes and on the background noise (from which $\Ratio$ can be derived), as well as conditional wakeup.
\end{remark}
\fi

\subsection{Comparison of Algorithmic Results with Previous Work}\label{sec:SMBhighlights}
\ifshort\textbf{\indent Global single-message broadcast. }\fi
\iffull\paragraph{Global single-message broadcast.}\fi
 Table~\ref{tab:2} compares the runtime of our algorithm for global $SMB$ with previous work. Currently~\cite{daum2013broadcast} and~\cite{DBLP:conf/podc/JurdzinskiKRS14} provide the best implementations of global SMB in the SINR model (see the runtimes in Table~\ref{tab:2}). The result of~\cite{daum2013broadcast} is as good or better than~\cite{DBLP:conf/podc/JurdzinskiKRS14} in case $\log^{\alpha+1}(\Ratio)\leq \log (n)$ and vice versa. To make it possible to compare our result to theirs, we need to choose $\epsSMB=1/n^c$ such that global SMB is correct w.h.p.. Furthermore, we execute our algorithm with $\eps':=\eps/2$ instead of $\eps$, while algorithms in previous work are executed without changing $\eps$. This ensures that our bounds are stated in terms of the same parameter $D_\Ge$ rather than the possibly larger parameter $D_\Ga$. At the same time the choice of $\eps'$ affects the runtime only by a constant factor. This results in a runtime of our algorithm of $\BO\left((D_\Ge + \log (n))\log^{\alpha+1}(\Ratio)\right)$ in the strong connectivity graph $\Ge$. This improves over the algorithm presented in~\cite{daum2013broadcast} in the full range of all parameters, and improves in case of $\log^{\alpha+1}(\Ratio)  \leq \min(D_\Ge\log(n) , \log^2 (n))$ over the algorithm of~\cite{DBLP:conf/podc/JurdzinskiKRS14}. Note that compared to~\cite{DBLP:conf/podc/JurdzinskiKRS14} we (and~\cite{daum2013broadcast}) assume knowledge of a bound on $\Ratio$. The key-ingredient of this improvement is our localized analysis in combination with~\cite{DBLP:journals/adhoc/KhabbazianKKL14}.

\begin{table}[ht]
\ifshort\small\fi
\center{
\iffull\renewcommand{\arraystretch}{1.4}\fi
\begin{tabular}{ccc}
\bottomrule
Article			& Runtime bound for global SMB& We improve this runtime in case of\\
\toprule
this & $\BO\left(\left(D_\Ge + \log\left(n\right)\right)\log^{\alpha+1}(\Ratio)\right)
$ & 
\\
\cite{daum2013broadcast} & 
$\BO\left(D_\Ge\log^{\alpha+1}(\Ratio)\log (n)\right)$ &  
all parameters and ranges
\\
\cite{DBLP:conf/podc/JurdzinskiKRS14} & $\BO\left(D_\Ge\log^2(n)\right)$ & 
$\log^{\alpha+1}(\Ratio) \  \leq  \ 
\min(D_\Ge\log(n)\ ,\ \log^2 (n))$
\\
\midrule
\end{tabular}
}
\caption{
\ifshort\small\fi
Comparison of the runtime of our global SMB protocol with previous results.}\label{tab:2}
\end{table}

\ifshort\textbf{Global multi-message broadcast. }\fi
\iffull\paragraph{Global multi-message broadcast.}\fi
The algorithm for global MMB derived from~\cite{halldorsson2012towards} runs in $\BO((D_\Ge+k)(\Delta_\Ge\cdot\log n + \log^2n))$ time. Roughly speaking, our algorithm replaces the dependency on the potentially large multiplicative term $D_\Ge\Delta_\Ge$ by $D_\Ge$ up to polylog factors. 
Section~\ref{sec:rel} summarizes global MMB in related models.

\iffull
\paragraph{Global consensus.}
We are not aware of any previous work in the model we consider.
\fi
\iffull
\subsection{A Demonstration how Algorithms Benefit from Abstract MAC Layers}\label{sec:demo}
When abstract MAC layers were introduced to decompose global broadcast into local and global parts, the original goal was to understanding broadcast better and to achieve a general framework that can be used to state, implement and analyze new algorithms faster and simpler with respect to different models. A downside was that decomposing broadcast by adding a MAC layer might slow down performance. We demonstrate that the absMAC not only help to decompose the SINR-algorithm for global single-message broadcast of~\cite{daum2013broadcast} into a local and global layer, but can be used to improve performance in an organized way when the algorithms of the two layers are modified and put back together. The key insight is, that the MAC layer provides the basic structure for a localized analysis by decomposing broadcast into a local and a global part. We show that a local analysis is indeed possible despite global interference and SINR constraints. To achieve best results, we make our analysis dependent on 1) local parameters such as the degree of a node, and 2) the desired probabilities of success of local broadcast. Combined with the algorithm~\cite{DBLP:journals/adhoc/KhabbazianKKL14} for global single-message and multi-message broadcast (that assumes an absMAC implementation such as ours), this immediately implies improved algorithms as highlighted in Section~\ref{sec:SMBhighlights}.
\fi 

\ifshort
\subsection{Related Work}\label{sec:rel}
We provide more details on related work in 
the full version of this paper~\cite{halldorsson2015local-arxiv}.
\\
\\
\indent\textbf{Graph Based Wireless Networks (Chlamtac et al.~\cite{chlamtac:1985}). } Upper bounds for global SMB~\cite{CR03,KP-PODC03} in networks of unknown topology are tight due to a lower bound of $\Omega (D\log (n/D) + \log^2 n)$ by Alon et al.~\cite{ABLP91,KM98}. The sequence of work~\cite{DBLP:journals/siamcomp/Bar-YehudaII93, GHKpodc2013, DBLP:conf/podc/KhabbazianK11} considered global MMB. Ghaffari et al.~\cite{DBLP:journals/corr/abs-1302-0264} presented a lower bound of $\Omega(k\log n)$ for global broadcast of $k$ messsages. These lower bounds can be transferred to the SINR-model using SINR-induced graphs. 

\textbf{Abstract MAC layer (Kuhn et al.~\cite{DBLP:journals/dc/KuhnLN11}).} The probabilistic absMAC we consider was defined by Khabbazian et al.~\cite{DBLP:journals/adhoc/KhabbazianKKL14}. AbsMAC implementations were provided in~\cite{DBLP:journals/adhoc/KhabbazianKKL14, DBLP:conf/dialm/KhabbazianKLMP11} and applications were provided in~\cite{cornejo2009neighbor,cornejo2014reliable,unreliableTechreport,DBLP:journals/adhoc/KhabbazianKKL14,DBLP:journals/dc/KuhnLN11,DBLP:conf/podc/Newport14}. We use optimal algorithms for global SMB and MMB in the probabilistic absMAC due to~\cite{DBLP:journals/adhoc/KhabbazianKKL14} and results on CONS by Newport~\cite{DBLP:conf/podc/Newport14}.

\textbf{SINR model (e.g. Moscibroda and Wattenhofer~\cite{MoWa06}).} Local broadcast was studied in various models, e.g in~\cite{DBLP:conf/dialm/GoussevskaiaMW08, halldorsson2012towards, DBLP:conf/dcoss/YuHWL12}. We modify the analysis of~\cite{halldorsson2012towards} to use purely local parameters. Global MMB algorithms can be implied once local broadcast is available. Yu et al.~\cite{DBLP:conf/sirocco/YuHWTL12,DBLP:conf/infocom/YuHWYL13} obtained almost optimal bounds using arbitrary power control. Arbitrary power control used in~\cite{DBLP:conf/infocom/YuHWYL13} can yield arbitrary speed ups compared to our model~\cite{DBLP:conf/soda/Kesselheim11, MoWa06} and we get close to their runtime. Global SMB was studied in the sequence of papers~\cite{daum2013broadcast,jurdzinski2013distributed2, DBLP:conf/podc/JurdzinskiKRS14,jurdzinski2013distributed} using various model assumptions. Daum et al.~\cite{daum2013broadcast} proposed a model that uses weak model assumptions, which we use as well. Thanks to a completely new approach they show how to perform global broadcast in $\Ge$ within $\BO(D \log^{\alpha+1}(\Ratio) \log(n))$ rounds w.h.p.. We transfer and modify this algorithm to implement approximate progress in a probabilistic absMAC and provide a significantly extended analysis. Shortly after~\cite{daum2013broadcast}, Jurdzinski et al.~\cite{DBLP:conf/podc/JurdzinskiKRS14} came up with a $\BO(D\log^2 n)$ algorithm that improves over the one of~\cite{daum2013broadcast} for a range of parameters. Table~\ref{tab:2} compares these results to ours. Power control was also used in~\cite{DBLP:conf/podc/BodlaenderHM13} to achieve connectivity and aggregation, which in turn can be used for broadcast as well.
\fi

\iffull
\section{Related Work}\label{sec:rel}
\paragraph{Graph Based Wireless Networks.} This model was introduced by Chlamtac and
Kutten~\cite{chlamtac:1985}, who studied deterministic centralized broadcast. 
\textbf{Global SMB:} For the case where the topology is not known, Bar-Yehuda, Goldreich, and Itai (BGI)~\cite{yehuda} provided a simple, efficient and fully distributed method called Decay for local broadcast. Using this method they perform global SMB in $\BO(D\log n+\log^2n)$ rounds w.h.p.. Later Czumaj and Rytter~\cite{CR03} and Kowalski and Pelc~\cite{KP-PODC03} simultaneously and independently presented an algorithm that performs global SMB in time $\BO(D\log(n/D)+\log^2n)$, w.h.p.. While this sequence of upper bounds was published, a lower bound of $\Omega (\log^2 n)$ was established for constant diameter networks by Alon et al.~\cite{ABLP91} and a lower bound of $\Omega(D\log (n/D))$ was established by Kushilevitz and Mansour~\cite{KM98}. Therefore the upper bounds are tight. In case the topology is unknown but collision detection is available, Ghaffari et al. \cite{GHKpodc2013}, present show how to perform global SMB w.h.p. in time $\BO(D+\log^6n)$. In case the topology is known, a sequence of articles presented increasingly tighter upper bounds~\cite{CW91,GM03,EK05,GPX05,KP07}, where an algorithm for for global SMB in optimal time $\BO(D+\log^2 n)$ was presented by Kowalski and Pelc~\cite{KP07}. 
\textbf{Global MMB:} When collision detection is available, the sequence of work~\cite{DBLP:journals/siamcomp/Bar-YehudaII93, GHKpodc2013, DBLP:conf/podc/KhabbazianK11} led to an $\BO(D+k\log n+\log^2 n)$ round algorithm that performs global broadcast of $k$ messages w.h.p. assuming knowledge of the topology, which is due to Ghaffari et al.~\cite{GHKpodc2013}. When this assumption is removed, the runtime of~\cite{GHKpodc2013} increases slightly to $\BO(D+k\log n+\log^6n)$. Earlier, Ghaffari et al.~\cite{DBLP:journals/corr/abs-1302-0264} showed a lower bound of $\Omega(k\log n)$ for global MMB. 
\textbf{Global Consensus:} Peleg~\cite{DBLP:conf/icdcit/Peleg07} provided a good survey on consensus in wireless networks. Of particular interest is the work of Cholker et al.~\cite{DBLP:conf/podc/ChocklerDGNN05} and~\cite{DBLP:dblp_journals/tpds/AlekeishE12}.
Many of these lower bounds can be transferred to the SINR-model using SINR-induced graphs. We can use upper bounds to benchmark our algorithms.
\paragraph{Abstract MAC layer.} The abstract MAC layer model was recently proposed by Kuhn et al.~\cite{DBLP:journals/dc/KuhnLN11}. This model provides an alternative approach to the various graph-based models mentioned above with the goal of abstracting away low level issues with model uncertainty. The probabilistic abstract MAC layer was defined by Khabbazian et al.~\cite{DBLP:journals/adhoc/KhabbazianKKL14}. \textbf{Implementations of absMACs:} Basic implementations of a probabilistic absMAC were provided by Khabbazian et. al~\cite{DBLP:journals/adhoc/KhabbazianKKL14} using Decay, and by~\cite{DBLP:conf/dialm/KhabbazianKLMP11} using Analog Network Coding. \textbf{Applications of absMACs:} The first to study an advanced problem using the absMAC of~\cite{DBLP:journals/dc/KuhnLN11} were Cornejo et al.~\cite{cornejo2009neighbor,cornejo2014reliable}, who investigated neighbor discovery in a mobile ad hoc network environment. Global SMB and MMB broadcast were studied by~\cite{DBLP:journals/adhoc/KhabbazianKKL14} in probabilistic environments and by Ghaffari et al.~\cite{unreliableTechreport} in the presence of unreliable links. Newport~\cite{DBLP:conf/podc/Newport14} showed how to achieve fast consensus using absMAC implementations. Our paper makes applies the results of~\cite{DBLP:journals/adhoc/KhabbazianKKL14,DBLP:conf/podc/Newport14}.

\paragraph{SINR model.} Moscibroda and Wattenhofer~\cite{MoWa06} were the first to study worst-case analysis in the SINR model. They pointed the algorithmic and distributed computing community to this model that was studied by engineers for decades.\textbf{ Local broadcast:} Short time after this, Goussevskaia et al.~\cite{DBLP:conf/dialm/GoussevskaiaMW08} presented two randomized distributed protocols for local broadcast assuming uniform transmission power and asynchronous wakeup.
This was improved simultaneously and independently by Yu et al.~\cite{DBLP:conf/dcoss/YuHWL12} and Halldorsson and Mitra~\cite{halldorsson2012towards} by obtaining similar bounds while using weaker model assumptions that are similar to those assumptions that we use. Both stated an algorithm for local broadcast in $\BO(N_x\cdot\log(n) + \log^2(n))$, where $N_x$ is the contention in the transmission range of node $x$. In this paper we transform the latter result to be part of an implementation of a probabilistic absMAC that yields fast acknowledgments. We modify the analysis of~\cite{halldorsson2012towards} to use purely local parameters.
\textbf{Global MMB:} The above algorithms for local broadcast immediately imply algorithms with runtime $\BO((D_\Ge+k)(\Delta_\Ge\cdot\log(n+k) + \log^2(n+k)))$ for global MMB of $k$ messages w.h.p.. Scheideler et. al~\cite{DBLP:conf/mobihoc/ScheidelerRS08} consider a model with synchronous wakeup, uniform power and physical carrier sensing (that allows to differentiate signal strength corresponding to two thresholds). In this model they provide a randomized distributed algorithm that computes a constant density dominating set w.h.p. in $\BO(\log n)$ rounds. Such a sparsified set can be used to speed up global MMB by replacing the dependency on $\Delta_\Ge$ in the formula above by $\log n$. Yu et al.~\cite{DBLP:conf/sirocco/YuHWTL12,DBLP:conf/infocom/YuHWYL13} obtain almost optimal bounds using arbitrary power control. For a large range of the parameters their runtimes are better than the runtime of the algorithm that we provide. However, we point out that arbitrary power control is known to be almost arbitrarily more powerful for some problems than the uniform power restriction that we use~\cite{DBLP:conf/soda/Kesselheim11, MoWa06} such that we do not use this result as a benchmark. Power control was also used in~\cite{DBLP:conf/podc/BodlaenderHM13} to achieve connectivity and aggregation, which in turn can be used for broadcast as well.
\textbf{Global SMB:}  This problem recently caught increased attention and was studied in a sequence of papers~\cite{daum2013broadcast,jurdzinski2013distributed2, DBLP:conf/podc/JurdzinskiKRS14,jurdzinski2013distributed} using strong connectivity graphs $\Ge$. Jurdzinski, Kowalski et al. \cite{jurdzinski2013distributed2,jurdzinski2013distributed} considered a setting where nodes know their own positions. In~\cite{jurdzinski2013distributed2} they were able to present a distributed protocol that completes global broadcasts in the near-optimal time $\BO(D+\log (1/\delta))$ with probability at least $1-\delta$. In~\cite{jurdzinski2013distributed} they perform broadcasts within $\BO(D\log^2n)$ rounds.
Daum et al.~\cite{daum2013broadcast} propose a model that avoids the rather strong assumption that node's locations are known and does not use carrier sensing. However, they assume polynomial bounds on $n$ and $\Ratio$. Thanks to a completely new approach they show how to still perform global broadcast in $\Ge$ within $\BO(D \log^{\alpha+1}(\Ratio) \log(n))$ rounds w.h.p. using this weaker model. Their algorithm is based on a new definition of probabilistic SINR induced graphs combined with an iterative sparsification technique via MIS computation. We transfer and modify this algorithm to implement approximate progress in a probabilistic absMAC and provide a significantly extended analysis. Shortly after that, Jurdzinski et al.~\cite{DBLP:conf/podc/JurdzinskiKRS14} came up with a $\BO(D\log(n) + \log^2 n)$ algorithm that w.h.p. performs global broadcast independent of knowing $\Ratio$. However, to achieve this runtime they assume all nodes are awake and start the protocol at the same time. When assuming conditional wakeup, as~\cite{daum2013broadcast} and we do, their algorithm still requires only $\BO(D\log(n) + \log^2 n)$ rounds. Table~\ref{tab:2} compares these results to ours.
\textbf{Further work:} During the last years significant progress was made on lower-level problems that might provide useful tools for absMAC design, such as connectivity~\cite{PODC12}, minimizing schedules~\cite{HM11a}, and capacity maximization~\cite{halldorsson2013power,DBLP:conf/soda/Kesselheim11}. 
\fi

\section{Model and Definitions}\label{sec:model}

\iffull
We begin by defining basic notation for graphs, which we use 
throughout the paper.  Although the SINR model is not graph-based, we 
derive graphs from SINR models using reception zones. Abstract MAC 
layers are defined explicitly in terms of graphs. We continue by by describing the computational devices we use and recalling definitions of the SINR model, abstract MAC layers and global broadcast problems.
\fi

\ifshort\paragraph{Graphs and their properties. }\fi 
\iffull\subsection{Graphs and their Properties}\fi 
Let $G=(V,E)$ be a graph over $n$ nodes $V$ and edges $E$. We denote by $d_G(v,w)$ the hop-distance between $w$ and $v$ (the number of edges on a shortest $(u,v)$-path), and by $D_G:=\max_{u,v\in V} d_G(u,v)$ the \textit{diameter} of graph $G$.
All neighbors of $v$ in $G$ are called \textit{$G$-neighbors} of $v$. We 
\ifshort 
define $v$'s neighborhood to be 
\fi
\iffull
denote the direct neighborhood of $v$ in $G$ by $N_G(v)$. This includes $v$ itself. 
More formally we define
\fi
$N_G(v):=\{u | (v,u)\in E\}$ and extend this to $N_{G,r}(v):=\{u | d_G(v,u)\leq r\}$ for the $r$-neighborhood, $r\in\mathbb{N}$. For any set $W\subseteq V$ we generalize this to $N_{G,r}(W):=\bigcup_{w\in W} N_{G,r}(w)$. 
\ifshort
$\delta_G(v):=|N_G(v)\setminus\{v\}|$ denotes the degree of $v$ and $\Delta_G:=\max_{v\in V}\delta_G(v)$ the degree of $G$.
\fi
\iffull
The degree $\delta_G(v)$ of a node is the number of its (direct) neighbors in $G$, formally $\delta_G(v):=|N_G(v)|\setminus\{v\}$. We denote the maximum node degree of $G$ by $\Delta_G:=\max_{v\in V}\delta_G(v)$.
\fi
Let $S\subseteq V$ be a subset of $G$'s vertices, then $G|_S=(S,E|_S)$ denotes the subgraph of $G$ induced by nodes $S$, where $E|_S :=\{(u,v)\in E | u,v\in S\}$. A set $S\subseteq S'\subseteq V$ is called a \textit{maximal independent set (MIS)} of $S'$ in $G$ if 1) any two nodes $u,v\in S$ are independent, that is $(u,v)\notin E$, and 2) any node $v\in S'$ is covered by some neighbor in $S$, that is $N_G(v)\cap S\neq \emptyset$. 
\ifshort
A graph $G=(V,E)$ is \emph{(polynomial) growth-bounded} if there is a polynomial bounding function $f(r)$ such that for each node $v\in V$, the number of nodes in the neighborhood $N_{G,r}(v)$ that are in any independent set of $G$ is at most $f(r)$ for all $r\geq 0$. This allows us to bound the size of neighborhoods depending on the maximal degree of the network. When performing a localized analysis this yields union-bounds depending on the maximal degree, rather than the size of the network.
\fi
\iffull
\begin{definition}[Growth bounded graphs]\label{def:growth}
A graph $G=(V,E)$ is \emph{(polynomial) growth-bounded} if there is a polynomial bounding function $f(r)$ such that for each node $v\in V$, the number of nodes in the neighborhood $N_{G,r}(v)$ that are in any independent set of $G$ is at most $f(r)$ for all $r\geq 0$. 
\end{definition}
\fi
\begin{lemma} Let $G$ be polynomially growth-bounded by function $f$, then it holds that $|N_{G,r}(v)|\leq \Delta f(r)$ for all $v\in V$ and $r\in\mathbb{N}$.
\end{lemma}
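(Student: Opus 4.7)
The plan is to combine the growth-bound hypothesis with a standard maximal-independent-set domination argument inside the $r$-neighborhood of $v$.

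First I would fix $v\in V$ and $r\in\mathbb{N}$ and let $I$ be a maximal independent set of the induced subgraph $G|_{N_{G,r}(v)}$. The key observation is that $I$ is also an independent set of the ambient graph $G$: independence is preserved when passing to an induced subgraph, and conversely any two non-adjacent vertices of $G|_{N_{G,r}(v)}$ are non-adjacent in $G$. Since $I\subseteq N_{G,r}(v)$, the growth-bound hypothesis (applied at the center $v$ and radius $r$) yields
\[
  |I| \;=\; |I\cap N_{G,r}(v)| \;\le\; f(r).
\]

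Next I would use the maximality of $I$ inside $G|_{N_{G,r}(v)}$ to dominate the whole $r$-neighborhood. For any $w\in N_{G,r}(v)$, either $w\in I$ or else $w$ has a $G$-neighbor in $I$ (otherwise $I\cup\{w\}$ would contradict maximality). Hence
\[
  N_{G,r}(v) \;\subseteq\; \bigcup_{u\in I} N_G(u).
\]
Since $|N_G(u)|\le \Delta+1$ for every $u$ (or $\le \Delta$ under the convention that excludes $u$ from its own neighborhood), combining this with the bound on $|I|$ gives $|N_{G,r}(v)|\le (\Delta+1)\,f(r)$, which is the claimed estimate (absorbing the additive constant, or by counting the $|I|$ center vertices separately from their $\delta_G(u)\le\Delta$ outside neighbors).

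The only mild subtlety, which I view as the main conceptual step rather than a true obstacle, is noticing that Definition~\ref{def:growth} quantifies over independent sets of the ambient graph $G$, not of the induced subgraph $G|_{N_{G,r}(v)}$; the crucial point is that an MIS of the induced subgraph still qualifies as an independent set of $G$ so the hypothesis can be invoked. Everything else is routine: one independent set bounded by $f(r)$ dominates $N_{G,r}(v)$, and each dominator contributes at most $\Delta$ new vertices.
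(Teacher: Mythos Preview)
Your proposal is correct and follows essentially the same approach as the paper: take a maximal independent set of the induced subgraph on $N_{G,r}(v)$, note it is independent in $G$ so has size at most $f(r)$, then use that it dominates $N_{G,r}(v)$ with each dominator covering at most $\Delta$ (or $\Delta+1$) vertices. Your write-up is in fact a bit more careful than the paper's, which omits the word ``maximal'' and glosses over the $\Delta$ versus $\Delta+1$ distinction you flag.
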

\begin{proof}
The proof 
\ifshort
appears in the full version of this paper~\cite{halldorsson2015local-arxiv}.
\fi
\iffull
is deferred to Appendix~\ref{app:growth}.
\fi
\end{proof}

\ifshort\paragraph{The SINR model. }\fi
\iffull\subsection{The SINR Model}\label{sec:SINRmod}
The following describes the foundations of the \emph{physical model} (or \emph{SINR model}) of interference. We start by introducing a second distance function. 
\fi
Nodes are located in a plane and we write $d(v,w)$ for the Euclidean distance between points $v,w$ (often corresponding to node's positions). It is clear from the context when $d$ refers to hop-distance or Euclidean distance.
\iffull

\fi
When a node $v$ (of a wireless network) sends a message, it transmits with (uniform) power $P> 0$. A transmission of $v$ is received successfully at a node $u$, if and only if
\ifshort
$
SINR_u(v):=\frac{P/d(v,u)^\alpha}{\sum_{w \in S \setminus  \{u,v\}}
P/d(w,u)^\alpha + N} \ge \beta ,\label{eq:SINR}
$
\fi
\iffull
\begin{equation}
SINR_u(v):=\frac{P/d(v,u)^\alpha}{\sum_{w \in S \setminus  \{u,v\}}
P/d(w,u)^\alpha + N} \ge \beta ,\label{eq:SINR}
\end{equation}
\fi
where $N$ is a universal constant denoting the ambient noise. The parameter $\beta>1$ denotes the minimum
SINR (signal-to-interference-noise-ratio) required for a message to be successfully received,
$\alpha$ is the so-called path-loss constant. Typically it is assumed that 
 $\alpha\in(2,6]$, see~\cite{DBLP:conf/dialm/GoussevskaiaMW08}. 
Here, $S$ is the subset of nodes in $V$ that are sending.
\iffull
(All other nodes send with power $P'=0$). 
Independent of whether a distance function for the nodes is known, we assume in the analysis that the minimum distance between two nodes is $1$\footnote{Otherwise the SINR-formula implies that the power $P/d(v,u)^\alpha$ received by a node $u$ closer to a sender $v$ is higher than the power that $v$ uses to send.} (a.k.a near-field effect). This assumption can be justified by scaling length when assuming that two nodes cannot be at the same position (e.g. as each antenna's size is strictly larger than $0$).  
In this article we restrict attention to \emph{uniform power} assignments. All nodes $v\in S$ send with the same power $P_v=P$ for some constant $P$. 
\fi
By $\R:=(P/\beta N)^{1/\alpha}$ we denote the transmission range, i.e.~the maximum distance at which two nodes can communicate assuming no other nodes are sending at the same time. For $a\in \mathbb{R}^+$, we define $\R_a:=a\cdot \R$. 
If $d(v,u)\leq \R_a$ 
and $a<1$, we say $u$ and $v$ are connected by a \emph{$a$-strong} link. Like previous literature
\ifshort
~\cite{DBLP:conf/infocom/AndrewsD09,daum2013broadcast,Dinitz2010,DBLP:conf/dialm/GoussevskaiaMW08,KV10}
\fi
we consider a link to be strong if it is $(1-\eps)$-strong  for constant $\eps>0$. 
\iffull
Intuitively, this means that the link uses at least slightly more power than the absolute minimum needed to overcome the ambient noise caused, e.g., by a few nodes sending far away. 
\fi
If $\R_a<d(u,v)\leq\R_{1}$, we say $u$ and $v$ are connected by an \emph{$a$-weak} link. A $(1-\eps)$-weak link is just called weak link.
\iffull
Strong connectivity is a reasonable and often used assumption~\cite{DBLP:conf/infocom/AndrewsD09,daum2013broadcast,Dinitz2010,DBLP:conf/dialm/GoussevskaiaMW08,KV10}.
\fi
\ifshort
We
\fi
\iffull
\subsection{SINR Induced Graphs}\label{par:SINRgraphs}
Like e.g. in~\cite{daum2013broadcast}, we consider \emph{strong connectivity broadcast} in this article while using uniform power. Therefore we 
\fi
consider the strong connectivity graph $\Ge=(V,\Ee)$, where $(u,v)\in \Ee$, if $u,v\in V$  are connected by a strong link. Given a graph $G$, we denote by $\Ratio_G$ the ratio between the maximum and minimum Euclidean length of an edge in $E$. In case that $G$ is $\Ge$, we simply write $\Ratio$ instead of $\Ratio_\Ge$. 
\ifshort
Also note Remark~\ref{FULL:rem:diff}.
\fi
\iffull
\begin{remark}\label{rem:diff}
Note that~\cite{daum2013broadcast} uses a different but equivalent definition, while the above one is more common. In~\cite{daum2013broadcast} two nodes are connected in a graph if they are at distance at most $\R/(1+\rho)$, where $\rho$ takes the place of our $\eps$. When restating their lemmas, we simply use our notation without further comments as one can chose e.g. $\rho:=\eps/(1-\eps)$ or $\eps:=\rho/(1+\rho)$.
\end{remark}
\fi

\ifshort\paragraph{Abstract MAC layers. }\fi
\iffull\subsection{Abstract MAC Layers}\fi

\iffull
While there are several abstract MAC layer models~\cite{unreliableTechreport,DBLP:journals/adhoc/KhabbazianKKL14, DBLP:journals/dc/KuhnLN11}, the probabilistic version defined in~\cite{DBLP:journals/adhoc/KhabbazianKKL14} is most suitable for our purposes. Like any abstract MAC layer, the probabilistic MAC layer is defined for a graph $G=(V,E)$ and provides an acknowledged local broadcast primitive for communication in $G$. In our setting we are interested in strong connectivity broadcast with respect to the SINR formula, such that we use $\Ge$ as the communication graph (defined in Section~\ref{sec:SINRmod}). 

We use the definitions of Ghaffari et al.~\cite{unreliableTechreport} adapted to the probabilistic setting of~\cite{DBLP:journals/adhoc/KhabbazianKKL14}. To initiate such a broadcast, the MAC layer provides an interface to higher layers via input $bcast(m)_i$ for any node $i\in V$ and message $m\in M$.
\fi
\ifshort
We use the definitions of Ghaffari et al.~\cite{unreliableTechreport} adapted to the probabilistic setting of~\cite{DBLP:journals/adhoc/KhabbazianKKL14}. To initiate a broadcast in a graph $G$, the MAC layer provides an interface to higher layers via input $bcast(m)_i$ for any node $i\in V$ and message $m\in M$.
\fi
To simplify the definition of this primitive,
assume w.l.o.g. that all local broadcast messages are unique. When a node $u\in V$ broadcasts a message $m$,
the model delivers the message to all neighbors in $E$. 
It then returns an acknowledgment of $m$ to $u$ indicating the broadcast is complete, denoted by $ack(m)_u$. In between it returns a $rcv(m)_v$ event for each node $v$ that received message $m$.
This model provides two timing bounds
\iffull
, defined with respect to two positive functions, $\fack$ and $\fprog$
which are fixed for each execution.
\fi
. The first is the acknowledgment bound, which guarantees that each broadcast will complete and be acknowledged within $\fack$ time. The second is the progress bound, which guarantees
\iffull
 the following slightly more complex condition
\fi
: fix some $(u,v)\in E$ and interval of length $\fprog$ throughout which $u$ is broadcasting a message $m$; during this interval $v$ must receive some message (though not necessarily $m$, but a message that some location is currently working on, not just some ancient message from the distant past).
The progress bound, in other words, bounds the time for a node to receive some message when at least one of
its neighbors is broadcasting. In both theory and practice $\fprog$ is typically much smaller than $\fack$~\cite{DBLP:journals/adhoc/KhabbazianKKL14}. Further motivation and power of these delay bounds is demonstrated e.g. in~\cite{unreliableTechreport,DBLP:journals/adhoc/KhabbazianKKL14, DBLP:journals/dc/KuhnLN11}.
\iffull

\fi
We emphasize that in abstract MAC layer models
the order of receive events is determined
non-deterministically by an arbitrary message scheduler. The timing of these events is also determined nondeterministically
by the scheduler, constrained only by the above time bounds.

\ifshort\textbf{The Standard Abstract MAC Layer. }\fi
\iffull\paragraph{The Standard Abstract MAC Layer. }\fi
Nodes are modeled as event-driven automata.
While~\cite{unreliableTechreport} assumes that an environment abstraction fires a wake-up event at each node at the beginning of each execution, we assume conditional wake-up to be consistent with the model of~\cite{daum2013broadcast}, see Definition~\ref{def:condwake}.
This is a weaker wake-up assumption with respect to upper bounds when compared to synchronous wake-up~\cite{unreliableTechreport}. This strengthens our algorithmic results. In contrast to this our lower bounds assume synchronized wake-up, which is in turn the weaker assumption with respect to lower bounds.
The environment is also responsible for any events specific to the problem being solved. In multi-message
broadcast, for example, the environment provides the broadcast messages to nodes at the beginning.

\begin{definition}[Conditional (a.k.a non-spontaneous) wake-up of~\cite{daum2013broadcast} adapted to absMACs]\label{def:condwake}
\ifshort
Once a node receives an input from the 
MAC-environment (above the MAC layer) or a transmission from the network below the MAC layer, the node wakes up and participates in the algorithm. 
\fi
\iffull
Only after a node is woken up it can participate in computations below the MAC layer (i.e.~in the network layer). Communication needs to be scheduled from scratch. This corresponds to the conditional (non-spontaneous) wake up model.
\fi
\end{definition}

\ifshort\textbf{The enhanced abstract MAC layer. }\fi
\iffull\paragraph{The Enhanced Abstract MAC Layer. }\fi 
The enhanced abstract MAC layer model differs from the standard model
in two ways. First, it allows nodes access to time (formally, they can set timers that trigger events when they
expire), and assumes nodes know $\fack$ and $\fprog$. 
Second, the model also provides nodes an abort interface that allows them to abort a broadcast in progress. 

\ifshort\textbf{The probabilistic abstract MAC layer. }\fi
\iffull\paragraph{The Probabilistic Abstract MAC Layer. }\fi 
We use parameters $\epsprog$ and $\epsack$ to indicate the error probabilities for satisfying the delay bounds $\fprog$ and $\fack$. Roughly speaking this means that the MAC layer guarantees that progress is made with probability $1-\epsprog$ within $\fprog$ time. With probability $1-\epsack$ the MAC layer correctly outputs an acknowledgment within $\fack$ time steps. More details can be found in Section 4.2 of~\cite{DBLP:journals/adhoc/KhabbazianKKL14}.  

\iffull\paragraph{Reliable Communication. }
Note that like in~\cite{DBLP:journals/adhoc/KhabbazianKKL14} all our communication graphs $G:=\Ge$ are static and undirected. In contrast to this,~\cite{unreliableTechreport, DBLP:journals/dc/KuhnLN11} defines not only a graph $G$ for guaranteed communication, but also a graph $G'$ for possible (but unreliable)  communication.
\fi

\ifshort\paragraph{Problems. }\fi
\iffull\subsection{Problems }\fi 
We derive algorithms in the SINR-model that perform the tasks listed below correctly with probability $1-\epstask$. When choosing $\epstask\leq n^{-c}$ we say that an algorithm performs a task with high probability (w.h.p.). Here, $c>0$ is an arbitrary constant provided to the algorithm as an input-parameter. We use the notation w.h.p. only to compare our results with previous work.

\ifshort\textbf{Multi-message broadcast (MMB) problem~\cite{DBLP:journals/adhoc/KhabbazianKKL14}. }\fi
\iffull\paragraph{The Multi-Message Broadcast Problem (MMB)~\cite{DBLP:journals/adhoc/KhabbazianKKL14}.}\fi This problem inputs $k \geq 1$ messages into the network at the beginning of an execution, perhaps providing multiple messages to the same node. We assume $k$ is not known in advance. The problem is solved once every message $m$, starting at some node $u$, reaches every node in $G$. Note that we assume $G$ is connected to be consistent with previous work in the SINR model, while in~\cite{unreliableTechreport} this is not assumed. We treat messages as black boxes that cannot be combined.

\ifshort\textbf{Single-message broadcast (SMB) problem~\cite{DBLP:journals/adhoc/KhabbazianKKL14}. }\fi
\iffull\paragraph{The Single-Message Broadcast Problem (SMB)~\cite{DBLP:journals/adhoc/KhabbazianKKL14}.}\fi
The SMB problem is the special case of MMB with $k=1$. The single node at which the message is input is denoted by $i_0$.

\ifshort\textbf{Consensus problem (CONS) problem~\cite{DBLP:conf/podc/Newport14}. }\fi
\iffull\paragraph{The Consensus Problem (CONS), version considered in~\cite{DBLP:conf/podc/Newport14}.}\fi
In this problem each node begins an execution with an initial value from $\{0,1\}$. Every node has the ability to perform a single irrevocable $decide$ action for a value in $\{0,1\}$. To solve consensus, an algorithm must guarantee the following three properties: 1) $agreement$: no two nodes decide different values; 2) $validity$: if a node decides value $v$, then some node had $v$ as its initial value; and 3) $termination$: every non-faulty process eventually decides.

\ifshort\paragraph{General model assumptions. }\label{sec:modelassumpt}\fi
\iffull\subsection{General Model Assumptions}\label{sec:modelassumpt}\fi
\iffull
In principle each node has unlimited computational power. However, our algorithms perform only very simple and efficient computations. Finally, we assume that each node has private access to an unlimited perfect random source. This assumption can be weakened. 
\fi
As in~\cite{daum2013broadcast} wake up of nodes is conditional, see Definition~\ref{def:condwake}.
\iffull
From SINR-based work~\cite{daum2013broadcast} that we use we take the following assumptions:
\fi
Nodes are located in
the Euclidean plane\footnote{Our results can be generalized to any growth-bounded metric space when revising the assumption on $\alpha$.} and 
locations are unknown. Nodes send with uniform power, where the fixed power level $P$ is not known to the nodes. We use the common assumption that $\alpha>2$, see~\cite{DBLP:conf/dialm/GoussevskaiaMW08}. No collision detection mechanism is provided.
\iffull
Even when the same message is sent by (at least) two nodes and arrives with constructive interference we assume that the received signal cannot be distinguished from the case where no node is sending at all. 
\fi
As previous work we assume $\Ge$ is connected.
MAC-layer based work~\cite{DBLP:journals/adhoc/KhabbazianKKL14} requires us to assume that nodes can detect if a received message originates from a neighbor in a graph $G$--in our setting this is $\Ge$--(only one graph $G$ is used in~\cite{DBLP:journals/adhoc/KhabbazianKKL14}, while messages from any sender in the network might arrive but do not cause rcv-events).   
\ifshort
We remark that the assumption that nodes can detect if a received message originated in the $\Ge$-neighborhood is not used by any of the algorithms presented in this paper. Therefore this assumption could be dropped if one is not interested in implementing an absMAC that performs local broadcast exactly in $\Ge$. In particular, our absMAC implementation outputs $rcv$ events for all $bcast$-messages received, which can be modified if required by other higher-level algorithms designed for absMACs and when the $\Ge$-neighborhood is known. The reader might consider 
the full version of this paper~\cite{halldorsson2015local-arxiv} 
for further details.
\fi

\iffull
\begin{remark}[Concerning SINR assumptions]\label{rem:SINR}
Although not explicitly stated in~\cite{daum2013broadcast}, footnote 5 in the full version~\cite{daumfull} of~\cite{daum2013broadcast} indicates that they use the assumption $\alpha>2$ as well. We also assume that $\alpha,\beta$ and $N$ are known. Note that~\cite{daum2013broadcast} allows $\alpha$ to be unknown, but fix in a known range $[\alpha_{\min},\alpha_{\max}]$. Furthermore they assume upper and lower bounds for $\beta$ and $N$ given by $\beta_{\min},N_{\min}$ and $\beta_{\max},N_{\max}$. For simplicity we do not make these assumptions, but claim our results can be stated in terms of these bounds as well.
Compared to~\cite{daum2013broadcast} nodes that execute local broadcast do not need to know a polynomial bound on the network size $n$. In~\cite{daum2013broadcast} this knowledge is only needed to achieve w.h.p. successful transmissions at each step. In our setting the desired probability of success is provided by the user of the absMAC. 
\end{remark}

\begin{remark}[Concerning absMAC assumptions]\label{rem:exact}
We want to remark that the assumption that nodes can detect if a received message originated in the $\Ge$-neighborhood is not used by any of the algorithms presented in this paper. In particular this assumption is not needed by previous algorithms on top of the MAC layer we use. This is due to the assumption that $\Ge$ is connected. However, being able to detect if a message was sent from a $\Ge$-neighbor might be required by future algorithms using our absMAC implementation that need broadcast to be implemented on exactly $\Ge$. Examples of this include $\Ge$-specific problems not studied in this article (such as e.g. computing shortest paths in $\Ge$). 
Note that a node $x$ executing our absMAC implementation might also successfully transmit messages to nodes that are not in $N_\Ge(x)$ but still in transmission range. 
For the reasons explained above we state our algorithms without the assumption that nodes can detect in which range a received message originated. If future algorithms using our absMAC implementation require exact broadcast, nodes executing our absMAC implementation could simply disregard messages they receive from nodes that are not their $\Ge$-neighbors (using the then necessary assumption that nodes can detect in which range a received message originated required to achieve exact local broadcast). 
If needed, there are several ways to implement this assumption. E.g. assuming that the SINR of the received message as well as the total received signal strength CCA can be measured. Using these assumptions there might also be faster SINR-implementations of an absMAC than provided in this article.
\end{remark}
\fi

\iffull
\subsection{Overview of Frequently used Notation}
For the convenience of the reader the following table summarizes notation used frequently (or globally) in this article. Definitions of notation not listed here are stated nearby where it is used.
{\small
\begin{longtable}{p{2.6cm} p{\textwidth-3.0cm}} 
\toprule
Notation & Explanation/Reference \\
\midrule
\textbf{Problems:}\\
SMB 																						& Global Single-Message Broadcast 
\\
MMB 																						& Global Multi-Message Broadcast
\\
CONS																						& Global Consensus
\\
\scriptsize $\epsSMB,\epsMMB,\epsCONS$					& Bounds on the probability that SMB,MMB and CONS are performed incorrectly 
\\
\midrule
\textbf{SINR related:}\\
$\alpha$																				& $\alpha\in(2,6]$, path-loss exponent in the SINR model
\\
$P,P_v$																					& Constant sending power of a node
\\
$d(u,v)$																				& Euclidean distance between nodes $u$ and $v$
\\
$I_S(v)$																				& Interference caused at point $v$ by nodes in set $S$ sending with power $P$ 
\\
$\eps$																					& Parameter that helps to define strong reachability, see $\Ree$
\\
$\R$																					& $\R:=(P/\beta N)^{1/\alpha}$, transmission range if no other node sends (weak reachability) 
\\
$\Ree, \Ra$																			& $\R_a:=a\cdot \R$, transmission distance tolerating interference from a sparse set of nodes. Tolerated sparsity of the set depends on $a$ (strong reachability)  
\\
$\Ratio$																				& Ratio of $\Ree$ to the shortest distance between any two nodes
\\
\midrule
\multicolumn{2}{l}{\textbf{Graph related:}}
\\
$G_1$																						& Nodes at distance at most $\R_1$ are connected (weak connectivity graph). In our algorithms communication in this graph might be unreliable 
\\
$\Ger, \Gar$																	& Nodes at distance at most $\Ree$/or $\Ra$ are connected in  $\Ger$/or $\Gar$ (strong connectivity graphs). We implement reliable local broadcast in $\Ger$ and analyze fast approximate progress with respect to $\Gar$
\\
$G|_S$																					& The subgraph $(S,E|_S)$ of $G=(V,E)$ induced by nodes in $S$ and $E|_S:=E\cap (S\times S)$
\\
$N_G(v)$, $N_G(W)$ 															& Neighborhoods of node $v$/set $W$ in graph $G$
\\
$N_{G,r}(v), N_{G,r}(W)$												& $r$-neighborhood of node $v$/set $W$ in graph $G$
\\
$\Delta_G$																			& Maximal degree of any node in graph $G$ 
\\
$D_G$																						& Diameter of Graph $G$
\\
$f$ 																						& Polynomial increasing function bounding the growth of graphs in this article
\\
\midrule
\multicolumn{2}{l}{\textbf{MAC layer related:}}
\\
$bcast, rcv, ack$																& Broadcast/receive/acknowledgment events in the MAC layer
\\
$\fack,\fprog,\fapprog$													& Bounds on the time needed for acknowledgment/progress/approximate progress
\\
$\epsack,\epsprog,\epsapprog$										& Bounds on the probability that acknowledgment/progress/approximate progress are not performed in time $\fack,\fprog,\fapprog$
\\	
$G'$																						& Graph with unreliable communication~\cite{unreliableTechreport}. Our setting considers $G':=G_1$
\\
$G$																							& Graph with reliable communication~\cite{unreliableTechreport}.  Our setting considers $G:=\Ger$
\\
$\tilde{G}$																				& Notation we introduce to denote subgraphs/approximations of $G$ in which we measure approximate progress. Our setting considers $\tilde{G}:=\Gar$
\\
\midrule
\multicolumn{2}{l}{\textbf{Algorithm~\ref{alg:AdHoc2} related:}}
\\
Phase $\phi$ 																		& Phases $\phi=1,\dots,\Phi$ are executed within an epoch
\\
Epoch				 																		& Each epoch
performs approximate progress with respect to $\Gar$
\\
$\Phi$																					& $\Phi=\Theta(\log\Ratio)$, number of phases $\phi$ executed in an epoch
\\
$Q$																							& $Q=\log^{\alpha} \Ratio$, parameter used to adjust transmission probabilities in Line~\ref{line:1}
\\
$c$																							& $c\in \mathbb{N}$ s.t. $c\log^* N$ bounds the MIS algorithm~\cite{DBLP:conf/podc/SchneiderW08}'s runtime on IDs $\in[1,N]$
\\
$p$																							& $p\in(0,1/2]$, transmission-probability of a node (whenever not specified otherwise)
\\
$\mu$																						& $\mu\in(0,p)$, reliability-probability of an edge (whenever not specified otherwise)
\\
$T$																							&$T=\Theta\left(\log\left(f(h_1)/\epsapprog\right)/(\gamma^2\mu)\right)$, transmission-repetitions (when not changed)
\\
$m, m'$				 																	& $m$ is the bcast-message to be broadcast, $m'$ is a bcast-message that is received
\\
$\gamma$																				& $\gamma\in(0,1)$, parameter used to define the approximation $\tilde{H}_p^\mu[S]$ of $H_p^\mu[S]$
\\
$h_\phi, h_\phi'$																& $h_\phi:=h'_\phi:=1$, $h'_\phi:=3h_{\phi+1}$ and $h_\phi:=h_{\phi}'+\SW+1$ for $1\leq \phi < \phi$
\\
$H_p^\mu[S]$																		& Graph defined in~\cite{daum2013broadcast}: $\mu$-reliable edges, nodes $\in S$ send with prob. $p$, see Section~\ref{sec:H-graphs}
\\
$\tilde{H}_p^\mu[S]$														& Graph computed in~\cite{daum2013broadcast} to $(1-\gamma)$-approximate $H_p^\mu[S]$ w.h.p., see Section~\ref{sec:H-graphs} 
\\
$\tilde{\tilde{H}}_p^\mu[S]$										& Graph computed in Section~\ref{sec:H-graphs}. Likely to $(1-\gamma)$-approximate $H_p^\mu[S]$ locally 
\\
$S_1$																						& Set of nodes with an ongoing broadcast at a given time
\\
$S_\phi$																				& Independent Set in $\tilde{\tilde{H}}_p^\mu[S_\phi]$ that is $(\phi,i)$-locally maximal with some probability
\\
$(\phi,i)$-locally															& $(\phi,i)$-locally maximal independent sets are defined in Definition~\ref{def:sucphase}
\\
$u_\phi$ 																				& Closest node $\in S_\phi$ to location $i$ in the proofs, see Lemma~\ref{lem:daum4.5} 
\\
$U_{\phi,i}$																		& $U_{\phi,i}:=N_{\Ger}(i)\cap S_\phi$, phase-$\phi$-senders at distance $\leq \R_{1-\eps}$, see Definition~\ref{def:6.5}
\\
$S_{\phi,i}$																& $S_{\phi,i}:=N_{\tilde{\tilde{H}}_p^\mu[S_\phi],h_\phi}(U_{\phi,i})$, phase-$\phi$-senders relevant to location $i$, see Definition~\ref{def:6.5}
\\
$S'_{\phi,i}$																		& $S'_{\phi,i}:=N_{\tilde{\tilde{H}}_p^\mu[S_\phi],h'_\phi}(U_{\phi,i})\subseteq S_{\phi,i}$, see Definition~\ref{def:6.5}
\\
$W$																							& Nodes that computed ``wrong'' neighbors in some $\tilde{\tilde{H}}_p^\mu[S_\phi]$, see Definition~\ref{def:W}\\
\\
\end{longtable}
}
\fi

\ifshort
\section{Efficient Acknowledgments and Impossibility of Fast Progress}\label{sec:ackcons}
\fi

\iffull
\section{Efficient Acknowledgments with an Application to Consensus}\label{sec:ackcons}
\fi

\begin{theorem}\label{thm:ack}
In the SINR model using the assumptions of Section~\ref{sec:modelassumpt}, acknowledgments of an absMAC can be implemented w.r.t. graph $\Ge$ with probability guarantee $1-\epsack$ in time $\fack=\BO\left(\Delta_{\Ge}\log \left(\frac{\Ratio}{\epsack}\right) +  \log(\Ratio)\log\left(\frac{\Ratio}{\epsack}\right)\right)$.
\end{theorem}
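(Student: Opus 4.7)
The plan is to adapt Algorithm~1 of~\cite{halldorsson2012towards}, which solves local broadcast in the SINR model in time $\BO(N_x \log n + \log^2 n)$ with high probability (where $N_x$ is the local contention), and retrofit it into an absMAC acknowledgment procedure with two modifications: replace the global parameter $n$ by local quantities (the degree $\Delta_\Ge$ and the distance ratio $\Ratio$), and replace the built-in w.h.p.\ bound by the user-supplied target $1-\epsack$. Concretely, a node $u$ that receives $bcast(m)_u$ initiates the Halldórsson--Mitra schedule and outputs $ack(m)_u$ after $\fack$ rounds; $rcv(m)_v$ events are generated whenever a receiver successfully decodes.

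First I would implement the two-component structure: one component geometrically sweeps transmission probabilities $p_i$ between $1/2$ and roughly $1/\Delta_\Ge$, so that some $p_i$ matches the true local density around $u$ and allows every $v\in N_\Ge(u)$ to decode $m$ once; the second component interleaves a ``spacing'' schedule with $\BO(\log\Ratio)$ sub-phases that deals with interference coming from geometrically-spaced annuli of potential interferers. The length of each sub-phase of the second component is boosted to $\BO(\log(\Ratio/\epsack))$ repetitions, and the dense component is run for $\BO(\Delta_\Ge\log(\Ratio/\epsack))$ rounds, which directly produces the two summands in $\fack$.

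The central step is the \emph{localized} analysis. Fix a receiver $v\in N_\Ge(u)$. Rather than union-bounding over all $n$ nodes as in~\cite{halldorsson2012towards}, decompose the plane around $v$ into $\BO(\log\Ratio)$ annuli at geometrically increasing radii. By the growth-boundedness lemma stated earlier, each annulus contains $\poly(\Delta_\Ge)$ nodes of any independent set in $\Ge$, so the number of simultaneously-transmitting nodes per annulus is controlled locally; combined with the SINR path-loss decay with $\alpha>2$, the standard signal-sum argument shows that distant annuli contribute a convergent interference series, so only a local neighborhood of size $\poly(\Delta_\Ge,\Ratio)$ matters. A union bound over this local set and over $u$'s $\Delta_\Ge$ neighbors yields per-receiver failure probability $\epsack/\poly(\Delta_\Ge,\Ratio)$ once each phase is repeated $\Theta(\log(\Ratio/\epsack))$ times, so $v$ decodes some message of $u$'s during the schedule with probability $\geq 1-\epsack/\Delta_\Ge$.

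The main obstacle is justifying the swap from $n$ to $\Ratio/\epsack$ in the presence of \emph{global} interference: transmitters arbitrarily far from $v$ may still be active, and naively each of them would demand a union bound term. Handling this cleanly requires combining the annular decomposition with the growth-bound to argue that the tail interference from annuli beyond $\Theta(\log\Ratio)$ is asymptotically dominated by the ambient noise allowance built into strong links (the $\eps$-gap between $\Ree$ and $\R$), so that the cumulative distant contribution never pushes $SINR_v(u)$ below $\beta$. Once this is pinned down, a final union bound over all $\Delta_\Ge$ neighbors of $u$ converts the per-receiver bound into the claimed $1-\epsack$ guarantee on $ack(m)_u$, completing the proof with the stated $\fack$.
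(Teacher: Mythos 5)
Your high-level strategy is the same as the paper's: take Algorithm~1 of~\cite{halldorsson2012towards}, replace the global parameter $n$ by a local contention bound, and replace the w.h.p.\ guarantee by the user-supplied $1-\epsack$. The paper does exactly this by rerunning the Halld\'orsson--Mitra analysis verbatim with the substitution $\tilde{N}_x:=4\Ratio^2$ (a packing bound on the number of nodes in transmission range, valid since the minimum inter-node distance is $1$), observing that every union bound in that analysis already ranges only over nodes in a single transmission region and over $\poly(N_x)$ time slots, and then noting $N_x\leq\Delta_{\Ge}$ to get the stated $\fack$. However, your description of the algorithm and of the interference argument diverges from what actually makes this work, and the divergence is a genuine gap. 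First, you omit the adaptive feedback (\textsc{FallBack}) mechanism entirely: in the real algorithm a node raises its transmission probability until it starts \emph{receiving} messages, and then drops back by a constant factor. This mechanism is what (a) lets the runtime adapt to the unknown actual contention $N_x$ --- nodes know only a bound on $\Ratio$, not $\Delta_{\Ge}$, so your prescription to sweep probabilities down to ``roughly $1/\Delta_{\Ge}$'' and to run the dense component for $\BO(\Delta_{\Ge}\log(\Ratio/\epsack))$ rounds is not implementable as stated, and a sweep down to the known bound $1/(4\Ratio^2)$ alone does not produce the $\Delta_{\Ge}$ term --- and (b) maintains the invariant that the sum of transmission probabilities inside every broadcast region stays at most $1/2$, which is the load-bearing hypothesis of the whole interference analysis.

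Second, your localized interference argument does not go through as written. The growth-boundedness lemma controls the number of \emph{independent-set} nodes per $r$-neighborhood; the set of simultaneously transmitting nodes is not independent, and a Euclidean annulus at distance $k\R$ can contain $\Theta(k\,\Delta_{\Ge})$ nodes, all of which may be active. If each of them transmits with probability up to a constant, the resulting interference sum scales with $\Delta_{\Ge}$ and swamps the $\eps$-gap noise allowance --- no union bound over a ``local set of size $\poly(\Delta_{\Ge},\Ratio)$'' can rescue this, because the issue is the \emph{expected} global interference, not a low-probability event. The paper's (inherited) argument instead conditions on the probability-sum invariant, bounds the \emph{expected} power arriving from outside the transmission region by a convergent series using $\alpha>2$, and applies Markov's inequality; only the per-region sums, not per-node events, are controlled. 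To repair your proof you would need to (i) reinstate the feedback mechanism and prove the probability-sum invariant locally (this is the paper's Lemma on $\sum_{y\in B_x}p_y\leq 1/2$), and (ii) replace your annulus union bound by the expectation-plus-Markov interference bound conditioned on that invariant.
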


\ifshort
\noindent The proof of Theorem~\ref{thm:ack} is a straightforward modification of~\cite{halldorsson2012towards} to local parameters. 
\fi

\iffull
\begin{remark}\label{rem:notes}
This section only focuses on implementation of broadcast. Reactions to inputs from the MAC layer, such as $bcast(m)_i$, are handled in Section~\ref{sec:kuhn2}. Section~\ref{sec:kuhn2} that presents the final implementation of our absMAC.
\end{remark}
\begin{proof}
The bound on $\fack$ can be derived by modifying Theorem 3 in~\cite{halldorsson2012towards} to local parameters. We do this in Appendix~\ref{app:MMHproof}. The bound follows when Theorem~\ref{appthm:ack-app} is applied with parameter $\tilde{N_x}:=4\Ratio^2$, which upper bounds the number of nodes in transmission range $\R_1$ and thus the local contention. We derive our claim as the actual contention $N_x$ is upper bounded by $\Delta_{\Ge}$. Note that the network only knows a polynomial bound on $\Ratio$, not on $N_x$ nor $\Delta_{\Ge}$, which in turn are estimated by Algorithm~\ref{alg1}. Furthermore one simply needs to modify Algorithm~\ref{alg1} to stop after $\fack$ rounds, as then the probability-guarantee is reached. 
Note that this behavior does not guarantee that no messages from nodes that are not $\Ge$-neighbors are received. See Remark~\ref{rem:exact} how exact local broadcast can be implemented.

\end{proof}

\begin{remark}\label{rem:ackopt}
As a node can only receive one message at a time, the degree $\Delta_\Ge$ of the network corresponds to the maximal contention at some node and is therefore a lower bound for $\fack$. Therefore the result on $\fack$ in this section is close to optimal. 
\end{remark}

\subsection{Application to Network-Wide Consensus in the SINR Model}\label{sec:cons}
Before implementing the absMAC specification in a formal way in Section~\ref{sec:kuhn2}, we now derive an algorithm for global consensus based on the bound for acknowledgments, see Theorem~\ref{thm:ack}. This serves as a first example to demonstrate the power of the absMAC theory when applied to the SINR world. This example is possible due to a result of~\cite{DBLP:conf/podc/Newport14} for achieving consensus using an absMAC using the fact that they analyze this problem in terms of $\fack$, while $\fprog$ does not appear in their runtime. Although the MAC layer used in~\cite{DBLP:conf/podc/Newport14} is deterministic, we can obtain a randomized algorithm that works correct with probability $1-\epsCONS$ by choosing $\epsprog$ and $\epsack$ to be at most $\frac{\epsCONS}{t(n)n^2}$, where $t(n)$ is the runtime of the algorithm using the MAC layer. We obtain the following Theorem. 

\begin{theorem}[Theorem 4.2 of~\cite{DBLP:conf/podc/Newport14} transferred to our setting]
The wPAXOS algorithm [of~\cite{DBLP:conf/podc/Newport14}] solves network-wide consensus in $\BO(D_{\Ge}\cdot \fack)$ time in the (probabilistic) absMAC
model (with $\epsack=\epsprog=\frac{1}{n^{4}\epsCONS}$) in any connected network topology $\Ge$ w.h.p., where nodes have unique ids and knowledge of network size. 
\end{theorem}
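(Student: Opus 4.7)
The plan is to invoke Theorem 4.2 of \cite{DBLP:conf/podc/Newport14} as a black box in the deterministic absMAC model and then transfer it to the probabilistic absMAC via a union bound. The key structural feature that makes this transfer clean is that the runtime of wPAXOS depends only on $\fack$ and not on $\fprog$, so we can plug in the concrete $\fack$ bound from Theorem~\ref{thm:ack} without paying any extra cost for the probabilistic progress layer.

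First I would argue that a run of wPAXOS in the probabilistic absMAC coincides with the deterministic execution on the same inputs as long as every MAC-level delay guarantee is honored throughout: every bcast is acknowledged within $\fack$ time and every progress interval delivers a message within $\fprog$ time. This reduces correctness of the randomized algorithm to bounding the probability of the event ``some MAC-level guarantee is violated during the run.''

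Next I would count MAC operations and apply a union bound. With runtime $t(n) = \BO(D_\Ge \cdot \fack)$, each of the $n$ nodes issues at most $\BO(D_\Ge)$ broadcasts and observes at most $\BO(t(n)/\fprog)$ progress windows, so the total number of failure-prone MAC events in the execution is polynomial in $n$ (crudely $\BO(n \cdot t(n))$ when $\fprog \geq 1$ and $D_\Ge \leq n$). Choosing $\epsack$ and $\epsprog$ as in the statement---small enough to dominate this polynomial by the stated factor---a union bound yields overall success probability at least $1 - \epsCONS$, which is the w.h.p.\ guarantee when $\epsCONS = 1/\text{poly}(n)$. The bound $\BO(D_\Ge \cdot \fack)$ for the runtime is then inherited directly from the deterministic analysis.

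The main obstacle I expect is verifying that wPAXOS is robust to individual MAC failures in the appropriate way: one needs to confirm that a violated delay guarantee cannot push the algorithm into an inconsistent state (for instance, two nodes deciding different values) so that conditioning on ``no MAC failure'' is the only event against which the union bound must guard. This should follow from the structure of wPAXOS in \cite{DBLP:conf/podc/Newport14}---in particular from the fact that the abort interface of the enhanced absMAC allows broadcasts that miss their $\fack$ deadline to be cleanly retracted---but it requires a careful pass through the original correctness proof rather than a one-line appeal, and also a careful accounting of the operation count to justify the specific exponent appearing in the stated choice of $\epsack = \epsprog$.
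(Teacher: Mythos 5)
Your proposal matches the paper's argument: the paper likewise treats Newport's Theorem 4.2 as a black box, notes that the wPAXOS runtime depends only on $\fack$, and converts the deterministic absMAC guarantee into a probabilistic one by union-bounding over the MAC-level events during the $t(n)$-step execution, choosing $\epsack$ and $\epsprog$ at most $\epsCONS/(t(n)n^2)$ so that no delay guarantee is violated with probability $1-\epsCONS$. Your closing caveat about needing to re-verify wPAXOS's robustness to individual MAC failures is a reasonable extra precaution that the paper does not spell out, but it does not change the route taken.
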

Plugging in the bounds on $\fack$ of Theorem \ref{thm:ack} we obtain:

\begin{corollary}[Theorem 4.2. of~\cite{DBLP:conf/podc/Newport14} transferred to our setting]\label{cor:cons}
Network-wide consensus can be solved with probability $1-\epsCONS$ in time 
$$\fCONS=\BO\left(D_{\Ge}(\Delta_{\Ge}+\log(\Ratio))\log\left(\frac{n\Ratio}{\epsCONS}\right)\right).$$
\end{corollary}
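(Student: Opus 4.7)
The plan is to derive the corollary by direct substitution: invoke the wPAXOS runtime $\BO(D_{\Ge}\cdot\fack)$ guaranteed by the preceding theorem, and plug in the acknowledgment bound from Theorem~\ref{thm:ack}. Since wPAXOS (as stated for our setting) requires the underlying absMAC to satisfy its delay bounds with error probability at most $\epsack=\epsprog=\frac{1}{n^4\epsCONS}$ in order to yield overall error $\epsCONS$ via a union bound over the $\BO(D_\Ge\cdot\fack)\leq\poly(n)$ local broadcasts performed, I would first choose this value of $\epsack$.

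Next, I would substitute this choice into the bound from Theorem~\ref{thm:ack}, giving
\[
\fack \;=\; \BO\!\left(\Delta_{\Ge}\log\!\left(\tfrac{\Ratio}{\epsack}\right) + \log(\Ratio)\log\!\left(\tfrac{\Ratio}{\epsack}\right)\right) \;=\; \BO\!\left((\Delta_{\Ge}+\log\Ratio)\log\!\left(\tfrac{n\Ratio}{\epsCONS}\right)\right),
\]
using $\log(\Ratio/\epsack)=\log(n^4\Ratio/\epsCONS)=\BO(\log(n\Ratio/\epsCONS))$. Multiplying by the $D_\Ge$ factor from the wPAXOS bound then gives the claimed $\fCONS$.

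The only thing requiring any thought is the probability bookkeeping, i.e.\ verifying that error $\frac{1}{n^4\epsCONS}$ per MAC-layer call indeed suffices (the $n^4$ factor must dominate the polynomial number of local broadcast events in a $\BO(D_\Ge\cdot\fack)$-round execution across $n$ nodes) and that inflating $\epsack$ by this polynomial factor only inflates $\log(1/\epsack)$ by an additive $\BO(\log n)$ term, which is absorbed into the $\log(n\Ratio/\epsCONS)$ factor. There is no real obstacle here; the entire corollary is a mechanical composition of Theorem~\ref{thm:ack} with the cited wPAXOS theorem, and the statement is essentially the product of the $D_\Ge$ factor from consensus and the $\fack$ factor from our acknowledgment implementation.
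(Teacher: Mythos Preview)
Your proposal is correct and follows exactly the paper's approach: the paper simply states ``Plugging in the bounds on $\fack$ of Theorem~\ref{thm:ack} we obtain'' the corollary, and your substitution of $\epsack$ into the $\fack$ bound and multiplication by $D_{\Ge}$ is precisely that plug-in. One small inconsistency: you (like the preceding theorem statement) write $\epsack=\frac{1}{n^4\epsCONS}$, but your subsequent computation $\log(\Ratio/\epsack)=\log(n^4\Ratio/\epsCONS)$ implicitly uses $\epsack=\frac{\epsCONS}{n^4}$, which is the intended value (consistent with the paper's earlier remark that $\epsack\leq\frac{\epsCONS}{t(n)n^2}$); this is a typo inherited from the paper and does not affect the argument.
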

\section{Impossibility of Fast Progress using the SINR-Model}\label{sec:fprogLB}
\fi
Many algorithms that are implemented in an absMAC benefit from the fact that typically $\fprog$ is much smaller than $\fack$. Often it is the case that $\fprog = \BO(\polylogf{\fack})$.   
We show that for any implementation of the absMAC~\cite{DBLP:journals/adhoc/KhabbazianKKL14} for $\Ge$ in the SINR model such a difference of the runtime is impossible. 
\iffull
One can even not expect a bound on $\fprog$ that is much better than $\fack$.
\fi
As the bound on $\fack$ in Theorem~\ref{thm:ack} is close to our lower bound on $\fprog$, we conclude that this algorithm is an almost optimal implementation of absMAC in the SINR-model with respect to both $\fack$ and $\fprog$.

\begin{theorem}\label{thm:fprogLB}
For worst-case locations of points there is no implementation of the absMAC in the SINR model that provides local broadcast in $\Ge$ and achieves fast progress. In particular it holds that $\fprog\geq \Delta_\Ge$. This is true even for an optimal schedule computed by an (even central) entity that has unbounded computational power, has full knowledge as well as control of the network and can choose an arbitrary power assignment. 
\end{theorem}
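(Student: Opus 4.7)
My plan is to exhibit a worst-case point configuration together with a family of $\Delta_{\Ge}$ single-broadcaster input scenarios, and to establish the bound via a per-slot SINR serialization lemma combined with a pigeon-hole argument. I would place a receiver $v$ at the origin and $\Delta_{\Ge}$ senders $u_1, \ldots, u_{\Delta_{\Ge}}$ symmetrically at Euclidean distance just below $(1-\eps)\R$, so that the $u_i$'s are all $\Ge$-neighbors of each other and of $v$. The serialization lemma then says: in any single slot, under any (even arbitrary) power assignment, at most one of the $u_i$'s can achieve SINR at least $\beta$ at $v$. Indeed, since the $u_i$'s are at essentially the same distance from $v$, the required inequality
\[
\frac{P_i}{d(u_i,v)^\alpha}\;\geq\;\beta\,\Bigl(\,\sum_{j\neq i}\frac{P_j}{d(u_j,v)^\alpha} + N\Bigr)
\]
cannot hold simultaneously for two different indices, regardless of how the $P_i$'s are chosen (summing the two versions of this inequality forces $(1-\beta)(P_i+P_k)\geq 2\beta\sum_{j\neq i,k}P_j$, impossible for $\beta>1$).

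Next, for each $i \in \{1,\ldots,\Delta_{\Ge}\}$ I would consider the scenario $\mathcal{I}_i$ in which only $u_i$ invokes $\mathrm{bcast}$ at time $0$. Since $u_i$ is the unique currently-ongoing broadcaster in $\mathcal{I}_i$, the progress property forces $v$ to receive $u_i$'s message within the first $\fprog$ slots, so in particular the implementation must cause $u_i$ to dominate at $v$ (in the SINR sense above) in at least one of those slots. Running the \emph{same} absMAC implementation against every $\mathcal{I}_i$ and combining with the serialization lemma (at most one dominant $u_i$ per slot) gives that the first $\fprog$ slots must witness $\Delta_{\Ge}$ distinct dominant slots, hence $\fprog \geq \Delta_{\Ge}$. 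In the probabilistic absMAC formulation the pigeon-hole is replaced by the per-slot inequality $\sum_i \Pr[u_i\text{ dominates at }v\text{ in slot }t]\leq 1$; summing over $t\leq\fprog$ and combining with the per-scenario guarantee $\Pr[u_i\text{ delivered by time }\fprog]\geq 1-\epsprog$ yields $\fprog \geq (1-\epsprog)\Delta_{\Ge}$.

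The main obstacle I foresee is reconciling the statement's ``full knowledge'' and ``central control'' caveats with the cross-scenario pigeon-hole step, since a clairvoyant online scheduler could in principle read the single $\mathrm{bcast}$ event in $\mathcal{I}_i$ and instantly serve $u_i$. The resolution I would argue is that ``full knowledge'' here refers to knowledge of the physical parameters $P,\alpha,\beta,N$ and the network topology, while the absMAC is still a single protocol that must satisfy the progress bound on \emph{every} adversarial input; the serialization lemma is a physical constraint not overridable by computation, randomness, or power control, and the $\Delta_{\Ge}$ distinct ``dominant-for-$u_i$ at $v$'' slots that the $\Delta_{\Ge}$ scenarios $\mathcal{I}_i$ collectively demand within the first $\fprog$ slots simply cannot all fit into fewer than $\Delta_{\Ge}$ slots. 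Handling the randomized case cleanly via the expectation upgrade above is the last piece of the argument.
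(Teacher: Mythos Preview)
Your argument has a genuine gap, and it is precisely the obstacle you yourself flag. The cross-scenario pigeon-hole does not go through. In scenario $\mathcal{I}_i$ only $u_i$ is broadcasting, so there is no interference at all; a centralized, omniscient scheduler (which the theorem explicitly allows) simply lets $u_i$ transmit in slot $1$ and $v$ receives. The serialization lemma is a constraint \emph{within a single execution}; it says nothing across different executions $\mathcal{I}_1,\ldots,\mathcal{I}_{\Delta_{\Ge}}$. Your proposed resolution --- that ``full knowledge'' means only topology and physical parameters --- contradicts the theorem's wording (``full knowledge as well as control of the network''), and in any case an adaptive protocol that reads which node invoked $\mathrm{bcast}$ and serves it immediately is a perfectly valid single protocol.

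The deeper structural problem is the single-receiver configuration. Even if you move to one execution with \emph{all} $u_i$ broadcasting simultaneously, progress at $v$ is satisfied as soon as $v$ receives \emph{some} message from \emph{some} $\Ge$-neighbor --- and one $u_i$ can deliver in slot $1$. The paper's construction avoids this by using $\Delta$ distinct receivers $u_1,\ldots,u_\Delta$ and $\Delta$ senders $v_1,\ldots,v_\Delta$ on two parallel lines at distance $\Ree$, arranged so that each $u_j$ has exactly one $\Ge$-neighbor among the senders (namely $v_j$), and so that any successful $v_i\to u_i$ transmission requires all other senders to be silent. With all $v_i$ broadcasting simultaneously in a \emph{single} execution, at most one receiver makes progress per slot (messages from non-$\Ge$-neighbors are discarded by the absMAC specification), so some $u_j$ waits $\Delta_{\Ge}$ slots. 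The key idea you are missing is this many-receivers-each-with-one-relevant-sender geometry, which converts ``one successful link per slot'' into ``one node makes progress per slot'' within one execution.
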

\ifshort
We defer the full proof to 
the full paper~\cite{halldorsson2015local-arxiv}. The key idea is to have two sets $U$ and $V$ of nodes, each set of nodes on a line with unit distance between nodes. These two lines are located at distance $\Ree:=10\Delta$ to each other such that at most one node in set $V$ can receive a message from $U$ at a time. Note that this is independent of $\eps$.
\fi
\iffull
\begin{proof}
We first recall a slightly more formal definition of $\fprog$ given in~\cite{
DBLP:journals/adhoc/KhabbazianKKL14}, Section 4.1. (we can choose $G=G'$ in their notation, as we do not consider unreliable links): \emph{a $rcv(m)_j$ event can only be caused by a $bcast(m)_i$ event when the proximity condition $(i,j) \in E$ is satisfied. The progress bound guarantees, that a $rcv$ event occurs at $j$ within time $\fprog$ when some neighbor of $j$ is broadcasting some message.} However, we make use of the assumption made in~\cite{DBLP:journals/adhoc/KhabbazianKKL14} that nodes perform only $rcv$ events for messages they received from $\Ge$-neighbors and discard messages from other nodes in transmission range. (See Remark~\ref{rem:exact} for a discussion on this and why this assumption is later not needed for our upper bounds with respect to global broadcast implementations.)

The reader might like to consult Figure~\ref{fig:example1} while following our construction. For simplicity we consider the Euclidean setting and the implied distances. Consider $\Delta$ nodes $V:=\{v_1,\dots,v_\Delta\}$ placed equidistant on a line with distance $1$ between neighboring nodes. Let $\Ree:=10\Delta$, i.e.~parameters $N, P$ and $\beta$ are chosen such that the transmission range is $10\Delta$, and consider a second line parallel to the first line at distance $\Ree$ to the first line. Now assume $\Delta$ nodes $U:=\{u_1,\dots,u_\Delta\}$ are placed equidistant on this second line with distance $1$ between neighboring nodes. Each node in $V$ and $U$ has degree $\Delta_\Ge=\Delta$ in $\Ge$. Each node in $V$ has exactly one edge in $\Ge$ to a node in $U$ and vice versa. W.l.o.g. we assume $v_i$ is connected to $u_i$. Now assume that $bcast(m_v)_v$ events occurred for each $v\in V$ and a message $m_v\in M$. Due to SINR constraints, a transmission through edge $(v_i,u_i)$ is only successful when no other node in $V\cup U \setminus \{v_i\}$ is sending at the same time. Although $u_i$ receives a message from $u_i$ in case of a successful transmission, no node in $U\setminus \{u_i\}$ receives a message. As there are $\Delta_\Ge$ pairs $(v_i,u_i)$, there is a node $u_j$ that does not receive a message from a $\Ge$-neighbor during the first $\Delta_\Ge-1$ time slots. Because $u_j$ has a neighbor that is broadcasting, we conclude that $\fprog\geq \Delta_\Ge$ for any algorithm. 
\end{proof}
\fi
\begin{figure}[ht]
	\begin{center}
		\includegraphics[width=.9\textwidth]{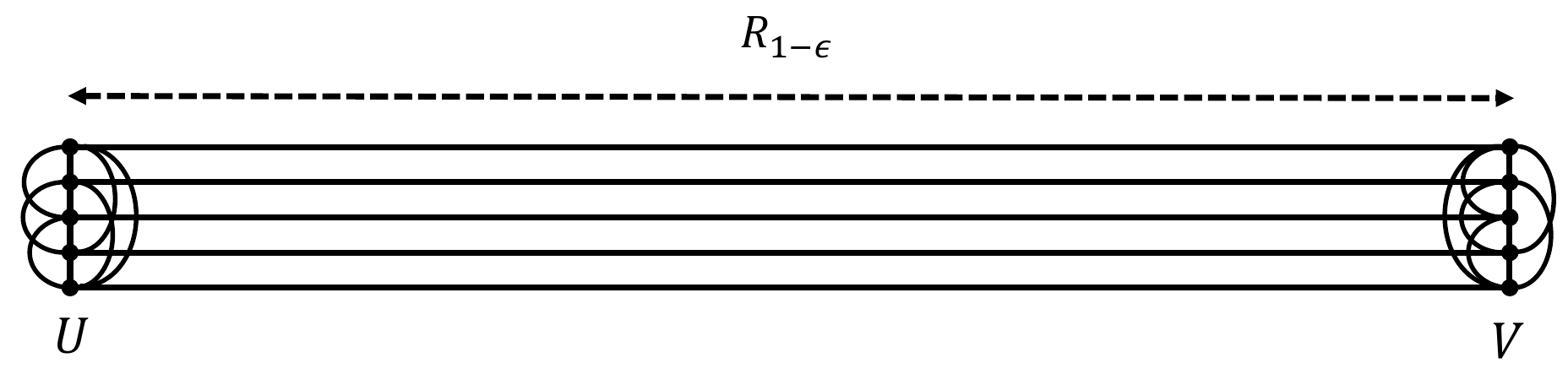}
	\end{center}
	\caption{Graph $\Ge$ based on the construction used in the proof of Theorem~\ref{thm:fprogLB}. Here we choose $\Delta=5$.}
	\label{fig:example1}
\end{figure}
\iffull
\begin{remark}[Comparison with previous lower bounds in the SINR model] Somewhat similar arguments were made earlier in~\cite{daum2013broadcast}. 
The same lower bound on $\fprog$ might also be derived by modifying the construction in the proof of Theorem 9 in~\cite{daum2013broadcast} to our setting when transferring it to the notion of $\fprog$. However, their lower bound is on the runtime for global single-message broadcast in (the weak connectivity) graph $G_{1}$ and therefore would need to be adapted. 
\end{remark}

\begin{remark}[Comparison with previous lower bounds in absMACs]
Note that in~\cite{DBLP:conf/wdag/GhaffariHLN12}, Theorems 7.1 and 7.2 also provide a lower bound of $\Omega(\Delta\log n)$ for $\fprog$ in the dual-graph model with unreliable links. The construction of their lower bound has a similar flavor to ours: Graph $G$ consists of $\Delta$ edges between nodes in $U$ and $V$ like in our example. Graph $G'$ is the complete bipartite graph over $U$ and $V$. Whenever one single node in $U$ is sending, an adversary prevents communication through unreliable edges (such that progress is only made at one node). When more than one node in $U$ is sending, the adversary allows communication through unreliable edges in a way that causes the lower bound. The difference to our model is, that in our setting the edges in $G'$ correspond to interference when a node incident to an edge is sending. This interference is fixed whenever a node is sending and cannot be switched on/off by an adversary. At the same time messages received via edges in $G'$ that are not in $G$ are discarded by the probabilistic MAC layer described in~\cite{DBLP:journals/adhoc/KhabbazianKKL14} such that no progress can be made using these edges. 
\end{remark}
\fi

\ifshort
Despite this lower bound we can already provide a first application of designing an absMAC for the SINR. Corollary~\ref{cor:cons} is an application of Theorem~\ref{thm:ack} to~\cite{DBLP:conf/podc/Newport14}, see 
the full version of this paper~\cite{halldorsson2015local-arxiv}.
\begin{corollary}[Theorem 4.2. of~\cite{DBLP:conf/podc/Newport14} transferred to our setting]\label{cor:cons}
In the SINR model using the assumptions of Section~\ref{sec:modelassumpt}, network-wide consensus can be solved with probability $1-\epsCONS$ in time 
$\fCONS=\BO\left(D_{\Ge}(\Delta_{\Ge}+\log(\Ratio))\log\left(\frac{n\Ratio}{\epsCONS}\right)\right).$
\end{corollary}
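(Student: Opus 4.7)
The plan is to invoke Theorem 4.2 of Newport~\cite{DBLP:conf/podc/Newport14} in a black-box manner and use Theorem~\ref{thm:ack} to supply the required absMAC implementation. Newport's wPAXOS algorithm solves network-wide consensus in $\BO(D_{\Ge}\cdot \fack)$ rounds over any connected communication graph, assuming a deterministic absMAC; the only conceptual work is bridging the gap between that deterministic specification and the probabilistic implementation that Theorem~\ref{thm:ack} actually provides.

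First I would choose the absMAC error parameters small enough to absorb into the overall failure budget for consensus. Let $t(n) = \BO(D_\Ge \fack)$ denote the worst-case number of rounds of wPAXOS. During any execution each of the $n$ nodes performs at most $\BO(t(n))$ absMAC invocations, so setting
$$\epsack \;=\; \epsprog \;=\; \Theta\!\left(\frac{\epsCONS}{n^{2}\, t(n)^{2}}\right)$$
and taking a union bound over all $\BO(n^{2} t(n)^{2})$ possible bad events ensures that, with probability at least $1-\epsCONS$, every absMAC call respects its acknowledgment and progress bounds. Conditioned on this good event, wPAXOS runs as in the deterministic absMAC model and therefore decides correctly within $\BO(D_\Ge \cdot \fack)$ rounds, establishing both the correctness probability and the functional form of the runtime.

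Second I would substitute this choice of $\epsack$ into the bound of Theorem~\ref{thm:ack}. Since $\Delta_\Ge \le n$, the acknowledgment-algorithm runtime is trivially polynomial in $n$ and $\Ratio$, so $t(n) \le \poly(n,\Ratio)$. Consequently $\log(\Ratio/\epsack) = \BO(\log(n\Ratio/\epsCONS))$, yielding
$$\fack \;=\; \BO\!\left((\Delta_\Ge + \log \Ratio)\log(n\Ratio/\epsCONS)\right),$$
and multiplying by $D_\Ge$ gives exactly the claimed bound on $\fCONS$.

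The only non-routine point---which I expect to be the main obstacle---is the mild self-reference in which $\fack$ appears both on the left of the final bound and inside the logarithm of the union-bound denominator. This is harmless: any crude polynomial a priori upper bound on $\fack$ (obtained by noting that the algorithm of Theorem~\ref{thm:ack} terminates in at most $\poly(n,\Ratio)$ rounds even for tiny $\epsack$) already makes $\log(n^{2} t(n)^{2}/\epsCONS) = \BO(\log(n\Ratio/\epsCONS))$, at which point substituting back into Theorem~\ref{thm:ack} yields the stated bound self-consistently.
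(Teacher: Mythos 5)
Your proposal is correct and follows essentially the same route as the paper: choose $\epsack=\epsprog$ of order $\epsCONS/\poly(n,\Ratio)$, union-bound over all absMAC invocations so that Newport's wPAXOS runs as in the deterministic model, and then substitute this $\epsack$ into the bound of Theorem~\ref{thm:ack} to get $\fack=\BO\left((\Delta_\Ge+\log\Ratio)\log(n\Ratio/\epsCONS)\right)$ and hence $\fCONS=\BO(D_\Ge\cdot\fack)$. Your explicit treatment of the mild self-reference (bounding $t(n)$ a priori by a polynomial in $n$ and $\Ratio$) is a point the paper leaves implicit, but it does not change the argument.
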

\fi

\section{Approximate Progress}\label{sec:fapprog}

\ifshort
Motivated by the lower bound of the previous section we 
\fi
\iffull
Due to the lower bound on progress of Section~\ref{sec:fprogLB} we cannot expect $\fprog$ to be much better than $\fack$ in any SINR implementation. However, like in other wireless models it should take much less time until some message is received by a node $v$ (when several neighbors of $v$ are sending) compared to the time it takes until all neighbors of a sending node $u$ receive $u$'s message. The problem might be that the absMAC specification tries to measure progress in this physical model using a definition of progress that tried to capture the whole complication of the SINR model by a single graph. 
Motivated by this we try to capture a sense of progress by using two graphs and
\fi
modify the absMAC specification. An easy way would be to relax the progress bound and output a rcv-event not only for messages sent by $\Ge$-neighbors, but for all message received (i.e.~sent by any $G_1$ neighbor). This is problematic when considering randomized algorithms. In particular when computing e.g. overlay networks. It might happen that only $G_1\setminus \Ge$-neighbors of a node $v$ are chosen for the overlay due to the random event of low interference. This could of course be avoided by directly implementing the absMAC with respect to $G_1$ rather than $\Ge$, which in turn results in a $\Omega(n)$ lower bound for $\fprog$ and $\fack$ (e.g. when all nodes are located at distance at least $\R_1$ such that messages can only be received when exactly one node is sending). Later these overlay nodes might not be able to serve $v$.  
To avoid such a setting, we introduce an \emph{approximate} progress bound into the absMAC specification, where we use a graph $G$ and an \emph{approximation} (or any subgraph) $\tilde{G}$ of $G$ in which progress is measured. 

In the next sections we show that this generalization of progress has three desirable properties, it
\iffull
\begin{enumerate}
\item 
\fi
\ifshort
1) 
\fi
captures SINR behavior in the sense that we present an absMAC implementation in the SINR model that provides fast (approximate) progress, and
\iffull
\item 
\fi
\ifshort
2) 
\fi
replaces (with minor assumptions and effects) the progress bound in the runtime-analysis of e.g. global single-message and multi-message broadcast in the MAC layer~\cite{DBLP:journals/adhoc/KhabbazianKKL14}, and 
\iffull
\item 
\fi
\ifshort
3) 
\fi
does not affect the correctness of these algorithms.
\iffull
\end{enumerate}
\fi
Therefore we consider this notion of approximate progress to be a good modification of the specification of abstract MAC layers with respect to the SINR model.
\begin{definition}[Approximate progress]\label{def:appr}
Let there be (reliable\footnote{The notation of approximate progress might later be extended to unreliable broadcast~\cite{unreliableTechreport}.}) broadcast implemented with respect to a graph $G$ and let $\tilde{G}:=(V,\tilde{E})$ be a subgraph\footnote{Graph $\tilde{G}$ can be any subgraph of $G$ but will typically be an approximation of $G$, which results in the name approximate progress. Later we consider graph $\tilde{G}:=\Gar$, which approximates $G:=\Ger$ with respect to the SINR formula and Euclidean distances in the sense that it contains all, but the longest edges of $G$.} of $G$. Consider a node $i$ and assume that a $\tilde{G}$-neighbor of $i$ is broadcasting a message. The approximate progress bound guarantees that a $rcv$ event with a message originating in a $G$-neighbor occurs at node $i$ within time $\fapprog$ with probability $1-\epsapprog$. We say that approximate progress is implemented with respect to graphs $G$ and (its approximation) $\tilde{G}$.

We formalize this using the notation of~\cite{DBLP:journals/adhoc/KhabbazianKKL14}: Let $\beta$ be a closed execution that ends at time $t$. Let $I$ be the set of $\tilde{G}$-neighbors of $j$ that have active $bcasts$ at the end of $\beta$, where $bcast(m_i)_i$ is the $bcast$ at $i$. Suppose that $I$ is nonempty. Let $I'$ be the set of $G$-neighbors of $j$ that 
have active $bcasts$ at the end of $\beta$. Suppose that no $rcv(m_i)_j$ event occurs in $\beta$, for any $i\in I'$. Define the following sets $A$ and $B$ of time-unbounded executions that extend $\beta$.
\ifshort
Set $A$ contains all executions in which no $abort(m_i)_i$ occurs for any $i\in I$. Set $B$ contains all executions in which, by time $t + \fapprog$, at least one of the following occurs: 1) an $ack(m_i)_i$ for every $i\in I$, 2) a $rcv(m_i)_j$ for some $i\in I'$, or 3) A $rcv_j$ for some message whose $bcast$ occurs after $\beta$. If $\Pro_\beta[A]>0$, then $\Pro_\beta[B|A] \geq 1 - \epsapprog$.
\fi
\iffull
\begin{itemize}
\item $A$, the executions in which no $abort(m_i)_i$ occurs for any $i\in I$.
\item $B$, the executions in which, by time $t + \fapprog$, at least one of the 
following occurs:
\begin{enumerate}
\item An $ack(m_i)_i$ for every $i\in I$,
\item A $rcv(m_i)_j$ for some $i\in I'$, or
\item A $rcv_j$ for some message whose $bcast$ occurs after $\beta$.
\end{enumerate}
\end{itemize}
If $\Pro_\beta[A]>0$, then $\Pro_\beta[B|A] \geq 1 - \epsapprog$.
\fi
\end{definition}

This notation is useful, as there are settings where it is not crucial that progress is made with respect to exactly $G$. Already progress in subgraph $\tilde{G}$ might yield good overall bounds for solving a problem on $G$ especially when e.g. (depending on the problem at hand) $D_{\tilde{G}}\approx D_G$ or $\tilde{G}^2$. As we show in Theorem~\ref{thm:replace}, in the global SMB and MMB algorithms of~\cite{DBLP:journals/adhoc/KhabbazianKKL14} local broadcast does not need to be precise such that under some conditions progress can be replaced by approximate progress. 
\iffull
In the global broadcast algorithms of~\cite{DBLP:journals/adhoc/KhabbazianKKL14}, once a message is received by a node $i$, node $i$ broadcasts the message if it did not broadcast it before. The result of global broadcast is independent of whether a message was received due to transmission from a $\tilde{G}$-neighbor or a $G$-neighbor. However, one still needs to consider $\fack$ with respect to $G$. At the same time this obervation allows us to express large parts of runtimes of global broadcast algorithms~\cite{DBLP:journals/adhoc/KhabbazianKKL14} in terms of $D_{\tilde{G}}$ and $\fapprog$ instead of $D_{G}$ and $\fprog$. 
In graph based models, one could choose e.g. $G:=\tilde{G}^2$, the graph that is derived from $\tilde{G}$ when all paths of length at most $2$ in $\tilde{G}$ are replaced by edges. 
\fi
In the SINR model one might choose, e.g., $G:=\Ger\supseteq \Gar=:\tilde{G}$, as we do. This choice captures that any $\Ger$-neighbor is almost a $\Gar$-neighbor. In addition its signal has a similar strength when it arrives at the receiver and in reality might even be the same, as signal strengths can vary slightly. 
\iffull
The factor $2$ in $1-2\eps$ can be chosen arbitrarily small, but must be greater than 1.
\fi
\ifshort
We discuss differences to the dual-graph model for unreliable communication of~\cite{unreliableTechreport} in 
the full version of this paper~\cite{halldorsson2015local-arxiv}.
\fi
\iffull
\begin{remark}\label{rem:fapprogrem} Like~\cite{unreliableTechreport} we study a dual-graph model. However, in our setting all communication is reliable. They consider a setting where reliable communication is provided in $G$ and communication which is unreliable in a non-deterministic way in $G'\supset G$. We extend this by a third graph $\tilde{G}\subset G$ such that unreliability can be studied in addition to approximate progress in future work. Note that approximate progress in $\tilde{G}$ inherits reliable communication from $G$. It is important to note that lower bounds on MMB of~\cite{unreliableTechreport} in their \emph{gray-zone model} with unreliable graphs do not apply in our setting. Their lower bounds are invalid, as we consider reliable broadcast in $G\supset\tilde{G}$ and assume that $G$ is connected. 
\end{remark}
\fi

\iffull
\section{Decay Fails to Yield Fast Approximate Progress}\label{sec:mod-decay}

Inspired by a proof of~\cite{daum2013broadcast}, we transfer this result into our setting. We show that using a (standard) \textsc{Decay} method, one cannot achieve fast approximate progress in the SINR model. In Section~\ref{sec:highlevel} we present an implementation based on an algorithm of~\cite{daum2013broadcast} that uses a different strategy than \textsc{Decay} achieves fast approximate progress and is analyzed in a more precise way.  
\begin{theorem}\label{lem:LB}
When using the \textsc{Decay} method of~\cite{yehuda} to implement local broadcast of a MAC layer in the SINR model, it holds that $\fapprog=\Omega(\Delta_\Ge\log(1/\epsapprog))$.
\end{theorem}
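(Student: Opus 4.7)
The plan is to exhibit an adversarial SINR configuration in which DECAY is provably slow at producing the first reception event at a fixed target node. First, I would place the target $i$ at the origin and position $\Delta := \Delta_\Ge$ neighbors $u_1, \dots, u_\Delta$ on the circle of radius $\R_{1-2\eps}$ around $i$, so that each $u_k$ is a $\Gar$-neighbor (hence also a $\Ger$-neighbor) of $i$. All $u_k$ start a $bcast$ at time $0$. To make the cycle length of DECAY grow with $\Delta$, I place an additional collection of ``dummy'' nodes far outside $i$'s transmission range $\R_1$ (e.g., in a distant cluster) so that the total number of nodes satisfies $n \ge 2^\Delta$; these dummies contribute no detectable signal at $i$, so they are harmless for the reception analysis but they determine the number of probability levels that DECAY must cycle through.

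The core SINR computation is standard: if $k$ of the $u_j$'s transmit simultaneously at the common distance $\R_{1-2\eps}$ (and no other node in transmission range is active), then the SINR at $i$ for any one of them is
\[
\text{SINR} \;=\; \frac{1}{(k-1) + (1-2\eps)^\alpha/\beta}.
\]
Because $\beta > 1$ and $(1-2\eps)^\alpha < 1$, this quantity is $\ge \beta$ only when $k = 1$, so $i$ can receive a message only in a round in which exactly one of its $\Delta$ neighbors transmits. In a DECAY round that uses transmission probability $p$, this happens with probability $\Delta p(1-p)^{\Delta-1}$, which peaks near $1/e$ at $p = 1/\Delta$ and is exponentially small once $p$ is a constant factor away from $1/\Delta$.

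Next I would exploit the non-adaptive structure of DECAY: a cycle deterministically visits the probabilities $2^{-j}$ for $j = 0, 1, \dots, \lceil \log n\rceil$, and only $\BO(1)$ of these levels lie in the window $p = \Theta(1/\Delta)$ where the per-round success bound is substantial. Summing $\Delta 2^{-j}(1-2^{-j})^{\Delta-1}$ across one cycle gives a convergent geometric series of total value $c_0 < 1$, so the probability that $i$ makes no progress during an entire DECAY cycle is at least $1-c_0$. By our choice of $n$ each cycle takes $\Omega(\log n) = \Omega(\Delta)$ rounds. Moreover, the outcomes of disjoint cycles are independent (DECAY resamples fresh coins each round), so to drive the failure probability below $\epsapprog$ we need at least $\Omega(\log(1/\epsapprog))$ cycles, yielding $\fapprog = \Omega(\Delta \log(1/\epsapprog))$.

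The main obstacle I expect is ruling out that DECAY could ``luck into'' using the correct $p \approx 1/\Delta$ more often than the cycle structure allows: this is the point of choosing $n \ge 2^\Delta$, which forces DECAY (as a non-adaptive protocol that does not know the local contention $\Delta$) to spend at least $\Omega(\Delta)$ rounds per cycle exploring useless probability levels. A smaller technical point is verifying the SINR calculation for boundary values of $\eps$, $\alpha$ and $\beta$ to guarantee the ``exactly one transmitter'' requirement holds strictly; this amounts to choosing $\eps$ small enough (and is consistent with the constant $\eps$ used throughout the paper).
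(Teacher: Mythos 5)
Your construction is genuinely different from the paper's, and it contains a real gap: the factor $\Delta_\Ge$ in your bound comes entirely from the \emph{length of one Decay sweep}, which you force to be $\Omega(\Delta_\Ge)$ by padding the network with $n\geq 2^{\Delta_\Ge}$ dummy nodes. That only works against the specific non-adaptive variant of \textsc{Decay} that sweeps deterministically through all levels $2^{-j}$, $j=0,\dots,\lceil\log n\rceil$. The variant the paper actually analyzes is adaptive (each node halves its probability until the first non-collision and then \emph{keeps} that probability), and against that variant your single-ring configuration is easy: the $\Delta$ ring nodes settle near $p\approx 1/\Delta$ after $\BO(\log\Delta_\Ge)$ steps, after which the per-round success probability $\Delta p(1-p)^{\Delta-1}$ is a constant, so $i$ makes progress in $\BO(\log\Delta_\Ge+\log(1/\epsapprog))$ rounds and your lower bound collapses. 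The same happens if the sweep length is $\BO(\log\Delta_\Ge)$ rather than $\BO(\log n)$. The paper avoids this fragility with a two-cluster construction: a \emph{pair} of nodes in a small ball $B_1$ must communicate while a ball $B_2$ containing $\Delta_\Ge$ nodes sits just outside $\Ge$-range but well inside interference range. Then \emph{no} common probability level works --- when $p$ is large enough for the sparse pair, $B_2$ jams; when $p\approx 1/\Delta_\Ge$ silences $B_2$, the pair almost never transmits --- so the success probability summed over an entire sweep is only $\BO(\log\Delta_\Ge/\Delta_\Ge)$, and $\Omega(\Delta_\Ge\log(1/\epsapprog))$ rounds are needed regardless of how the sweep is scheduled or adapted. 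That mismatch between local contention at the receiver and interference from a dense region elsewhere is the actual phenomenon the theorem is meant to capture.

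Two smaller points. First, your claim that $c_0:=\sum_j \Delta 2^{-j}(1-2^{-j})^{\Delta-1}<1$ is numerically false: substituting $x_j=\Delta 2^{-j}$ the sum is $\sum_j x_j e^{-x_j(1+o(1))}\approx 1.4$, so ``failure probability at least $1-c_0$'' is vacuous. The conclusion you want (constant failure probability per sweep) is still true, but you must bound $\prod_j(1-q_j)$ directly using $q_j\leq 1/e$ and $\sum_j q_j=\BO(1)$, not a union bound. Second, the $2^{\Delta_\Ge}$ dummies must be arranged so that they neither raise $\Delta_\Ge$ (a single ``distant cluster'' would have internal degree $2^{\Delta_\Ge}$) nor contribute non-negligible aggregate SINR interference at $i$; in the SINR model exponentially many nodes must be placed at distance $\Omega(2^{\Delta_\Ge/\alpha}\R)$ for their summed interference to stay below a constant fraction of the noise. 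Both are fixable, but the first issue above is structural and would require adopting something like the paper's two-ball construction to repair.
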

\begin{proof}In the (standard) \textsc{Decay} method of~\cite{yehuda} for graph-based models with collision detection, each node starts with sending probability $1$ and halves its transmission probability in each time slot until a sending probability is reached where no collision occurred for the first time. Then it keeps transmitting with this probability. This method can be applied in the SINR model as well (note that adding the assumption of collision detection yields a stronger lower bound than using our model assumptions). 

Consider two $\R_{1/4}$-balls whose centers are located at distance $\R_2$. Let ball $B_1$ contain $2$ nodes and let ball $B_2$ contain $\Delta_\Ge\leq n/2$ nodes. In the corresponding graph $\Ge$ the nodes located in different balls are not directly connected. We assume that the remaining $n/2-2$ nodes are arranged such that the nodes in the two balls are connected by a path of length $n/2-1$ in $\Ge$. Let's assume each of the nodes in $B_1$ and $B_2$ wants to broadcast a message and we perform a (standard) \textsc{Decay} mechanism. Once the probabilities reach a level where the nodes in $B_1$ are likely to transmit, the interference from nodes in $B_2$ is very strong. To be more formal, in round $i$, the probability that exactly one node in $B_1$ is sending is less than $2^{-(\log \Delta_\Ge-i)}$. The probability that no node in $B_2$ is sending is $(1-1/2^{\log (\Delta_\Ge)-i})^{\Delta_\Ge}\leq e^{-(\Delta_\Ge)/2^{\log (\Delta_\Ge)-i}}$. Thus the success probability of a node in $B_1$ is at most $e^{-(\Delta_\Ge)/2^{\log (\Delta_\Ge)-i}}/2^{-{\log (\Delta_\Ge)+i}}$. From this we conclude that the probability that a successful transmission takes place in $B_1$ within $\log (\Delta_\Ge)$ rounds with $i=1,\dots,\log (\Delta_\Ge)$ is less than
\begin{eqnarray*}
&&
\sum_{i=1}^{\log (\Delta_\Ge)} e^{-(\log (\Delta_\Ge))/2^{\log (\Delta_\Ge)-i}}/2^{-\log (\Delta_\Ge)+i}
\\
&=&
\sum_{i=\log (\Delta_\Ge) -\log\log (\Delta_\Ge)}^{\log (\Delta_\Ge)} e^{-\log (\Delta_\Ge)/2^i}/2^{-\log (\Delta_\Ge)+i}
\\
&&
 \ \ + \ \ 
\sum_{i=1}^{\log (\Delta_\Ge) - \log\log (\Delta_\Ge)} e^{-\log (\Delta_\Ge)/2^i}/2^{-\log (\Delta_\Ge)+i}
\end{eqnarray*}
We bound this by
\begin{eqnarray*}
&\leq&
e^{-\log (\Delta_\Ge)/2^{\log \log (\Delta_\Ge)}} \sum_{i=\log (\Delta_\Ge)-\log\log (\Delta_\Ge)}^{\log (\Delta_\Ge)} 1/2^{-\log (\Delta_\Ge)+i} 
\\
&& \ \ + \ \ 
 e^{-\log (\Delta_\Ge)/2^{\log (\Delta_\Ge)-\log\log (\Delta_\Ge)}}\sum_{i=1}^{\log (\Delta_\Ge)-\log\log (\Delta_\Ge)} 1/2^{\log (\Delta_\Ge)-i} 
\\
&\leq&  e^{-1/2}/2^{-\log (\Delta_\Ge)-\log\log (\Delta_\Ge) + 1} + 2 e^{-\log (\Delta_\Ge) +1}
\\
&\leq&
c \frac{\log (\Delta_\Ge)}{\Delta_\Ge} \text{ for some constant } c
\end{eqnarray*}

\noindent Therefore, $\left(c \frac{\log (\Delta_\Ge)}{\Delta_\Ge}\right)^{-1}\ln(1/\epsprog)$ repetitions of $\log (\Delta_\Ge)$ rounds $i=1,\dots,\log (\Delta_\Ge)$ are necessary such that the nodes in $B_1$ make progress with probability $\epsapprog$. We conclude that $\fapprog=\Omega(\Delta_\Ge\log (1/\epsprog)$. 
\end{proof}

Note that the authors of~\cite{daum2013broadcast} presented a lower bound of $\Omega(n)$ for SMB in $\Ge$ (Theorem 8 of~\cite{daum2013broadcast}), when using the \textsc{Decay} method of~\cite{yehuda}. This $\Omega(n)$ lower bound is of interest, as the construction of~\cite{daum2013broadcast} allows for an algorithm that needs only $\BO(1)$ rounds for SMB. Looking more closely at their lower bound, this can be interpreted as $\fapprog=\Omega(n)$ when $\epsapprog=n^{-c}$. We strengthen this lower bound for $\fapprog$ to $\fapprog=\Omega(n\log(1/\epsapprog))$. In the proof of this lower bound we use that the SINR model takes global interference into account (in contrast to graph based models). Also note that the proof of Theorem 8 of~\cite{daum2013broadcast} uses a network with maximal degree $\BO(n)$, and it can be easily generalized to yield $\fprog=\Omega(\Delta_\Ge)$ for arbitrary maximal degrees $\Delta_\Ge$.
\fi

\section{Implementation of Fast Approximate Progress}\label{sec:highlevel}
\ifshort
We implement approximate progress with respect to $G:=\Ge$ and $\tilde{G}:=\Ga$. In 
the full version of this paper~\cite{halldorsson2015local-arxiv} we show that the \textsc{Decay} method cannot achieve fast approximate progress in the SINR model. Therefore 
\fi
\iffull
Now 
\fi
we describe a method different from \textsc{Decay} 
\ifshort
and obtain:
\fi
\iffull
. Note that during an execution of the implementation additional messages from nodes that are $G_1$-neighbors but not $\Ge$-neighbors might occur in a probabilistic way. These do not affect our delay bound for approximate progress with respect to $\Ga$, as the analysis guarantees that messages from $\Ge$-neighbors arrive within time $\fapprog$. Note that Remark~\ref{rem:notes} applies to this Algorithm as well. This Algorithm~\ref{alg:AdHoc2} is described in this section and analyzed in Section~\ref{sec:anaI}.
\fi

\begin{theorem}\label{thm:approg}
In the SINR model using the assumptions of Section~\ref{sec:modelassumpt}, 
\ifshort
 we implement
\fi
\iffull
Algorithm~\ref{alg:AdHoc2} implements 
\fi
 approximate progress of an absMAC with respect to graphs $\Ge$ and its approximation $\Ga$ with probability at least $1-\epsapprog$ in time 
 approximate progress of an absMAC with respect to graphs $\Ge$ and its approximation $\Ga$ with probability at least $1-\epsapprog$ in time 
$$\fapprog =\BO\left(\left(\log^\alpha(\Ratio) + \log^*\left(\frac{1}{\epsapprog}\right)\right)\log(\Ratio)\log\left(\frac{1}{\epsapprog}\right)\right).$$
\end{theorem}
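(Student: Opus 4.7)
The plan is to take the single-message broadcast algorithm of Daum et al.~\cite{daum2013broadcast} and adapt it in two ways: first, reshape it as a local, acknowledgment-free subroutine that makes approximate progress every epoch; second, and more importantly, replace the global, network-wide union bounds of~\cite{daum2013broadcast} by strictly local ones, so that runtime and error probability are expressed in terms of $\Ratio$, the growth function $f$, and $\epsapprog$ rather than $n$. I would structure each epoch into $\Phi = \Theta(\log \Ratio)$ phases $\phi = 1, \ldots, \Phi$, where phase $\phi$ operates at distance scale $h_\phi$. In each phase the currently active sender-set $S_\phi$ is iteratively sparsified: every node transmits for $T = \Theta\!\left((\log f(h_\phi) + \log(1/\epsapprog))/(\gamma^2\mu)\right)$ slots with probability roughly $1/Q$ (where $Q = \log^\alpha \Ratio$) to locally estimate the reliable-reception graph $\tilde{\tilde H}_p^\mu[S_\phi]$, and then runs the Schneider--Wattenhofer MIS algorithm~\cite{DBLP:conf/podc/SchneiderW08} on that estimate in $O(\SWn)$ rounds; only MIS nodes survive into $S_{\phi+1}$. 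After $\Phi$ phases, geometric scaling guarantees that in the $\R_{1-\eps}$-neighborhood of any receiver $i$ whose $\Ga$-neighborhood contains an active bcast, only $O(1)$ senders remain, which is enough to generate an SINR-successful reception originating in a $\Ger$-neighbor.

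The central step, and the step I expect to be the main obstacle, is converting the analysis from global to local. Concretely, at location $i$ I would only ever union-bound over the sender-shell $S_{\phi,i} = N_{\tilde{\tilde H}_p^\mu[S_\phi],\,h_\phi}(U_{\phi,i})$, which by the growth-bounded lemma has cardinality polynomial in $\Delta_\Ge$ and $h_\phi$, and therefore contributes only $\log f(h_\phi) + \log(1/\epsapprog)$ to each repetition count $T$ rather than $\log n$. This turns the $\log n$ factor of~\cite{daum2013broadcast} into $\log^*(1/\epsapprog)$ plus lower-order terms and is precisely what yields the stated dependence on $\epsapprog$ instead of a w.h.p.\ dependence on $n$.

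To make that local bound legitimate despite \emph{global} SINR interference, I would argue two things in parallel, following but refining the argument of~\cite{daum2013broadcast}. First, let $W$ denote the set of nodes that carry an incorrect estimate of some $\tilde{\tilde H}_p^\mu[S_\phi]$ and therefore keep participating erroneously; using $\alpha > 2$ and the fact that $W$ is sparsified in every phase just like $S_\phi$, the total SINR interference contributed at $i$ by nodes outside $S_{\phi,i}$ stays below a constant fraction of a single strong signal with probability $1 - \epsapprog/(2\Phi)$ per phase. Second, inside $S_{\phi,i}$, I would prove inductively on $\phi$ that with probability $1 - \epsapprog/(2\Phi)$ the chosen independent set is $(\phi,i)$-locally maximal around $U_{\phi,i}$; both steps rely on the local union bound above, so neither reintroduces $n$. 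A union bound over $\Phi$ phases then gives overall error $\epsapprog$, and the "closest surviving $u_\phi \in S_\phi$ to $i$" lemma (analog of Lemma~4.5 of~\cite{daum2013broadcast}, used here in the notation set up before the theorem) then delivers the receive event at $i$.

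The time bound falls out by multiplication. Each phase costs $T \cdot Q + O(\SWn)$ slots: $Q = \log^\alpha \Ratio$ accounts for the random-access slotting needed to overcome the small transmission probability $1/Q$, $T = O(\log f(h_\phi)/(\gamma^2 \mu) + \log(1/\epsapprog))$ accounts for the repetitions needed for locally correct estimation of $\tilde{\tilde H}_p^\mu$, and $\SWn = O(\log^*(\Ratio/\epsapprog))$ for the MIS. Since $h_1 = O(4^\Phi \SWn)$ is polynomial in $\Ratio$ and $\log^*(1/\epsapprog)$, one has $\log f(h_\phi) = O(\log \Ratio)$, and summing over the $\Phi = \Theta(\log \Ratio)$ phases gives
\[
\fapprog \;=\; \BO\!\left(\bigl(\log^\alpha(\Ratio) + \log^*(1/\epsapprog)\bigr)\,\log(\Ratio)\,\log(1/\epsapprog)\right),
\]
as claimed. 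I expect the bulk of the technical work to lie in the first half of the inductive step of the previous paragraph: showing that local correctness of the MIS on $\tilde{\tilde H}_p^\mu[S_\phi]$ survives the global interference from $W$ without forcing a $\polylog n$ blow-up.
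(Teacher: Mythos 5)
Your overall plan is the paper's: decompose each epoch into $\Phi=\Theta(\log\Ratio)$ phases, sparsify $S_\phi$ via an MIS on a locally estimated reliability graph $\tilde{\tilde H}_p^\mu[S_\phi]$, bound the stray interference from the erroneously participating set $W$, and replace all network-wide union bounds by union bounds over $S_{\phi,i}$ using growth-boundedness. Two points, however, are genuine gaps rather than omitted routine detail.

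First, you invoke the Schneider--Wattenhofer MIS algorithm in $O(\SWn)$ rounds without saying what identifiers it runs on. Its round complexity is $\log^*$ of the \emph{identifier range}, so with the unique IDs in $[1,\poly n]$ used by~\cite{daum2013broadcast} you would reintroduce a $\log^* n$ term and lose the very $n$-independence the theorem asserts. The paper's fix is to draw fresh non-unique temporary labels uniformly from $[1,\poly(\Ratio)/\epsapprog]$ in each phase, modify the MIS algorithm to terminate after a fixed number of stages (since with colliding labels it need not terminate on its own), and show (Lemma~\ref{lem:locMIS}) that labels are unique within the relevant $\hPhi$-neighborhood of $U_{\phi,i}$ with probability $1-\epsapprog/(3\Phi)$, so the output is always independent and is locally maximal with that probability. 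This is where the $\log^*(\Ratio/\epsapprog)$ (hence the $\log^*(1/\epsapprog)$ in the stated bound) actually comes from; without it your argument does not close. Second, your cost accounting conflates the two transmission regimes: the graph $H_p^\mu[S_\phi]$ is defined and estimated with \emph{constant} probability $p$ over $T$ slots (and the MIS simulation costs $T\cdot\SWn$, not $O(\SWn)$, since each CONGEST round is simulated by $T$ SINR slots), while the probability $p/Q$ is used only in the separate bcast block of $\BO(Q\log(1/\epsapprog))$ rounds, where the factor $Q^{-1}$ buys the $Q^{1/\alpha}$-times-larger reception radius needed to reach $i$. Taken literally, your per-phase cost $T\cdot Q$ sums to an extra $\log^{\alpha+2}(\Ratio)$ term that is not dominated by the claimed bound when $\epsapprog$ is constant. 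Finally, your stated reason that $W$'s interference is small (``$W$ is sparsified in every phase just like $S_\phi$'') is not the right mechanism---membership in $W$ means sparsification \emph{failed} for that node; the correct argument (Lemma~\ref{lem:wronginterference}) is that each node enters $W$ only with probability $(\epsapprog/f(h_1))^{\Theta(1)}$ by a Chernoff bound over the $T$ repetitions, so $W$ has tiny expected density per $\Ra$-grid cell and its expected interference at $i$ converges because $\alpha>2$.
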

\iffull
The algorithm presented by~\cite{daum2013broadcast} achieves global SMB in the strong connectivity graph $\Ge$. We modify this algorithm to fast (probabilistic) approximate progress with respect to $\Ga$.
In the algorithm of~\cite{daum2013broadcast}, after a node receives a bcast-message, it immediately forwards this (uniform) bcast-message. Inspired by this we implement this part in a similar style. However, we handle the possibility of multiple bcast-messages and need to guarantee that fast approximate progress can be proven. However, the modifications of their algorithm are substantial as described in this section to make it suitable for a localized analysis. In particular, in order to get an improved time-bound, we need to 1) introduce non-unique temporary labels instead of using unique IDs and handle this non-uniqueness, 2) acknowledge certain messages involved in coordination below the MAC layer, and 3) reduce the number $T$ of repeated transmissions such that $T$ is just large enough to guarantee low expected global interference from parts of the plane where computations went into a wrong direction based on communication-mistakes due to the reduced number $T$ of repetitions. The analysis in Section~\ref{sec:anaI} uses several Lemmas from~\cite{daum2013broadcast}. Whenever proofs of~\cite{daum2013broadcast} do not need to be changed significantly, we state versions of them adapted to our setting in Appendix~\ref{app:daum}. 
\fi

\ifshort
\subsection{Algorithm}
The algorithm presented by~\cite{daum2013broadcast} achieves w.h.p. global SMB in $\Ge$. We review this algorithm and show how to modify it to guarantee fast (probabilistic) approximate progress with respect to $\Ga$. In the following, set $S_1$ contains all nodes with an ongoing broadcast. Set $S_1$ changes after each epoch depending on the algorithm using the absMAC.
\paragraph{High-level description of Algorithm~1 of~\cite{daum2013broadcast} and the intuition behind it.}
This algorithm performs $D_\Ge$ many epochs. For $\Phi=\Theta(\log\Ratio)$, each epoch computes approximations $\tilde{H}_1, \tilde{H}_2,\dots, \tilde{H}_\Phi$ of a sequence of constant degree graphs $H_1, H_2,\dots, H_\Phi$. Each $H_\phi$ is defined based on nodes $S_\phi$, s.t. when each node in $S_\phi$ transmits with probability $p\in (0,1/2]$, the transmission corresponding to an edge of $H_\phi$ is successful with probability $\mu\in(0,p)$. Sets $S_\phi$, $\phi\in[2,\Phi]$, are maximal independent sets in $\tilde{H}_{\phi-1}$ and the algorithm of~\cite{DBLP:conf/podc/SchneiderW08} is simulated to compute such MIS (and uses a node's unique ID $\in \poly n$ as an input). Each transmission during the computations of $S_\phi$ and $\tilde{H}_\phi$ is repeated $T:=\Theta\left(\log n\right)$ to ensure w.h.p. correctness. Finally for each $\phi$, all nodes $S_\phi$ transmit their bcast-message $\Theta(\log^\alpha (\Ratio)\log (n))$ times. Intuitively $S_\phi$ is a sparser version of $S_{\phi-1}$ and~\cite{daum2013broadcast} shows that $S_\Phi$ contains only nodes that cannot communicate with each other. Using this and further insights they argue that for any node in $N_\Ge(S_1)$ there is a $\phi\in[1,\Phi]$ such that 1) there is a node $u\in S_\phi$ at distance at most $\Ree$, and 2) the density of $S_\phi$ is so low that interference from other nodes allows $u_\phi$'s message to reaches $u$ w.h.p. (when the transmission is repeated sufficiently often). This shows that in each epoch all nodes in $N_\Ge(S_1)$ receive the (single!) bcast-message\footnote{We denote by \emph{bcast-message} any messages that contains information to be broadcast due to a bcast-event. By \emph{messages}, we refer to messages sent for coordination among the nodes.} w.h.p.. 
We provide more details in 
the full version of this paper~\cite{halldorsson2015local-arxiv}.
\fi
\iffull
\subsection{High-Level Description}
We start by presenting a high-level outline of the algorithm.
We follow the approach of~\cite{daum2013broadcast} and perform epochs, each consisting of $\fprog$ time steps. 
Each epoch corresponds to Lines~\ref{line:2}--\ref{line:3} of Algorithm~\ref{alg:AdHoc2}. 
During each epoch we compute approximations of a sequence of constant degree graphs $H_1, H_2,\dots, H_\Phi$, $\Phi=\Theta(\log\Ratio)$, used for communication. Graph $H_1$ is defined based on vertex set $S_1$, which is the set of nodes that have an ongoing broadcast at this time. As this set $S_1$ might change over time (depending on the algorithm using the absMAC and conditional wale-up, see Definition~\ref{def:condwake}), graphs $H_1, H_2,\dots, H_\Phi$ might be different in different epochs. Each $H_\phi$, $\phi>1$, is defined based on the nodes of a maximal independent set in $H_{\phi-1}$. For each $H_\phi$ it is guaranteed that, when each node in $H_\phi$ transmits with a certain (constant) probability $p\in (0,1/2]$, then for each edge $e$ of $H_\phi$ the transmission through $e$ is successful with a (constant) probability $\mu\in(0,p)$. 
Using geometric arguments we show in Lemma~\ref{lem:4} that when for $\Phi$ phases $\phi=1,\dots,\Phi$ during phase $\phi$ all nodes of graph $H_\phi$ transmit their message a certain amount of times, then approximate progress takes place in $\Ga$ within time $\fapprog=\BO\left(\left(\log^\alpha(\Ratio)+ \log^*\left(\frac{1}{\epsapprog}\right)\right)\log(\Ratio)\log\left(\frac{1}{\epsapprog}\right)\right)$. This happens with probability $1-\epsapprog$.

\paragraph{Intuition behind this algorithm:} Intuitively, this algorithm automatically adapts to regions of varying density. As the vertex set of graph $H_\phi$ is an MIS of $H_{\phi-1}$, it is typically a sparser version (with respect to density of nodes in the plane) of $H_{\phi-1}$. Finally we show that $H_\Phi$ is so sparse that nodes are too far away to communicate due to SINR constraints. Due to this sparsity, each node at this level is able to broadcast a message to its $\Ge$-neighbor with some probability. During this algorithm it will turn out that for each node $u \in N_\Ga(S_1)$, that has $\Ga$-neighbor with an ongoing broadcast, there is a $\Ge$-neighbor of $u$ in some $S_\phi$ from which $u$ receives a message in phase $\phi$. In particular, in phase $\phi$ the local density of nodes is reduced in a way that 1) there is still a node $u_\phi$ at distance at most $\Ree$, and 2) the density of nodes is so low that interference from these other nodes is low enough that $u_\phi$'s message reaches $u$ with some probability (due to random transmissions which further sparsify the set of transmitting nodes). We modify and extend the algorithm and analysis of~\cite{daum2013broadcast} and choose parameters of the algorithm to our benefit.

\paragraph{Suitability for localized analysis:} Thanks to the MAC layer that helps us to treat global and local parts of an algorithm separately in a structured way, we only need to provide an algorithm that ensures local approximate progress in order to implement this part of the MAC layer, while the algorithm of~\cite{daum2013broadcast} has to ensure global broadcast (and focuses on single-message broadcast, while we study multi-message broadcast). Note that therefore the authors of~\cite{daum2013broadcast} need to ensure that all their iterative computations of global approximations of communication graphs $H_\phi$ have the desired approximation-quality with high probability in $n$. Compared to this, we only need to make sure that for any point $i$ in space these graphs are local approximations with a certain probability. Therefore we require only a much lower probability and gain a speedup from this. In particular this probability only depends on the number of coordination-messages exchanged by those nodes that are locally involved in ensuring approximate progress of bcast-messages\footnote{We denote by \emph{bcast-message} any messages that contains information to be broadcast due to an bcast-event. By \emph{messages}, we refer to messages sent for coordination among the nodes.} that might reach $i$, as the sender is at distance at most $1-\eps$ to $i$.
\fi

\ifshort
\paragraph{Our modifications and motivation behind these changes:} 
(I) We replace the inputs for the MIS algorithm. Instead of unique ID $\in \poly n$ we use temporary labels $l_{i,\phi}\in\left[1,(\poly \Ratio)/\epsapprog\right]$.
(II) We replace $T=\Theta\left(\log n\right)$ by $\Theta\left(\log(\Ratio/\epsapprog)\right)$ and reduce the number of repeated transmissions of bcast-messages from $\BO(\log^\alpha (\Ratio)\log (n))$ to $\BO(\log^\alpha (\Ratio)\log (1/\epsapprog))$. 
(III) We rename the computed graphs from $\tilde{H}_\phi$ to $\tilde{\tilde{H}}_\phi$.
(IV) We execute the algorithm with respect to $\Ga$ instead of $\Ge$.
If $\epsapprog>n^{-c}$, these modifications reduce the runtime of an epoch, but also lower the probability of correctness. Therefore computed graphs are very unlikely to be global approximations of $H_\phi$ (and we change their name to $\tilde{\tilde{H}}_\phi$). Still, the parameters are chosen such that we can show that the probability of approximate progress is at least $1-\epsapprog$ as outlined in Section~\ref{subsec:ana}.
\fi

\iffull
We make use of this locality aspect (in combination with the carefully chosen parameters) and perform a more careful and localized analysis extending the one of~\cite{daum2013broadcast}.

Naturally, some parts of our proof follow along the lines of the proof in~\cite{daum2013broadcast} or argue how their proofs can be adapted. However, they are significantly extended to derive the speedup from our modifications of their algorithm. In the end our detailed analysis of approximate progress yields faster global SMB than~\cite{daum2013broadcast}, see Section~\ref{sec:comb2}. 

\subsection{Graphs}\label{sec:H-graphs}
During the algorithm we consider a graph $H_p^\mu[S]$ that was defined in~\cite{daum2013broadcast}. This graph depends on a set of nodes $S$, a constant transmission probability $p\in(0,1/2]$ and a constant reliability parameter $\mu\in (0,p)$. The vertices of $H_p^\mu[S]$ are just the nodes in $S$. To define the edge set of $H_p^\mu[S]$, assume that each node in $S$ sends with probability $p$ and no node outside of $S$ (i.e.~in $V\setminus S$) is sending at the same time. Based on this assumption/experiment we define the edge set $E_p^\mu[S]$ to contain edge $(u,v)\in S\times S$ iff (i) $u$ receives a message from $v$ with probability at least $\mu$, and (ii) $v$ receives a message from $u$ with probability at least $\mu$.

As it is difficult to compute $H_p^\mu[S]$ in a distributed way (as pointed out in~\cite{daum2013broadcast}), the authors of~\cite{daum2013broadcast} compute a $(1-\gamma)$-approximation $\tilde{H}_p^\mu[S]=(S,\tilde{E}_p^\mu[S])$, where w.h.p. the following is true:
$$
E_p^\mu[S] \subseteq  \tilde{E}_p^\mu[S]
\subseteq E_p^{(1-\gamma)\mu}[S].
$$

To obtain a speedup, we do not compute graph $\tilde{H}_p^\mu[S]$, but define and compute a graph $\tilde{\tilde{H}}_p^\mu[S]=(S,\tilde{E}_p^\mu[S])$ that locally corresponds to $\tilde{H}_p^\mu[S]=(S,\tilde{E}_p^\mu[S])$ at each point $i$ with some probability much smaller than w.h.p (and demonstrate later that this is enough for our purposes). We postpone the precise formal definition of this locality to Definitions~\ref{def:6.5} and~\ref{def:suc}. There we define local correctness with respect to different sets $S_\phi$ together with other requirements for correct local computation during the algorithm. By postponing the definition, we avoid unnecessary general and therefore complicated notation. For now we only need to know that it should always be the case that for any node $v$ we desire that $N_{\tilde{\tilde{H}}_p^\mu[S]}(v)$ corresponds to neighbors of $v$ that would be present in a $(1-\gamma)$-approximation $\tilde{H}_p^\mu[S]$ of $H_p^\mu[S]$ as well. However, we typically consider a much larger neighborhood of $v$ and desire that the subgraph of $\tilde{\tilde{H}}_p^\mu[S]$ corresponding to this neighborhood matches the corresponding subgraph of a $(1-\gamma)$-approximation $\tilde{H}_p^\mu[S]$ of $H_p^\mu[S]$.

\subsection{Details of the Algorithm}
We propose the following algorithm that is executed by all nodes in $S_1$ and is inspired by~\cite{daum2013broadcast}, but has small modifications that yield substantial improvements when analyzed in detail. The algorithm consists of epochs that are continuously repeated and ensure approximate progress within each execution of a epoch. 
Like in~\cite{daum2013broadcast} we assume that all nodes get synchronized by other nodes when they wake up and join the algorithm at the beginning of the next epoch. A node $i$ wakes up either due to receiving a bcast-message from another node or due to the first $bcast$ event that occurred at node $i$. Whenever a $bcast(msg)_i$ event occurs, a variable $m$ stored in node $i$ is set to $msg$.

\begin{algorithm}[!htbp]
\begin{algorithmic}[1]
\Statex \textsc{Continuous execution of epochs:} 
\State $\Phi:=\Theta(\log \Ratio)$; $Q:=\Theta(\log^{\alpha}\Re)$;
\While{awake}
\State $S_1:=S$;
\If{$m\neq 0$} // ongoing broadcast of $m$
\State node $i$ marks itself as contained in set $S_1$;\label{line:16}
\For{$\phi=1,\dots,\Phi$}\label{line:2}
\If{$i\in S_\phi$}
\State Compute graph $\tilde{\tilde{H}}_p^\mu[S_\phi]$ and schedule $\tau_\phi$ as described in Section~\ref{sec:6.3.3};\label{line:compH}
\State Compute $S_{\phi+1}$ as described in Section~\ref{sec:6.3.4}; \label{line:compS}
\For{$\BO(Q\cdot\log (1/\epsapprog))$ rounds}\label{line:22}
\State \textbf{transmit} bcast-message $m$ with probability $p/Q$;\label{line:1}
\State // If not transmitting, listen for a bcast-message
\EndFor\label{line:23}
\EndIf
\EndFor\label{line:3}
\EndIf
\State $m'$:= (first) bcast-message received due to a transmission from another node in Line~\ref{line:1};
\State \textbf{output} $rcv(m')_i$~\label{line:28};
\EndWhile
\end{algorithmic}
\caption{Implementation of the part of absMAC that achieves fast approximate progress. As executed by a node $i$.}\label{alg:AdHoc2}
\end{algorithm}

Once a $bcast(msg)_i$ event occurs at node $i$ at time $t$, we say that node $i$ has an ongoing broadcast for $\fack/2$ time steps starting at time $t+1$. At the beginning of each epoch, each node $i$ marks itself as belonging to set $S_1$ if it has an ongoing broadcast (Line~\ref{line:16} of Algorithm~\ref{alg:AdHoc2}). Whenever a node $i$ receives a bcast-message $m'$ for the first time in an epoch, it delivers that bcast-message to its environment with a $rcv(m')_i$ output event. This behavior does not guarantee that no messages from nodes that are not $\Ge$-neighbors are received. See Remark~\ref{rem:exact} how exact local broadcast can be implemented.

Next in the epoch, a sequence of sets $S_1\supseteq S_2\supseteq\cdots\supseteq S_\Phi$ and corresponding graphs $\tilde{\tilde{H}}_p^\mu[S_1]\supseteq\dots \supseteq \tilde{\tilde{H}}_p^\mu[S_\Phi]$ are computed\footnote{These graphs were abbreviated by $H_\phi$ in the high-level description in Section~\ref{sec:highlevel}. We want to stress that the sets $S_1\supseteq S_2\supseteq \cdots\supseteq S_\Phi$ computed by our algorithm are likely to differ from the sets $S_1\supseteq S_2\supseteq \cdots\supseteq S_\Phi$ computed in~\cite{daum2013broadcast}, but might be the same with a very low probability.}. In this sequence $\tilde{\tilde{H}}_p^\mu[S_\phi]$ is a graph, that given any node $i\in S_1$, is likely to $(1-\gamma)$-approximate $H_p^\mu[S_\phi]$ in a certain neighborhood\footnote{\label{foot:4}The size of this neighborhood is specified later in the analysis and not relevant in the specification of the algorithm, see Definition~\ref{def:6.5}.} of $i$. Set $S_{\phi+1}$ is an independent set of $\tilde{\tilde{H}}_p^\mu[S_\phi]$ and is likely to be an MIS with respect to a certain neighborhood\footnoteref{foot:4} of $i$. Sections~\ref{sec:6.3.3} and~\ref{sec:6.3.4} describe in detail how graphs $\tilde{\tilde{H}}_p^\mu[S_\phi]$ and sets $S_{\phi+1}$ are computed.
While performing this computation, in each phase $\phi$ each node in $S_\phi$ transmits its respective bcast-message $m$ for $\BO(Q\cdot\log (1/\epsapprog))$ time steps, in each time step with probability $p/Q$ with $Q=\Theta(\log^\alpha \Ratio)$, see Lines~\ref{line:22}--\ref{line:23}. Denote by $m'$ the first bcast-message transmitted during Line~\ref{line:1} that node $i$ receives during an epoch. In Line~\ref{line:28} node $i$ outputs $rcv(m')_i$. 

\subsubsection{Computation of Graph $\tilde{\tilde{H}}_p^\mu[S_\phi]$ and Schedule $\tau_\phi$Based on $S_\phi$ in Line~\ref{line:compH}}\label{sec:6.3.3} 

We modify an algorithm described in~\cite{daum2013broadcast} to do this. In this algorithm we change the number of times $T$ that each message is sent. We define 

$$T:=\Theta\left(\frac{\log \left(\frac{f(h_1)}{\epsapprog}\right)}{\gamma^2\mu}\right)$$ 

where $h_1$ is defined in Definition~\ref{def:h} and $f$ is the function that bounds the growth of $G$, see Definition~\ref{def:growth}.

\begin{definition}\label{def:h}
For $\Phi=\Theta(\log\Ratio)$, we set $h_\Phi:=h'_\Phi:=1$ and define recursively $h'_\phi:=3h_{\phi+1}$ and $h_\phi:=h_{\phi}'+\SW+1$ for $1\leq \phi < \Phi$, where $c$ is chosen such that $\SW$ bounds the runtime of the MIS algorithm~\cite{DBLP:conf/podc/SchneiderW08} when applied on a network with node-IDs $\in[1,\frac{\poly \Ratio}{\epsapprog}]$.
\end{definition}

We restate the algorithm of~\cite{daum2013broadcast} with our modified parameter $T$ in order to perform our localized analysis later. All nodes in $S_\phi$ transmit their ID for $T$ rounds with probability $p$ in each round. Each node maintains a list of IDs that it received and counts how often each ID was received. Each ID that was received at least $(1-\gamma/2)\mu T$ times is a potential $\tilde{\tilde{H}}_p^\mu[S_\phi]$-neighbor. In another $T$ time slots, in each slot every node transmits all IDs of these $\BO(1)$ potential neighbors\footnote{\label{foot:3} Each node has at most $\frac{1}{(1-\gamma/2)\mu}=\BO(1)$ many potential neighbors (as remarked in~\cite{daum2013broadcast}).}, again with probability $p$ in each slot. A node $u\in S_\phi$ considers node $v\in S_\phi$ to be a $\tilde{\tilde{H}}_p^\mu[S_\phi]$-neighbor if $v$ is a potential neighbor of $u$ and $u$ appears in the list of potential neighbors of $v$ that $u$ received. 

Schedule $\tau_\phi$ keeps track of the nodes random choices to send depending on the time slot. That is $\tau_\phi$ maps time slot $t\in\{1,\dots,T\}$ to $\tau_\phi[t]\subseteq V$ of nodes that are sending in slot $t$.  

\subsubsection{Computation of Set $S_{\phi+1}$ Based on $\tilde{\tilde{H}}_p^\mu[S_\phi]$ and Schedule $\tau_\phi$ in Line~\ref{line:compS}}\label{sec:6.3.4} The authors of~\cite{daum2013broadcast} show how to simulate the MIS-algorithm of~\cite{DBLP:conf/podc/SchneiderW08} on $\tilde{\tilde{H}}_p^\mu[S_\phi]$ and define $S_{\phi+1}$ to be the computed MIS.
In order to perform a more localized analysis, we need to modify their approach. In particular the runtime of the deterministic MIS-algorithm~\cite{DBLP:conf/podc/SchneiderW08} depends on the range from which node IDs are chosen, not on the network size. While~\cite{daum2013broadcast} uses unique IDs $\in[1,\poly n]$, which results in a runtime of~$\BO(\log^* n)$, we desire a runtime that depends only on local parameters. 

In order to achieve such a runtime, we let each node $v\in S_\phi$ choose a temporary label $l_{i,\phi}\in\left[1,\frac{\poly \Ratio}{\epsapprog}\right]$ uniformly at random in each phase. Then we execute a modified version of the MIS-algorithm of~\cite{DBLP:conf/podc/SchneiderW08} for the CONGEST model using these temporary labels. As these labels might not be unique, we need to modify the algorithm of~\cite{DBLP:conf/podc/SchneiderW08}, as it might not terminate when non-unique labels are used. 

To state our modifications and being able to argue that this achieves the desired outcome, we review the algorithm of~\cite{DBLP:conf/podc/SchneiderW08}. After this, we present our modification of it in the CONGEST model. Subsequently we adapt the simulation of CONGEST algorithms in this probabilistic graph/SINR model given in~\cite{daum2013broadcast} to our modified parameters.

\paragraph{The MIS-algorithm of~\cite{DBLP:conf/podc/SchneiderW08}:} Each node starts in state $competitor$ and can change its state during the computation between states $\{competitor, ruler, ruled, dominator, dominated\}$. At the end of the algorithm, the set of all nodes in state $dominator$ is an MIS, and all other nodes are in state $dominated$ -- as shown in~\cite{DBLP:conf/podc/SchneiderW08}. To achieve this, the network executes a number of stages until all nodes are in state $dominator$ or $dominated$. At the beginning of each stage every node $v$ that is in state $competitor$ at that time sets a variable $r_v$ to its ID. After this, the stage performs $\log^*(N)+2$ phases, where $N$ indicates the range $[1,N]$ from which IDs are chosen. In each phase a node $v$ in state $competitor$ 1) exchanges $r_v$ with its neighbors, and 2) updates $r_v$ as well as its state depending on $r_v$ and the received $r_w, w\in N(v)$, from its neighbors. If in a phase $r_v<\min_{w\in N(v)\setminus\{v\}} r_w$ in that phase, node $v$ changes its state to $dominator$ and stays in that stage until the end of the algorithm. If $r_v=\min_{w\in N(v)\setminus\{v\}} r_w$, then $v$ changes its state to $ruler$. If $r_v>\min_{w\in N(v)\setminus\{v\}} r_w$, then $v$ updates $r_v$ depending on the bit where $r_v$ and $\min_{w\in N(v)\setminus\{v\}} r_w$ differ and might change its state to $dominated/ruled$ in case a neighbor changed its state to $dominator/ruler$. The proof of~\cite{DBLP:conf/podc/SchneiderW08} uses the fact that IDs in the network are unique to argue that after a constant number $c'$ of stages all nodes are in state $dominator$ or $dominated$ and nodes only terminate once they reached one of these states. 

\paragraph{Our modification of this algorithm in the CONGEST model:} We modify this algorithm to set $r_v:=l_{v,\phi}$ instead of using $v$'s ID at the beginning of each stage. As temporary labels $l_{v,\phi}$ are not unique, it can happen that after $c'$ many stages some nodes are neither in state $dominator$ nor $dominated$. Therefore we change the algorithm to terminate at a predetermined time (after $c'$ stages) instead of terminating at each node once it is in state $dominator$ or $dominated$. We still choose $S_{\phi+1}$ to consist only of nodes in state $dominator$ and ignore nodes not in state $dominator/dominated$.

\paragraph{Adapted simulation of CONGEST algorithms in our probabilistic graph/SINR model:}
Similar to~\cite{daum2013broadcast}, each round of communication in the CONGEST model is simulated by $T$ time steps in our model, where we use $T$ as defined above. In each time step $t\in \{1,\dots, T\}$ the messages (sent in a round of the algorithm for the CONGEST model) is sent by nodes $\tau_\phi[t]$, such that no messages are unsuccessful. 

In contrast to~\cite{daum2013broadcast}, our analysis requires that nodes know if their messages arrived at the destination. Such an acknowledgment can be implemented as node $i$ knows from which neighbors in $\tilde{\tilde{H}}_p^\mu[S_\phi]$ it should receive a message within time $T$ (as we just computed $\tilde{\tilde{H}}_p^\mu[S_\phi]$). We can acknowledge received messages by splitting each time slot into two slots, a transmission and an acknowledgment slot. This implies that the (reliability) probability of an acknowledged transmission is $\mu^2$. While w.h.p. communication is reliable in~\cite{daum2013broadcast}, it turns out that we cannot make these guarantees due to our choice of $T$. Therefore we say that communication at node $u\in S_\phi$ was unsuccessful (in phase $\phi$) when node $u$ did not receive messages (and acknowledgments for reception of its own messages) from all its $S_\phi$-neighbors within time $T$. Once communication was unsuccessful, a node $u\in S_\phi$ stops participating in this epoch and does not join $S_{\phi+1}$ in this epoch. A node $u\in S_\phi$ that stopped during the current epoch starts participating again in the next epoch as long as it has an ongoing broadcast. Messages received from nodes that are not $\tilde{\tilde{H}}_p^\mu[S_\phi]$-neighbors are ignored and not acknowledged.

\section{Analysis of our Implementation of Approximate Progress}\label{sec:anaI}

We start with an outline of the analysis in Section~\ref{subsec:ana} for the implementation of approximate progress of Section~\ref{sec:highlevel}. This is followed by Sections focusing on details of different issues mentioned in that outline.
\fi

\subsection{Outline of the Analysis}\label{subsec:ana}

We analyze the effect of the two main modifications of the algorithm of~\cite{daum2013broadcast} with respect to their analysis and put it into the context of approximate progress. 
\ifshort
 More details of this careful analysis are provided in 
the full version of this paper~\cite{halldorsson2015local-arxiv}.
\fi
\iffull
We outline the effects of these modifications here together with our approach before we dive into details in the next sections.

\subsubsection{First Modification: Non-Unique Labels in the MIS Computation}\label{sec:diff1}
\fi
\ifshort
\paragraph{First modification: non-unique labels in the MIS computation. }\label{sec:diff1}
\fi
\iffull
This difference is rooted in our modification of the MIS-algorithm of~\cite{DBLP:conf/podc/SchneiderW08} combined with using non-unique temporary labels $\in[1,\frac{\poly \Ratio}{\epsapprog}]$ instead of unique IDs $\in[1,\poly n]$ in~\cite{daum2013broadcast}. In Section~\ref{sec:locMIS} (Lemma~\ref{lem:locMIS}) we 
\fi
\ifshort
We 
\fi
argue in the model of~\cite{DBLP:conf/podc/SchneiderW08} the sets $S_\phi$ computed by our modified MIS-algorithm are independent sets in $\tilde{H}_{\phi-1}$. Furthermore, for any given node $v$, with probability $1-\epsapprog/3$, this set is maximal in a neighborhood around $v$ ``large enough'' to ensure that this part of computations involved in approximate progress at node $v$ is correct.

\iffull
\subsubsection{Second Modification: Fewer Repetitions of Transmissions}\label{sec:diff2}
\fi
\ifshort
\paragraph{Second modification: fewer repetitions of transmissions. }\label{sec:diff2}
\fi

In the algorithm of~\cite{daum2013broadcast} each node sends every bcast-message $\BO(\log^{\alpha}(\Ratio)\log n)$ times, while we use only $\BO(\log^{\alpha}(\Ratio)\log(1/\epsapprog))$ repeated transmissions. This implies that~\cite{daum2013broadcast} can assume that all communication is successful at any point w.h.p.. For large $\epsapprog$ we only have weak probability guarantees for success of communication. One side-effect is that with very high probability the computed graphs $\tilde{\tilde{H}}_\phi$ are not the desired global approximations of graphs $H_\phi$. This in turn affects correctness of approximate progress and we need to analyze local and global implications caused by reducing the number of repeated transmissions.  
\begin{enumerate}[noitemsep]
\item \textbf{Global implications of unsuccessful transmissions:} Global interference might increase in the long term and we need to bound this. 
\ifshort
Unsuccessful transmissions during the computation of $\tilde{\tilde{H}}_\phi$ might remain undetected and cause that edges are missing in $\tilde{\tilde{H}}_\phi$. This event influences future computations of nearby nodes until the current epoch ends. Influenced nodes might cause additional global interference. 
In the full version of this paper~\cite{halldorsson2015local-arxiv} we bound the expected additional interference from these nodes. It turns out that $T$ is chosen such that this interference can be tolerated in other parts of our proof and when transferring the analysis of~\cite{daum2013broadcast}. 
\fi
\iffull
Unsuccessful transmissions that are undetectable as the receiver does not know from which other nodes to expect messages can only appear 1) during the computation of $\tilde{\tilde{H}}_p^\mu[S_{\phi}]$, and 2) while transmitting the message in Line~\ref{line:1}. The latter will not cause increased global interference in the long term, as it does not influence the activity of nodes in future phases of the current epoch. Thus we only need to consider unsuccessful transmissions during the computation of $\tilde{\tilde{H}}_p^\mu[S_{\phi}]$. Consider a node $v$ and assume node $v$ has computed a wrong set of neighbors, that is a set of neighbors that does not correspond to a $(1-\gamma)$-approximation of $H_p^\mu[S_{\phi}]$. In such a case we just assume for the sake of worst case analysis of additional interference that $v$ joins the MIS $S_{\phi+1}$ of $\tilde{\tilde{H}}_p^\mu[S_{\phi}]$ -- regardless of where $v$ actually joins or not. Denote the set of all these nodes with wrong neighborhoods that unconsciously might cause additional interference during the current epoch by $W$. (Note that at the beginning of each epoch $W=\emptyset$, as no unsuccessful transmission happened yet.) We bound the additional (global) interference caused in case all nodes in $W$ erroneously decided to join $S_\phi$ in Lemma~\ref{lem:wronginterference} (regardless of which nodes in $W$ actually join $S_\phi$). Each time when we need to make an argument related to interference from nodes in $S_\phi$ in subsequent proofs, we also argue that the additional interference from nodes in $W$ is negligible compared to interference from a correctly computed $S_\phi$. Note that thus interference from nodes in $W$ might be counted twice (in particular nodes in $S_\phi \cap W$), but this does not hurt the analysis.
\fi
After $\tilde{\tilde{H}}_\phi$ is computed, all transmissions are successful. They use the same schedule used to compute $\tilde{\tilde{H}}_\phi$.

\item \textbf{Local implications of unsuccessful transmissions:}
\ifshort
Transmissions of messages need to be successful in all ``large enough'' neighborhoods of $v$ in graphs $\tilde{\tilde{H}}_\phi$ to guarantee approximate progress at point $v$.
\fi
\iffull Local communication of messages, which is based on the success of repeated transmissions, must be successful in 
a certain area\footnoteref{foot:4} around $v$ to ensure that 1) a node $v$ that has a broadcasting $\Ga$ neighbor receives a bcast-message from a broadcasting $\Ge$-neighbor in case all local computations are correct, and 2) we can transfer and extend tools from~\cite{daumfull} to our localized analysis.
This area in which this needs to be true contains all nodes possibly involved in the selection of a node from which $v$ might receive a bcast-message. 
\fi
These unsuccessful transmissions can only appear during the computation of $\tilde{\tilde{H}}_p^\mu[S_{\phi}]$ and while transmitting the 
\ifshort
bcast-message.
\fi
\iffull
 bcast-message in Line~\ref{line:1}.
\fi
 Only if communication is locally successful, it is guaranteed that graph 
\ifshort
$\tilde{\tilde{H}}_\phi$
\fi
\iffull
$\tilde{\tilde{H}}_p^\mu[S_{\phi}]$
\fi
 is an 
\iffull
$(1-\gamma)$-
\fi
approximation of 
\ifshort
$H_\phi$ 
\fi
\iffull
$H_p^\mu[S_{\phi}]$ 
\fi
w.r.t. the above mentioned neighborhood of $v$, which is necessary in order to transfer the analysis of~\cite{daum2013broadcast}. We analyze the probability that 
\ifshort
$\tilde{\tilde{H}}_\phi$
\fi
\iffull
$\tilde{\tilde{H}}_p^\mu[S_{\phi}]$
\fi
is locally an approximation in 
\ifshort 
the full version of this paper~\cite{halldorsson2015local-arxiv}.
\fi
\iffull
Lemma~\ref{lem:0.1}.  
\fi
Finally, approximate progress is made only if communication of bcast-messages succeeds locally.

\iffull
For all $\Ge$-neighbors $u'$ of $v$ (from which $v$ might receive a bcast-message), we lower bound the probability that all the above local computations/transmissions involved in the broadcast of $u'$ are successful in Lemme~\ref{lem:3}. 
\fi
\end{enumerate}

\ifshort
\subsection{Key Lemmas of the Analysis}
Full proofs of the following lemmas appear in the full version of this paper~\cite{halldorsson2015local-arxiv}.
\fi
\iffull
\subsection{Local Effects of Non-Unique Labels}
\label{sec:locMIS}
\fi
We start by analyzing the effect of using (potentially) non-unique labels chosen uniformly at random $\in\left[1,\frac{\poly \Ratio}{\epsapprog}\right]$ in the modified MIS computation, which is the first difference to~\cite{daum2013broadcast}, as pointed out in Section~\ref{sec:diff1}.
\begin{lemma}\label{lem:locMIS}
Let $H=(V,E)$ be a constant degree growth-bounded graph and let $U\subseteq V$ be a set of nodes of size at most $\BO(\Ratio^2)$. Consider an execution of our modification of the MIS-algorithm of~\cite{DBLP:conf/podc/SchneiderW08} on $H$ in the CONGEST model using random labels $\in\left[1,\frac{\poly \Ratio}{\epsapprog}\right]$. Then the set of nodes in state $dominator$ is 1) an independent set, and 2) with probability at least $1-\frac{\epsapprog}{3\Phi}$ this set is maximal with respect to $N_{H,\hPhi}(U)$, the $\hPhi$-neighborhood of $U$ in $H$.
\end{lemma}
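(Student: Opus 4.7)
My plan is to prove the two claims separately. Claim (1), independence, is purely structural and holds even when labels collide. Claim (2), local maximality, will follow from a union bound showing that, with probability at least $1 - \epsapprog/(3\Phi)$, all labels in a suitable enlargement of $U$ are distinct, so that within this region the execution is indistinguishable from an execution of the original Schneider--Wattenhofer MIS algorithm on unique IDs; Schneider--Wattenhofer then correctly produces an MIS in this region.

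\textbf{Independence.} I would inspect the state-transition rules of the adapted algorithm. A node $v$ enters state $dominator$ only when, in some phase, $r_v$ is strictly less than every $r_w$ with $w$ an $H$-neighbor of $v$; every such neighbor transitions to $dominated$ upon hearing $v$'s announcement and never leaves that state, and $v$ likewise never leaves $dominator$. Two $H$-adjacent nodes cannot simultaneously satisfy a strict-minimum condition, and an existing $dominator$ precludes any of its neighbors from ever becoming $dominator$. Hence the collected set of $dominator$s is an independent set regardless of whether the labels $l_{v,\phi}$ are unique.

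\textbf{Uniqueness of labels in a local ball.} Let $U' := N_{H,\hPhi}(U)$. Using the growth-bounded lemma preceding the statement, the constant degree of $H$, the bound $|U|=\BO(\Ratio^2)$, and $\Phi=\Theta(\log\Ratio)$, one has $\hPhi = c\cdot 4^\Phi\cdot \SWn = (\poly \Ratio)\cdot \SWn$ and consequently $|U'|\leq |U|\cdot\Delta\cdot f(\hPhi) = (\poly \Ratio)\cdot\polylogf{\Ratio/\epsapprog}$. Labels $l_{v,\phi}$ are drawn uniformly and independently from $[1,(\poly \Ratio)/\epsapprog]$, so a union bound over unordered pairs in $U'$ gives
\begin{equation*}
\Pro\bigl[\text{some two labels in }U'\text{ coincide}\bigr] \;\leq\; \binom{|U'|}{2}\cdot\frac{\epsapprog}{\poly \Ratio} \;\leq\; \frac{\epsapprog}{3\Phi},
\end{equation*}
provided the polynomial in the denominator of the label range is chosen with a sufficiently large (but fixed) exponent, one that dominates $|U'|^2$ together with the extra $\Theta(\log\Ratio)$ factor.

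\textbf{Conditional correctness.} Condition on the event that all labels in $U'$ are distinct. Restricted to $U'$, the execution of our modified algorithm is indistinguishable from Schneider--Wattenhofer on $H|_{U'}$ with unique IDs in $[1,(\poly \Ratio)/\epsapprog]$. That algorithm's analysis guarantees that a constant number $c'$ of stages, each consisting of $\log^*((\poly \Ratio)/\epsapprog)+2 = \BO(\SWn)$ phases of message exchange, suffices to classify every node of $U'$ as $dominator$ or $dominated$. The state of any node depends only on labels within $\BO(\SWn)$ hops, and $\BO(\SWn) \leq \hPhi$, so every $v\in N_{H,\hPhi}(U)$ is correctly classified; the produced $dominator$ set is therefore a maximal independent set of $H$ inside $N_{H,\hPhi}(U)$, which is the required second conclusion.

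The main technical obstacle is not any single step above, but the bookkeeping that ties the polynomial exponent in the label range to the bound on $|U'|^2$ and the factor $3\Phi$: the exponent must be fixed in advance and must absorb $(\poly \Ratio)^2\cdot\polylogf{\Ratio/\epsapprog}\cdot \Theta(\log\Ratio)$ without exceeding the polynomial-range budget assumed elsewhere. The generous factor $4^\Phi$ in $\hPhi$ strictly dwarfs the $\BO(\SWn)$-hop dependency radius used here; most of that slack will be consumed elsewhere in the recursive construction across phases $\phi=1,\dots,\Phi$ rather than in this single lemma.
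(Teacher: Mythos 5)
Your overall strategy coincides with the paper's: independence is argued structurally from the state-transition rules, and local maximality is reduced, via a union bound, to the event that the random labels are unique in a ball around $U$ of size $\poly\Ratio\cdot\polylogf{\Ratio/\epsapprog}$, after which the Schneider--Wattenhofer analysis applies. The independence argument and the birthday-style probability calculation are fine.

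There is, however, one step that fails as written. You establish label-uniqueness only in $U'=N_{H,\hPhi}(U)$ and then claim that, restricted to $U'$, the execution is indistinguishable from Schneider--Wattenhofer on $H|_{U'}$ with unique IDs. This is not true: a node $v$ on the boundary of $U'$ exchanges $r$-values with $H$-neighbors outside $U'$, and its final state depends on all labels within the $\BO(\SWn)$-hop dependency radius, i.e.\ on labels of nodes in $N_{H,\hPhi+\BO(\SWn)}(U)\setminus U'$. A collision just outside $U'$ can therefore leave a node of $N_{H,\hPhi}(U)$ stuck in a state other than $dominator$ or $dominated$, breaking maximality. The observation that $\BO(\SWn)\leq\hPhi$ does not rescue this, because the lemma's conclusion is pinned to the specific radius $\hPhi$; the dependency radius must be added on top of that radius, not absorbed into it. The paper's proof does exactly this: it requires unique labels in the larger ball $N_{H,\hPhi+1+\SW}(U)$ and only then concludes maximality with respect to $N_{H,\hPhi}(U)$. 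The repair costs you nothing quantitatively --- $\left|N_{H,\hPhi+\SW}(U)\right|$ is still $\poly\Ratio\cdot\polylogf{\Ratio/\epsapprog}$, so the same union bound yields the same $1-\frac{\epsapprog}{3\Phi}$ --- but the enlargement of the ball is a necessary part of the argument, not optional slack.
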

\iffull
\begin{proof}
From the description of the algorithm it follows that no neighboring nodes can be in state $dominator$. Therefore the set of $dominators$ remains an independent set despite our modification. 
 
Due to the analysis of~\cite{DBLP:conf/podc/SchneiderW08}, which uses that $H$ is growth bounded, we know that in case of unique IDs the algorithm computes an MIS within $c'<c$ stages. As the runtime of the algorithm is $\SW$, only the $\SW$-neighborhood of $N_{H,\hPhi+1}(U)$ is involved in deciding which nodes among $N_{H,\hPhi+1}(U)$ change their state to $dominator$. From this we can conclude that if nodes in $N_{H,\hPhi+1+\SW}(U)$ chose unique temporary labels, then the set of nodes in state $dominator$ is a maximal independent set with respect to $N_{H,\hPhi}(U)$. Note, that as we considered $dominators$ located in $N_{H,\hPhi+1}(U)$ for maximality in $N_{H,\hPhi}(U)$, it cannot happen that there is a node at the border of $N_{H,\hPhi}(U)$ that has no neighbor in state $dominator$. 

Now observe that it is $\hPhi+\SW=c\poly(\Ratio)\cdot\SWn$, as $\Phi=\Theta(\log\Ratio)$. As $H$ is growth bounded and has constant degree, this implies that there are at most $|U|\cdot\poly\Ratio=\poly\Ratio$ nodes involved in the state-changes of nodes in $N_{H,4^\Phi\cdot\SWn+1}(U)$. As we choose temporary labels from $\left[1,\frac{\poly \Ratio}{\epsapprog}\right]$, we can choose this range large enough such that with probability at least $1-\frac{\epsapprog}{3\Phi}$ the labels are unique among the $\poly\Ratio$ nodes in the $\hPhi+\SW$-neighborhood of $U$.
\end{proof}

\subsection{Global Effects of Unsuccessful Transmissions}
\fi
We analyze Case 1.a pointed out in Section~\ref{sec:diff2}, i.e.~we bound the global interference from nodes with undetectable unsuccessful transmissions.

\begin{definition}[Set $W$ of nodes with wrong neighborhoods (due to unsuccessful transmissions)]\label{def:W}
Denote by $W\subseteq S_1$ the set of all those nodes $v$ such that for at least one $\phi\in\{1,\cdots,\Phi\}$ it is not the case that $N_{H_p^\mu[S_{\phi}]}(v) \subseteq N_{\tilde{\tilde{H}}_p^\mu[S_{\phi}]}(v)\subseteq N_{H_p^{\mu(1-\gamma)}[S_{\phi}]}(v)$, i.e.~$v$'s direct neighborhood does not $(1-\gamma)$-approximate $N_{H_p^\mu[S_{\phi}]}(v)$.
\end{definition}

\begin{lemma}\label{lem:wronginterference} Given point $i$ in space, the expected total additional interference $I_W(i)$ that point $i$ receives from all nodes in $W$ at any given time is less than $\left(\frac{\epsapprog}{\Ratio}\right)^{\Theta(1)}$.
\end{lemma}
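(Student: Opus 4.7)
The plan is to bound $\Pr[v\in W]$ pointwise for each node $v$ and then sum the contributions to the interference at $i$ via an annular decomposition, exploiting $\alpha>2$.

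First, I would estimate $\Pr[v\in W]$. A node $v$ enters $W$ only if, in some phase $\phi\in\{1,\ldots,\Phi\}$, the count of receptions from some neighbor in the $T$-round potential-neighbor subprotocol deviates from its mean enough to flip the $(1-\gamma/2)\mu T$ threshold. Each of these receptions is an independent Bernoulli event with success probability in $[\mu,\mu(1-\gamma)]$ (depending on which side of the approximation one argues), so a standard Chernoff bound with the chosen $T=\Theta(\log(f(h_1)/\epsapprog)/(\gamma^2\mu))$ yields failure probability at most $(\epsapprog/f(h_1))^{\Theta(1)}=(\epsapprog/\Ratio)^{\Theta(1)}$ for a single neighbor. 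Taking a union bound over the $O(1)$ potential neighbors of $v$ (recall footnote~\ref{foot:3}) and over the $\Phi=\Theta(\log\Ratio)$ phases absorbs into the same bound, giving $\Pr[v\in W]\le(\epsapprog/\Ratio)^{\Theta(1)}$.

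Next, I would partition the plane into annuli $A_k=\{x:2^k\le d(x,i)<2^{k+1}\}$ for $k=0,1,2,\ldots$. Because the minimum pairwise distance between nodes is $1$, a simple area argument bounds $|V\cap A_k|=O(4^k)$; any node in $A_k$ contributes at most $P/2^{k\alpha}$ to the interference at $i$. Writing $I_W(i)=\sum_{v\in W}P/d(v,i)^\alpha$ and using linearity of expectation together with the per-node bound above gives
\[
\Ex[I_W(i)] \;\le\; \sum_{k\ge 0} |V\cap A_k|\cdot \frac{P}{2^{k\alpha}}\cdot \Pr[v\in W]
\;\le\; C\cdot \Big(\frac{\epsapprog}{\Ratio}\Big)^{\Theta(1)}\sum_{k\ge 0} 2^{(2-\alpha)k}.
\]
Since $\alpha>2$ the geometric series converges to an absolute constant, and the bound $\Ex[I_W(i)]\le(\epsapprog/\Ratio)^{\Theta(1)}$ follows.

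The main obstacle is the first step: justifying that the events driving membership in $W$ can be analyzed by a clean Chernoff bound, despite the subtle dependencies arising from (a) global interference from other simultaneous transmissions during the $T$ repetitions, and (b) the fact that whether $v$ is in $W$ may be correlated with whether $v$ even participates in later phases. For (a), one argues as in~\cite{daum2013broadcast} that conditioned on the senders being drawn from $S_\phi$ with independent probability $p$, each individual reception from a designated neighbor succeeds independently with probability at least $\mu$, so Chernoff applies. For (b), the $W$-indicator is decided by round $T$ of phase $\phi$, before the long transmission block of Line~\ref{line:1} is executed, so the union-bound-over-$\phi$ step is safe even though the effects of being in $W$ propagate to later phases of the same epoch. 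Once these points are cleanly handled, the geometric summation over annuli is routine and delivers the claimed bound.
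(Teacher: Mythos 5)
Your proposal is correct and follows essentially the same route as the paper's proof: a per-node Chernoff bound (union-bounded over the $\BO(1)$ potential neighbors and the $\Phi=\Theta(\log\Ratio)$ phases) showing $\Pro[v\in W]\le(\epsapprog/\Ratio)^{\Theta(1)}$, followed by a spatial ring decomposition around $i$, a packing bound from the unit minimum distance, and convergence of the resulting series using $\alpha>2$. The only difference is cosmetic — you use dyadic annuli where the paper uses rings of $\Ra\times\Ra$ grid cells — and your explicit attention to the independence/conditioning issues in the Chernoff step is, if anything, slightly more careful than the paper's own treatment.
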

\iffull
\begin{proof}
We first bound the probability that $N_{\tilde{\tilde{H}}_p^\mu[S_{\phi}]}(v)$ does not correspond to a $(1-\gamma)$-approximation of $H_p^\mu[S_{\phi}]$ during a single phase. Let's consider a potential edge $(u,v)\in S_\phi \times S_\phi$. As each ID is transmitted $T$ times, a Chernoff bound implies that an edge $(u,v)$ is included in $\tilde{\tilde{H}}_p^\mu[S_\phi]$ if and only if $(u,v)$ belongs to a $(1-\gamma)$-approximation of $H_p^\mu[S_\phi]$ with probability at least 
\begin{eqnarray}
1-e^{-\Theta(T)}
=
1-e^{-\Theta\left(\log\frac{f(h_1)}{\epsapprog}\right)}
\geq
1-\left(\frac{\epsapprog}{f(h_1)}\right)^{\Theta(1)}
.\label{bd:1}
\end{eqnarray}
The constant hidden in the $\Theta$-notation depends on $\mu$, $\gamma$ and the Chernoff bound used. Recall\footnoteref{foot:3} that node $v$ has constant many neighbors in $H_p^\mu[S_{\phi}](v)$. Therefore the probability that $N_{\tilde{\tilde{H}}_p^\mu[S_{\phi}]}(v)$ is a $(1-\gamma)$-approximation of $N_{H_p^\mu[S_{\phi}]}(v)$ is at least $\left(1-\left(\frac{\epsapprog}{f(h_1)}\right)^{\Theta(1)}\right)^{\Theta(1)}\leq 1-\left(\frac{\epsapprog}{f(h_1)}\right)^{\Theta(1)}$.

From this we conclude, that in each square of size $\Ra$ times $\Ra$ the expected number of nodes that incorrectly have no edges in at least one of the $\Phi$ phases is at most 
$$\left(\frac{\epsapprog}{\Ratio}\right)^{\Theta(1)}\cdot\Ratio^2\cdot \Phi=\left(\frac{\epsapprog}{\Ratio}\right)^{\Theta(1)}.$$ 

Now assume that exactly the nodes in $W$ transmit at the same time. We use a standard argument from the SINR community to bound the expected interference that node $i$ receives from nodes in $W$ similar to the one in~\cite{DBLP:conf/dialm/GoussevskaiaMW08}. For the analysis we assume that the plane is partitioned into a $\Ra$-grid centered in $v$. 
 Denote by $A_d$ the set of grid-cells that contain nodes of $L_0$-distance at least $(d-1)\cdot\Ra$ and at most $d\cdot\Ra$ to $i$. Therefore $A_d$ contains $8d-4$ squares of size $\Ra$ times $\Ra$.

From this we conclude that the expected number of nodes in $A_d\cap W$ is upper bounded by $\BO\left(\left(\frac{\epsapprog}{\Ratio}\right)^{\Theta(1)}\cdot d\right)$. 

Each node in $A_d$ is at Euclidean distance at least $d-1$ to $i$, such that the interference caused at $i$ by a single node in $A_d$ sending with power $P$ is at most $P/(d-1)^{\alpha}$. Therefore the expected interference at point $i$ from nodes in $A_d\cap W$ is upper bounded by $\BO\left(\left(\frac{\epsapprog}{\Ratio}\right)^{\Theta(1)}/ d^{\alpha-1}\right)$, where we use that power $P$ is constant. Now we can upper bound the expected interference that point $i$ receives from $W$ by 
\begin{eqnarray*}
I_W(i)
&=&
\sum_{d=1}^\infty I_{A_d\cap W}(i) =  \sum_{d=1}^\infty \BO\left(\left(\frac{\epsapprog}{\Ratio}\right)^{\Theta(1)}/ d^{\alpha-1}\right)
\\
&=&
\BO\left(\left(\frac{\epsapprog}{\Ratio}\right)^{\Theta(1)}\right)=\left(\frac{\epsapprog}{\Ratio}\right)^{\Theta(1)},
\end{eqnarray*} 
where we use $\alpha>2$ in the second-last bound and the fact that $p$-series with $p>1$ converge to a constant.

Finally, note that each node actually sends only with probability $p$ (or $p/Q$) during the execution of each phase. Replacing the assumption that all node in $W$ transmit at the same time by these probabilities implies that the expected interference remains $\left(\frac{\epsapprog}{\Ratio}\right)^{\Theta(1)}$, as $p$ is constant and $Q=\BO(\log^{\alpha}(\Ratio))$. 
\end{proof}

\subsection{Local Effects of Unsuccessful Transmission}
\label{sec:correctness}

We analyze Case 2 pointed out in Section~\ref{sec:diff2}. We start with a bound on $h_1$ (see Definition~\ref{def:h}), define local success of an epoch (see Definition~\ref{def:suc}) and then analyze the probability of local success of an epoch. Lemma~\ref{lem:2} is the main Lemma of this section and states that for any set $S_1\subseteq V$ and node $i\in N_\Ga(S_1)$ two out of three properties of a successful epoch are satisfied with probability at least $1-\epsapprog/3$ at point $i$.

\begin{lemma}\label{lem:h-bound}
The following is true: $3^{\Phi-1}\leq h_1 \leq \hPhi$ for all parameters $\Phi, \Ratio, \epsapprog$ in the ranges considered in this paper. 
\end{lemma}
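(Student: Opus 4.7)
The plan is to unroll the recurrence $h_\phi = 3 h_{\phi+1} + \SW + 1$ (for $1\le\phi<\Phi$) with base $h_\Phi=1$, and obtain both bounds by elementary estimates.

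For the lower bound, I would simply drop the additive term: since $h_\phi \ge 3 h_{\phi+1}$ for each $\phi$, iterating from $\phi=\Phi$ down to $\phi=1$ gives
\[
 h_1 \;\ge\; 3^{\Phi-1}\, h_\Phi \;=\; 3^{\Phi-1}.
\]

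For the upper bound, I would solve the recurrence exactly. Writing $a := \SW+1$, one gets
\[
 h_1 \;=\; 3^{\Phi-1} h_\Phi + a\sum_{j=0}^{\Phi-2} 3^{j} \;=\; 3^{\Phi-1} + a\cdot \frac{3^{\Phi-1}-1}{2} \;\le\; 3^{\Phi-1}\Bigl(1+\tfrac{a}{2}\Bigr).
\]
Now I would use $\SW = c\SWn\ge 1$ in the parameter ranges considered (since $\Ratio/\epsapprog$ is bounded below by a constant $\ge 2$, so $\log^*(\Ratio/\epsapprog)\ge 1$), hence $a=\SW+1\le 2\SW = 2c\SWn$. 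Therefore
\[
 h_1 \;\le\; 3^{\Phi-1}\bigl(1 + c\SWn\bigr) \;\le\; 2c\SWn \cdot 3^{\Phi-1} \;\le\; c\SWn \cdot 3^{\Phi} \;\le\; c\cdot 4^{\Phi}\cdot \SWn \;=\; \hPhi,
\]
where the last two steps use $2\cdot 3^{\Phi-1}\le 3^{\Phi}$ and $3^{\Phi}\le 4^{\Phi}$.

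There is no real obstacle here beyond bookkeeping: the main thing to check carefully is that $c\SWn\ge 1$ holds in the parameter regime used throughout the paper, so that the estimate $a\le 2c\SWn$ is valid. If one wanted to avoid this side condition one could replace the factor $4$ in $\hPhi$ by any constant larger than $3$ and still absorb the additive term; the chosen constant $4$ is comfortable.
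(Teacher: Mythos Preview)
Your proof is correct and follows essentially the same approach as the paper: the lower bound is immediate by dropping the additive term, and the upper bound comes from unrolling the linear recurrence. The only cosmetic difference is that you compute the closed-form solution $h_1 = 3^{\Phi-1} + a\cdot(3^{\Phi-1}-1)/2$ and bound it directly, whereas the paper proves the slightly stronger per-level bound $h_\phi \le c\cdot 4^{\Phi-\phi}\cdot\SWn$ by induction on $\phi$; both arguments rely on the same side condition $c\SWn\ge 1$.
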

\begin{proof}
The first bound is immediate. To derive the second bound we show by induction on $\phi$ that $h_\phi\leq c\cdot 4^{\Phi-\phi}\cdot\SWn$. It is $h_\Phi=1\leq \SW$. For $\phi\leq \Phi$ assume that $h_\phi= c\cdot 4^{\Phi-\phi}\cdot\SWn$, then it is  
\begin{eqnarray*}
h_{\phi-1}
&=&
c\cdot 4^{\Phi-\phi}\cdot\SWn\cdot 3 + \SW
\\
&=&
c\cdot 4^{\Phi-\phi}\cdot\SWn\cdot 3 + c\cdot 4^{\Phi-\phi}\cdot\SWn
\\
&=&
c\cdot 4^{\Phi-(\phi-1)}\cdot\SWn
\end{eqnarray*}
\end{proof}

\subsubsection{Definition of Local Success of an Epoch}
Given point $i$, we define the sets of nodes involved in the local computation that selects a node from which $i$ might receive a bcast-messages in phase $\phi$. 
\begin{definition}[Sets $U_{\phi,i}, S_{\phi,i}$ and $S'_{\phi,i}$]\label{def:6.5}
Let $i\in N_\Ga(S_1)$ be a node (of which we can think as a point in space) that has a $\Ga$-neighbor with an ongoing broadcast. Let $U_{\phi,i}:=N_{\Ge}(i)\cap S_\phi$ be the subset of nodes at distance at most $\Ra$ from which $i$ might receive a bcast-message in phase $\phi$. We define sets $S_{\phi,i}$ and $S'_{\phi,i}$: 
\begin{itemize} 
\item $S_{\phi,i}:=N_{\tilde{\tilde{H}}_p^\mu[S_\phi],h_\phi}(U_{\phi,i})$, the $h_\phi$-hop $\tilde{\tilde{H}}_p^\mu[S_\phi]$-neighborhood of $U_{\phi}$. 
\item $S'_{\phi,i}:=N_{\tilde{\tilde{H}}_p^\mu[S_\phi],h'_\phi}(U_{\phi,i})\subseteq S_{\phi,i}$, the $h'_\phi$-hop $\tilde{\tilde{H}}_p^\mu[S_\phi]$-neighborhood of $U_{\phi,i}$.
\end{itemize}
\end{definition}

\begin{definition}\label{def:sucphase}(Local success of computing $\tilde{\tilde{H}}_p^\mu[S_\phi]$ and $S_{\phi+1}$).

 A computation of $\tilde{\tilde{H}}_p^\mu[S_\phi]$ is successful at point $i$ if $\tilde{\tilde{H}}_p^\mu[S_\phi]|_{S_{\phi,i}}$ corresponds to a $(1-\gamma)$-approximation of $H_p^\mu[S_\phi]|_{S_{\phi,i}}$. A computation of independent set $S_{\phi+1}$ on $\tilde{\tilde{H}}_p^\mu[S_\phi]$ is successful at node $i$ if $S_{\phi+1}$ is a $(\phi,i)$-locally maximal independent set in the sense that:
\begin{enumerate}
\item $S_{\phi+1}$ is independent in $\tilde{\tilde{H}}_p^\mu[S_\phi]$, and
\item there is no node $v \in S_\phi\setminus S_{\phi+1,i}$ such that $S_{\phi+1,i}\cup \{v\}$ is independent in $\tilde{\tilde{H}}_p^\mu[S_\phi]$ and $v$ is of distance at most $h_\phi$ to any $u\in U_{\phi+1,i}$ with respect to $\tilde{\tilde{H}}_p^\mu[S_{\phi+1}\cup \{v\}]$. 
\end{enumerate}
\end{definition}

 Adding a single node $v$ to the vertex-set $S_{\phi+1}$ might change the topology of $\tilde{\tilde{H}}_p^\mu[S_{\phi+1}]$, and thus distances in other parts of the graph due to SINR constraints. Therefore we need to show that this definition of $(\phi,i)$-local maximality is well-defined.

\begin{lemma}
The definition of $(\phi,i)$-local maximality is well-defined, i.e.~$(\phi,i)$-local maximality of set $S_{\phi+1}$ is invariant to adding a node $v$ that is independent to $S_{\phi+1}$ in $\tilde{\tilde{H}}_p^\mu[S_{\phi}]$.
\end{lemma}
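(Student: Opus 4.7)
The concern addressed by the lemma is that item~2 of Definition~\ref{def:sucphase} quantifies over candidate nodes $v$, yet the graph $\tilde{\tilde{H}}_p^\mu[S_{\phi+1}\cup\{v\}]$ in which the $h_\phi$-distance is measured itself depends on $v$ through the SINR formula: adding $v$ as a simultaneous transmitter changes interference everywhere and can in principle alter edges among pairs in $S_{\phi+1}\times S_{\phi+1}$. A priori, two different candidates $v_1,v_2$ could induce incomparable topologies and hence incompatible verdicts, so the predicate ``there exists $v$ \dots'' would not be unambiguous. The plan is to show that for every candidate $v$ the relevant distance predicate depends only on $v$'s own adjacencies and on the fixed graph $\tilde{\tilde{H}}_p^\mu[S_{\phi+1}]$, so the existential quantifier has a canonical meaning.

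The first step is a monotonicity property of $H_p^\mu[\cdot]$ in its vertex set: if $S\subseteq S'$, then every edge of $H_p^\mu[S']$ lying in $S\times S$ is already an edge of $H_p^\mu[S]$. The reason is that with fixed per-node transmission probability $p$, enlarging the pool of simultaneous senders only adds interference at every receiver, so the reception probability between any two pre-existing nodes can only decrease. I would then lift this to the randomized construction of $\tilde{\tilde{H}}_p^\mu[\cdot]$ from Section~\ref{sec:6.3.3} by coupling: draw the $T$ independent transmission bits for nodes in $S$ first, and add fresh independent bits for the additional senders in $S'\setminus S$ on top, so that in every time slot the SINR at any receiver in $S'$ is at most that in $S$. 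Reception succeeds in the $S'$ experiment only if it succeeds in the $S$ experiment, so the empirical reception counts on $S\times S$ are pointwise dominated, and in particular $\tilde{\tilde{H}}_p^\mu[S']\cap(S\times S)\subseteq \tilde{\tilde{H}}_p^\mu[S]$.

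Applying this with $S=S_{\phi+1}$ and $S'=S_{\phi+1}\cup\{v\}$, the edge set of $\tilde{\tilde{H}}_p^\mu[S_{\phi+1}\cup\{v\}]$ consists of a subset of the edges of the fixed graph $\tilde{\tilde{H}}_p^\mu[S_{\phi+1}]$ together with possibly new edges incident to $v$. Any path of length at most $h_\phi$ from $v$ to some $u\in U_{\phi+1,i}$ in $\tilde{\tilde{H}}_p^\mu[S_{\phi+1}\cup\{v\}]$ must therefore start with an incident edge $(v,w)$ with $w\in S_{\phi+1}$ and continue with a path of length at most $h_\phi-1$ entirely inside $S_{\phi+1}$, which by the monotonicity is also a path of $\tilde{\tilde{H}}_p^\mu[S_{\phi+1}]$; conversely, any such concatenation is a valid path in $\tilde{\tilde{H}}_p^\mu[S_{\phi+1}\cup\{v\}]$. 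Hence the distance predicate for each candidate $v$ is decided by (i) the adjacencies of $v$, intrinsic to $v$, and (ii) the $v$-independent graph $\tilde{\tilde{H}}_p^\mu[S_{\phi+1}]$, making the existential quantifier in Definition~\ref{def:sucphase} well-defined. The main obstacle is carrying out the monotonicity rigorously at the level of the randomized approximation $\tilde{\tilde{H}}_p^\mu[\cdot]$ rather than the idealized $H_p^\mu[\cdot]$; the coupling sketched above handles this cleanly because the reception-count threshold used in Section~\ref{sec:6.3.3} is monotone under pointwise domination of per-slot successes.
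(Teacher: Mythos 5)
Your opening diagnosis of the problem is right, and your monotonicity observation --- that for $S\subseteq S'$ every edge of $H_p^\mu[S']$ inside $S\times S$ is already an edge of $H_p^\mu[S]$, lifted to $\tilde{\tilde{H}}_p^\mu[\cdot]$ by coupling the transmission bits --- is sound and in fact makes explicit something the paper only uses implicitly (namely, that including $v$ cannot \emph{create} edges among the old nodes, so any shortened distance must route through $v$). The gap is in the last step. From that monotonicity you get only one containment, $\tilde{\tilde{E}}_p^\mu[S_{\phi+1}\cup\{v\}]\cap(S_{\phi+1}\times S_{\phi+1})\subseteq \tilde{\tilde{E}}_p^\mu[S_{\phi+1}]$; edges of the fixed graph $\tilde{\tilde{H}}_p^\mu[S_{\phi+1}]$ may be \emph{destroyed} by the extra interference $v$ contributes. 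Hence your sentence ``conversely, any such concatenation is a valid path in $\tilde{\tilde{H}}_p^\mu[S_{\phi+1}\cup\{v\}]$'' does not follow and is false in general: a path of length $h_\phi-1$ in $\tilde{\tilde{H}}_p^\mu[S_{\phi+1}]$ need not survive once $v$ is added as a transmitter. Without that converse, the distance predicate is not ``decided by the adjacencies of $v$ and the $v$-independent graph,'' and the well-definedness conclusion does not follow from your argument as written.

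The paper does not attempt the equivalence you aim for; it argues asymmetrically. Adding $v$ can change distances among $S_{\phi+1}$ in two ways: it can \emph{decrease} them, but (by exactly your monotonicity) only via paths passing through $v$; and it can \emph{increase} them by deleting edges, which only makes the condition ``$v$ is within $h_\phi$ hops of some $u\in U_{\phi+1,i}$'' harder to satisfy and therefore cannot turn a locally maximal set into a non-maximal one. Consequently, for any candidate $v$ lying more than $h_{\phi+1}$ hops from every node of $S_{\phi+1,i}$, the verdict of item 2 of Definition~\ref{FULL:def:sucphase} is unchanged by whether one measures distances before or after inserting $v$, which is the invariance the lemma asserts. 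To repair your write-up you should drop the claimed equivalence and replace it with this one-sided argument (your coupling supplies the ``decreases only via $v$'' half rigorously); alternatively you would have to prove that edges of $\tilde{\tilde{H}}_p^\mu[S_{\phi+1}]$ are actually preserved under adding one transmitter, which is a different and stronger statement that your proposal does not establish.
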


\begin{proof}
 For any $u_1,u_2\in S_{\phi+1}$ the distance between $u_1$ and $u_2$ in $\tilde{\tilde{H}}_p^\mu[S_{\phi+1}\cup \{v\}]|_{S_{\phi+1,i}}$ compared to the distance in $\tilde{\tilde{H}}_p^\mu[S_{\phi+1}]|_{S_{\phi+1,i}}$ might potentially
\begin{enumerate}
\item decrease, as there might now be a shorter $u_1,u_2$-path via $v$, or  
\item increase, as $v$ adds interference, which might reduce connectivity among the nodes in $S_{\phi+1}$. Even though there might now be a short-cut for part of the paths via $v$, added interference might still cause the overall path to be longer.
\end{enumerate}
Therefore, in case that $v$ is not at distance at most $h_{\phi+1}$ to any node in a set $S_{\phi+1,i}$, set $S_{\phi+1,i}$ stays $(\phi,i)$-locally maximal independent of adding $v$, as the only way a node can be closer to $i$ is via a path through $v$---and $v$ in turn is at distance at least $h_{\phi+1}$. 
\end{proof}

\begin{definition}[Local success of an epoch]\label{def:suc}
An epoch is successful at point $i$ if  
\begin{enumerate}
\item the computations of each graph $\tilde{\tilde{H}}_p^\mu[S_1],\dots,\tilde{\tilde{H}}_p^\mu[S_\Phi]$ are successful at point $i$, and
\item the computations of each set $S_2,\dots,S_\Phi$  are successful at point $i$, and
\item there is a $\phi\in\{1,\dots,\Phi\}$, such that $i$ receives the bcast-message $m$ transmitted by some node $u_\phi\in U_{\phi,i}$ in Line~\ref{line:1} of phase $\phi$.
\end{enumerate}
\end{definition}

Note that in the proofs of this Section we never assume that we know the location of $i$ nor that we know $u_\phi$ or $u_\phi$'s location/distance to $i$.

\subsubsection{Probability of Local Success of Computing Graph $\tilde{\tilde{H}}_p^\mu[S_{\phi}]$ Based on $S_\phi$} 

This is an important step towards analyzing Property 1 of local success of an epoch. Using this property we later iteratively guarantee local success of computing graphs $\tilde{\tilde{H}}_p^\mu[S_{\phi}]$ in each phase. As the nodes involved in decisions of other nodes cannot be too far away, this helps to bound the probability that locally correct computations take place. If all involved computations at nodes in transmission range are successful, approximate progress takes place.

\begin{remark}\label{rem:unique} In the remaining part of Section~\ref{sec:correctness} we focus only on unsuccessful transmission to keep the analysis clean. Therefore we assume for now that $S_{\phi,i}$ is assigned unique temporary labels such that in absence of unsuccessful transmissions the modified MIS-algorithm always computes a set that is $(\phi+1,i)$-locally maximal. In Section~\ref{sec:kuhnproof} we argue that this assumption can be dropped at the cost of the probability derived in Lemma~\ref{lem:locMIS}.
\end{remark}

\begin{lemma}\label{lem:0.1}
Consider a node $i\in S_1$ and phase $\phi$ of Algorithm~\ref{alg:AdHoc2}. 
Line~\ref{line:compH} described in Section~\ref{sec:6.3.3} computes a graph $\tilde{\tilde{H}}_p^\mu[S_\phi]$ in time 
$\BO\left(\Phi + \log(1/\epsapprog)\right),$
such that with probability at least $1-\left(\frac{\epsapprog}{f(h_1)}\right)^{\Theta(1)}$ the computation of graph $\tilde{\tilde{H}}_p^\mu[S_\phi]$ is successful at node $i$. Given set $S_\phi$, the decision whether an edge $(u,v)$ is in graph $\tilde{\tilde{H}}_p^\mu[S_\phi]$ does not involve communication between nodes other than $u$ and $v$.
\end{lemma}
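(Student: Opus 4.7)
My plan is to prove the lemma in three pieces: (i) the runtime bound, (ii) the success probability at $i$, and (iii) the locality of each per-edge decision. For (i), the sub-algorithm of Section~\ref{sec:6.3.3} consists of two back-to-back stages of $T = \Theta(\log(f(h_1)/\epsapprog)/(\gamma^2\mu))$ rounds. Applying Lemma~\ref{lem:h-bound} gives $h_1\leq c\cdot 4^\Phi\cdot\SWn$, so for any polynomial growth-bounding function $f$ we have $\log f(h_1) = \BO(\log h_1) = \BO(\Phi + \log\log^*(1/\epsapprog))$, which yields $T = \BO(\Phi + \log(1/\epsapprog))$ and total time $2T = \BO(\Phi + \log(1/\epsapprog))$.

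For (ii), the approach is to bound the failure probability for a single potential edge via Chernoff and then union-bound over the relevant set. Fix $(u,v) \in S_{\phi,i} \times S_{\phi,i}$. In each of the $T$ ID-broadcast rounds the event ``$u$ receives $v$'s ID'' is a Bernoulli trial whose success probability is at least $\mu$ when $(u,v)\in H_p^\mu[S_\phi]$ and strictly less than $(1-\gamma)\mu$ when $(u,v)\notin H_p^{(1-\gamma)\mu}[S_\phi]$. Since the trials are independent across rounds, a Chernoff argument with threshold $(1-\gamma/2)\mu T$---the same estimate underlying eq.~(\ref{bd:1})---shows that $(u,v)$ is correctly classified with probability at least $1-e^{-\Theta(\gamma^2\mu T)} = 1-(\epsapprog/f(h_1))^{\Theta(1)}$, where the exponent can be made arbitrarily large by inflating the hidden constant in $T$. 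The same estimate applies to the symmetric direction $v\to u$ and to the second stage that exchanges the $\BO(1)$-sized candidate lists (the threshold $(1-\gamma/2)\mu T$ forces each node to hold at most $1/((1-\gamma/2)\mu)=\BO(1)$ candidate neighbors). To finish, since $\tilde{\tilde{H}}_p^\mu[S_\phi]$ has constant degree and the underlying geometry is polynomially growth-bounded, $|S_{\phi,i}| \leq \BO(f(h_\phi)) \leq \BO(f(h_1))$, so there are only $\BO(f(h_1))$ candidate edges to verify. A union bound then gives that all of them are classified correctly with probability at least $1-f(h_1)\cdot(\epsapprog/f(h_1))^{\Theta(1)} = 1-(\epsapprog/f(h_1))^{\Theta(1)}$, which is precisely the local success condition for $\tilde{\tilde{H}}_p^\mu[S_\phi]$ at $i$.

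Part (iii) is almost by inspection of Section~\ref{sec:6.3.3}: the inclusion of $(u,v)$ in $\tilde{\tilde{H}}_p^\mu[S_\phi]$ is determined by (a) $u$'s reception count of $v$'s ID over the first $T$ rounds together with whether $u$'s ID appears in the candidate list that $v$ transmits in the second $T$ rounds, and (b) the symmetric statements with $u$ and $v$ swapped; no node other than $u$ or $v$ produces or consumes any of this information, while all other nodes contribute only to the interference that has already been folded into the per-round Bernoulli success probability. The main obstacle I anticipate is the edge-counting step in (ii): one must argue that growth-boundedness of the underlying SINR geometry transfers to the random graph $\tilde{\tilde{H}}_p^\mu[S_\phi]$ in such a way that $|S_{\phi,i}| = \BO(f(h_1))$ holds uniformly, so that the union bound is legitimate even though $\tilde{\tilde{H}}_p^\mu[S_\phi]$ is the very object whose edges we are trying to verify.
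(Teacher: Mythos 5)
Your proposal follows essentially the same route as the paper's proof: a per-edge Chernoff bound at threshold $(1-\gamma/2)\mu T$ (the paper's Bound~(\ref{FULL:bd:1})), a union bound over the edges of $\tilde{\tilde{H}}_p^\mu[S_\phi]|_{S_{\phi,i}}$ using the constant degree and growth-boundedness, the runtime from $T=\Theta(\log(f(h_1)/\epsapprog))$ together with $h_1\leq\hPhi$, and locality by inspection of the two-stage protocol. The one slip is in the edge count you yourself flagged as the main obstacle: $S_{\phi,i}$ is the $h_\phi$-hop neighborhood of the \emph{set} $U_{\phi,i}=N_{\Ge}(i)\cap S_\phi$, which may contain $\BO(\Ratio^2)$ nodes, so the correct bound is $|S_{\phi,i}|=\BO(\Ratio^2 f(h_\phi))$ rather than $\BO(f(h_\phi))$. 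This does not break the argument, since $h_1\geq 3^{\Phi-1}$ with $\Phi=\Theta(\log\Ratio)$ makes $f(h_1)$ polynomial in $\Ratio$, so the extra $\Ratio^2$ factor is absorbed by taking the exponent hidden in the $\Theta$-notation large enough---exactly as the paper does.
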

\begin{proof}

From Bound~\ref{bd:1} in Lemma~\ref{lem:wronginterference} we know that the probability that an edge $(u,v)\in S_\phi \times S_\phi$ that belongs to a $(1-\gamma)$-approximation of $H_p^\mu[S_\phi]$ is included in $\tilde{\tilde{H}}_p^\mu[S_\phi],h_\phi$ is at least $1-\left(\frac{\epsapprog}{f(h_1)}\right)^{\Theta(1)}$. As 
\begin{itemize}
\item $S_{\phi,i}$ is defined using the $h_\phi$-hop $\tilde{\tilde{H}}_p^\mu[S_\phi]$-neighborhood of nodes $U_{\phi,i}$ (see Definition~\ref{def:6.5}), and
\item $U_{\phi,i}$ contains at most $\Ratio^2$ many edges (see Definition~\ref{def:6.5}), and
\item $\tilde{\tilde{H}}_p^\mu[S_\phi]$ has degree\footnoteref{foot:3} $\BO(1)$, and $\tilde{\tilde{H}}_p^\mu[S_\phi]$ is growth bounded by $f$,
\end{itemize}
there are at most $\BO(f(h_\phi)\cdot \Ratio^2)$ edges in $\tilde{\tilde{H}}_p^\mu[S_\phi]|_{S_{\phi,i}}$ among which the algorithm needs to choose $\tilde{\tilde{E}}_p^\mu[S_\phi]\cap (S_{\phi,i}\times S_{\phi,i})$ correctly.  
Therefore the probability that $\tilde{\tilde{E}}_p^\mu[S_\phi]\cap (S_{\phi,i}\times S_{\phi,i})$ of edges among nodes $S_{\phi,i}$ is chosen in a way that $(1-\gamma)$-approximates edges in $H_p^\mu[S_\phi]|_{S_{\phi,i}}$ is at least 
\begin{eqnarray*}
\left(1-\left(\frac{\epsapprog}{f(h_1)}\right)^{\Theta(1)}\right)^{\BO\left(f(h_\phi)\Ratio^2\right)}
&\geq&
1-\BO(f(h_\phi)\Ratio^2)\cdot \left(\frac{\epsapprog}{f(h_1)}\right)^{\Theta(1)}
=
1-\left(\frac{\epsapprog}{f(h_1)}\right)^{\Theta(1)},
\end{eqnarray*} 
where we use 
\begin{itemize}
\item $h_\phi\leq \poly\Ratio$, as $h_\phi\leq h_1$, $h_1 \geq 3^{\Phi-1}\geq \Phi$ (see Lemma~\ref{lem:h-bound}), and $\Phi:=\Theta(\log\Ratio)$; and
\item that the growth bound $f$ is a monotonic increasing function (as the number of neighbors can only grow with the distance), and
\item that we can choose the constant hidden in the $\Theta$-notation arbitrarily high.
\end{itemize}

Furthermore, as $\mu$ and $\gamma$ are constants, and as $h_1\leq \hPhi$ (see Lemma~\ref{lem:h-bound}), and as $f$ is a polynomial function, we can bound the runtime $T$ by

\begin{eqnarray*}
T
&=&
\Theta\left(\frac{\log \frac{f(h_1)}{\epsapprog}}{\gamma^2\mu}\right)
=
\Theta\left(\log \frac
{f\left(\hPhi\right)}{\epsapprog}\right)
\\
&=&
\Theta\left(\Phi + \log(\SWn) +  \log(1/\epsapprog)\right)
=
\Theta\left(\Phi + \log(1/\epsapprog)\right),
\end{eqnarray*}

where we choose the constant hidden in the $\Theta$-notation sufficiently high. Finally, note that in this process the decision whether an edge $(u,v)$ is in the graph $\tilde{\tilde{H}}_p^\mu[S_\phi]$ does not involve communication between nodes other than $u$ and $v$.
\end{proof}

\subsubsection{Probability of Local Success of Computing Set $S_{\phi+1}$ Based on $\tilde{\tilde{H}}_p^\mu[S_{\phi}]$ } 
This is an important step towards the analysis of Property 2 of local success of an epoch. Using this property we later iteratively guarantee local success of computing sets $S_\phi$ in each phase. As the nodes involved in decisions of other nodes cannot be too far away, this helps to bound the probability that locally correct computations take place. If all involved computations at nodes in transmission range are successful, approximate progress takes place.

\begin{lemma}\label{lem:0.2}
Given graph $\tilde{\tilde{H}}_p^\mu[S_\phi]$, consider phase $\phi$ in Algorithm~\ref{alg:AdHoc2}. Line~\ref{line:compS} described in Section~\ref{sec:6.3.4} computes in time 
$
\BO\left((\Phi + \log(1/\epsapprog)) \SWn\right)
$
a set $S_{\phi+1}$ that is an independent set in $\tilde{\tilde{H}}_p^\mu[S_\phi]$.  
The computation of set $S_{\phi+1,i}$ is successful at point $i$. Furthermore, determining the $S_{\phi+1,i}$ part of $S_{\phi+1}$ involves only nodes in $S_{\phi,i}$. 
\end{lemma}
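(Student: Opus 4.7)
The plan is to verify four claims separately: the running time bound, the independence of $S_{\phi+1}$ in $\tilde{\tilde{H}}_p^\mu[S_\phi]$, the $(\phi,i)$-local maximality at $i$, and the locality of the computation to $S_{\phi,i}$. Each of these essentially amounts to tracing the parameters of the modified Schneider--Wattenhofer (SW) simulation of Section~\ref{sec:6.3.4} through the definitions of $h_\phi$, $T$, and $\SWn$.

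For the running time, the modified SW algorithm terminates in $\BO(\SWn)$ CONGEST rounds (since it executes the constant number $c'$ of stages, each of $\log^*(N)+2$ phases, with $N = \poly(\Ratio)/\epsapprog$). Each CONGEST round is simulated in our probabilistic model by $T$ transmission/acknowledgment slots of the schedule $\tau_\phi$, and substituting $T = \BO(\Phi + \log(1/\epsapprog))$ from Lemma~\ref{lem:0.1} yields the total time $\BO((\Phi + \log(1/\epsapprog))\SWn)$. For independence, a node $v$ only becomes a $dominator$ when $r_v$ is strictly smaller than every neighboring $r_w$ in some phase, so no two $\tilde{\tilde{H}}_p^\mu[S_\phi]$-adjacent nodes can simultaneously join the set of dominators; since $S_{\phi+1}$ is defined to be exactly the set of dominators, it is independent in $\tilde{\tilde{H}}_p^\mu[S_\phi]$ regardless of whether the temporary labels happen to be unique.

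For local maximality, I would invoke Lemma~\ref{lem:locMIS} with $H := \tilde{\tilde{H}}_p^\mu[S_\phi]$ and $U := U_{\phi+1,i}$. Under the unique-labels working assumption carried in by Remark~\ref{rem:unique}, that lemma guarantees the computed set is maximal in $N_{H,\hPhi}(U)$. By Lemma~\ref{lem:h-bound}, $h_\phi \leq h_1 \leq \hPhi$, so maximality in this larger ball implies maximality in the $h_\phi$-ball required by clause~(2) of Definition~\ref{def:sucphase}. The delicate step is that Definition~\ref{def:sucphase} measures the distance of a candidate $v$ to $U_{\phi+1,i}$ inside the graph $\tilde{\tilde{H}}_p^\mu[S_{\phi+1,i}\cup\{v\}]$ rather than inside $\tilde{\tilde{H}}_p^\mu[S_\phi]$. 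I would handle this exactly as in the well-definedness lemma following Definition~\ref{def:sucphase}: removing vertices from $S_\phi$ can only shrink the set of SINR-induced edges available to a path, so any $v$ at $\tilde{\tilde{H}}_p^\mu[S_{\phi+1,i}\cup\{v\}]$-distance at most $h_\phi$ from $U_{\phi+1,i}$ also lies within $h_\phi$ hops in $\tilde{\tilde{H}}_p^\mu[S_\phi]$ (via the same, still-present, edges), and Lemma~\ref{lem:locMIS}'s maximality then forbids extending $S_{\phi+1,i}$ by $v$. This monotonicity-of-distance argument is the main obstacle.

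Finally, for locality of the computation, the SW simulation on $\tilde{\tilde{H}}_p^\mu[S_\phi]$ decides the fate of each node using only communication confined to its $\SWn$-hop $\tilde{\tilde{H}}_p^\mu[S_\phi]$-neighborhood. Since $S_{\phi+1}\subseteq S_\phi$ implies $U_{\phi+1,i}\subseteq U_{\phi,i}$, and since $h_\phi = h'_\phi + \SW + 1 \geq \SW \geq \SWn$, the $\SWn$-hop neighborhood of $U_{\phi+1,i}$ in $\tilde{\tilde{H}}_p^\mu[S_\phi]$ is contained in $N_{\tilde{\tilde{H}}_p^\mu[S_\phi], h_\phi}(U_{\phi,i}) = S_{\phi,i}$. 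Hence only nodes in $S_{\phi,i}$, acting under the restriction of $\tau_\phi$ to these nodes, affect $S_{\phi+1,i}$, completing the proof sketch.
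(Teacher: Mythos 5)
Your runtime and independence arguments match the paper's proof, and your overall decomposition is the right one. But there is a genuine gap in the local-maximality step: the monotonicity claim you lean on is false, and in fact points in the wrong direction. In the family $H_p^\mu[\cdot]$ (and its approximations $\tilde{\tilde{H}}_p^\mu[\cdot]$), the edge set is \emph{not} monotone under taking subsets of the vertex set: an edge $(u,v)$ is present iff the transmission is $\mu$-reliable \emph{when all of $S$ sends with probability $p$}, so removing vertices from $S_\phi$ removes interference and can \emph{create} new reliable edges between the surviving nodes. Hence a path witnessing $d_{\tilde{\tilde{H}}_p^\mu[S_{\phi+1}\cup\{v\}]}(v,U_{\phi+1,i})\leq h_{\phi+1}$ need not consist of edges that are "still present" from $\tilde{\tilde{H}}_p^\mu[S_\phi]$ --- those edges may never have been in $\tilde{\tilde{H}}_p^\mu[S_\phi]$ at all. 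The paper's own well-definedness lemma after Definition~\ref{FULL:def:sucphase} makes exactly this point (distances can both shrink and grow when the vertex set changes), and the paper bridges the two graphs differently: it uses the recursive choice $h'_\phi:=3h_{\phi+1}$ together with the maximality of $S_{\phi+1}$ to argue that one hop in $\tilde{\tilde{H}}_p^\mu[S_{\phi+1}]$ corresponds to at most $3$ hops in $\tilde{\tilde{H}}_p^\mu[S_\phi]$, so that $N_{\tilde{\tilde{H}}_p^\mu[S_{\phi}]}(S_{\phi+1,i})\subseteq S'_{\phi,i}$ and any addable $v$ is at least $h'_\phi+1$ hops from $u_{\phi+1}$ in $\tilde{\tilde{H}}_p^\mu[S_\phi]$, hence at least $h_{\phi+1}+2$ hops in $\tilde{\tilde{H}}_p^\mu[S_{\phi+1}]$. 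This factor-$3$ translation is the missing idea; without it (or a substitute), your argument does not establish clause~(2) of Definition~\ref{FULL:def:sucphase}. The same omission also weakens your locality claim: what must be decided locally is membership for all of $S_{\phi+1,i}$, whose preimage in $\tilde{\tilde{H}}_p^\mu[S_\phi]$ is the $h'_\phi$-ball $S'_{\phi,i}$, and only then does taking the $\SW$-neighborhood give $N_{\tilde{\tilde{H}}_p^\mu[S_\phi],\SW}(S'_{\phi,i})\subseteq S_{\phi,i}$; taking the $\SW$-neighborhood of $U_{\phi+1,i}$ alone is not enough.

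A secondary, non-fatal structural difference: you invoke Lemma~\ref{FULL:lem:locMIS} inside this proof, which would make the conclusion hold only with probability $1-\epsapprog/(3\Phi)$, whereas the lemma as stated asserts success with certainty. The paper instead conditions on the unique-labels working assumption of Remark~\ref{FULL:rem:unique} (and on reuse of the schedule $\tau_\phi$, which makes all communication in $S_{\phi,i}$ succeed deterministically), deferring the label-uniqueness probability to the proof of Theorem~\ref{FULL:thm:approg}. You should either adopt that conditioning or restate the lemma probabilistically.
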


\begin{proof}
\textbf{Runtime analysis:} The algorithm described in Section~\ref{sec:6.3.4} consists of simulating (in the SINR model) an algorithm to compute an MIS in the CONGEST model taking $\SW$ rounds in the CONGEST model. The provided simulation of each round of the CONGEST model in the SINR model takes $\BO(T)$ time slots. Therefore the total runtime is
\begin{eqnarray*}
\BO(T\cdot \SWn)
&=&
\BO\left(\log\left(\frac{f(h_1)}{\epsapprog}\right)\cdot \SWn\right)
\\
&=&
\BO\left(\log\left(\frac{f\left(\hPhi\right)}{\epsapprog}\right)\cdot \SWn\right)
\\
&=&
\BO\left((\Phi + \log(\SWn) + \log(1/\epsapprog))\cdot \SWn\right).
\\
&=&
\BO\left((\Phi + \log(1/\epsapprog))\cdot \SWn\right),
\end{eqnarray*}
where we use similar arguments as in the runtime analysis of Lemma~\ref{lem:0.1} as well as the fact that $\log \log^*(\Ratio)\leq \Theta(\log(\Ratio))=\Phi$. 
\\
\\
\noindent 
\textbf{An independent set is computed:} This algorithm simulates the MIS algorithm of~\cite{DBLP:conf/podc/SchneiderW08}, which computes an MIS in growth-bounded graphs, and attempts to compute a subset $S_{\phi+1}$ of an MIS on $\tilde{\tilde{H}}_p^\mu[S_\phi]$. The algorithm might not achieve maximality as nodes might stop participating in this epoch after their communication was unsuccessful\footnote{Recall that communication at node $u\in S_\phi$ is unsuccessful if $u$ did not receive a message (and acknowledgments) for own messages form each neighbor in $\tilde{\tilde{H}}_p^\mu[S_\phi]$.}. As these nodes do not join $S_{\phi +1}$, set $S_{\phi+1}$ is still an independent set in $\tilde{\tilde{H}}_p^\mu[S_\phi]$. 
\\
\\
\textbf{Given graph $\tilde{\tilde{H}}_p^\mu[S_\phi]$, the computation of set $S_{\phi+1}$ is successful at point $i$
:} As all communication in $\tilde{\tilde{H}}_p^\mu[S_\phi]|_{S_{\phi,i}}$ is successful due to using the same schedule $\tau_\phi$ as when computing $\tilde{\tilde{H}}_p^\mu[S_\phi]$, the set $S_{\phi+1}\cap S'_{\phi,i}$ is $(\phi,i)$-locally maximal. Recall that set $S'_{\phi,i}$ depends on $h'_\phi:=3h_{\phi+1}$, a choice taking into account that each hop in $\tilde{\tilde{H}}_p^\mu[S_{\phi+1}]|_{S_{\phi+1,i}}$ corresponds to at most $3$ hops in $\tilde{\tilde{H}}_p^\mu[S_{\phi}]|_{N_{\tilde{\tilde{H}}_p^\mu[S_{\phi}]}(S_{\phi+1,i})}$, as otherwise $(\phi+1,i)$-local maximality of $S_{\phi+1}$ in $\tilde{\tilde{H}}_p^\mu[S_{\phi}]|_{N_{\tilde{\tilde{H}}_p^\mu[S_{\phi}]}(S_{\phi+1,i})}$ was violated. Therefore we conclude that $N_{\tilde{\tilde{H}}_p^\mu[S_{\phi}]}(S_{\phi+1,i})\subseteq S'_{\phi,i}$. Furthermore any node $v$ that could be added to $S_{\phi+1,i}$ without violating independence in $\tilde{\tilde{H}}_p^\mu[S_{\phi}]$ is at least $h'_{\phi}+1$ hops away from $u_{\phi+1}$ in $\tilde{\tilde{H}}_p^\mu[S_{\phi}]$ and thus at least $h_{\phi+1}+2$ hops away from $u_\phi$ in $\tilde{\tilde{H}}_p^\mu[S_{\phi+1}]$. Therefore Definition~\ref{def:suc} of local successful computation of $S_{\phi+1,i}$ is satisfied.

Also note that only nodes in $N_{\tilde{\tilde{H}}_p^\mu[S_{\phi}],\SW}(S'_{\phi,i})=S_{\phi,i}$ are involved in the computation, as the runtime of the MIS algorithm of~\cite{DBLP:conf/podc/SchneiderW08} is $\SW$. 
\end{proof}

\subsubsection{Probability of Satisfying Properties 1 and 2 of a Local Successful Epoch}

\begin{lemma}\label{lem:2}
For any set $S_1\subseteq V$ and node $i\in N_\Ga(S_1)$, both Properties 1 and 2 of Definition~\ref{def:suc} of a successful epoch at point $i$ are satisfied with probability at least $1-\epsapprog/3$. 
\end{lemma}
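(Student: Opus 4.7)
The plan is a straightforward induction on $\phi$ from $1$ to $\Phi$ combined with a union bound over phases. The base case is immediate since $S_1$ is the set of nodes with an ongoing broadcast, which is given as input to the epoch and requires no computation.

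For the induction step, I would condition on the event that at every phase $\phi' < \phi$ both the graph $\tilde{\tilde{H}}_p^\mu[S_{\phi'}]$ and the set $S_{\phi'+1}$ were successfully computed at point $i$ in the sense of Definition~\ref{def:suc}. Under this conditioning, Lemma~\ref{lem:0.1} shows that $\tilde{\tilde{H}}_p^\mu[S_\phi]$ is successfully computed at $i$ with failure probability at most $(\epsapprog/f(h_1))^{\Theta(1)}$. Then, appealing to Remark~\ref{rem:unique} to reduce to the unique-label case, Lemma~\ref{lem:0.2} yields that $S_{\phi+1}$ is successfully computed at $i$ deterministically given unique labels, while Lemma~\ref{lem:locMIS} ensures the labels are unique throughout the relevant $(c\cdot 4^\Phi\cdot\SWn)$-neighborhood of $U_{\phi+1,i}$ with failure probability at most $\epsapprog/(3\Phi)$.

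To conclude, I would take a union bound across the $\Phi$ phases. The MIS-maximality failures contribute at most $\Phi \cdot \epsapprog/(3\Phi) = \epsapprog/3$ in total. The graph-approximation failures contribute at most $\Phi \cdot (\epsapprog/f(h_1))^{\Theta(1)}$, which can be made arbitrarily small compared to $\epsapprog/3$ by choosing the constant hidden in the $\Theta$-notation inside the exponent (equivalently, the hidden constant in the definition of $T$) sufficiently large; note that $h_1 \leq \hPhi$ by Lemma~\ref{lem:h-bound} and $f$ is polynomial, so $f(h_1)$ is polynomial in $\Ratio$ and absorbs $\Phi=\Theta(\log\Ratio)$ easily. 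Combining the two contributions gives the claimed bound of $1-\epsapprog/3$.

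The main obstacle is to verify that the localization assumed by Lemmas~\ref{lem:0.1} and~\ref{lem:0.2} is consistent across phases: for the computation at phase $\phi+1$ at point $i$ to depend only on data within the $h_\phi$-neighborhood of $U_{\phi,i}$ in $\tilde{\tilde{H}}_p^\mu[S_\phi]$, one needs that $S_{\phi+1,i}$ together with all nodes touched by the MIS simulation around it fits inside $S_{\phi,i}$. This is exactly the motivation for the recursive choice $h_\phi = 3h_{\phi+1} + \SW + 1$ in Definition~\ref{def:h}: the factor of $3$ accommodates the blow-up of hop-distance when passing from $\tilde{\tilde{H}}_p^\mu[S_\phi]$ to the sparser $\tilde{\tilde{H}}_p^\mu[S_{\phi+1}]$, and the additive $\SW$ absorbs the reach of the simulated MIS algorithm. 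Once this geometric nesting is in place, the inductive step goes through cleanly, and the union bound above delivers the lemma.
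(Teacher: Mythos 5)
Your overall route---induction over the $\Phi$ phases, using Lemma~\ref{lem:0.1} for the graph computations and Lemma~\ref{lem:0.2} for the set computations, verifying that the localization nests across phases via the recursion $h_\phi = 3h_{\phi+1}+\SW+1$, and finishing with a union bound---is exactly the paper's argument. There is, however, one concrete discrepancy in the bookkeeping. The paper proves this lemma \emph{under the standing assumption of Remark~\ref{rem:unique}} that the temporary labels are locally unique; consequently Lemma~\ref{lem:0.2} makes the set computations succeed with probability $1$, the entire failure budget is spent on the $\Phi$ graph computations (at most $\epsapprog/(6\Phi)$ each, hence at most $\epsapprog/6\le\epsapprog/3$ in total), and the label-collision probability of Lemma~\ref{lem:locMIS} is charged separately, as a third factor $(1-\epsapprog/3)$, only in the final proof of Theorem~\ref{thm:approg}. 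You instead fold the label-collision cost into the present lemma, and your own accounting then does not close: the MIS-maximality failures already contribute $\Phi\cdot\epsapprog/(3\Phi)=\epsapprog/3$, and the graph-approximation failures contribute a further strictly positive $\Phi\cdot(\epsapprog/f(h_1))^{\Theta(1)}$, so the total failure probability you obtain is $\epsapprog/3$ \emph{plus} something, which overshoots the claimed bound of $\epsapprog/3$. This is fixable by rebalancing (e.g.\ allotting $\epsapprog/(6\Phi)$ per phase to each of the two failure types), but as written the union bound does not deliver the stated constant; moreover, if the label cost is absorbed here, the downstream proof of Theorem~\ref{thm:approg} must not charge for it a second time, so the budgets there would need to be adjusted consistently.
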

\begin{proof}
Due to Lemma~\ref{lem:0.1} only nodes in $S_{\phi,i}$ are involved in computing $\tilde{\tilde{H}}_p^\mu[S_\phi]|_{\phi,i}$ and Lemma~\ref{lem:0.2} states that only this part of the graph is involved in computing $S_{\phi+1,i}$. By induction we only need to bound the probability that all graphs $\tilde{\tilde{H}}_p^\mu[S_1],\dots,\tilde{\tilde{H}}_p^\mu[S_{\Phi-1}]$ and all sets $S_2,\dots,S_{\Phi}$ are computed successfully at point $i$ to prove the statement. 

Due to Lemma~\ref{lem:0.1} the probability that any of the graphs $\tilde{\tilde{H}}_p^\mu[S_1],\dots,\tilde{\tilde{H}}_p^\mu[S_{\Phi-1}]$ is computed successfully at point $i$ is at least $1-\left(\frac{\epsapprog}{f(h_1)}\right)^{\Theta(1)}$. The probability that all of the sets $S_2,\dots,S_{\Phi}$ are computed successfully at point $i$ is 
$1$ due to Lemma~\ref{lem:0.2}. Notice that $1-\left(\frac{\epsapprog}{f(h_1)}\right)^{\Theta(1)}
$ can be lower bounded by $(1-\frac{\epsapprog}{3^{\Phi-1}})^{\Theta(1)}$, as $h_1\geq 3^{\Phi-1}$  (Lemma~\ref{lem:h-bound}) and $f$ is a monotonic increasing polynomial. While we obtain this generous bound as a side effect of other parts of the analysis, it is sufficient for our purposes to use $1-\frac{\epsapprog}{6\Phi}$ as a lower bound for this probability. Here, we assume $\Phi=\Theta(\log\Ratio)\geq 4$ for simplicity of the presentation.
As there are $\Phi$ phases, in total
$\Phi$ graphs need to be computed. Thus the probability that all these computations are successful at point $i$ is at least $(1-\frac{\epsapprog}{6\Phi})^{
\Phi}\geq 1-\epsapprog/3$.
\end{proof}

\subsection{Probability of Approximate Progress Conditioned on Satisfaction of Property 3 of a Local Successful Epoch}
After proving initial lemmas in the previous subsections, we first give an outline how these connect to the remaining parts of the proof of Theorem~\ref{thm:approg} via this section. In Lemma~\ref{lem:2} we argued that with probability at least $1-\epsapprog/3$ we can assume that Properties 1 and 2 of Definition~\ref{def:suc} of a successful epoch at point $i$ are satisfied. Therefore we assume in Lemma~\ref{lem:3} that Properties 1 and 2 of Definition~\ref{def:suc} of a successful epoch at point $i$ are satisfied, and show that in this case there is a phase $\phi'\in\{1,\dots,\Phi\}$ such that $i$ could be able to receive a bcast-message $m$ from a $\Ge$-neighbor. Node $i$ will receive such a message if Property 3 of Definition~\ref{def:suc} of a successful epoch at point $i$ is satisfied. As we cannot yet analyze the probability of satisfaction of Property 3, we condition our probabilities in this section on satisfaction of Property 3. To analyze probability of satisfaction of Property 3, we first need to bound the runtime of an epoch, which is done in Lemma~\ref{lem:4}. In Section~\ref{sec:kuhnproof} we analyze the probability that $i$ indeed receives $m$ in phase $\phi'$ from a $\Ge$-neighbor by combining results from this and previous sections with a bound on the probability for satisfaction of Property 3 of Definition~\ref{def:suc} of a successful epoch at point $i$. Section~\ref{sec:kuhnproof} also concludes the proof of the bound on $\fapprog$ stated in Theorem~\ref{thm:approg}. 

The main Lemma that we prove in this Section is
\fi
\ifshort
The full paper~\cite{halldorsson2015local-arxiv} introduces the notion of a successful epoch at a given point $i$ in a formal way. In summary, given point $i$, an epoch is successful at point $i$, if three properties are satisfied: 1) all computations of graphs $\tilde{\tilde{H}}_p^\mu[S_1],\dots,\tilde{\tilde{H}}_p^\mu[S_{\Phi-1}]$, and 2) the corresponding sets $S_\phi$ are correct within a certain area around $i$ and 3) some message was transmitted to $i$ successfully at some point. We show that
\fi
\begin{lemma}\label{lem:3}
Given set $S_1$ and a node $i$ and let there be a $\Ga$-neighbor of $i$ with an ongoing broadcast event $bcast(m)_j$. Assume Properties 1 and 2 of 
\ifshort
the definition~\cite{halldorsson2015local-arxiv} 
\fi
\iffull
Definition~\ref{def:suc} 
\fi
of a successful epoch at point $i$ are satisfied. Then there is a phase $\phi'\in\{1,\dots,\Phi\}$ such that in phase $\phi$ node $i$ receives a bcast-message from a $\Ge$-neighbor of $i$ with probability $1-\epsapprog/3$.
\end{lemma}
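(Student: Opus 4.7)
The plan is to adapt the geometric sparsification argument underpinning~\cite{daum2013broadcast} to the localized setting that Properties~1 and~2 of Definition~\ref{def:suc} have bought us. Since the hypothesis gives $j \in N_{\Ga}(i) \cap S_1$, the set $U_{1,i} = N_\Ge(i) \cap S_1$ is nonempty. I would first show by induction on $\phi$ that $U_{\phi,i}$ remains nonempty for every $\phi \in \{1,\dots,\Phi\}$: whenever a node of $U_{\phi,i}$ fails to join $S_{\phi+1}$, the $(\phi,i)$-local maximality of $S_{\phi+1}$ in $\tilde{\tilde{H}}_p^\mu[S_\phi]|_{S_{\phi,i}}$ (Definition~\ref{def:sucphase}, guaranteed by Property~2) forces a dominator which, by the correspondence between a single $\tilde{\tilde{H}}$-hop and strong-link Euclidean distance, still lies within $\Ree$ of $i$ and hence belongs to $U_{\phi+1,i}$.

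Next I would identify a good phase $\phi'$ by a pigeonhole argument on local density. Because each $S_{\phi+1}$ is independent in a $(1-\gamma)$-approximation of $H_p^\mu[S_\phi]$ inside $S_{\phi,i}$ (Property~1), the minimum Euclidean separation between nodes of $S_{\phi+1}$ grows geometrically in $\phi$; after $\Phi = \Theta(\log\Ratio)$ phases the separation has been inflated by a factor $\Theta(\Ratio)$. A pigeonhole step, mirroring the argument of~\cite{daumfull}, then yields some $\phi'$ at which (a) there exists $u_{\phi'} \in U_{\phi',i}$ with $d(u_{\phi'},i) \leq \Ree$, and (b) the number of $S_{\phi'}$-nodes inside each concentric $\Ree$-wide annulus around $i$ scales as $\BO(\log^\alpha \Ratio) = \BO(Q)$ -- exactly the local density the transmission probability $p/Q$ of Line~\ref{line:1} is calibrated to.

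The third step is a single-round success calculation for phase $\phi'$ via a standard SINR ring-decomposition: the event that $u_{\phi'}$ transmits while no other $S_{\phi'}$-node in the same $\Ree$-disk does has probability $\Omega(1/Q)$; conditioning on it, the expected total interference at $i$ from the remaining transmitting $S_{\phi'}$-nodes is, by the $\BO(Q d)$ count in the $d$-th annulus, bounded by $\BO(pP\Ree^{-\alpha}) \cdot \sum_{d\geq 1} d^{1-\alpha}$, which converges since $\alpha>2$ and can be forced below the strong-link SINR slack by choice of the constant hidden in $Q$. The additional expected interference from $W$ contributes only $(\epsapprog/\Ratio)^{\Theta(1)}$ by Lemma~\ref{lem:wronginterference} and is negligible. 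A Markov argument converts these expectations into a constant per-round success probability of the form $c'/Q$, and repeating the loop $\Theta(Q\log(1/\epsapprog))$ times yields failure probability at most $(1-c'/Q)^{\Theta(Q\log(1/\epsapprog))} \leq \epsapprog/3$.

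The main obstacle is that ring decomposition: Properties~1 and~2 only give correctness of $\tilde{\tilde{H}}_p^\mu[S_{\phi'}]$ and $S_{\phi'+1}$ inside $S_{\phi',i}$, so we have no a priori control over distant transmitters. I would address this by splitting the interference sum into (a) transmitters inside $S'_{\phi',i}$, where local $(\phi',i)$-maximality delivers the geometric density control the ring argument needs, and (b) transmitters outside $S'_{\phi',i}$, which the choice $h'_{\phi'} = 3 h_{\phi'+1}$ places far enough from $u_{\phi'}$ that their expected contribution can be bounded by a growth-bounded annulus argument of the same flavor as Lemma~\ref{lem:wronginterference}, independent of any correctness claim on faraway parts of $\tilde{\tilde{H}}_p^\mu[S_{\phi'}]$.
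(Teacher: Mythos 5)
Your overall strategy (geometric sparsification of the sets $S_\phi$, a per-round reception probability of $\Theta(1/Q)$, and amplification over the $\BO(Q\log(1/\epsapprog))$ repetitions of Line~\ref{line:1}) matches the paper's, which adapts Lemmas~4.4--4.6 of~\cite{daumfull}. The gap is in your first step. You claim unconditionally that $U_{\phi,i}=N_\Ge(i)\cap S_\phi$ stays nonempty because the dominator covering a node $u\in U_{\phi,i}$ that drops out of $S_{\phi+1}$ is one $\tilde{\tilde{H}}_p^\mu[S_\phi]$-hop away and hence ``still within $\Ree$ of $i$''. That correspondence does not hold: an edge of $H_p^\mu[S_\phi]$ (hence of its local $(1-\gamma)$-approximation) can join nodes at Euclidean distance up to nearly $\R_1>\Ree$, so the covering dominator can lie at distance up to roughly $\Ree+\R_1$ from $i$ and need not belong to $U_{\phi+1,i}$ at all. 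Since the lemma specifically requires reception from a $\Ge$-neighbor, i.e.~a sender within $\Ree$ of $i$, controlling this drift is the crux, and your step~2 inherits the problem: there is a real tension between keeping an $S_{\phi'}$-node within $\Ree$ of $i$ and letting $S_{\phi'}$ become sparse enough for the SINR calculation, and a pigeonhole on density alone does not resolve it.

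The paper resolves it with a dichotomy (its version of Lemma~4.6 of~\cite{daumfull}): in each phase, either the farthest-neighbor distance $d_{u_\phi}$ of the closest node $u_\phi\in S_\phi$ to $i$ satisfies $d_{u_\phi}\geq\gamma Q^{-1/\alpha}d(u_\phi,i)$, in which case the Lemma~4.5-type argument gives per-round success $\Theta(1/Q)$ and the phase succeeds with probability $1-\epsapprog/3$; or $d_{u_\phi}<\gamma Q^{-1/\alpha}d(u_\phi,i)$, in which case the covering dominator satisfies $d(u_{\phi+1},i)\leq(1+\gamma Q^{-1/\alpha})d(u_\phi,i)$, a per-phase drift factor of only $(1+\eps/\log\Ratio)$ that telescopes over $\Phi=\Theta(\log\Ratio)$ phases and keeps the eventual sender within $\Ree$, hence a $\Ge$-neighbor; combined with the exponential growth of the minimum separation in $S_{\phi,i}$ (the paper's version of Lemma~4.4 of~\cite{daumfull}), some phase must fall into the first branch. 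Your interference split into $S'_{\phi',i}$ versus the rest and your annulus computation are broadly consistent with the paper's treatment (which likewise charges the extra interference from $W$ via Lemma~\ref{lem:wronginterference}), but without the either/or above, the existence of a good phase $\phi'$ in which a $\Ge$-neighbor of $i$ is still transmitting is not established.
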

\ifshort
These are the key lemmas that are used together with a bound on the runtime of an epoch to derive Theorem~\ref{thm:approg}.
\fi
\iffull
However, before we can proof Lemma~\ref{lem:3}, we need to derive a few more lemmas which are extended from~\cite{daumfull} to our localized setting.

\begin{lemma}[Extended version of Lemma~4.3 of~\cite{daumfull}]\label{lem:daum4.3}
For all $p\in(0,1/2]$, there is a $\mu\in (0,p)$ such that: Let $d_{\min}\leq \Ra$ be the shortest distance between two nodes in a set $S\subseteq S_1$. Then the graph $H_p^\mu[S]$ contains all edges between pairs $u,v\in S$ for which $d(u,v)\leq \min\{2d_{\min},\Ra\}$.
\end{lemma}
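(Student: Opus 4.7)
Fix $p\in(0,1/2]$ and any $u,v\in S$ with $d(u,v)\leq \min\{2 d_{\min},\Ra\}$. To show $(u,v)\in E_p^\mu[S]$, it suffices to lower-bound the probability of a successful transmission from $v$ to $u$ by some positive quantity depending only on $p$ and the model constants, and then take the smaller of that quantity and the corresponding one for the symmetric reverse direction. I would decompose
\[
\Pr[v\to u \text{ succeeds}] \;\geq\; \Pr[v\text{ transmits, }u\text{ silent}] \cdot \Pr\!\left[I\leq \tfrac{s}{\beta}-N \,\Big|\, v\text{ transmits, }u\text{ silent}\right],
\]
where $s=P/d(u,v)^\alpha$ is the received signal from $v$ and $I$ is the interference at $u$ from the random set of other transmitters in $S\setminus\{u,v\}$. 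Since the nodes of $S$ transmit independently with probability $p$, the first factor is $p(1-p)\geq p/2$. The two distance hypotheses yield two complementary lower bounds on the budget $s/\beta - N$: from $d(u,v)\leq\Ra$ one gets $s/\beta-N \geq N\bigl((1-2\eps)^{-\alpha}-1\bigr)>0$, and from $d(u,v)\leq 2 d_{\min}$ one also gets $s/\beta-N \geq P/(2^\alpha\beta d_{\min}^\alpha)-N$ whenever positive.

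To control $I$, I would invoke the standard packing argument based on the fact that the nodes of $S$ are pairwise at distance at least $d_{\min}$: at most $O(k)$ of them lie in the annulus of inner radius $k\, d_{\min}$ and outer radius $(k+1)\, d_{\min}$ around $u$, so
\[
\mathbb{E}[I] \;\leq\; p\,P\!\sum_{w\in S\setminus\{u,v\}} \frac{1}{d(w,u)^\alpha}\;\leq\; \frac{C\,p\,P}{d_{\min}^\alpha}
\]
for a constant $C$ depending only on $\alpha$ (finite thanks to $\alpha>2$). A direct Markov bound on all of $I$ only works when $p$ is sufficiently small; to handle the full range $p\in(0,1/2]$, I would split $I = I_{\mathrm{near}} + I_{\mathrm{far}}$ at a threshold $r_0=c\, d_{\min}$ for a large constant $c$ to be fixed. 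Packing again shows that the near ball around $u$ contains only $O(c^2)$ points of $S$, so the event ``no near node transmits'' occurs with probability at least $(1-p)^{O(c^2)}\geq 2^{-O(c^2)}$, a positive constant independent of $p$. Restarting the packing sum from radius $r_0$ gives $\mathbb{E}[I_{\mathrm{far}}] \leq p\,P\cdot O\!\left(c^{\,2-\alpha}/d_{\min}^\alpha\right)$, and for $c$ a large enough constant (here $\alpha>2$ is essential) this is at most half the budget $s/\beta-N$; Markov on $I_{\mathrm{far}}$ then yields $\Pr[I_{\mathrm{far}}\leq s/\beta-N]\geq 1/2$.

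Combining the two events, and using that conditioning on the near nodes being silent does not alter the independent transmission decisions of the far nodes, the SINR at $u$ exceeds $\beta$ with probability at least a positive constant independent of $p$. Multiplying by the factor $p(1-p)\geq p/2$ gives a success probability of the form $\mu=\Omega(p)$, which can be chosen in $(0,p)$ as required; the reverse direction $u\to v$ is handled identically by swapping roles. The main obstacle is precisely the large-$p$ regime, where $\mathbb{E}[I]$ is only a constant multiple of $s/\beta-N$ so that global Markov fails; the near/far split, combined with $\alpha>2$, circumvents this by isolating the $O(1)$ ``dangerous'' transmitters into a constant-probability silent event and making the remaining expected interference a vanishing fraction of the budget.
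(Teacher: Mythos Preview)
Your argument is correct and follows essentially the same near/far decomposition as the paper: isolate the $O(c^2)$ nodes within radius $c\,d_{\min}$ of $u$, pay a constant probability $(1-p)^{O(c^2)}$ for all of them to be silent while $v$ transmits, and use $\alpha>2$ to make the tail sum over annuli beyond $c\,d_{\min}$ a vanishing fraction of the SINR budget.

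Two minor differences are worth noting. First, the paper bounds the far-zone interference \emph{deterministically}: it shows that even if every node beyond $k_0 d_{\min}$ transmits, the resulting interference is below budget once $k_0$ is large enough, so no Markov step is needed; your expectation-plus-Markov route is a small detour but perfectly valid. Second, and this is the point of the ``extended'' in the lemma's name, the paper's proof additionally absorbs the interference $I_W(u)$ coming from the set $W$ of nodes that (due to unsuccessful transmissions in earlier phases) may erroneously participate outside of $S$; it invokes Lemma~\ref{lem:wronginterference} to bound $\mathbb{E}[I_W(u)]\le(\epsapprog/\Ratio)^{\Theta(1)}$ and folds this into the far-zone constant $\kappa(k_0)$. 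Your write-up proves the literal statement about $H_p^\mu[S]$ (where by definition only $S$ transmits), but does not contain this extra ingredient, which is precisely what the paper needs later when applying the lemma in the presence of $W$.
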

The part of the proof of Lemma 4.3 of~\cite{daumfull} that changes relies on bounding the interference $I_S(u)$ that $u$ receives from a set $S$. Compared to~\cite{daumfull} we not only need to bound interference from a set $S$, but from $S\cup W$, as nodes in $W$ might still participate in the computation and send messages due to unsuccessful transmissions in a previous phase that made them compute wrong neighborhoods. We show that one can choose slightly modified parameters in the algorithm/analysis such that the interference $I_{S\cup W}(u)$ is as small as in the original proof by~\cite{daumfull}. Therefore other parts of their proof are not affected and can be immediately transferred. 

For completeness we restate the full proof of~\cite{daumfull} adapted to our modifications and extentions. 

\begin{proof}
We restrict our attention to the case $d_{\min} < r_s/2$. If the minimum distance is between $r_s/2$ and $r_s$, the claim can be shown by a similar, simpler argument.

Consider some node $u \in S$. Due to the underlying metric space in our model, there are at most $\BO(k^\delta)$ nodes in $S$ within distance $kd_{\min}$ of node $u$. Let $v$ be a node at distance at most $2d_{\min}$ from $u$. For any constant $k_0$, with probability $p(1-p)^{\BO(k_0^\delta)} = \Omega (p)$, node $v$ is the only node transmitting among all the nodes within distance $k_0d_{\min}$ from node $u$. Further, assuming that all nodes at distance greater than $k_0d_{\min}$ transmit, the interference $I_{S\cup W}(u)$ at $u$ can be bounded from above by

\begin{align*}
I_{S\cup W}(u) 
&\leq I_{W}(u) + \sum\limits_{w\in S \ s.t.\ d(u,w)\geq k_0d_{\min}} \frac{P}{d(u,w)^\alpha} 
\leq I_{W}(u) + \sum\limits_{k=k_0}^\infty \sum\limits_{w\in S \ s.t. \ 1\leq\frac{d(u,w)}{kd_{\min}}<1+\frac{1}{k}} \frac{P}{d(u,w)^\alpha} 
\\
&\overset{(1)}{=} I_{W}(u) + \sum\limits_{k=k_0}^\infty \frac{P}{k^\alpha d_{\min}^\alpha} O\left(\delta k^{\delta-1}\right)
= I_{W}(u) + \frac{P}{d_{\min}^\alpha}O \left(\delta\int\limits_{k_0}^\infty k^{-(1+\alpha_{\min}-\delta)}dk\right) 
\\
&= I_{W}(u) + \frac{P}{d_{\min}^\alpha}\BO \left(\delta\frac{k_0^{\delta-\alpha_{\min}}}{\alpha_{\min}-\delta}\right)
\leq \left(\frac{\epsapprog}{\Ratio}\right)^{\Theta(1)} +  \kappa(k_0)\frac{P}{d_{\min}^\alpha}
\\
&\overset{(2)}{\leq}  \left(\frac{\epsapprog}{\Ratio}\right)^{\Theta(1)}
+ \kappa(k_0)\frac{P\cdot\Ratio}{\Ra}\\
&\overset{(3)}{\leq} \kappa(k_0/2)\frac{P\cdot\Ratio}{\Ra}.
\end{align*}

Step $(1)$ stems from bounding $|\{w\in S : kd_{\min} \leq d(u,w) < (k+1)d_{\min}\}|$, the maximum number of nodes within a ring of diameter $d_{\min}$ at distance $kd_{\min}$. If we define the function $\kappa$ so as to replace the $\BO$-term with $\kappa(k_0) = \kappa(k_0, \alpha_{\min}, \delta)>0$, which decreases polynomially in $k_0$. 
Step $(2)$ stems from bounding $I_W(u)$ using Lemma~\ref{lem:wronginterference} and restating $\kappa(k_0)\frac{P}{d_{\min}}$ by $\kappa(k_0)\frac{P\cdot\Ratio}{\Ra}$.
Step $(3)$ is true, as $\Ratio\geq 1$, $\Ra$ is constant, and the exponent hidden in the $\Theta$-notation can be chosen arbitrary large in order to match the function $\kappa$.

Due to the choice of $k_0$ and $\kappa$, we get for $SINR(u, v, I)$, where I is the set of all nodes with distance greater than $k_0d_{\min}$:
\begin{equation*}
\frac{\frac{P}{d(u,v)^\alpha}}{N + \kappa (k_0) \frac{P}{d_{\min}^\alpha}} \geq \frac{\frac{P}{(2d_{\min})^\alpha}}{N + \kappa (k_0) \frac{P}{d_{\min}^\alpha}} \geq \frac{\frac{P}{r_s^\alpha}}{\frac{P}{\beta r_w^\alpha} + \kappa(k_0) \frac{2^\alpha P}{r_s^\alpha}} = \frac{\beta}{\frac{1}{(1+\rho)^\alpha} + \kappa(k_0)\beta2^\alpha} \geq \beta
\end{equation*}
The second inequality follows from $N = \frac{p}{\beta r_w^\alpha}$ and from $d_{\min} \leq r_s/2$. The last inequality holds for sufficiently large $k_0$. If we choose $\mu$ to be the probability that no more than one node in a ball of radius $k_0d_{\min}$ transmits, then node $v$ can transmit to $u$ with probability $\mu$.

In the above proof, $\mu$ depends on the unknown parameter $\beta$, so we use $\beta_{max}$ as the base for
computing $\mu$. Note also that since $H_p^\mu[S] \subseteq \tilde{H}_p^\mu[S]$, the lemma induces the same properties on $\tilde{H}_p^\mu[S]$ with high probability.
\end{proof}

\begin{lemma}[Version of Lemma 4.4 of~\cite{daumfull}]\label{lem:daum5}
Given node $i\in N_\Ga(S_1)$ and assume Properties 1 and 2 of Definition~\ref{def:suc} of a successful epoch at point $i$ are satisfied. Then for any $\phi\in\{1,\dots,\Phi\}$, the minimum distance between any two nodes in $S_{\phi,i}$ is at least $d_\phi\geq 2^{\phi-1}\cdot d_{\min}$.
\end{lemma}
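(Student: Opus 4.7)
The plan is to prove the claim by induction on $\phi$, leveraging Lemma~\ref{lem:daum4.3} together with the local-correctness guarantees from Properties~1 and 2 of a successful epoch at $i$. The base case $\phi=1$ is immediate: $S_{1,i}\subseteq S_1\subseteq V$, and by assumption on the network any two distinct nodes are at Euclidean distance at least $d_{\min}=2^{0}\cdot d_{\min}$, so $d_1\geq d_{\min}$ as required.

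For the inductive step I would assume that every pair of distinct nodes in $S_{\phi,i}$ has Euclidean distance at least $d_\phi:=2^{\phi-1}d_{\min}$ and then derive the corresponding statement for $S_{\phi+1,i}$. Fix any pair $u,v\in S_{\phi+1,i}$. By Definition~\ref{def:6.5}, the nesting $S_{\phi+1,i}\subseteq S'_{\phi,i}\subseteq S_{\phi,i}\subseteq S_\phi$ places both $u$ and $v$ inside the local window on which the inductive hypothesis and Property~1 both apply. Arguing by contradiction, I would suppose $d(u,v)\leq 2d_\phi$. Because $\Phi=\Theta(\log\Ratio)$ is chosen so that $2^{\Phi-1}d_{\min}$ stays bounded by $\Ra$, the inequality $d(u,v)\leq\min\{2d_\phi,\Ra\}$ holds for every $\phi\leq\Phi$, and Lemma~\ref{lem:daum4.3} applied with $S:=S_\phi$ and minimum-distance parameter $d_\phi$ then forces $(u,v)$ into $H_p^\mu[S_\phi]$.

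Next I invoke Property~1 of a successful epoch at $i$, which certifies that $\tilde{\tilde{H}}_p^\mu[S_\phi]|_{S_{\phi,i}}$ is a $(1-\gamma)$-approximation of $H_p^\mu[S_\phi]|_{S_{\phi,i}}$ and in particular contains every one of its edges; so $(u,v)\in\tilde{\tilde{H}}_p^\mu[S_\phi]$. But Property~2 guarantees that $S_{\phi+1}$ is an independent set in $\tilde{\tilde{H}}_p^\mu[S_\phi]$, contradicting $u,v\in S_{\phi+1}$. Hence $d(u,v)>2d_\phi=2^{\phi}d_{\min}$, which closes the induction and yields $d_{\phi+1}\geq 2^{\phi}d_{\min}$.

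The main obstacle I anticipate is the book-keeping across the nested local windows: I must verify that, for any pair $u,v\in S_{\phi+1,i}$, the putative edge $(u,v)$ indeed lies in the induced subgraph on which Property~1 certifies agreement between $\tilde{\tilde{H}}_p^\mu[S_\phi]$ and $H_p^\mu[S_\phi]$. This is exactly what the choice of hop radii $h'_\phi=3h_{\phi+1}$ and $h_\phi=h'_\phi+\SW+1$ in Definition~\ref{def:h} is designed to buy: the inclusion $S_{\phi+1,i}\subseteq S'_{\phi,i}\subseteq S_{\phi,i}$ leaves enough slack for the local approximation and the $(\phi,i)$-local maximality argument to carry from one phase to the next without appealing to any global event. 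Once this nesting is unwound, the remaining content is purely geometric and is already supplied by Lemma~\ref{lem:daum4.3}, so I do not expect further technical difficulty.
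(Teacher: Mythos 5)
Your proof is correct and takes essentially the same route as the paper's: induction on $\phi$ with the base case from the definition of $d_{\min}$, and an inductive step that combines Lemma~\ref{lem:daum4.3} with the local approximation guarantee of Property~1 and the independence of $S_{\phi+1}$ in $\tilde{\tilde{H}}_p^\mu[S_\phi]$ from Property~2 to conclude $d_{\phi+1}>2d_\phi\geq 2^{\phi}d_{\min}$. Your contradiction framing and the explicit attention to the nested local windows and to the $\min\{2d_\phi,\Ra\}$ truncation in Lemma~\ref{lem:daum4.3} are only presentational refinements of the argument the paper gives.
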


\begin{proof}
The proof appears in 
the full version of this paper~\cite{halldorsson2015local-arxiv}, as it requires only a minimal modification of a proof provided in~\cite{daumfull}.  
\end{proof}

\begin{lemma}[Extended version of Lemma 4.5,~\cite{daumfull}]\label{lem:daum4.5}
For all $p\in (0,1/2]$, there is a $\hat{Q}, \gamma=\Theta(1)$, such that for all $Q\geq \hat{Q}$ the following holds. Consider a round $r$ in phase $\phi$ where each node in $S_\phi$ transmits a bcast-message with probability $p/Q$ (Line~\ref{line:1}). Let $i\in N_{\Ga}(S_1)$ and let $u_\phi\in S_\phi\setminus \{v\}$ be the closest node to $v$ in $S_\phi$. Assume Property 1 of Definition~\ref{def:suc} of a successful epoch at point $i$ are satisfied. Let $d_{u_\phi}$ be the distance between $u_\phi$ and its farthest neighbor in $\tilde{\tilde{H}}_p^\mu[S_\phi]$. If $d(u_\phi, v)\leq (1+\eps)\Ra$ and $d_{u_\phi}\geq \gamma Q^{-1/\alpha}\cdot d(u_\phi,v)$, node $v$ receives a bcast-message from $u_\phi$ in round $r$ with probability $\Theta(1/Q)$.
\end{lemma}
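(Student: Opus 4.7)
The event ``$v$ receives $u_\phi$'s bcast-message in round $r$'' is the conjunction of (i) $u_\phi$ transmits in round $r$, and (ii) the resulting SINR at $v$ is at least $\beta$. Event (i) has probability exactly $p/Q$, yielding the matching $O(1/Q)$ upper bound, so the task reduces to lower bounding the conditional probability of (ii) by a positive constant independent of $Q$. Throughout I would closely follow the strategy of Lemma~4.5 in~\cite{daumfull}, adapting it to our modified parameters and the presence of the ``wrong'' set $W$.

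Condition on $u_\phi$ transmitting. The useful signal at $v$ is $P/d(u_\phi,v)^\alpha$. Because $d(u_\phi,v)\le (1+\eps)\Ra=(1-\eps-2\eps^2)\R<\R$, this signal exceeds the noise threshold $\beta N$ by a constant factor depending only on $\eps$ and $\alpha$, leaving a constant $c_0=c_0(\eps,\alpha)>0$ of tolerable interference: reception succeeds whenever the total interference $I$ at $v$ satisfies $I\le c_0\cdot P/d(u_\phi,v)^\alpha$. Split $I = I_{S_\phi\setminus\{u_\phi\}}(v) + I_W(v)$. Lemma~\ref{lem:wronginterference} already controls $\mathbb{E}[I_W(v)]\le(\epsapprog/\Ratio)^{\Theta(1)}$, which is negligible compared to the interference contributed by $S_\phi$.

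The heart of the proof is the bound on $\mathbb{E}[I_{S_\phi\setminus\{u_\phi\}}(v)]$. Because $u_\phi$ has a $\tilde{\tilde{H}}_p^\mu[S_\phi]$-neighbor at distance $d_{u_\phi}$, and Property~1 of Definition~\ref{def:suc} guarantees that this graph $(1-\gamma)$-approximates $H_p^\mu[S_\phi]$ in the relevant neighborhood of $u_\phi$, the constant-degree property of $H_p^\mu$ combined with Lemma~\ref{lem:daum4.3} forces a minimum pairwise separation of order $d_{u_\phi}$ among the nodes of $S_\phi$ near $u_\phi$; a standard planar-packing sweep then yields $O(k)$ such nodes in the annulus at distances $[k\,d_{u_\phi},(k{+}1)d_{u_\phi})$ from $v$. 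Since each transmits independently with probability $p/Q$ and contributes at most $P/(k\,d_{u_\phi})^\alpha$ to $I_{S_\phi\setminus\{u_\phi\}}(v)$,
\[
\mathbb{E}\bigl[I_{S_\phi\setminus\{u_\phi\}}(v)\bigr] \;\le\; \frac{p}{Q}\cdot\frac{P}{d_{u_\phi}^\alpha}\sum_{k\ge 1} O\!\left(k^{1-\alpha}\right) \;=\; O\!\left(\frac{p}{Q}\cdot\frac{P}{d_{u_\phi}^\alpha}\right),
\]
the series converging because $\alpha>2$.

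Finally, invoking the hypothesis $d_{u_\phi}\ge \gamma Q^{-1/\alpha}\,d(u_\phi,v)$ gives $P/d_{u_\phi}^\alpha \le \gamma^{-\alpha}Q\cdot P/d(u_\phi,v)^\alpha$, whence $\mathbb{E}[I] \le c\,p\,\gamma^{-\alpha}\cdot P/d(u_\phi,v)^\alpha$ up to the negligible $W$-term. Choosing $\hat Q$ (equivalently, $\gamma$) sufficiently large makes this at most $(c_0/2)\cdot P/d(u_\phi,v)^\alpha$; Markov's inequality then forces $I\le c_0\cdot P/d(u_\phi,v)^\alpha$ with probability at least $1/2$, so conditional on (i), reception succeeds with constant probability. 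Multiplying by $p/Q$ yields the claimed $\Theta(1/Q)$ bound. The main obstacle is the density step: extracting a clean $\Omega(d_{u_\phi})$-separation bound from Lemma~\ref{lem:daum4.3} and Property~1 demands care, because the original~\cite{daumfull} argument relies on global w.h.p. correctness of the approximation whereas here only the \emph{local} guarantee of Property~1 is available, and one must additionally verify that $W$-nodes cannot sabotage the packing by adding uncontrolled transmitters in the relevant annuli.
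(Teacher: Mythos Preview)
Your high-level decomposition into ``$u_\phi$ transmits'' and ``SINR at $v$ is large enough'' is fine, and the treatment of $I_W$ via Lemma~\ref{lem:wronginterference} matches the paper. The genuine gap is the density step. You claim that Lemma~\ref{lem:daum4.3} together with the constant-degree property of $H_p^\mu[S_\phi]$ yields a minimum pairwise separation of order $d_{u_\phi}$ in $S_\phi$, and hence $O(k)$ nodes per annulus at radius $k\,d_{u_\phi}$. This is not what those facts give you. Lemma~\ref{lem:daum4.3} only says that pairs at distance at most $2d_{\min}(S_\phi)$ are edges, and the constant-degree bound then limits the number of $S_\phi$-nodes in a ball of radius $d_{\min}(S_\phi)$, not $d_{u_\phi}$. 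The quantity $d_{u_\phi}$ is the length of one specific reliable link incident to $u_\phi$; it carries no information about the spacing of $S_\phi$ far from $u_\phi$. With the actual separation bound (namely $d_\phi$ from Lemma~\ref{lem:daum5}), your annulus sum becomes $(p/Q)\cdot P\cdot d_{u_\phi}^{\,2-\alpha}/d_\phi^{\,2}\cdot O(1)$, and the ratio $d(u_\phi,v)/d_\phi$ can be as large as $\Ratio$, which would force $\hat Q$ to grow polynomially in $\Ratio$ rather than being $\Theta(1)$.

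The paper avoids geometric packing altogether. It uses the $(1-\gamma)\mu$-reliability of the edge $(u_\phi,u')$ under transmission probability $p$ as an \emph{interference witness}: since $u_\phi$ successfully hears $u'$ at distance $d_{u_\phi}$ with constant probability when all of $S_\phi$ transmits with probability $p$, the interference $X^p(u_\phi)$ must be at most $P/(\beta d_{u_\phi}^\alpha)$ with constant probability. Splitting $S_\phi$ into $A$ (nodes within $2d_{u_\phi}$ of $u_\phi$) and $\bar A$, one gets that at most $2^\alpha/\beta=O(1)$ nodes of $A$ transmit, and a concentration argument forces $I_{\bar A}(u_\phi)\le c\cdot P/(p\beta d_{u_\phi}^\alpha)$. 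Because $u_\phi$ is the closest $S_\phi$-node to $v$, the triangle inequality gives $I_{S'}(u_\phi)\ge 2^{-\alpha}I_{S'}(v)$ for any $S'\subseteq S_\phi$, transferring these bounds to $v$. Only then does the hypothesis $d_{u_\phi}\ge\gamma Q^{-1/\alpha}d(u_\phi,v)$ enter, to compare $P/d_{u_\phi}^\alpha$ with $P/d(u_\phi,v)^\alpha$. This route never asserts any global or local separation bound of order $d_{u_\phi}$; it extracts the needed interference control directly from the existence of the reliable edge. You correctly flagged the density step as the main obstacle, but the resolution you sketch does not go through; the fix is to replace the packing argument by the reliability-as-witness argument.
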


Note that the proof presented in~\cite{daumfull} reveals that the bcast-message is actually received from node $u_\phi$ such that we adapted the statement to this fact (instead of stating that $v$ receives a message from some node). We restate the full proof of~\cite{daumfull} with our extensions that yield Lemma~\ref{lem:daum4.5}.
There are two main issues we need to take care of: 
\begin{enumerate}
\item The proof presented in~\cite{daumfull} relies on $\tilde{H}_p^\mu[S_\phi]$ being a $\gamma$-close approximation of $H_p^\mu[S_\phi]$. When looking at this proof in more detail, it turns out that this approximation is only required for all nodes located at distance at most $2d_{u_\phi}$ around $u_\phi$. We show that this area is covered by a $\BO(1)$-neighborhood of $u_\phi$ in $\tilde{\tilde{H}}_p^\mu[S_\phi]$ such that the statement of~\cite{daumfull} on $\tilde{H}_p^\mu[S_\phi]$ can be transferred to our graph $\tilde{\tilde{H}}_p^\mu[S_\phi]$.

\item The proof given in~\cite{daumfull} deals with interference from nodes in $S_\phi$ at distance further than $2d_{\phi}$ from $u_\phi$. We show that additional interference from nodes $W$ that arises due to our modification of their algorithm is negligible compared to interference from nodes in $S_\phi$. We conclude that nodes in $W$ do not affect the remainder of the proof of~\cite{daumfull}. 
\end{enumerate}

\begin{proof}[of Lemma~\ref{lem:daum4.5}]
The full proof by~\cite{daumfull} with the described extensions is deferred to the Appendix, Lemma~\ref{applem:daumfull4.5}
\end{proof} 

\begin{lemma}[Version of Lemma 4.6. of~\cite{daumfull}]\label{lem:daumfull4.6} Assume Property 2 of the Definition~\ref{def:suc} of a successful epoch is satisfied. With probability $1-\epsapprog/3$, either $u$'s bcast-message reaches $i$ in phase $\phi$, or $d(u_{\phi+1}, i) \leq \Ra\left(1+\phi\frac{\eps}{\log \Ratio}\right)$.
\end{lemma}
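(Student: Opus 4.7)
The plan is to argue by induction on $\phi$, strengthening the statement to an accumulative one: with probability at least $1 - \phi\cdot \epsapprog/(3\Phi)$, either $i$ has received a bcast-message from a $\Ge$-neighbor in some phase $\phi'\leq \phi$, or $d(u_{\phi+1}, i) \leq \Ra(1 + \phi\eps/\log\Ratio)$. The base case $\phi=0$ is immediate because $i\in N_\Ga(S_1)$ forces $d(u_1,i)\leq \Ra$. The inductive step splits on the very dichotomy that makes Lemma~\ref{lem:daum4.5} applicable.

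First I would handle the case $d_{u_\phi} \geq \gamma Q^{-1/\alpha}\, d(u_\phi, i)$. The inductive distance bound keeps $d(u_\phi, i)$ inside $(1+\eps)\Ra$ (provided the constant hidden in $\Phi=\Theta(\log\Ratio)$ is calibrated accordingly), so Lemma~\ref{lem:daum4.5} applies in every round of Line~\ref{line:1}, yielding reception probability $\Theta(1/Q)$ per round. Over the $\Theta(Q\log(\Phi/\epsapprog))$ repetitions of Lines~\ref{line:22}--\ref{line:23}, independence gives a failure probability of at most $(1-\Theta(1/Q))^{\Theta(Q\log(\Phi/\epsapprog))} \leq \epsapprog/(3\Phi)$, which is precisely the fresh probability budget for this phase.

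In the complementary case $d_{u_\phi} < \gamma Q^{-1/\alpha}\, d(u_\phi, i)$, the node $u_\phi$ has no $\tilde{\tilde{H}}_p^\mu[S_\phi]$-neighbor beyond distance $\gamma Q^{-1/\alpha}\, d(u_\phi, i)$. By the $(\phi,i)$-local maximality of $S_{\phi+1}$ granted by Property~2 of Definition~\ref{def:suc}, either $u_\phi$ itself joins $S_{\phi+1}$, making $u_{\phi+1}=u_\phi$ with unchanged distance, or some $\tilde{\tilde{H}}_p^\mu[S_\phi]$-neighbor $u'$ of $u_\phi$ joins $S_{\phi+1}$ at distance at most $d_{u_\phi}$ from $u_\phi$. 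The triangle inequality, combined with $Q=\Theta(\log^\alpha\Ratio)$, then yields
\[
d(u_{\phi+1},i)\;\leq\; d(u_\phi,i)\left(1+\frac{\gamma}{\log\Ratio}\right).
\]
Plugging in the inductive bound $d(u_\phi,i)\leq \Ra(1+(\phi-1)\eps/\log\Ratio)$ and choosing $\gamma$ sufficiently small relative to $\eps$ so that the cross term $\gamma(\phi-1)\eps/\log^2\Ratio$ is absorbed using $\phi\leq\Phi=\Theta(\log\Ratio)$, produces the target $\Ra(1+\phi\eps/\log\Ratio)$ and closes the induction.

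The hard part will be bookkeeping rather than any new conceptual idea: I need to verify that the expected additional interference from the set $W$ of nodes with wrongly-computed neighborhoods---already controlled in expectation by Lemma~\ref{lem:wronginterference}---does not perturb the $\Theta(1/Q)$ reception probability of Lemma~\ref{lem:daum4.5}, and that the constant hidden in $\Phi=\Theta(\log\Ratio)$ is calibrated so that the running distance bound never exceeds $(1+\eps)\Ra$ at any phase, which is exactly the precondition needed to keep invoking Lemma~\ref{lem:daum4.5} throughout the induction.
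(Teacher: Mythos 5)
Your proof follows essentially the same route as the paper's: the same dichotomy on $d_{u_\phi}$ versus $\gamma Q^{-1/\alpha}d(u_\phi,i)$, Lemma~\ref{lem:daum4.5} amplified over the repetitions of Line~\ref{line:1} in one branch, and $(\phi,i)$-local maximality plus the triangle inequality giving $d(u_{\phi+1},i)\leq(1+\gamma Q^{-1/\alpha})d(u_\phi,i)$ in the other, with the same calibration of $Q=\Theta(\log^\alpha\Ratio)$ and of the constant in $\Phi$. The only deviation is your union bound over phases, which allocates $\epsapprog/(3\Phi)$ per phase and hence demands $\Theta(Q\log(\Phi/\epsapprog))$ repetitions rather than the $\BO(Q\log(1/\epsapprog))$ the algorithm actually performs; the paper avoids this by stating the lemma per phase and observing (in the proof of Lemma~\ref{lem:3}) that the reception event is only ever needed in the single phase where the first branch of the dichotomy fires.
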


\begin{proof}
We defer the proof to the Appendix (Lemma~\ref{applem:daumfull4.6}), as it requires only a minimal modification of a proof provided in~\cite{daumfull}.  
\end{proof}

\subsubsection{Proof of Lemma~\ref{lem:3}}
\begin{proof}
We extend and adapt parts of the proof presented in Section 4.3 of~\cite{daumfull} to our setting. We show that $\Phi=\Theta(\log\Ratio)$ phases are sufficient such that each node $v\in N_{\Ga}(S_1)$ receives a bcast-message from a $\Ge$-neighbor. 
First, we know due to Lemma~\ref{lem:daum5} that the minimum distance between nodes in $S_{\phi,i}$ grows exponentially with $\phi$. Therefore in some phase $\phi\leq \Phi=\Theta(\log\Ratio)$ (assuming $\Ratio\geq \Ra/d_{\min}$, which is satisfied in any non-trivial instance) the minimum distance between nodes in $S_\phi$ exceeds $\Ra\cdot\left(1+\phi\frac{\eps}{\log \Ratio}\right)$. 
Second, by applying Lemma~\ref{lem:daumfull4.6}, there must be a phase $\phi$ in which $i$ receives with probability $1-\epsapprog/3$ a bcast-message from a node $u_\phi\in S_\phi$ at distance $d(u_{\phi+1}, i) \leq \Ra\left(1+\phi\frac{\eps}{\log \Ratio}\right)$ to $i$. As $\phi\leq\Phi=\Theta(\log\Ratio)$, this can be bounded to be less than $\R_{1-2\eps-2\eps^2}< \Ra$. We conclude that the there must be a bcast-message $m'$ sent by a $\Ge$-neighbor which arrives at node $i$ during the epoch. This is the bcast-message $m'$ for which node $i$ outputs an $rcv$ event (Line~\ref{line:1}).
\end{proof}

\subsection{Runtime of an Epoch}

\begin{lemma}\label{lem:4}
The runtime of an epoch is
$$\BO\left(\log^{\alpha+1}(\Ratio)\cdot \log\left(\frac{1}{\epsapprog}\right) \ \ +\ \  \log(\Ratio)\cdot \log\left(\frac{1}{\epsapprog}\right)\cdot \log^*\left(\frac{1}{\epsapprog}\right)\right).$$
\end{lemma}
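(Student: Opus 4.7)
The plan is to decompose the cost of an epoch into the three activities that happen inside each of the $\Phi=\Theta(\log\Ratio)$ phases of Lines~\ref{line:2}--\ref{line:3}, apply the per-phase bounds already established, then sum over $\phi$ and simplify. In phase $\phi$ the algorithm (i)~computes $\tilde{\tilde{H}}_p^\mu[S_\phi]$ in Line~\ref{line:compH}, (ii)~computes $S_{\phi+1}$ in Line~\ref{line:compS}, and (iii)~performs $\BO(Q\cdot \log(1/\epsapprog))$ transmission rounds in Lines~\ref{line:22}--\ref{line:23}.

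First I would cite Lemma~\ref{lem:0.1} to bound the cost of (i) by $\BO(\Phi+\log(1/\epsapprog))$, and Lemma~\ref{lem:0.2} to bound the cost of (ii) by $\BO((\Phi+\log(1/\epsapprog))\cdot \SWn)$. For (iii) I would plug in $Q=\Theta(\log^\alpha \Ratio)$ to get a per-phase cost of $\BO(\log^\alpha(\Ratio)\cdot\log(1/\epsapprog))$. Adding these, one phase costs
\[
\BO\!\left((\Phi+\log(1/\epsapprog))\cdot \SWn \;+\; \log^\alpha(\Ratio)\cdot\log(1/\epsapprog)\right),
\]
where the $(\Phi+\log(1/\epsapprog))$ term from (i) is absorbed into (ii) since $\SWn\geq 1$.

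Second, I would multiply by $\Phi=\Theta(\log\Ratio)$, giving an epoch cost of
\[
\BO\!\left(\log(\Ratio)\cdot(\log(\Ratio)+\log(1/\epsapprog))\cdot \SWn \;+\; \log^{\alpha+1}(\Ratio)\cdot\log(1/\epsapprog)\right).
\]
The last simplification step is to split $\log(\Ratio)\cdot(\log(\Ratio)+\log(1/\epsapprog))\cdot \SWn$ into two subterms and absorb them. The piece $\log^2(\Ratio)\cdot \SWn$ is dominated by $\log^{\alpha+1}(\Ratio)\cdot \log(1/\epsapprog)$ because $\alpha>2$ and $\SWn\leq \log(1/\epsapprog)$ in the parameter range of interest (a standard bound on iterated logarithm). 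The remaining piece is $\log(\Ratio)\cdot\log(1/\epsapprog)\cdot \log^*(\Ratio/\epsapprog)$, which equals $\BO(\log(\Ratio)\cdot\log(1/\epsapprog)\cdot \log^*(1/\epsapprog))$ using $\log^*(\Ratio/\epsapprog)=\BO(\log^*(\Ratio)+\log^*(1/\epsapprog))$ and absorbing the negligibly small $\log^*(\Ratio)$ factor into the first additive term.

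The steps are essentially bookkeeping; the only mildly delicate point is verifying that $\log^2(\Ratio)\cdot\SWn$ indeed fits inside $\log^{\alpha+1}(\Ratio)\cdot \log(1/\epsapprog)$ (rather than producing a third additive term), which relies on the fact that $\SWn$ grows only iteratively-logarithmically in $\Ratio/\epsapprog$ and on the standing assumption $\alpha>2$. Combining the two remaining terms yields the stated bound.
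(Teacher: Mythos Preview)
Your proposal is correct and follows essentially the same approach as the paper: cite Lemmas~\ref{lem:0.1} and~\ref{lem:0.2} for the per-phase costs of Lines~\ref{line:compH} and~\ref{line:compS}, plug in $Q=\Theta(\log^\alpha\Ratio)$ for Lines~\ref{line:22}--\ref{line:23}, multiply by $\Phi=\Theta(\log\Ratio)$, and absorb the $\log^2(\Ratio)\cdot\SWn$ subterm using $\alpha>2$. The paper's own proof is the same computation with slightly less commentary on the final absorption step.
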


\begin{proof} Due to Lemma~\ref{lem:0.1} each execution of Line~\ref{line:compH} that constructs a graph $\tilde{\tilde{H}}_p^\mu[S_\phi]$ takes time $\BO\left(\Phi + \log(1/\epsapprog)\right)$. Due to Lemma~\ref{lem:0.2} each execution of Line~\ref{line:compS} that constructs a set $S_{\phi+1}$ takes time $\BO\left((\Phi + \log(1/\epsapprog)) \SWn\right)
$. 
In Lines~\ref{line:22}--\ref{line:23} a bcast-message is sent $\BO(Q \log(1/\epsapprog))$ times. Due to the choice of $Q$ this is $\BO(\log^{\alpha}(\Ratio)\cdot\log(1/\epsapprog))$. All this is executed for each of the $\Phi$ phases of an epoch (Lines~\ref{line:2}--~\ref{line:3}). Thus the total runtime of an epoch is 
\begin{eqnarray*}
&&\BO\left(\Phi\left(\left(\Phi \ \ +  \ \ \log\left(\frac{1}{\epsapprog}\right)\right)\SWn  \ \ + \ \ \log^{\alpha}(\Ratio)\cdot\log\left(\frac{1}{\epsapprog}\right)\right)\right)
\\
&&= \ \ \BO\left(\log^2(\Ratio)\SWn  \ \ + \ \ \log(\Ratio)\SWn\log\left(\frac{1}{\epsapprog}\right)\right.
\\
&& \left. \ \ \ \ \  \ \  \ \  \ \  \ \  \ \  \ \ \ \  \ \  \ \  \ \  \ \  \ \  \ \ \ \  \ \  \ \  \ \  \ \  \ \  \ \ \ \  \ \  \ \  \ \  \ \ \ + \ \ \log^{\alpha+1}(\Ratio)\cdot\log\left(\frac{1}{\epsapprog}\right)\right)
\\
&&= \ \
\BO\left(\log^{\alpha+1}(\Ratio)\cdot \log\left(\frac{1}{\epsapprog}\right) \ \ +\ \  \log(\Ratio)\cdot \log\left(\frac{1}{\epsapprog}\right)\cdot \log^*\left(\frac{1}{\epsapprog}\right)\right),
\end{eqnarray*}
where we use the definition of $\Phi=\Theta(\log \Ratio)$ and $\alpha> 2$. 
\end{proof}

\subsection{Proof of Theorem~\ref{thm:approg} (Approximate Progress Bound)}\label{sec:kuhnproof}
\begin{proof}
\textbf{Probability of approximate progress conditioned on locally unique labels (Remark~\ref{rem:unique}).}
Consider any node $i$ that has an $\Ga$-neighbor $j$ with an ongoing broadcast event (i.e.~$j\in S_1$). Under the assumption of 1) locally unique labels (Remark~\ref{rem:unique}), and 2) satisfaction of Properties 1 and 2 of Definition~\ref{def:suc} of a successful epoch at point $i$, Lemma~\ref{lem:3} states that node $i$ receives a bcast-message from a $\Ge$-neighbor of $i$ with probability $1-\epsapprog/3$ within one epoch. Lemma~\ref{lem:2} provides that Properties 1 and 2 of Definition~\ref{def:suc} of a successful epoch at point $i$ are satisfied with probability $1-\epsapprog/3$. Therefore the total probability that $i$ receives a bcast-message from a $\Ge$-neighbor within an epoch is at least $1-2\epsapprog/3$, which on the one hand implies satisfaction of Property 3 of Definition~\ref{def:suc} of a successful epoch at point $i$, and on the other hand implies that approximate progress is made at point $i$.
\\
\\
\textbf{Probability of locally unique labels (Remark~\ref{rem:unique}).}
We apply Lemma~\ref{lem:locMIS} with $H:=\tilde{\tilde{H}}_p^\mu[S_\phi]$ and $U:=U_{\phi,i}$ using random labels $\in[1,\frac{\poly \Ratio}{\epsapprog}]$. Lemma~\ref{lem:locMIS} can be applied, as $|U_{\phi,i}|=\BO(\Ratio^2)$. This implies that the modified MIS algorithm computes an independent set that is maximal with respect to $N_{\tilde{\tilde{H}}_p^\mu[S_\phi],\hPhi}(U_{\phi,i})$ with probability $1-\frac{\epsapprog}{3\Phi}$. As $S_{\phi,i}\subseteq N_{\tilde{\tilde{H}}_p^\mu[S_\phi],\hPhi}(U_{\phi,i})$, set $S_{\phi+1}$ is $(\phi,i)$-locally maximal in $\tilde{\tilde{H}}_p^\mu[S_\phi]$ with probability $1-\frac{\epsapprog}{3\Phi}$. The probability that we can assume locally unique labels (Remark~\ref{rem:unique}) at each of the $\Phi$ phases is at least $\left(1-\frac{\epsapprog}{3\Phi}\right)^\Phi\geq 1-\epsapprog/3$.
\\
\\
\textbf{Final conclusion.} When the two arguments the are combined, we conclude that approximate progress is made within one epoch with probability at least $(1-2\epsapprog/3)\cdot(1-\epsapprog/3)\geq 1-\epsapprog$.  Therefore Lemma~\ref{lem:4} bounds not only the runtime of an epoch, but also $\fapprog$ by 

$$\BO\left(\left(\log^\alpha(\Ratio)+ \log^*\left(\frac{1}{\epsapprog}\right)\right)\log(\Ratio)\log\left(\frac{1}{\epsapprog}\right)\right).$$
\end{proof}

\begin{remark}
It might be the case that in Algorithm~\ref{alg:AdHoc2} a node receives the same bcast-message over and over again for $\fack/2$ time slots (until the sender stops broadcasting), which is an extreme case that still satisfies the definition of progress and approximate progress. We want to stress that this is not a problem, as Algorithm~\ref{alg:AdHoc2} is only required to implement fast approximate progress and not acknowledgments. Acknowledgments are obtained in Algorithm~\ref{alg1}.
\end{remark}

\section{A Probabilistic AbsMAC Implementation with Fast Acknowledgments and Approximate Progress in the SINR-Model}\label{sec:kuhn2}

\begin{theorem}\label{thm:kuhn1}
Algorithm~\ref{alg:AdHoc0} implements the probabilistic absMAC of~\cite{DBLP:journals/adhoc/KhabbazianKKL14} for $G:=\Ge$. Approximate progress is measured with respect to $\tilde{G}:=\Ga$. The algorithm ensures local broadcast in $G$ s.t. 
$$\fack=\BO\left(\Delta_{\Ge}\cdot \log \left(\frac{\Ratio}{\epsack}\right) \ \ +\ \   \log(\Ratio)\log\left(\frac{\Ratio}{\epsack}\right)\right)$$
with probability at least $1-\epsack$ with respect to acknowledgments in $G:=\Ge$, and 
$$\fapprog=\BO\left(\left(\log^\alpha(\Ratio)+ \log^*\left(\frac{1}{\epsapprog}\right)\right)\log(\Ratio)\log\left(\frac{1}{\epsapprog}\right)\right).$$
with probability at least $1-\epsapprog$ with respect to approximate progress in $\tilde{G}:=\Ga$.
\end{theorem}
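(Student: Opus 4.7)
The plan is to obtain Algorithm~\ref{alg:AdHoc0} by running the acknowledgment-oriented Algorithm~\ref{alg1} (from Theorem~\ref{thm:ack}) and the approximate-progress-oriented Algorithm~\ref{alg:AdHoc2} (from Theorem~\ref{thm:approg}) in parallel via time-division multiplexing. Concretely, I would partition the time line of each node $i$ into odd and even slots and dedicate the odd slots exclusively to an instance of Algorithm~\ref{alg1} and the even slots exclusively to an instance of Algorithm~\ref{alg:AdHoc2}. A $bcast(m)_i$ input is forwarded to both sub-algorithms, which then broadcast $m$ independently within their respective slots. A $rcv(m')_i$ event is output whenever either sub-algorithm receives a new bcast-message $m'$ for the first time (we may suppress duplicates on $m'$ with a filter, since absMAC bcast messages are unique). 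An $ack(m)_i$ event is output exactly when the acknowledgment sub-algorithm signals that the local broadcast of $m$ is complete, and we then also signal the approximate-progress sub-algorithm to stop re-transmitting $m$ (using its $abort$ interface).

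The correctness of the combined implementation reduces to two independent applications of the results already established. First, because no even-slot transmission interferes with any odd-slot transmission, the analysis of Theorem~\ref{thm:ack} applies verbatim to the odd slots and yields acknowledgment with probability $1-\epsack$ within the claimed $\fack$ bound, merely doubled by the scheduling. Similarly, Theorem~\ref{thm:approg} applies to the even slots and yields approximate progress with respect to $\Ga$ with probability $1-\epsapprog$ within the claimed $\fapprog$ bound, again up to a constant factor of two. Both extra factors are absorbed into the big-$\BO$ constants, giving precisely the two bounds stated in the theorem.

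What remains is to verify that the combined algorithm indeed satisfies the probabilistic absMAC specification of~\cite{DBLP:journals/adhoc/KhabbazianKKL14} as modified by approximate progress (Definition~\ref{def:appr}). I would check the three guarantees separately: (i) the \emph{proximity} condition, i.e.\ that every $rcv(m')_i$ output is triggered by a $bcast(m')_j$ with $(i,j)\in\Ee$, follows directly from the fact that both sub-algorithms already satisfy this condition for $\Ge$ (and if a message from a $G_1$-neighbor outside $N_\Ge$ slips through, we simply discard it, as discussed in the remark following the conditional wakeup definition); (ii) the \emph{acknowledgment} bound follows from the odd-slot execution of Algorithm~\ref{alg1}; (iii) the \emph{approximate progress} bound follows from the even-slot execution of Algorithm~\ref{alg:AdHoc2}. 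The conditional wakeup assumption is inherited because a newly woken node simply joins both sub-algorithms at the next epoch boundary.

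The main technical obstacle I foresee is ruling out harmful cross-interference between the two sub-algorithms and ensuring that the state maintained by one does not corrupt the analysis of the other. Pure time-division multiplexing eliminates cross-interference within any single time step; the remaining care is that the bookkeeping used in the proofs of Theorems~\ref{thm:ack} and~\ref{thm:approg} (set membership $S_\phi$, label choices, transmission counters) is maintained separately in the two instances, so that the probabilistic analyses remain valid without any union bound between the two schedules. With this separation in place, the bounds on $\fack$ and $\fapprog$ follow immediately from Theorems~\ref{thm:ack} and~\ref{thm:approg} respectively.
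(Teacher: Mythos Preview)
Your proposal is correct and matches the paper's approach essentially verbatim: the paper's Algorithm~\ref{alg:AdHoc0} runs the algorithm of Theorem~\ref{thm:ack} in even time steps and Algorithm~\ref{alg:AdHoc2} in odd time steps, and the proof simply points back to Sections~\ref{sec:ackcons} and~\ref{sec:highlevel}--\ref{sec:anaI} for the respective bounds (together with Remark~\ref{rem:exact}). Your write-up is in fact more explicit than the paper's own proof about why the time-division multiplexing decouples the two analyses and absorbs the factor~$2$ into the $\BO$ constants.
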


\begin{remark}
The bound on $\fapprog$ is significantly better than the best possible bound on $\fprog$ due to the lower bound in Theorem~\ref{thm:fprogLB}. E.g. for graphs where $\Delta_{\Ge}=\Omega(\Ratio^\gamma)$, for $\gamma>0$, the bound on $\fapprog$ of Theorem~\ref{thm:kuhn1} is polylogarithmic in the degree $\Delta_{\Ge}$, while the lower bound on $\fprog$ in Section~\ref{sec:fprogLB} is linear in the degree $\Delta_{\Ge}$.
\end{remark}
To achieve the bounds stated in Theorem~\ref{thm:kuhn1}, we use two algorithms that run in parallel. 
\begin{itemize}
\item Algorithm of Theorem~\ref{thm:ack} is executed in every even time step with respect to $\Ge$ and ensures an almost optimal bound on $\fack$. 
\item Algorithm~\ref{alg:AdHoc2} is executed in every odd time step and ensures fast approximate progress with respect to $\tilde{G}$.
\end{itemize}

Combining these two algorithms provides good bounds on both, $\fack$ and $\fapprog$. Such a combination is necessary, as the Algorithm of Theorem~\ref{thm:ack} might not yield a good bound on approximate progress and Algorithm~\ref{alg:AdHoc2} might not lead to an acknowledgment at all. Therefore they complement each other.

\subsection{Details of the Algorithm}
Once a $bcast(m)_i$ event occurs at node $i$, node $i$ starts to execute Algorithm~\ref{alg:AdHoc0} for $\fack$ time steps. After this node $i$ performs $ack(m)_i$. If node $i$ has an ongoing broadcast and receives an $abort(m)_i$ input from the environment before it performs $ack(m)_i$, then it (i) continues to participate until the current epoch of Algorithm~\ref{alg:AdHoc2} is finished, (ii) after this epoch performs no further transmission on behalf of bcast-message $m$, and (iii) does not perform $ack(m)_i$. Whenever a node $i$ receives a message $m'$ for the first time in an epoch, it delivers that message to its environment with a $rcv(m')_i$ output event.

\begin{algorithm}[!htbp]
\begin{algorithmic}[1]
\Statex
\State $m:=0$; // $m$ stores bcast-message input of an ongoing $bcast$-event at $i$
\State whenever a bcast-message is received or a $bcast(m')_i$ event occurs, wake up if not awake;
\State whenever a $bcast(m')_i$ event occurs, set $m:=m'$ and reset $m:=0$ after $\fack$ rounds;
\State Execute in parallel in even/odd time steps:
\Statex \ \ \ The algorithm of Theorem~\ref{thm:ack}, and
\Statex \ \ \ Algorithm~\ref{alg:AdHoc2};
\end{algorithmic}
\caption{Implementation of local broadcast as executed by a node $i$.}\label{alg:AdHoc0}
\end{algorithm}

\begin{proof}[of Theorem~\ref{thm:kuhn1}]
Details on the Algorithm and proof corresponding to the bound on $\fack$ in Theorem~\ref{thm:kuhn1} are stated in Section~\ref{sec:ackcons}.
Details on the Algorithm and proof corresponding to the bound on $\fapprog$ in Theorem~\ref{thm:kuhn1} are stated in Sections~\ref{sec:highlevel} and~\ref{sec:anaI}. We also point the reader to Remark~\ref{rem:exact} used in these proofs.
\end{proof}

\fi

\section{Application: Improved Network-Wide Broadcast}\label{sec:comb2}
\ifshort
In the full version of this paper~\cite{halldorsson2015local-arxiv} we implement the probabilistic absMAC of~\cite{DBLP:journals/adhoc/KhabbazianKKL14} in a formal way using Theorems~\ref{thm:ack} and~\ref{thm:approg} and the corresponding algorithms. 
\fi
We combine this absMAC implementation with algorithms from~\cite{DBLP:journals/adhoc/KhabbazianKKL14} for global broadcast in this absMAC. 
\iffull
We recall the relevant Theorems of~\cite{DBLP:journals/adhoc/KhabbazianKKL14}.
\fi
In Theorem~\ref{thm:replace} we argue that we can replace $\fprog$ and $\epsprog$ in the relevant Theorems of~\cite{DBLP:journals/adhoc/KhabbazianKKL14} by $\fapprog$ and $\epsapprog$ under certain conditions and state the effect that this replacement has on other parameters of the runtime. 
\iffull
We use

$$ c_2:=\frac{2}{1-\epsprog} \ \ \ \ \ \ and \ \ \ \ \ \ c_3:=\frac{3}{1-\epsprog}.$$

For the convenience of the reader we restate Theorem 7.7 and 8.20 of~\cite{DBLP:journals/adhoc/KhabbazianKKL14} with respect to broadcast in graph $G$ and diameter $D_G$, and recall notation used in these Theorems.
\begin{theorem}[Version of Theorem 7.7 of~\cite{DBLP:journals/adhoc/KhabbazianKKL14}]\label{thm:prob-BSMB-ld}
Let $G$ be a graph in which local broadcast is implemented that can be used via the probabilistic absMAC. Let $\gamma'$ be a real number, $0<\gamma'\leq 1$. The BSMB protocol [of~\cite{DBLP:journals/adhoc/KhabbazianKKL14}] guarantees that, with probability at least 
$$1-\gamma'-n\cdot\epsack,$$
$rcv$ events and hence, deliver events, occur at all nodes $\neq i_0$ by time 
$$(c_3 D_G+ c_2 \ln(n/\gamma'))\fprog$$
\end{theorem}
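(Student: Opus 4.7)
The plan is to analyze progress along a shortest path from the source $i_0$ to each target node, exploiting the key simplification that in single-message broadcast there is only one message in the system, so any $rcv$ event at a node delivers that unique message. Fix any target $v \neq i_0$ and a shortest $(i_0, v)$-path $i_0 = u_0, u_1, \ldots, u_d = v$ in $G$, with $d \leq D_G$. Let $T_j$ be the first time $u_j$ has received the message; in particular $T_0 = 0$. I want to bound $T_d$ with high probability.

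The inductive step is to bound the gap $T_{j+1} - T_j$. Once $u_j$ first receives the message at time $T_j$, the BSMB protocol immediately issues $\mathrm{bcast}$ at $u_j$, so from $T_j$ onward $u_j$ has an active broadcast (which, by the $\fack$-bound, will remain active for at least enough time to cover the analysis, provided the acknowledgment bound is not violated at $u_j$ — a failure mode that occurs with probability at most $\epsack$ and will be union-bounded at the end). While $u_j$ is broadcasting and $u_{j+1}$ has not yet received the message, the progress property guarantees that in each $\fprog$-length window, $u_{j+1}$ receives some message with probability at least $1-\epsprog$; since the only message in circulation is the broadcast message, $u_{j+1}$ has it. Therefore the number of $\fprog$-windows $X_j$ needed at hop $j$ is stochastically dominated by a geometric random variable with success parameter $1-\epsprog$, and $T_d \leq \fprog \sum_{j=0}^{d-1} X_j$.

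It remains to concentrate the sum $\sum_{j=0}^{d-1} X_j$, which has mean $d/(1-\epsprog)$ and is essentially a negative binomial. A standard Chernoff-type tail bound for independent geometric random variables yields
\[
\Pr\left[\sum_{j=0}^{d-1} X_j > \frac{3d}{1-\epsprog} + \frac{2\ln(n/\gamma')}{1-\epsprog}\right] \leq \gamma'/n,
\]
which is exactly the statement $\sum X_j \leq c_3 d + c_2 \ln(n/\gamma')$ with the constants declared before the theorem. Multiplying by $\fprog$ and using $d \leq D_G$ gives $T_d \leq (c_3 D_G + c_2 \ln(n/\gamma'))\fprog$ for the fixed target $v$, except on an event of probability at most $\gamma'/n$. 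A union bound over all $n-1$ non-source nodes produces the bound simultaneously for all nodes, with total failure probability at most $\gamma'$, and an additional union bound over the $n$ acknowledgment events (one per node that ever broadcasts) contributes the $n \cdot \epsack$ term.

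The main technical obstacle is justifying the window-by-window independence needed for the Chernoff bound, because the absMAC scheduler is adversarial and different hops share the same underlying nondeterminism. One handles this by invoking the probabilistic progress guarantee conditioned on the history up to the start of each window, which gives the desired per-window success probability regardless of what happened before; the sum then stochastically dominates a true negative binomial. A secondary subtlety is alignment: an $\fprog$-window that spans $T_j$ may not be "fresh" with respect to $u_{j+1}$, which is why the constants $c_2, c_3$ carry the extra factor over the naive mean $1/(1-\epsprog)$ — essentially budgeting one extra window per hop plus concentration slack.
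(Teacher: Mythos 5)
The paper does not prove this statement at all: it is imported verbatim as ``Theorem 7.7 of~\cite{DBLP:journals/adhoc/KhabbazianKKL14}'' and used as a black box, and your reconstruction is essentially the standard argument from that source (shortest-path frontier advance, per-hop geometric waiting times with success probability $1-\epsprog$ obtained by conditioning on the history at the start of each $\fprog$-window, negative-binomial concentration giving the $c_3 D_G + c_2\ln(n/\gamma')$ window count, a union bound over the $n-1$ targets for the $\gamma'$ term, and a union bound over the at most $n$ bcast events' niceness for the $n\cdot\epsack$ term). The argument is sound; the only blemish is the sentence claiming the sum ``stochastically dominates a true negative binomial,'' which states the domination in the wrong direction (you need the number of windows to be dominated \emph{by} the negative binomial, as you correctly say one paragraph earlier).
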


\begin{definition}[Nice broadcast events and nice executions, Definition 4.1 of~\cite{DBLP:journals/adhoc/KhabbazianKKL14}]
Suppose a $bcast(m)_i$ event $\pi$ occurs at time $t_0$ in execution $\alpha$. Then we say that
$\pi$ is nice if $ack(m)_i$ occurs by time $t_0+\fack$ and is preceded by a $rcv(m)_j$ for every neighbor $j$ of $i$. We say that execution $\alpha$ is nice if all $bcast$ events in $\alpha$ are nice. Let $N$ be the set of all nice executions. 
\end{definition}

\begin{definition}[Clear events, Definition 8.1 of~\cite{DBLP:journals/adhoc/KhabbazianKKL14}]
Let $\alpha$ be an execution in $N$ (the set of nice executions), and let $m\in M$
be a message such that an $arrive(m)$ event occurs in $\alpha$. We define the event
$clear(m)$ to be the final $ack(m)$ event in $\alpha$.
\end{definition}

\begin{definition}[The Set
$K(m)$, Definition 8.2 of~\cite{DBLP:journals/adhoc/KhabbazianKKL14}]
Let $\alpha$ be an execution in $N$ and let $m\in M$ be a message such that
$arrive(m)$ occurs in $\alpha$. We define $K(m)$ to be the set of messages
 $m'\in M$ such that an $arrive(m')$ event precedes the $clear(m)$ event and the
$clear(m')$ event follows the $arrive(m)$ event. That is, $K(m)$ is the set
of messages whose processing overlaps the interval between the $arrive(m)$ and
$clear(m)$ events.
\end{definition}
\begin{theorem}[Version of Theorem 8.20 of~\cite{DBLP:journals/adhoc/KhabbazianKKL14}]\label{thm:prob-BMMB-old}
Let $G$ be a graph in which local broadcast is implemented that can be used via the probabilistic absMAC. Let $m\in M$ and let $\gamma'$ be a real number, $0<\gamma'<1$. The BMMB protocol [of~\cite{DBLP:journals/adhoc/KhabbazianKKL14}] guarantees that, with probability at least
$$1-\gamma'-nk\epsack,$$
the following property holds for the generated execution $\alpha$:
\\
Suppose an $arrive(m)_i$ event $\pi$ occurs in $\alpha$, and let $t_0$ be the time of occurrence of $\pi$. Let $k'$ be a positive integer such that $|K(m)|\leq k'$. Then $get(m)$ events, and hence, deliver events occur at all nodes in $\alpha$ by time 
\begin{eqnarray*}
& t_0 + \left(
(c_3+c_2)D_G + ((c_3+2c_2)\left\lceil\ln\left(\frac{2n^3 k}{\gamma'}\right)\right\rceil + c_3+c_2)k'\right)\fprog
+(k'-1)\fack
\end{eqnarray*}
\end{theorem}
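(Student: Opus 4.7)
The plan: since the statement is a verbatim restatement of Theorem~8.20 of~\cite{DBLP:journals/adhoc/KhabbazianKKL14}, the proof is essentially that of KKL14 applied in our setting with $G=\Ge$, and a proposal amounts to recording why every step of their argument transfers. The key conceptual point is that the BMMB protocol is expressed purely in terms of the absMAC primitives $bcast$, $rcv$, $ack$, $abort$, so it interacts with the SINR implementation only through the exported delay parameters $\fack,\fprog,\epsack,\epsprog$. Hence every inequality in the KKL14 analysis holds when instantiated with our bounds, provided $G$ (equivalently $\Ge$) is the graph of guaranteed reliable local broadcast, which it is by Theorem~\ref{thm:kuhn1}.

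The underlying argument decomposes the end-to-end delay for a message $m$ with $arrive(m)_i$ at time $t_0$ and $|K(m)|\le k'$ into two parts. First, a local (queueing) part: at every node, $m$ sits behind at most $k'-1$ messages of $K(m)$, each of which is processed to completion within one $\fack$-window thanks to niceness of the execution; this contributes the additive $(k'-1)\fack$ term. Second, a global (propagation) part: along any shortest path of length $\le D_G$ from $i$ to a destination $u$, $m$ needs $D_G$ hops to arrive. Each $\fprog$-window at an intermediate node yields a $rcv$ event with probability $\ge 1-\epsprog$, but the received message may be any of the up to $k'$ messages simultaneously active in $K(m)$, not necessarily $m$ itself. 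A Chernoff tail then shows that $\BO\bigl((D_G + \ln(n/\gamma'))\fprog\bigr)$ progress windows deliver every individual message of $K(m)$ along the path to $u$; bookkeeping the constants produced by the Chernoff deviation at the target probability gives exactly the $(c_3+c_2)D_G\fprog$ head plus the $((c_3+2c_2)\lceil\ln(2n^3k/\gamma')\rceil + c_3+c_2)k'\fprog$ correction stated in the theorem.

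The probability budget is accounted for by two union bounds. Over all $\le n\cdot k$ individual $bcast$ events in the execution, a failure of the acknowledgment bound occurs with probability at most $nk\,\epsack$; these events govern niceness and hence the $(k'-1)\fack$ and per-hop structure. Over the $\BO(nD_Gk')$ progress windows along all relevant paths, a Chernoff bound, tuned so that the expected slack is at least $\ln(2n^3k/\gamma')$, absorbs every progress failure into the overall $\gamma'$ slack, giving the quoted $1-\gamma'-nk\,\epsack$ success probability.

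The only genuine obstacle in transcribing the proof is isolating exactly where $\fprog$ and $\epsprog$ are invoked, because in Section~\ref{sec:comb2} we will want to substitute $\fapprog,\epsapprog$ using Theorem~\ref{thm:replace}. I would therefore, while carrying out the sketch above, annotate every use of the progress bound at a node $j$ with the observation that it only requires that whenever some $\tilde G$-neighbor of $j$ is broadcasting, a $rcv$ event from a $G$-neighbor occurs within $\fapprog$ time with probability $\ge 1-\epsapprog$; this is exactly Definition~\ref{def:appr}. Modulo this bookkeeping, the proof is mechanical and the delay/probability terms track those of~\cite{DBLP:journals/adhoc/KhabbazianKKL14} line for line.
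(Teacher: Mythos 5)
Your proposal is correct and matches the paper's treatment: the paper gives no proof of this statement at all --- it is imported verbatim from Theorem~8.20 of~\cite{DBLP:journals/adhoc/KhabbazianKKL14} (``For the convenience of the reader we restate\dots''), and your central observation, that the BMMB protocol interacts with the SINR layer only through the exported absMAC primitives and delay parameters so the cited proof transfers unchanged, is exactly the justification the paper implicitly relies on. Your supplementary reconstruction of the KKL14 argument (queueing term $(k'-1)\fack$ via niceness, propagation term via per-hop progress windows and a Chernoff bound tuned to $\ln(2n^3k/\gamma')$, union bound over $nk$ $bcast$ events giving $nk\,\epsack$) is structurally consistent with every term in the stated bound, though it goes beyond anything the paper itself attempts and cannot be checked against it; the remark about isolating the uses of $\fprog$ for later substitution by $\fapprog$ correctly anticipates Theorem~\ref{thm:replace}.
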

\fi
\begin{theorem}\label{thm:replace}
Let $G$ be a graph in which local broadcast is available via the probabilistic absMAC of~\cite{DBLP:journals/adhoc/KhabbazianKKL14}. Let $\tilde{G}$ be the graph in which approximate progress is measured and let the vertex sets of the connected components of $\tilde{G}$ and $G$ be the same. Then one can replace $\fprog, \epsprog$ and $D_G$ in Theorems
\ifshort
~7.7 and~8.20 of~\cite{DBLP:journals/adhoc/KhabbazianKKL14}
\fi
\iffull
~\ref{thm:prob-BSMB-ld} and~\ref{thm:prob-BMMB-old}
\fi concerning their global SMB and MMB algorithms by $\fapprog,\epsapprog$ and $D_{\tilde{G}}$.
\end{theorem}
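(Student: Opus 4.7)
The plan is to inspect how the progress bound $\fprog$ and its error probability $\epsprog$ enter the analyses of Theorems~7.7 and~8.20 of~\cite{DBLP:journals/adhoc/KhabbazianKKL14}, and show that every such use only relies on the following weaker fact: whenever a node $i$ has a $\tilde{G}$-neighbor with an ongoing broadcast, then within $\fapprog$ time $i$ performs some $rcv$ event (for a message whose originator is a $G$-neighbor of $i$), except with probability $\epsapprog$. First I would recall the structure of the BSMB/BMMB protocols: a node that receives a new message immediately invokes $bcast$ on it, so the propagation of any message $m$ through the network is driven entirely by $rcv$ events that are triggered at neighbors of currently broadcasting nodes. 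Correctness (agreement, validity, and delivery to every node) therefore requires only that $m$ reaches every node in its connected component in the communication graph; it does not require that $m$ arrive along any specific edge.

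Next I would walk through the latency arguments of Theorems~7.7 and~8.20. Both bounds are obtained by fixing a shortest path from the source (or the currently active broadcaster) to a target node, and bounding, hop by hop, the time until progress occurs along that path. In the original proofs this shortest path is a path in $G$ and each hop contributes $\fprog$ time with error probability $\epsprog$. The plan is to replace that shortest path with a shortest path in $\tilde{G}$ between nodes in the same connected component, which is well defined since by hypothesis $\tilde{G}$ and $G$ have the same connected components. Each hop of the $\tilde{G}$-path is now bounded by a single invocation of approximate progress, yielding $\fapprog$ time per hop with error $\epsapprog$; thus the hop count is bounded by $D_{\tilde{G}}$ rather than $D_G$, and each occurrence of $\fprog$ and $\epsprog$ in the latency expressions and union bounds of Theorems~7.7 and~8.20 is substituted by $\fapprog$ and $\epsapprog$. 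Crucially, the $\fack$ and $\epsack$ terms (and the bookkeeping of $K(m)$ in the MMB analysis) remain with respect to $G$, because acknowledgments are still implemented with respect to $G$ in our construction; only the per-hop propagation argument is reexpressed over $\tilde{G}$.

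To make the substitution rigorous I would verify that each of the three escape clauses in Definition~\ref{def:appr} is compatible with how progress is used in~\cite{DBLP:journals/adhoc/KhabbazianKKL14}. Clause~(1), acknowledgment of all broadcasting $\tilde{G}$-neighbors, only strengthens the hypothesis of the next round of the inductive latency argument (the broadcasters turn over). Clause~(2), a $rcv$ from a $G$-neighbor, directly triggers a new $bcast$ at the receiver and so advances the broadcast, exactly as the original progress clause did. Clause~(3), a $rcv$ for a message whose $bcast$ occurs later, is the ``new message observed'' case which in the BMMB proof is absorbed by the $K(m)$ bound and in the BSMB proof by the observation that only one message is ever being broadcast. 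This gives the substitution for every appearance of $\fprog$, $\epsprog$, and $D_G$ in the two theorems.

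The main obstacle I anticipate is the multi-message case: one must check that the chain of $bcast$/$rcv$ events that the BMMB proof follows from the source of $m$ to each target still has length at most $D_{\tilde{G}}$ along $\tilde{G}$, even though the node whose progress is being invoked at each hop may currently be broadcasting a different message $m' \neq m$. This is handled by the same argument used in~\cite{DBLP:journals/adhoc/KhabbazianKKL14}: when Clause~(3) fires for $m'$, the number of such distracting messages along any path is bounded by $|K(m)|$, so the extra delay is absorbed into the $k'$ term of Theorem~\ref{thm:prob-BMMB-old}. Since neither the combinatorial structure of $K(m)$ nor the counting of hops depends on which underlying graph furnishes the progress bound, the overall bound goes through unchanged apart from the three substitutions claimed.
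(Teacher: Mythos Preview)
Your proposal is correct and follows essentially the same approach as the paper: both recall the BSMB/BMMB protocols, observe that correctness only needs a message to reach every node in its connected component (which is the same in $G$ and $\tilde{G}$), and then argue that the per-hop progress argument can be carried out along a shortest $\tilde{G}$-path, replacing $\fprog,\epsprog,D_G$ by $\fapprog,\epsapprog,D_{\tilde{G}}$ while keeping $\fack$ with respect to $G$. Your treatment is in fact more explicit than the paper's in checking the three escape clauses of Definition~\ref{def:appr} and in handling the $K(m)$ bookkeeping for MMB; the paper's proof is terser and phrases the same reduction more abstractly (``approximate progress in $G$ with respect to $\tilde{G}$ essentially causes progress in $\tilde{G}$'').
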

\ifshort 
In the algorithms of~\cite{DBLP:journals/adhoc/KhabbazianKKL14}, once a node $i$ receives a message, node $i$ broadcasts the message if it did not broadcast it before. The result of global broadcast is independent of whether a message was received due to transmission from a $\tilde{G}$-neighbor or a $G$-neighbor as long as the components of $\tilde{G}$ and $G$ are the same. Only the runtime changes. In time $\fprog$ with probability $1- \epsprog$ a message is received by a node $v$ when a $\Ge$-neighbor of $v$ is sending. Therefore the runtime presented in~\cite{DBLP:journals/adhoc/KhabbazianKKL14} depends on $D_\Ge$. Compared to this, with probability $1- \epsapprog$ in time $\fapprog$ a message arrives when a $\Ga$-neighbor is sending. Therefore $D_\Ge$ needs to be replaced by $D_\Ga$, see 
the full version of this paper~\cite{halldorsson2015local-arxiv} for details.
\fi

\iffull
\begin{proof}
We start by recalling the BMMB and BSMB protocols of~\cite{DBLP:journals/adhoc/KhabbazianKKL14} for global MMB and SMB, for which we present our argument. 
\\
\\
\textbf{Basic Multi-Message Broadcast (BMMB) Protocol:} Every process $i$ maintains a FIFO
queue named $bcastq$ and a set named $rcvd$. Both are initially empty. If process $i$ is not currently sending a message on the MAC layer and its $bcastq$ is not empty, it sends the message at the head of the queue on the MAC layer (disambiguated with identifier $i$ and sequence number) using a $bcast$ output. If $i$ receives a message from the environment via an $arrive(m)_i$ input, it immediately delivers the message $m$ to the environment using a $deliver(m)_i$ output, and adds $m$ to the back of $bcastq$ and to the $rcvd$ set. If $i$ receives a message $m$ from the MAC layer via a $rcv(m)_i$ input, it first checks $rcvd$. If $m\in  rcvd$ it discards it. Else, $i$ immediately performs a $deliver(m)_i$ output and adds $m$ to $bcastq$ and $rcvd$. \\
\\
\textbf{Basic Single-Message Broadcast (BSMB) Protocol:} This is just BMMB specialized to
one message, and modified so that the message starts in the state of a designated initial node $i_0$.
\\
\\
In the above algorithms, once a node $i$ receives a message, node $i$ broadcasts the message if it did not broadcast it before. The result of global broadcast is independent of whether a message was received due to transmission from a $\tilde{G}$-neighbor or a $G$-neighbor as long as the components of $\tilde{G}$ and $G$ are the same. Only the runtime changes.

In time $\fprog$ it is guaranteed that with probability $1- \epsprog$ a message is received by a node $v$ when a $G$-neighbor of $v$ is sending. Therefore the runtime presented in~\cite{DBLP:journals/adhoc/KhabbazianKKL14} depends on $D_{\tilde{G}}$. Compared to this it is guaranteed with probability $1- \epsapprog$ that in time $\fapprog$ a message arrives when a $\tilde{G}$-neighbor is sending. A message that causes approximate progress in $G$ with respect to $\tilde{G}$ essentially causes progress in $\tilde{G}$ if we restrict local broadcast to $\tilde{G}$. Here, if required by the specification of the abstract MAC layer (or algorithms using it) we output $rcv$-events for messages that arrive from $G$-neighbors, but not from other nodes outside of $G$. Therefore $D_G$ needs to be replaced by $D_{\tilde{G}}$.

Now note that a node $i$ that receives a message $m$ from the MAC layer via a $rcv(m)_i$ discards $m$ if $m\in  rcvd$. Therefore messages from $G$ are only placed into $bcastq$ once and cannot cause delays more than once. Based on this we can now replace $\fprog$ and $\epsprog$ in Theorems~\ref{thm:prob-BSMB-ld} and~\ref{thm:prob-BMMB-old} by $\fapprog$ and $\epsapprog$ if we also take into account the change of the diameter of the graph in which we consider broadcast. Therefore the diameter $D_G$ is replaced by $D_{\tilde{G}}$.

Although one might now only need $\fack$ with respect to broadcast in $\tilde{G}$, we still need to use the bound of $\fack$ for $G$, as broadcast is implemented in $G$. We conclude the statement, as $G:=\Ge$ and $\tilde{G}:=\Ga$.
\end{proof}

By combining Theorems~\ref{thm:prob-BSMB-ld} and~\ref{thm:prob-BMMB-old} with our results, we obtain:
\fi

\begin{theorem}\label{thm:combKuhn}
Consider the SINR model using the model assumptions stated in Section~\ref{sec:modelassumpt}. 
We present an algorithm that performs global SMB in graph $\Ge$ with probability at least $1-\epsSMB$ in time 
\ifshort
$\BO\left(\left(D_\Ga+\log \left(\frac{n}{\epsSMB}\right)\right)\cdot \log^{\alpha+1}(\Ratio)\right)$.
\fi
\iffull
$$\BO\left(\left(D_\Ga+\log \left(\frac{n}{\epsSMB}\right)\right)\cdot \log^{\alpha+1}(\Ratio)\right).$$
\fi
\ifshort

The second algorithm presented in the proof performs global MMB in graph $\Ge$ with probability at least $1-\epsMMB$ in time 
$\BO\left(D_\Ga \log^{\alpha+1}(\Ratio) \ \ + \ \ k'\left(\Delta_{\Ger}+\polylogf{\frac{nk\Ratio}{\epsMMB}}\right)\log \left(\frac{nk}{\epsMMB}\right)\right)$
\fi
\iffull
$$\BO\left(D_\Ga \log^{\alpha+1}(\Ratio) \ \ + \ \ k'\left(\Delta_{\Ger}+\polylogf{\frac{nk\Ratio}{\epsMMB}}\right)\log \left(\frac{nk}{\epsMMB}\right)\right).$$
\fi
\end{theorem}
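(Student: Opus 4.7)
The plan is to instantiate the BSMB and BMMB protocols of~\cite{DBLP:journals/adhoc/KhabbazianKKL14} on top of the probabilistic absMAC implementation obtained by combining Theorems~\ref{thm:ack} and~\ref{thm:approg}. Since those generic algorithms are stated purely in terms of the absMAC parameters $\fack, \fprog, \epsack, \epsprog$, once such an implementation exists the proof reduces to (i) substituting $(\fprog, \epsprog, D_\Ge)$ by $(\fapprog, \epsapprog, D_\Ga)$ via Theorem~\ref{thm:replace}, and (ii) choosing $\epsapprog, \epsack$ so that the overall failure probability meets the desired $\epsSMB$ or $\epsMMB$ budget.

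First I would handle the SMB bound. Applying Theorem~\ref{thm:replace} to Theorem~7.7 of~\cite{DBLP:journals/adhoc/KhabbazianKKL14} yields a completion time of $(c_3 D_\Ga + c_2 \ln(n/\gamma'))\fapprog$ with failure probability at most $\gamma' + n\epsack$, where $c_2, c_3$ are absolute constants as soon as $\epsapprog$ is bounded away from $1$. I would set $\gamma' := \epsSMB/2$ and $\epsack := \epsSMB/(2n)$ to hit the target failure probability, and fix $\epsapprog$ to an absolute constant (say $1/3$). With this choice, the bound of Theorem~\ref{thm:approg} collapses to $\fapprog = \BO(\log^{\alpha+1}(\Ratio))$ because $\log(1/\epsapprog)$ and $\log^{*}(1/\epsapprog)$ are $\BO(1)$. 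Substituting gives $\BO((D_\Ga + \log(n/\epsSMB))\log^{\alpha+1}(\Ratio))$, matching the claim. Note that the $\fack$ bound from Theorem~\ref{thm:ack} does not appear explicitly in the SMB runtime (it is already absorbed into the $(D_\Ga + \log(n/\epsSMB))\fapprog$ factor through the KKL analysis).

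For MMB I would apply the analogous replacement to Theorem~8.20 of~\cite{DBLP:journals/adhoc/KhabbazianKKL14}, obtaining runtime $\BO((D_\Ga + k'\log(nk/\gamma'))\fapprog + k'\fack)$ with failure probability at most $\gamma' + nk\epsack$. Setting $\gamma' := \epsMMB/2$, $\epsack := \epsMMB/(2nk)$, and $\epsapprog$ constant as before, I would plug in $\fapprog = \BO(\log^{\alpha+1}(\Ratio))$ from Theorem~\ref{thm:approg} and $\fack = \BO((\Delta_\Ge + \log \Ratio)\log(nk\Ratio/\epsMMB))$ from Theorem~\ref{thm:ack}. The term $k' D_\Ga$ contributes $\BO(D_\Ga \log^{\alpha+1}(\Ratio))$, the term $k'\log(nk/\epsMMB)\fapprog$ contributes $\BO(k' \log(nk/\epsMMB) \log^{\alpha+1}(\Ratio))$, which is absorbed into the $k'\, \polylogf{nk\Ratio/\epsMMB}\log(nk/\epsMMB)$ summand, and $k'\fack$ contributes $\BO(k'(\Delta_\Ge + \polylogf{nk\Ratio/\epsMMB})\log(nk/\epsMMB))$, yielding exactly the bound in the statement.

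The only real content beyond bookkeeping is verifying the hypothesis of Theorem~\ref{thm:replace}, which requires $\Ge$ and $\Ga$ to share the same connected components (vertex sets are automatically equal since both graphs are on $V$). This is the main subtlety and is handled as in Section~\ref{sec:SMBhighlights}: running the construction with $\eps' := \eps/2$ makes the ``reliable'' graph at least as strong as the original $\Ge$ while the ``approximate'' graph used to measure progress is exactly $\Ge$, so connectivity of $\Ge$ (which we assume throughout) transfers. With this taken care of, the remaining arguments are direct substitutions of Theorems~\ref{thm:ack} and~\ref{thm:approg} into the KKL bounds.
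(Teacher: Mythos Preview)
Your proposal is correct and follows essentially the same route as the paper: apply Theorem~\ref{thm:replace} to Theorems~7.7 and~8.20 of~\cite{DBLP:journals/adhoc/KhabbazianKKL14}, split the error budget via $\gamma'$ and $\epsack$, fix $\epsapprog$ to a constant so that $\fapprog=\BO(\log^{\alpha+1}\Ratio)$, and substitute the bounds of Theorems~\ref{thm:ack} and~\ref{thm:approg}. Your parameter choices differ only cosmetically from the paper's (e.g.\ the paper takes $\epsapprog=1/8$ and, for MMB, $\gamma'=\epsMMB/(2k)$), and you are in fact slightly more careful than the paper's proof in explicitly flagging the connected-components hypothesis of Theorem~\ref{thm:replace} and pointing to the $\eps':=\eps/2$ device to discharge it; one small slip is the phrase ``the term $k' D_\Ga$'' in the MMB part, where you mean the $D_\Ga\cdot\fapprog$ term (no $k'$ factor there).
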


\ifshort
\noindent The proof applies our Theorems~\ref{thm:ack},~\ref{thm:approg} and~\ref{thm:replace} to Theorems~7.7 and~8.20 of~\cite{DBLP:journals/adhoc/KhabbazianKKL14}, see 
the full version of this paper~\cite{halldorsson2015local-arxiv} for details.
\fi

\iffull 
\begin{proof}
Theorem~\ref{thm:kuhn1} states that $\fack=\BO\left(\Delta_{\Ger}\cdot \log \left(\frac{\Ratio}{\epsack}\right) \ \ +\ \   \log(\Ratio)\log\left(\frac{\Ratio}{\epsack}\right)\right)$ and $\fapprog=\BO\left(D_\Ga \log^{\alpha+1}(\Ratio) \ \ + \ \ 
k(\Delta_{\Ger}+\polylogf{nk\Ratio})\log \left(nk\right)\right)$.

\paragraph{Global SMB:}
Theorem~\ref{thm:prob-BSMB-ld} combined with Theorem~\ref{thm:replace} guarantees that for $0<\gamma'\leq 1$ with probability $1-\gamma'-n\cdot\epsack$, global SMB can be performed in time $(c_3 D_\Ga+ c_2 \ln(n/\gamma'))\fapprog$. We choose $\gamma'=\epsSMB/2$ and $\epsack:=\epsSMB/(2n)$. Therefore we obtain that global SMB is performed with probability $1-\gamma'-n\cdot\epsack=1-\epsSMB$. Choosing $\epsapprog:=1/8$ yields the following total runtime: 

$$(c_3 D_\Ga+ c_2 \ln(n/\gamma'))\fapprog = \BO\left((D_\Ga+\log (n/\epsSMB))\cdot \log^{\alpha+1}(\Ratio)\right).$$

\paragraph{Global MMB:}
Theorem~\ref{thm:prob-BMMB-old} guarantees that for $0<\gamma'\leq 1$ with probability $1-\gamma'-nk\epsack$, global MMB is completed at time 

$$t_0 + \left(
(c_3+c_2)D_\Ga + ((c_3+2c_2)\left\lceil\ln\left(\frac{2n^3 k}{\gamma'}\right)\right\rceil + c_3+c_2)k'\right)\fprog +(k'-1)\fack.$$

We choose $\gamma'=\epsMMB/(2k)$ and $\epsack:=\epsMMB/(2kn)$. Therefore we obtain that global MMB is performed with probability $1-\gamma'-nk\cdot\epsack=1-\epsMMB$. 
This yields the following total runtime: 
\begin{eqnarray*}
&& \left((c_3+c_2)D_\Ga + ((c_3+2c_2)\left\lceil\ln\left(\frac{2n^3 k}{\gamma'}\right)\right\rceil + c_3+c_2)k'\right)\fprog +(k'-1)\fack
\\
&&=
\BO\left(D_\Ga\fprog +k'(\fack + \log\left(nk/\epsMMB\right)\fprog)\right)
\\
&&=
\BO\left(D_\Ga\fapprog +k'(\fack + \log\left(nk/\epsMMB\right)\fapprog)\right)
\\
&&=
\BO\left(D_\Ga \log^{\alpha+1}(\Ratio) \ \ + \ \ 
k'\left(\Delta_{\Ger}+\polylogf{\frac{nk\Ratio}{\epsMMB}}\right)\log \left(\frac{nk}{\epsMMB}\right)\right)
\end{eqnarray*}

In our setting we can simply replace $k'$ by $k$, as we consider the one-shot version of $k$-message broadcast. This results in the claimed runtime. Furthermore note that we do not need the model assumption (see Section~\ref{sec:modelassumpt}) that nodes know their $\Ge$-neighbors in case $\Ge$ is connected (also see the discussion in Remark~\ref{rem:exact}). When looking at the proof of Theorem~\ref{thm:replace} and the BMMB protocol stated therein, we conclude that even if messages are received from nodes in transmission range that are not $\Ge$-neighbors, messages are added to $bcastq$ only once and cannot cause delays several times.
\end{proof}
\fi

\paragraph{Acknowledgments:} We thank Sebastian Daum, Mohsen Ghaffari, Fabian Kuhn and Calvin Newport for answering questions concerning their earlier work and helpful discussions. In particular we thank Erez Kantor for many helpful discussions---especially in the early stages of this work.

\addcontentsline{toc}{section}{References} 
\bibliographystyle{abbrv}
\bibliography{references}

\newpage
\begin{center}
\textbf{Appendix}
\end{center}
\appendix
\section{Basic Lemma on Growth Bounded Graphs}\label{app:growth}

\begin{lemma} Let $G$ be growth bounded  with polynomial bounding function $f(r)$. Then it is  $|N_{G,r}(v)|\leq \Delta f(r)$.
\end{lemma}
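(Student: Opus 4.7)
The plan is to exploit the defining property of growth-bounded graphs: the bound $f(r)$ applies to the intersection of $N_{G,r}(v)$ with any independent set of $G$. So I will cover $N_{G,r}(v)$ by the closed neighborhoods of a carefully chosen independent set and then apply the degree bound.

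First, I would let $I$ be a maximal independent set of the induced subgraph $G|_{N_{G,r}(v)}$. Since any independent set of an induced subgraph is also independent in $G$, the defining property of polynomial growth immediately gives $|I| \leq f(r)$. Next, by maximality of $I$ in $G|_{N_{G,r}(v)}$, every node $u \in N_{G,r}(v)$ is either in $I$ or has some neighbor in $I$; in either case $u \in N_G(w)$ for some $w \in I$ (recalling the paper's convention that $N_G(w)$ includes $w$ itself). Therefore
\[
N_{G,r}(v) \;\subseteq\; \bigcup_{w \in I} N_G(w).
\]

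The final step is a union bound combined with the degree bound $|N_G(w)| \leq \Delta_G + 1$, yielding $|N_{G,r}(v)| \leq |I|(\Delta_G + 1) \leq (\Delta_G + 1) f(r)$, which implies the claimed $O(\Delta \cdot f(r))$ bound (absorbing the additive constant into $\Delta$ using $\Delta \geq 1$ on any nontrivial instance, or equivalently noting $\Delta + 1 \leq 2\Delta$).

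There is no real obstacle here; the only subtle point is to verify that a maximal independent set of the induced subgraph $G|_{N_{G,r}(v)}$ qualifies as an ``independent set of $G$'' for the purposes of Definition~\ref{def:growth}, which is immediate since independence is preserved under taking induced subgraphs. The argument is standard and uses nothing beyond the definition of growth-boundedness and the definition of maximum degree.
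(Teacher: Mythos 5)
Your proof is correct and takes essentially the same route as the paper's: both take a maximal independent set of the induced subgraph on $N_{G,r}(v)$, bound its size by $f(r)$ using that independence is preserved when passing to $G$, and then cover $N_{G,r}(v)$ by the closed neighborhoods of the independent-set nodes via the degree bound. If anything, you are slightly more careful than the paper about the $\Delta$ versus $\Delta+1$ distinction arising from closed neighborhoods.
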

\begin{proof}
This statement is well known in the unit-disc graph community. As we did not find a reference to this version of the statement we include a proof for completeness. The number of nodes in $|N_{G,r}(v)|$ that are in an independent set of $G$ is bounded by $f(r)$. Consider the subgraph $H$ of $G$ that consist of nodes $N_{G,r}(v)$ and edges of $G$ between them. Any independent set in $H$ can be extended to an independent set in $G$ and thus is of size at most $f(r)$. On the other hand an independent set on $H$ dominates all nodes in $H$ such that the size of $H$ is at most $\Delta f(r)$, as each node has degree at most $\Delta$.
\end{proof}

\section{Proof of~\cite{halldorsson2012towards} Adapted to our Theorem~\ref{thm:ack}}\label{app:MMHproof}
We only restate Algorithm and Analysis from~\cite{halldorsson2012towards} adapted to our needs for completeness and convenience of the reader with the goal of making it simpler to verify our claim in Theorem~\ref{thm:ack}. The proof is minimal modified and variables are replaced by more general parameters to demonstrate correctness. In particular we restate Theorem 3 of~\cite{halldorsson2012towards} with respect to an upper bound $\tilde{N_x}$ of the local contention $N_x$. Here $N_x$ is defined to be the number of $G_{\MMHphi}$-neighbors of $x\in V$ that have ongoing broadcasts at the time of execution. Compared to this Theorem 3 of~\cite{halldorsson2012towards} assumes $n$ as an upper bound such that the runtime depends on $n$. However, we are interested in local parameters. Furthermore we only wish to claim successful local broadcast within the stated time with probability $1-\epsack$, while~\cite{halldorsson2012towards} claims w.h.p., which affects the runtime as well. For simplicity we use their Notation of regions $T_x$ and $B_x$, see Definition~\ref{def:TB}, which at the same time serve as sets of nodes with ongoing broadcast located in these regions. Our notion of $N_{G_\MMHphi}(x)$ describes a similar set of all nodes in the ball of radius $R_\MMHphi$, but also those nodes with no ongoing broadcasts. Furthermore this set cannot be treated as an area and $B_x$ is more useful for this.
\begin{definition}~\label{def:TB}
The \emph{transmission region} $T_x$ is the ball of radius $\R_1$ around a node $x$ which $x$ can reach without any other node transmitting. The \emph{broadcasting region} $B_x$ is a ball of radius $\R_{\MMHphi}$ around any node $x$, containing all nodes to which $x$ would like to transmit.
\end{definition}
\begin{remark} The analysis below is transferred from \cite{halldorsson2012towards} and requires $\eps$ large enough such that $\MMHphi=:\phi\leq\frac{1}{6}$. However, this is only for simplicity of the presentation and can be adapted to arbitrary small constant $\eps$.
\end{remark}

\begin{theorem}[Version of Theorem 3 of~\cite{halldorsson2012towards}]\label{appthm:ack-app}
Let $\tilde{N_x}$ be an upper bound on the local contention $N_x$ and let $\epsack>0$. When executing Algorithm~\ref{alg1} each node $x$ successfully performs a local broadcast within 
$$O(N_x\log (\tilde{N_x}/\epsack) +  \log(\tilde{N_x})\log(\tilde{N_x}/\epsack))$$
rounds with probability at least $1-\epsack$.
\label{mainth1}
\end{theorem}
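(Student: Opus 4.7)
The plan is to follow the structure of the Halld\'orsson--Mitra analysis (Theorem~3 of~\cite{halldorsson2012towards}), but to replace every appeal to a polynomial-in-$n$ union bound by a union bound in terms of the local upper bound $\tilde{N_x}$, and to replace the w.h.p.\ guarantee by a $(1-\epsack)$ guarantee. I would split the runtime into its two summands, which correspond to two conceptually distinct subroutines of Algorithm~\ref{alg1}: (i) a ``search'' phase that locates a good transmission probability $\approx 1/N_x$ by testing $\Theta(\log \tilde{N_x})$ candidate values, and (ii) an ``exploit'' phase where, once the probability has stabilized near $1/N_x$, the sender transmits for $\Theta(N_x \log(\tilde{N_x}/\epsack))$ rounds to reach every neighbor in $B_x$. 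The two terms in the runtime bound the two phases separately.

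First I would set up the SINR-interference machinery locally. Using that every concurrently-transmitting node in $B_x \cap T_y$ contributes interference of order $P$ to a receiver $y \in B_x$, and that far-away nodes contribute only summable tails (the standard $\alpha>2$ argument, as used in Lemma~\ref{lem:wronginterference}), I would show that whenever at most a constant fraction of the nodes in $N_{G_\MMHphi}(x) \cup N_{G_\MMHphi}(y)$ are transmitting with the same probability $p \le c/N_x$, then any particular neighbor $y$ receives $x$'s message with constant probability in each such round. Crucially, the interference bound only needs the number of active transmitters in $x$'s constant-radius neighborhood, which is $O(N_x)$, so no global parameter $n$ enters; all union bounds in the sequel range only over the $O(\tilde{N_x})$ nodes in this neighborhood.

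Next I would handle the search phase. The algorithm adjusts transmission probability by geometric halving starting from $1$, so after $O(\log \tilde{N_x})$ values it passes through the ``good'' range $\Theta(1/N_x)$. To verify that each tested value is correctly accepted or rejected, the sender stays at each probability for $\Theta(\log(\tilde{N_x}/\epsack))$ rounds; a Chernoff bound together with a union bound over the $O(\log \tilde{N_x})$ values then controls the total failure probability of the search phase by $\epsack/2$. This yields the $O(\log(\tilde{N_x}) \log(\tilde{N_x}/\epsack))$ summand.

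Finally, for the exploit phase, once the probability is $\Theta(1/N_x)$, each round independently delivers $x$'s message to a fixed neighbor $y \in B_x$ with constant probability $q$; by a standard Chernoff/coupon-collector argument, $\Theta(N_x \log(\tilde{N_x}/\epsack))$ rounds suffice so that a fixed $y$ has received the message with probability $1 - \epsack/(2\tilde{N_x})$, and a union bound over the $O(\tilde{N_x})$ neighbors gives overall success probability $1-\epsack/2$ for this phase. Combining the two phases by a final union bound produces the claimed $1-\epsack$ guarantee and the stated runtime. The main obstacle, as in~\cite{halldorsson2012towards}, is establishing the constant per-round success probability purely in terms of local contention $N_x$ despite global SINR interference; this is where care is needed to isolate the ``near-field'' nodes (which are at most $O(N_x)$) from the ``far-field'' tail (which is $O(1)$ deterministically via the $p$-series argument for $\alpha>2$), so that no $n$-dependence leaks into the union bounds. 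Once that separation is in place, the rest of the proof is the straightforward modification of~\cite{halldorsson2012towards} with $n$ replaced by $\tilde{N_x}$ and $1/n^c$ replaced by $\epsack$.
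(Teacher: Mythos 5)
There is a genuine gap: your proposal analyzes an algorithm that is not Algorithm~\ref{alg1}. You describe a Decay-style protocol that halves the probability starting from $1$, locates the value $\Theta(1/N_x)$ in a ``search'' phase, and then ``stabilizes'' there for an ``exploit'' phase of $\Theta(N_x\log(\tilde N_x/\epsack))$ rounds. Algorithm~\ref{alg1} does the opposite and never stabilizes: each node starts at the \emph{low} probability $\frac{1}{4\tilde N_x}$, doubles its probability every $\delta\log(\tilde N_x/\epsack)$ slots, and uses the reception of $8\log(2\tilde N_x/\epsack)$ messages as a trigger to fall back (divide by $32$); this rise-and-fallback cycle can repeat $\Theta(N_x)$ times, and the node halts only when its \emph{cumulative} transmission probability $tp_y$ exceeds $\gamma'\log(\tilde N_x/\epsack)$. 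Because of this, your two-phase decomposition does not map onto the algorithm, and the $N_x\log(\tilde N_x/\epsack)$ term in the runtime does not arise from transmitting $\Theta(N_x\log(\tilde N_x/\epsack))$ times at a fixed good probability: it arises from bounding the number of \fb{} events by $O(N_x)$ (each node hears at most $O(N_x\log(\tilde N_x/\epsack))$ messages in total, and each \fb{} consumes $8\log(2\tilde N_x/\epsack)$ of them) and charging $O(\log(\tilde N_x/\epsack))$ slots per \fb{} via a potential argument on the exponents of $p_x$.

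The second missing ingredient is the invariant that makes the constant per-transmission success probability legitimate at all: Lemma~\ref{lem:9} shows that $\sum_{y\in B_x}p_y\le \frac12$ throughout the execution, proved by contradiction using the fact that if the mass ever approached $\frac12$ then every node in $B_x$ would receive enough messages in the preceding window to trigger a \fb{}, collapsing the mass back below $\frac14$. You assert ``whenever at most a constant fraction of the nodes are transmitting with the same probability $p\le c/N_x$'' without justifying why the aggregate probability mass stays bounded when different nodes hold different, time-varying probabilities; this is the heart of the proof and cannot be assumed. Finally, the delivery step in the paper is not a per-neighbor coupon-collector with a union bound over $\tilde N_x$ neighbors: Lemma~\ref{lowpmeanssuccess} shows that a single transmission during a \lp{} event reaches \emph{all} of $2B_x$ simultaneously, and the success probability comes from the node making $\Theta(\gamma'\log(\tilde N_x/\epsack))$ transmissions before the halting condition fires, each being a \lp{} transmission with constant probability. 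Your meta-strategy (replace $n$ by $\tilde N_x$ in the union bounds and $n^{-c}$ by $\epsack$) is indeed what the paper does, but the concrete argument you supply would not establish the theorem for the algorithm as written.
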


The  symbols $\gamma', \lambda$ used in Algorithm~\ref{alg1} are appropriate constants.

\begin{algorithm}[htb]
\caption{LocalBroadcast (For any node $y$)}   
\label{alg1}                           
\begin{algorithmic}[1]                    
     \State $tp_y \leftarrow 0$
      \State $p_y \leftarrow \frac{1}{4\tilde{N_x}}$
    \Loop
      \State $p_y \leftarrow \max\{\frac{1}{128 \tilde{N_x}}, \frac{p_y}{32}\}$ \label{resetline}
      \State $rc_y \leftarrow 0$
     \Loop
      \State $p_y \leftarrow \min\{\frac{1}{16}, 2 p_y\}$ \label{probincrease}
     \For{$j \leftarrow{} 1, 2, \ldots, \delta \log (\tilde{N_x}/\epsack)$} \label{innerloop}
       \State $s \leftarrow 1$ with probability $p_y$ \label{choosetransmit}
       \If{s = 1}
        \State transmit
       \EndIf
       \State $tp_y \leftarrow tp_y + p_y$
       \If{$tp_y > \gamma' \log(\tilde{N_x}/\epsack)$}  \label{haltcondition}
          \State halt;
       \EndIf
      \If{message received}
        \State $rc_y \leftarrow rc_y + 1$
        \If{$rc_y > 8\log (2\tilde{N_x}/\epsack)$}
          \State goto line \ref{resetline} \label{resettrigger}
        \EndIf
       \EndIf
      \EndFor
     \EndLoop
    \EndLoop
\end{algorithmic}
\label{alg1fig}
\end{algorithm}

The intuition behind the algorithm is as follows. The ``right'' probability for $x$ to transmit at is about $\frac{1}{\tilde{N_x}}$ (too high, and collisions are inevitable; too low, nothing happens). The algorithm starts from a low probability, continuously increasing it, but once it starts receiving messages from others, it uses that as an indication that the ``right''  transmission probability has been reached.

To prove Thm.~\ref{mainth1}, we will first need the following definition.
\begin{definition}
For any node $x$, the event {\lp} occurs at a time slot if the received power at $x$ from other nodes, $P_x \leq \frac{1}{(4 (\beta + 4) \R_{\MMHphi})^{\alpha}}$.
\end{definition}

\noindent The following technical Lemma follows from geometric arguments.
\begin{lemma}\label
If $x$ transmits and {\lp} occurs at $x$, all nodes in $2 B_x$ receive the message from $x$ (thus a successful local broadcast occurs for $x$).
\label{lowpmeanssuccess}
\end{lemma}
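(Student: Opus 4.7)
The plan is to fix an arbitrary node $y \in 2B_x$ and directly verify that $SINR_y(x) \ge \beta$ when $x$ transmits under the \lp{} condition. Writing out the SINR expression, the numerator is $P/d(x,y)^\alpha$ and the denominator is $I_y + N$, where $I_y := \sum_{w \in S \setminus \{x,y\}} P/d(w,y)^\alpha$ is the interference received at $y$. Since $y \in 2B_x$, the distance $d(x,y)$ is bounded by a constant multiple of $R_{1-\eps}$, so the numerator is at least a constant times $P/R_{1-\eps}^\alpha = \beta N/(1-\eps)^\alpha$. So the whole task reduces to upper-bounding $I_y + N$ by the numerator divided by $\beta$.

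The key technical step, which is where I would spend the effort, is to bound $I_y$ in terms of $P_x$ (which is in turn bounded by \lp). For each interferer $w \ne x$, I would split cases based on $d(w,x)$. For ``far'' interferers with $d(w,x) \ge c \cdot R_{1-\eps}$ (for a sufficiently large constant $c$ that depends on the meaning of $2B_x$), the triangle inequality gives $d(w,y) \ge d(w,x) - d(x,y) \ge d(w,x)/2$, so $w$'s contribution to $I_y$ is at most $2^\alpha$ times its contribution to $P_x$. Summed over all far interferers, this piece contributes at most $2^\alpha P_x$ to $I_y$.

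For ``near'' interferers with $d(w,x) < c \cdot R_{1-\eps}$, the triangle inequality instead gives $d(w,y) \ge d(w,x) - d(x,y) \ge \text{(small)}$, so individual contributions could be large. Here I would use that the minimum Euclidean distance between nodes is $1$, together with a geometric packing argument inside the disk of radius $cR_{1-\eps}$ around $x$, to see that there are only $O(R_{1-\eps}^2)$ potential near interferers. However, each near interferer also contributes $\Theta(P/(c R_{1-\eps})^\alpha)$ to $P_x$, so the total contribution of all near interferers to $I_y$ can be bounded by a constant multiple of $P_x$ as well (the constant being absorbed into the $\beta + 4$ slack in the definition of \lp). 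Together these give $I_y \le C \cdot P_x$ for a constant $C = C(\alpha,\beta)$.

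The final step is arithmetic: plug in the bound $P_x \le \frac{1}{(4(\beta+4)R_{1-\eps})^\alpha}$ from \lp, and check that $I_y + N \le \frac{P/d(x,y)^\alpha}{\beta}$, i.e., $SINR_y(x) \ge \beta$. The constants $4$ and $\beta+4$ in the definition of \lp{} have been tuned exactly so that this inequality holds with some slack; in particular, $\beta \cdot C \cdot P_x$ plus $\beta N$ remains below $P/d(x,y)^\alpha$. The main obstacle is getting the constants to line up in the near-interferer bound, which is essentially a bookkeeping exercise using the minimum-distance assumption and $\alpha > 2$. Since $y$ was arbitrary in $2B_x$, this yields the claimed successful local broadcast.
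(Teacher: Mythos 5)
Your overall strategy---reduce everything to bounding the interference $I_y$ at the receiver by a constant multiple of the received power $P_x$ at the sender, then do arithmetic against the \lp{} threshold---is the same as the paper's, and your treatment of the far interferers via the triangle inequality is exactly what the paper does. The gap is in the near-interferer case. Your claim that the total contribution of near interferers to $I_y$ ``can be bounded by a constant multiple of $P_x$'' does not follow from the packing argument you sketch: for a transmitter $w$ with $d(w,x)<c\,\R_{\MMHphi}$ that happens to sit at distance $1$ from $y$, its contribution to $I_y$ is $P$ while its contribution to $P_x$ is only about $P/d(w,x)^{\alpha}=\Omega\bigl(P/(c\,\R_{\MMHphi})^{\alpha}\bigr)$, so the ratio of the two is of order $\R_{\MMHphi}^{\alpha}$, which is not a constant. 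No amount of slack in the $4(\beta+4)$ factor absorbs a quantity that grows with the transmission range, so as written the inequality $I_y\le C\cdot P_x$ fails whenever even one near interferer exists close to $y$.

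The missing observation---which you circle around but do not land---is that \lp{} bounds the \emph{total} received power at $x$, hence bounds each \emph{individual} interferer's term $P/d(z,x)^{\alpha}$, and therefore forces \emph{every} transmitting node $z$ to satisfy $d(z,x)\ge 4(\beta+4)\R_{\MMHphi}$. In other words, under \lp{} the near case is vacuous: there are no transmitters within distance $c\,\R_{\MMHphi}$ of $x$ for any $c\le 4(\beta+4)$, and no packing or minimum-distance argument is needed. Once all interferers are that far away, the triangle inequality with $d(x,y)\le 2\R_{\MMHphi}$ gives $d(z,y)\ge\frac{3}{4}\,d(z,x)$ uniformly, hence $P_y\le(4/3)^{\alpha}P_x$, and the SINR computation at $y$ goes through (using additionally that $N=\MMHphi^{\alpha}P/(\beta\R_{\MMHphi}^{\alpha})$ with $1-\eps$ small enough to keep the noise term subordinate, a point your sketch also leaves implicit). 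With that single observation replacing your case split, the rest of your argument is sound.
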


\begin{proof}{[of Lemma \ref{lowpmeanssuccess}]}
Consider any $y \in 2 B_x$. By definition of $2 B_x$, $d(x, y) \leq 2 \R_{\MMHphi}$. Now consider any other transmitting node $z$. We will show that,

\begin{claim}
$d(z, x) \leq 3 (\beta + 2) d(z, y)$
\end{claim}
\begin{proof}
By the signal propagation model, $\frac1{d(z, x)^{\alpha}}$ is the power received at $x$ from $z$. Since {\lp} occurred,
\begin{align*}
\frac1{d(z, x)^{\alpha}} & \leq \frac{1}{((4 \beta + 4) \R_{\MMHphi})^{\alpha}}   \\
\Rightarrow\,\, d(z, x) & \geq 4 (\beta + 4) \R_{\MMHphi}
\end{align*}

By the triangle inequality, $d(z, y) \geq  d(z, x) - d(x, y)  > 
4 (\beta + 4) \R_{\MMHphi} - 2 \R_{\MMHphi} \geq 3 (\beta + 4) \R_{\MMHphi}$, proving the claim.
\end{proof}

This implies, by basic computation and summing over all transmitting $z$, that
\begin{equation}
P_y \leq \left(\frac43\right)^{\alpha} P_x
\label{relatepxpy}
\end{equation}
Now, the SINR at node $y$ (in relation to the message sent by $x$) is

\begin{align*}
  \frac{\frac1{2^{\alpha} \R_{\MMHphi}^{\alpha}}}{P_y + N} \overset{1}{\geq}
    \frac{\frac1{2^{\alpha} \R_{\MMHphi}^{\alpha}}}{\left(\frac43\right)^{\alpha}  P_x + N} 
    \overset{2}{\geq}
 \frac{\frac1{2^{\alpha} \R_{\MMHphi}^{\alpha}}}{\left(\frac43\right)^{\alpha}  
\frac{1}{((4 (\beta + 4)) \R_{\MMHphi})^{\alpha}}  + \frac{{\MMHphi}^{\alpha}}{\R_{\MMHphi}^{\alpha}\beta}} \overset{3}{\geq} \beta
\end{align*}

Explanation of numbered (in)equalities:

\begin{enumerate}
\item By Eqn.~\ref{relatepxpy}.
\item Plugging in the bound of $P_x$ (since {\lp} occurs at $x$) and noting that
$N = \frac1{\beta \R_1^{\alpha}} = \frac{{\MMHphi}^{\alpha}}{\beta \R_{\MMHphi}^{\alpha}}$, from the definitions of $\R_1$ and $\R_{\MMHphi}$.
\item Follows from simple computation once ${\MMHphi}$ is set to a small enough constant ($\MMHphi = \frac16$ suffices).
\end{enumerate}

Thus the SINR condition is fulfilled, and $y$ receives the message from $x$.
\end{proof}

\noindent We will also need the following definition:
\begin{definition}
A {\fb} event is said to occur for node $y$
if line \ref{resettrigger} is executed for $y$.
\end{definition}

We will refer to the transmission probability $p_y$ for a node $y$ at 
given time slots. This will always refer to the value of $p_y$ in line \ref{choosetransmit}. We first provide a few basic lemmas needed for the proof of Lemma~\ref{lem:9}, that bounds the transmission probability in any broadcast region at a given time.

\begin{lemma}
Consider any slot $t$ and any node $z$.
Assume that in that slot, for all broadcast regions $B_x$, $\sum_{y  \in B_x} \leq \frac{1}{2}$. Then, {\lp} occurs for $z$ with probability 
at least $\frac{1}{2} \left(\frac{1}{4}\right)^{\frac12 O\left(\frac1{\MMHphi^2}\right)}$.
\label{lphappens}
\end{lemma}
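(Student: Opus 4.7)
The plan is to decompose the interference at $z$ into a ``near'' contribution (from nodes close enough that even a single transmitter would violate \lp) and a ``far'' contribution (from nodes whose individual contribution is negligible but whose aggregate must still be controlled), and then argue both pieces are small simultaneously with the claimed probability. Note that \lp{} at $z$ requires $P_z \leq 1/(4(\beta+4)R_\phi)^{\alpha}$, so the relevant ``danger'' radius is $\Theta(R_\phi)$; since a single node $y$ with $d(y,z)\leq 4(\beta+4)R_\phi$ transmitting already contributes $1/d(y,z)^\alpha$ to $P_z$ and would (in the worst case) break the bound, the first subgoal is to ensure \emph{no} node in a ball $\Lambda$ of radius $c_0 R_\phi$ around $z$, for an appropriate constant $c_0 = \Theta(\beta)$, transmits. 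The second subgoal is to bound the random interference from outside $\Lambda$.

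For the near part, I would cover $\Lambda$ by a family $\mathcal{F}$ of broadcast regions $B_{x_1},\dots,B_{x_k}$ of radius $R_\phi$. Since $\Lambda$ has radius $c_0 R_\phi$ and $R_\phi = \phi R_1$, a standard area/packing argument yields $|\mathcal{F}| = O(c_0^2) = O(1)$ regions; however, since we also need the ``far'' annuli below to start sufficiently far out that their total contribution is tolerable, $c_0$ must grow like $\Theta(1/\phi)$, giving $|\mathcal{F}| = O(1/\phi^2)$. In each $B_{x_j}$ the hypothesis gives $\sum_{y\in B_{x_j}} p_y \leq 1/2$, so by independence and $\prod(1-p_y)\geq 1-\sum p_y \geq 1/2$, the probability that no node in $B_{x_j}$ transmits is at least $1/2$. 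By independence across transmissions, the probability that no node in $\Lambda$ transmits is at least $(1/2)^{O(1/\phi^2)} = (1/4)^{\frac12 O(1/\phi^2)}$, matching the stated exponent.

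For the far part, partition the complement of $\Lambda$ into annuli $A_i = \{y : i\cdot c_0 R_\phi \leq d(y,z) < (i+1)\cdot c_0 R_\phi\}$ for $i\geq 1$. Each annulus $A_i$ can be covered by $O(i)$ broadcast regions, and within each such region the expected number of transmitters is at most $1/2$. Hence the expected contribution to $P_z$ from $A_i$ is at most $O(i) \cdot (1/2) \cdot 1/(i c_0 R_\phi)^\alpha = O(1)/(i^{\alpha-1} (c_0 R_\phi)^\alpha)$. Since $\alpha > 2$, summing over $i\geq 1$ yields $\mathbb{E}[\text{far interference}] = O\bigl(1/(c_0 R_\phi)^\alpha\bigr)$, and by choosing the constant in $c_0$ large enough this expectation can be made at most a small fraction (say $1/(8(\beta+4)R_\phi)^\alpha$) of the \lp{} threshold; Markov's inequality then gives probability at least $1/2$ that the far interference is below $1/(4(\beta+4)R_\phi)^\alpha$.

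Finally, combining the two independent events (no transmitter inside $\Lambda$, and far interference below threshold) gives the bound $\tfrac12 \cdot (1/4)^{\frac12 O(1/\phi^2)}$ claimed. The main obstacle is choosing $c_0$ consistently: it must be large enough (in terms of $\beta$ and $\alpha$) for the far-tail Markov bound, yet appear in the exponent of the near-region probability, so the exponent of $O(1/\phi^2)$ must absorb this dependence on $\beta,\alpha$ inside the $O(\cdot)$; a careful tracking of these constants, plus the assumption $\phi \leq 1/6$ from the preceding remark (which keeps the needed geometric arguments clean), completes the proof.
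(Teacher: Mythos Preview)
Your approach is essentially the paper's: silence a near ball around $z$ (the paper takes exactly $T_z$, of radius $R_1$) by covering it with $O(1/\phi^2)$ broadcast regions, then bound the expected far interference via annuli and apply Markov for the extra factor $\tfrac12$. One caveat: your step from the per-region bound $\Pr[\text{no one in }B_{x_j}\text{ transmits}]\ge 1/2$ to the product $(1/2)^{O(1/\phi^2)}$ cannot literally appeal to ``independence'' since the covering regions overlap; the paper sidesteps this by using $(1-p_y)\ge(1/4)^{p_y}$ so that $\prod_{y\in\Lambda}(1-p_y)\ge(1/4)^{\sum_{y\in\Lambda}p_y}$ with $\sum_{y\in\Lambda}p_y\le \tfrac12\cdot O(1/\phi^2)$, and your version is also salvageable via $\prod_{y\in\Lambda}(1-p_y)\ge\prod_j\prod_{y\in B_{x_j}}(1-p_y)$ since each factor is at most $1$ and overcounting only helps.
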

\begin{proof}

Let $B = B_x \setminus \{x\}$.  We first prove that there is a substantial probability that no node in $B$ transmits. Assuming this probability is $\Pro_{N_x}$

\begin{align*}
&  \Pro_{N_x} \geq \prod\limits_{w \in B} (1 - p_w)  \geq \prod\limits_{w \in B_x} (1 - p_w)  \geq \left(\frac14\right)^{\sum_w p_w} \geq \left(\frac14\right)^{\frac12}
\end{align*}
The third inequality is from Fact 3.1 \cite{DBLP:conf/dialm/GoussevskaiaMW08}, and the last from the bound $\sum_w p_w \leq \frac12$.

Let $\Pro_T$ be the probability that no other node transmits in $T_x$. Since $\R_{\MMHphi} = \MMHphi \R_1$, $T_x$ can be covered by $O(\frac1{\MMHphi^2})$ broadcast regions (this can be shown using basic geometric arguments). 
Thus, 

\begin{equation}
\Pro_T \geq \Pro_{N_x}^{O\left(\frac1{\MMHphi^2}\right)} \geq \left(\frac14\right)^{\frac12 O\left(\frac1{\MMHphi^2}\right)}
\end{equation}

Since no other node in $T_x$ is transmitting, we only need to bound the signal received from outside $T_x$. 

To this end, we need the following Claim (which is a restatement of Lemma 4.1 of \cite{DBLP:conf/dialm/GoussevskaiaMW08} and can be proven by standard techniques):

\begin{claim}
Assume that for all broadcast regions $B_x$, \\$\sum_{y  \in B_x} p_y \leq \frac12$.
Consider a node $x$. Then the expected power received at node $x$ from nodes not in $T_x$ can be upper bounded by 

$$\frac18 \frac{\alpha -1}{\alpha -2} 3^3 2^{\alpha -2} \frac{\MMHphi^2}{\R_{\MMHphi}^{\alpha}} \leq \frac{1}{2 (4 (\beta + 4) \R_{\MMHphi})^{\alpha}}$$ for appropriately small $\MMHphi$.\end{claim}

Then by Markov's inequality, with probability at least $\frac12$, the power
received from nodes outside of $T_x$ is at most $\frac{1}{ (4 (\beta + 4) \R_{\MMHphi})^{\alpha}}$.

Thus, with probability $\frac12 \Pro_T$, {\lp} occurs at $x$, proving the Lemma.
\end{proof}

\begin{lemma}\label{lem:9}
Consider any node $x$. Then during any time slot $t \leq 10 N_x^2$, 
\begin{equation}
\sum_{y \in B_x} p_y \leq \frac{1}2
\label{boundedprob}
\end{equation}
with probability at least  $1/2$.
\end{lemma}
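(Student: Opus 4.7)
The plan is to argue by tracking the potential $\Phi_t := \sum_{y \in B_x} p_y$ over time, combined with a charging argument based on the halt condition of the algorithm.

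First, I would establish the base case. At $t = 0$, every $p_y = 1/(4\tilde{N_x})$, so $\Phi_0 \leq N_x/(4\tilde{N_x}) \leq 1/4$, since $N_x \leq \tilde{N_x}$. Next, I would observe that in Algorithm~\ref{alg1}, each node $y$ accumulates its transmission probability into $tp_y$, and halts as soon as $tp_y > \gamma'\log(\tilde{N_x}/\epsack)$. Consequently, summing over the (at most $N_x$) nodes in $B_x$, the total integrated probability $\sum_{s} \Phi_s$ can never exceed $N_x \cdot \gamma'\log(\tilde{N_x}/\epsack)$ during the entire execution. This already gives a rough averaging-type bound: the number of slots for which $\Phi_s > 1/2$ is at most $2 N_x \gamma'\log(\tilde{N_x}/\epsack)$.

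To upgrade this to a per-slot guarantee of probability $1/2$, the key step would be the self-regulation of the algorithm: whenever $\Phi_t$ is close to $1/2$, Lemma~\ref{applphappens} (the {\lp}-event lemma) together with the SINR geometry implies that any non-transmitting node $z \in B_x$ receives a message from another node in $B_x$ in that slot with constant probability. Over the $\delta\log(\tilde{N_x}/\epsack)$ rounds of the inner loop, $rc_z$ will then, with probability $\geq 1/2$, exceed the threshold $8\log(2\tilde{N_x}/\epsack)$, forcing $z$ into a \fb\ event which divides $p_z$ by $32$. I would formalize this as a drift/supermartingale argument: conditioned on $\Phi_t \in [1/4, 1/2]$, the expected change $\mathbb{E}[\Phi_{t+1} - \Phi_t \mid \mathcal{F}_t]$ is non-positive, because the number of doublings in the next window is outweighed by the expected decrease from forced fallbacks.

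Combining the global budget bound (which forbids long runs of large $\Phi$) with the local drift (which prevents $\Phi$ from significantly exceeding $1/2$), one applies Markov's inequality in the form $\Pro[\Phi_t > 1/2] \leq 2\,\mathbb{E}[\Phi_t]$ and shows $\mathbb{E}[\Phi_t] \leq 1/4$ for every $t \leq 10 N_x^2$. The bound $10 N_x^2$ enters because the halt mechanism saturates only after polynomially many slots, so before that horizon the potential cannot be pinned above $1/2$ without violating the integrated-probability budget.

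The main obstacle will be the drift step: rigorously coupling the probabilistic doublings (line~\ref{probincrease}) with the probabilistic fallbacks triggered via line~\ref{resettrigger}, while accounting for interference from nodes outside $T_x$. Handling the boundary case where $\Phi_t$ is somewhat above $1/2$---so that Lemma~\ref{applphappens}'s hypothesis narrowly fails---requires a slightly stronger geometric argument showing that even then, receptions in $B_x$ occur often enough for fallbacks to dominate doublings. Once that drift is in place, the averaging argument and Markov's inequality close the proof.
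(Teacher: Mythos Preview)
Your proposal contains the right mechanism---namely that when $\sum_{y\in B_x}p_y$ sits in the band $[1/4,1/2]$ every node in $B_x$ accumulates enough receptions over a window of $\delta\log(\tilde N_x/\epsack)$ slots to trigger a {\fb}---but the way you package it (per-slot drift plus Markov on the expectation) does not close. Two concrete gaps: first, the fallback is not a per-slot event but a per-window one, so $\Phi_t$ is not a supermartingale at the slot level; the doubling in line~\ref{probincrease} is also synchronized to the window boundary, so a one-step drift inequality $\mathbb{E}[\Phi_{t+1}-\Phi_t\mid\mathcal F_t]\le 0$ is not what the algorithm gives you. Second, even granting a negative drift while $\Phi_t\in[1/4,1/2]$, this does \emph{not} yield $\mathbb{E}[\Phi_t]\le 1/4$ for all $t$: a process can have negative drift inside a band and still have its expectation escape the band if the drift is positive (or uncontrolled) above $1/2$---precisely the overshoot case you flag as an obstacle. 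Your budget bound $\sum_s\Phi_s\le N_x\gamma'\log(\tilde N_x/\epsack)$ is correct but only constrains the time-average, and there is no clean way to feed it into the per-slot Markov step.

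The paper's argument avoids all of this by reasoning about the \emph{first} violation time. If $t$ is the first slot with $\Phi_t>1/2$, then by the doubling rate one gets, deterministically, the two-sided bound $1/4\le\Phi_s\le 1/2$ for every $s$ in the preceding window $\mathcal T$ of length $\delta\log(\tilde N_x/\epsack)$. This sandwich is exactly the hypothesis needed for the reception/fallback argument (your Claim~\ref{fallbackoccurs}-style step), which then shows that with high probability every node in $B_x$ falls back during $\mathcal T$, forcing $\Phi$ below $1/4$ at some slot in $\mathcal T$---a contradiction. A union bound over $t\le 10N_x^2$ finishes. The first-violation framing is what buys you the deterministic lower bound $\Phi_s\ge 1/4$ in the window; your drift formulation never secures this, which is why the overshoot case stays open.
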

\begin{proof}
For contradiction, we will upper bound the probability that Eqn.~\ref{boundedprob} is violated for the first time at any given time $t$, after which we will union bound over all $t \leq 10 N_x^2 \leq 10 \tilde{N_x}^2$.

Let $\calT$ be the interval (time period) $\{t - \delta \log (\tilde{N_x}/\epsack) + 1 \ldots t -1\}$. Then we claim,

\begin{claim}
In each time slot in the period $\calT$,
\begin{equation}
  \frac{1}2 \geq \sum_{y \in B_x} p_y \geq \frac{1}4 \label{probbounds}
\end{equation}
\end{claim}
\begin{proof}
The first inequality is by the assumption that $t$ is the first slot when Eqn.~\ref{boundedprob} is violated. The second is because probabilities (at most) double once every $\delta \log (\tilde{N_x}/\epsack)$ slots (by the description of the algorithm).
\end{proof}

We now show that Eqn.~\ref{probbounds} is not possible. To that end, we show that in the $\delta \log (\tilde{N_x}/\epsack)$ interval preceding $t$, a {\fb} will occur with high probability:

\begin{claim}
With probability $1 - \frac{1}{N_x^8}$, each node $z \in B_x$ will {\fb} once in the period $\calT$.
\label{fallbackoccurs}
\end{claim}

\begin{proof}
Fix any $z \in B_x$.
By the algorithm 
\begin{equation}
p_z \leq \frac{1}{16} \label{yupperbound}  
\end{equation}

Thus, at any time slot,
\begin{equation}
\Pro(z  \text{ does not transmit}) \geq \frac{15}{16}
\label{ydoesnttransmit}
\end{equation}

Now, combining Eqn.~\ref{yupperbound}  and Eqn.~\ref{probbounds}, and defining
$B = B_x \setminus \{z\}$,
\begin{equation}
  \sum_{y \in B} p_y \geq \frac3{16}
  \label{highprobothers}
\end{equation}

For $y \in B_x$ define $\succ_y$ to be the event that $y$ transmits and {\lp} occurs for $y$. By Lemma \ref{lowpmeanssuccess}, $\succ_y$ implies that $z$ will receive the message from $y$.
Thus, the probability of $z$ receiving a message from some node in $B$ in a given round is at least $\frac{15}{16}\Pro(\bigcup\limits_{y\in B}\succ_y)$.

We claim that for any $y \neq w$ (both in $B)$, the events $\succ_y$ and $\succ_w$ are disjoint. This is implicit in Lemma \ref{lowpmeanssuccess}, since $\succ_y$ means that $w$ cannot be transmitting and vice-versa. Thus, the probability of $z$ receiving a message from some node in $B$ is at least:

\begin{align*}
\lefteqn{\frac{15}{16} \Pro(\bigcup\limits_{y\in B}\succ_y)  = \frac{15}{16} \sum\limits_{y\in B} \Pro(\succ_y)} \\
& \geq \frac{15}{16} \sum\limits_{y\in B} p_y \frac{1}2 \left(\frac{1}4\right)^{\frac{1}2 O\left(\frac{1}{\MMHphi^2}\right)} 
\geq \frac{15}{32} \left(\frac{1}4\right)^{\frac{1}2 O\left(\frac{1}{\MMHphi^2}\right)} \frac3{16}\ , 
\end{align*}
where we use Lemma \ref{lphappens} for the first inequality and Eqn.~\ref{highprobothers} for the last.

Setting $\delta \geq \frac{10}{\frac{15}{32} \left(\frac{1}4\right)^{\frac{1}2 O(\frac{1}{\MMHphi^2})} \frac3{16}}$ and using the Chernoff bound, we can show that $z$ will receive $> 8\log (2\tilde{N_x}/\epsack)$ messages in $\calT$ with probability $1 - \frac{1}{\tilde{N_x}}$, thus triggering the {\fb}.
\end{proof}

Now we show that the above claim implies that Eqn.~\ref{probbounds} is not possible.
\begin{claim}
There exists a time slot in $\calT$ such that \\$\sum_{y \in B_x} p_y < \frac{1}4$.
\end{claim}
\begin{proof}
For any $y \in B_x$, let $p_y^1$ be the value of $p_y$ in the first slot of $\calT$. Let $p_y^f$ be the value of $p_y$ in the slot when {\fb} happened for $y$. Since probabilities can at most double during $\calT$,
\begin{equation}
\sum_{y \in B_x} p_y^f \leq 2 \sum_{y \in B_x} p_y^1 \leq 1\ ,
\label{pfubound}
\end{equation}
the last inequality using the fact that $\sum_{y \in B_x} p_y^1 \leq \frac{1}2$ (Eqn.~\ref{probbounds}).

Now by lines \ref{resetline} and \ref{probincrease} of the algorithm,
in the slot after {\fb}, $p_y = \max\{\frac{1}{128\tilde{N_x}},
\frac{p_y^f}{32}\} \leq \frac{1}{128\tilde{N_x}} + \frac{p_y^f}{32}$. Since
probabilities at most double during $\calT$, the value of $p_y$ at the
final slot of $\calT$ is at most $\frac{1}{64\tilde{N_x}} +
\frac{p_y^f}{16}$. Summing over all $y$, during the final slot of
$\calT$,

\begin{align*}
\sum_{y \in B_x} p_y  \leq \frac{N_x}{32\tilde{N_x}} + \sum_{y \in B_x} \frac{p_y^f}{8}
\leq \frac{1}{32} + \frac{1}8 < \frac{1}4
\end{align*}
contradicting Eqn.~\ref{probbounds}. We used Eqn.~\ref{pfubound} in the second
inequality.
\end{proof}

The proof of the Claim is completed by union bounding over time slots 
$t \leq 10 N_x^2 \leq 10 \tilde{N_x}^2$.
\end{proof}

\noindent Now we prove that nodes stop running the algorithm by a certain time.
\begin{lemma}
Each node $x$ stops executing within $O(N_x \log (\tilde{N_x}) + \log^2 (\tilde{N_x}) + \log(\tilde{N_x}/\epsapprog))$ slots, with probability at least $1-\epsack/2$.
\label{perf1}
\end{lemma}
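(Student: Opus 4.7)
The halt condition in line~\ref{haltcondition} triggers once $tp_y$ exceeds $\gamma'\log(\tilde{N_x}/\epsack)$, and since $tp_y$ grows deterministically by $p_y$ in every slot, the lemma reduces to showing that the random sum $\sum_t p_y(t)$ crosses this threshold within the claimed time with probability at least $1-\epsack/2$. My plan is to decompose the execution into \emph{fallback cycles}, one per invocation of line~\ref{resetline}. Within each cycle, line~\ref{probincrease} doubles $p_y$ at every middle-loop iteration until it either saturates at $1/16$ or a fallback fires, so a cycle has at most $O(\log\tilde{N_x})$ middle-loop iterations, each consisting of an inner loop of $\delta\log(\tilde{N_x}/\epsack)$ slots; by geometric summation, the $tp_y$-gain of a cycle is, up to constant factors, the final $p_y$ of that cycle times $\log(\tilde{N_x}/\epsack)$.

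The core technical step is then bounding how many cycles the algorithm goes through. I would condition on the high-probability event from Lemma~\ref{lem:9} that $\sum_{z\in B_y} p_z \leq 1/2$ throughout the execution, incurring a failure probability of at most $\epsack/4$ by a union bound over the at most $O(\tilde{N_x}^2)$ slots to which Lemma~\ref{lem:9} applies. Under this event the expected number of messages $y$ receives per slot is $O(1)$; because each fallback demands $8\log(2\tilde{N_x}/\epsack)$ receptions inside a single inner loop, a Chernoff tail argument forces any cycle that ends in a fallback to have final $p_y = \Omega(1/N_x)$ with probability exponentially close to one (otherwise, $y$'s neighbors' probabilities would collectively be too small to produce so many receptions on time). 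Hence each fallback cycle contributes $\Omega(\log(\tilde{N_x}/\epsack)/N_x)$ to $tp_y$, so after at most $O(N_x)$ such cycles the halt threshold is crossed.

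Assembling the time contributions yields the stated bound: the initial ramp-up of $O(\log\tilde{N_x})$ doublings with no fallback accounts for the $O(\log^2\tilde{N_x})$ term (with the $\log(1/\epsack)$ factor of an inner loop absorbed into the additive $O(\log(\tilde{N_x}/\epsack))$ term); the $O(N_x)$ fallback cycles, each of at most $O(\log\tilde{N_x})$ inner loops, give the $O(N_x\log\tilde{N_x})$ term; and once $p_y$ reaches $1/16$, a further $O(\log(\tilde{N_x}/\epsack))$ slots suffice to drive $tp_y$ past threshold, yielding the last additive term. The principal obstacle is the Chernoff step of the second paragraph: the reception-count process at $y$ is a sum of dependent Bernoullis whose parameters co-evolve with $y$'s own $p_y$ and with those of its neighbors under the very same algorithm, so the concentration has to be argued carefully by leveraging the global invariant of Lemma~\ref{lem:9}, combined with an amortized treatment of the rare moments when a fallback fires while several of $y$'s neighbors happen to be at anomalously high transmission probabilities.
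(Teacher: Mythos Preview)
Your decomposition into fallback cycles is the right scaffolding, and the paper argues similarly by tracking the fallback times $t_x(1),\dots,t_x(k)$. But two of your central steps do not go through.

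First, your bound on the number $k$ of fallback cycles rests on the claim that ``any cycle that ends in a fallback has final $p_y=\Omega(1/N_x)$''. There is no mechanism linking $y$'s own probability to the number of messages $y$ \emph{receives}: a fallback is triggered by $>8\log(2\tilde{N_x}/\epsack)$ receptions inside one inner loop, and those receptions depend on the neighbours' transmission probabilities, not on $p_y$. Lemma~\ref{lem:9} only upper-bounds $\sum_{z\in B_x}p_z$, so it cannot force $p_y$ to be large. The paper bounds $k$ by an entirely different counting argument: by Claim~\ref{transmissiontimes} every node in $T_x$ transmits at most $O(\log(\tilde{N_x}/\epsack))$ times in total, hence $x$ can receive at most $O(N_x\log(\tilde{N_x}/\epsack))$ messages over the whole execution; since each fallback consumes $8\log(2\tilde{N_x}/\epsack)$ receptions, $k=O(N_x)$ follows immediately (Claim~\ref{kbounded}).

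Second, your time accounting drops a factor. Each inner loop is $\delta\log(\tilde{N_x}/\epsack)$ slots, so ``$O(N_x)$ cycles, each of at most $O(\log\tilde{N_x})$ inner loops'' yields $O\bigl(N_x\log(\tilde{N_x})\log(\tilde{N_x}/\epsack)\bigr)$ slots, not $O(N_x\log\tilde{N_x})$. The paper avoids this extra $\log\tilde{N_x}$ factor via a telescoping trick (Claim~\ref{tintermsofb}): writing $p_x=2^{-b_i}$ at the $i$-th fallback, one has $t_x(i{+}1)-t_x(i)\le (b_i-b_{i+1}+5)\delta\log(\tilde{N_x}/\epsack)$, and summing telescopes the $b_i-b_{i+1}$ terms to $b_1-b_k=O(\log\tilde{N_x})$, leaving $O\bigl((\log\tilde{N_x}+k)\log(\tilde{N_x}/\epsack)\bigr)$. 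Your per-cycle worst-case bound cannot recover this; the telescoping is essential because probability lost at a fallback is only a factor $32$ below the pre-fallback level, so consecutive cycles are coupled and cannot each independently cost $\Theta(\log\tilde{N_x})$ doublings.
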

\begin{proof}
Fix $x$. We derive four claims that together imply the lemma.

\begin{claim}
The number of slots for which $p_x \geq \frac{1}{32}$ is $O(\log (\tilde{N_x}/\epsack))$.
\label{pxusuallysmall}
\end{claim}
\begin{proof}
This is ensured by the halting condition in line~\ref{haltcondition}.
\end{proof}

Assume that $x$ experienced
$k$ {\fb s}.
Consider the times $t_x(1), t_x(2) \ldots t_x(k)$ when
a {\fb}  happened for $x$. Now,
\begin{claim}
$t_x(1) = O(\log (\tilde{N_x})\log(\tilde{N_x}/\epsack))$. Also, there are $O(\log(\tilde{N_x})\log(\tilde{N_x}/\epsack) )$ slots after $t_x(k)$.
\end{claim}
\begin{proof}
The two claims are very similar. Let us prove the latter one.
Since {\fb} does not occur after $t_x(k)$, the probability $p_x$ of each node doubles every $\delta \log (\tilde{N_x}/\epsack)$ slots. Since the minimum probability is $\Omega(\tilde{N_x})$, by $O(\log (\tilde{N_x})\log(\tilde{N_x}/\epsack))$ slots, the probability will reach $\frac{1}{32}$. Once this happens, the algorithm terminates in $O(\log (\tilde{N_x}/\epsack))$ additional slots, by Claim \ref{pxusuallysmall}.
\end{proof}

Given the above claim it suffices 
to bound $t_x(k) - t_x(1)$. By Claim \ref{pxusuallysmall} we can also restrict ourselves to slots for which $p_x < \frac{1}{32}$. For these slots,
line $\ref{probincrease}$ does not need the $\min$ clause, i.e., 
$p_y \leftarrow 2 p_y$ each time line $\ref{probincrease}$ is executed. 

Define $b_i$ such that $p_x = \frac{1}{2^{b_i}}$ at time $t_x(i)$. 
Note that if $\tilde{N_x}$ is a power of $2$, $b_i$ is always an integer (the case of other values of $\tilde{N_x}$ can be easily managed).

We can characterize the running time between two {\fb s} as follows.
\begin{claim}
$t_x(i+1) - t_x(i) \leq (b_{i} - b_{i+1} + 5) \delta \log (\tilde{N_x}/\epsack)$, for all $i = 1, 2 \ldots k-1$.
\label{tintermsofb}
\end{claim}
\begin{proof}
During slots in $[t_x(i), t_x(i+1))$, $p_x$ doubles every $\delta \log (\tilde{N_x}/\epsack)$ slots (by the description of the algorithm and the fact that 
$p_x < \frac{1}{32}$). 
Let $b$ be  such that $p_x = \frac{1}{2^{b}}$ at time $t_x(i+1) - 1$. Then, 
\begin{align*}
& \frac{1}{  2^{b}} = \frac{2^{\left\lfloor{\frac{t_x(i+1) - t_x(i)}{\delta \log (\tilde{N_x}/\epsack)}}\right\rfloor}}{2^{b_i}} \\
\Rightarrow\,\,  & b_i - b = \left\lfloor{\frac{t_x(i+1) - t_x(i)}{\delta \log (\tilde{N_x}/\epsack)}}\right\rfloor
\end{align*}
By lines \ref{probincrease} and \ref{resetline} of the algorithm, $b_{i+1} \leq b + 4$, and thus, 
\begin{align*}
b_i - b_{i+1} + 4 & \geq  \left\lfloor{\frac{t_x(i+1) - t_x(i)}{\delta \log (\tilde{N_x}/\epsack)}}\right\rfloor\\ \Rightarrow
b_i - b_{i+1} + 5 & \geq  \frac{t_x(i+1) - t_x(i)}{\delta \log (\tilde{N_x}/\epsack)} \ ,
\end{align*}
completing the proof of the Lemma.
\end{proof}

Thus, the running time $t_x(k) - t_x(1)$ can be bounded by:
\begin{align}
\lefteqn{t_x(k) - t_x(1)}  \nonumber \\ 
& = (t_x(k) - t_x(k-1)) + (t_x(k-1) - t_x(k-2)) \nonumber \\ 
& \qquad \ldots  + (t_x(2) - t_x(1)) \nonumber \\
& \leq  ((b_{k-1} - b_{k} + 5) + (b_{k-2} - b_{k-1} + 5) \nonumber \\
  & \qquad \ldots + (b_{1} - b_{2} + 5)) \delta \log (\tilde{N_x}/\epsack) \nonumber \\
& = (b_1 - b_k + 5 k ) \delta \log (\tilde{N_x}/\epsack) \nonumber \\
& = O(\log(\tilde{N_x})\log(\tilde{N_x}/\epsack) +  k \log (\tilde{N_x}/\epsack)) \ ,\label{mainrunbound1}
\end{align}
where we use Claim \ref{tintermsofb}, the non-negativity of $b_k$ and the fact that $b_i = O(\log \tilde{N_x})$ (as $p_x = \Omega(\frac{1}{\tilde{N_x}})$).

To complete the proof of the Lemma, we need a bound on $k$:

\begin{claim}
With probability $1-\epsack/2$, each node transmits at least $4 \gamma' \log (\tilde{N_x}/\epsack)$ times, and at most $16 \gamma' \log (\tilde{N_x}/\epsack)$ times. 
\label{transmissiontimes}
 \end{claim}
 \begin{proof}
By the description of the algorithm, when the node stops, its total transmission probability is $\gamma' \log (\tilde{N_x}/\epsack)$.
By the standard Chernoff bound, the actual number of transmissions is very close to this number, with probability at least $1-\epsack/(2N_x^8)$, which is at least $1-\epsack/2$.
\end{proof}

\begin{claim}
$k = O(N_x)$ with probability at least $1-\epsack/2$. \label{kbounded}
\end{claim}
\begin{proof}
The total number of possible transmissions that $x$ could possibly hear is upper bounded by $O(N_x \log (\tilde{N_x}/\epsack))$, with probability at least $1-\epsack/(\poly N_x)$ (due to a Chernoff bound). (However, we only need probability at least $1-\epsack/4$ for our purposes.) 
This is because each node transmits $O(\log (\tilde{N_x}/\epsack))$ times, with probability at least $1-\epsack/4)$  (by Claim \ref{transmissiontimes}) and a node can only hear messages from nodes in $T_x$ (by the definition of $T_x$).
But nodes only {\fb} once for every $8\log (\tilde{N_x}/\epsack)$ messages received (by the condition immediately preceding line \ref{resettrigger}). The claim is proven with probability guarantee $(1-\epsack/4)(1-\epsack/4)\geq (1-\epsack/2)$. 
\end{proof}

Applying the above claim to Eqn.~\ref{mainrunbound1}, 

\begin{eqnarray*}
t_x(k) - t_x(1) 
&\leq&
O(\log(\tilde{N_x})\log (\tilde{N_x}/\epsack) + k \log (\tilde{N_x}/\epsack)) 
\\
&=& O(N_x \log (\tilde{N_x}/\epsack) +\log (\tilde{N_x})\log (\tilde{N_x}/\epsack)),
\end{eqnarray*}

completing the argument.
\end{proof}

The final piece of the puzzle is to show that for each node, a successful local broadcast happens with probability at least $1-\epsack/2$ during one of its $\Theta(\gamma' \log (\tilde{N_x}/\epsack))$ transmissions. 

\begin{lemma}
By the time a node halts, it has successfully locally broadcast a message, with probability at least $1-\epsack/2$.
\label{lem:succ-lb}
\end{lemma}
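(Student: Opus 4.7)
The plan is to show that each individual transmission of $x$ is, with at least some constant probability, a \emph{successful} local broadcast (i.e.\ reaches every node in $2B_x$), and then to exploit the fact that by Claim~\ref{transmissiontimes} node $x$ makes $\Omega(\log(\tilde{N_x}/\epsack))$ independent transmission attempts before halting. Combining these two facts with an appropriate choice of the constant $\gamma'$ yields the claimed $1-\epsack/2$ probability.

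The first step is to identify the ``good event'' at a transmission slot $t$. By Lemma~\ref{lowpmeanssuccess}, if $x$ transmits at slot $t$ and the event \lp\ occurs at $x$ at that slot, then all nodes in $2B_x$ receive $x$'s message, i.e.\ the local broadcast succeeds. So it suffices to lower bound the probability that \lp\ occurs in a given transmission slot of $x$. For this we chain Lemma~\ref{lem:9} and Lemma~\ref{lphappens}: by Lemma~\ref{lem:9}, at any slot $t\leq 10N_x^2$ the congestion bound $\sum_{y\in B_z} p_y\leq 1/2$ holds in every broadcast region with probability at least $1/2$; conditional on this, Lemma~\ref{lphappens} gives that \lp\ occurs at $x$ with probability at least some constant $c_0=\tfrac{1}{2}(1/4)^{O(1/\MMHphi^2)}$. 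Hence, conditional on the history of the algorithm up to slot $t$, the probability that slot $t$ is a successful transmission for $x$ is at least some constant $c:=c_0/2>0$ (where the factor $1/2$ absorbs the congestion event of Lemma~\ref{lem:9}, which depends only on the state at the \emph{start} of slot $t$).

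The second step is to accumulate this over transmissions. By Lemma~\ref{perf1}, $x$ halts within $O(N_x\log\tilde{N_x}+\log^2\tilde{N_x}+\log(\tilde{N_x}/\epsack))=O(N_x^2)$ slots, so all of its transmissions fall within the window $t\leq 10N_x^2$ where the Lemma~\ref{lem:9} bound is valid. By Claim~\ref{transmissiontimes}, with probability at least $1-\epsack/4$, node $x$ makes at least $T:=4\gamma'\log(\tilde{N_x}/\epsack)$ transmission attempts before halting. Let $X_i\in\{0,1\}$ indicate that the $i$-th transmission of $x$ is a successful local broadcast. The conditional analysis above gives $\Pr[X_i=0\mid X_1,\dots,X_{i-1}]\leq 1-c$. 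Iterating,
\[
\Pr\!\left[\textstyle\sum_{i=1}^{T} X_i=0\right]\leq (1-c)^{T}\leq \exp(-cT)=\left(\tfrac{\tilde{N_x}}{\epsack}\right)^{-4c\gamma'}\leq \tfrac{\epsack}{4},
\]
provided we choose the constant $\gamma'$ large enough (depending only on $c$ and hence on $\alpha,\beta,\MMHphi$). Union bounding with the $\epsack/4$ failure probability from Claim~\ref{transmissiontimes} yields the total failure probability at most $\epsack/2$.

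The main obstacle, and the step that requires care, is the conditioning argument in the second paragraph: the probabilities $p_y$ at slot $t$ are themselves functions of the algorithm's past random choices, so the events $\{X_i=1\}$ are not literally independent. The way around this is precisely the observation that Lemmas~\ref{lem:9} and~\ref{lphappens} both give bounds that hold \emph{uniformly} over the state at the start of slot $t$ (in particular they rely only on the $p_y$ values, not on any particular realization), which lets us turn the $(1-c)^T$ product bound into a rigorous statement via the tower property of conditional expectation rather than requiring independence.
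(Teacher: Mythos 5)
Your proposal is correct and follows essentially the same route as the paper's proof: a constant per-transmission success probability obtained by chaining Lemma~\ref{lem:9}, Lemma~\ref{lphappens} and Lemma~\ref{lowpmeanssuccess}, amplified over the $\Theta(\gamma'\log(\tilde{N_x}/\epsack))$ transmissions guaranteed before halting, with $\gamma'$ chosen large enough. The only (cosmetic) difference is that you derive the failure bound via a sequential conditional product $(1-c)^T$ and the tower property, whereas the paper bounds the expected number of successes and invokes a Chernoff bound; your handling of the dependence between slots is if anything slightly more explicit than the paper's.
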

\begin{proof}
The expected number of transmission made by a node is $\gamma' \log (\tilde{N_x}/\epsack)$ (by the algorithm). By Lemma \ref{lphappens} (which can be applied, as Lemma \ref{lphappens}'s prerequisites are met each time with probability $1/2$ due to Lemma~\ref{lem:9}) and Lemma \ref{lowpmeanssuccess}, during each such transmission, local broadcast succeeds with probability $\frac{1}{2} \left(\frac{1}4\right)^{\frac{1}2 O(\frac{1}{\MMHphi^2})}$, at least. Thus, the expected number of successful local broadcasts is $(1-1/2)\cdot\frac{1}2 \left(\frac{1}4\right)^{\frac{1}{2} O(\frac{1}{\MMHphi^2})} \gamma' \log (\tilde{N_x}/\epsack)$. Setting $\gamma'$ to a high enough constant, and using Chernoff bounds, with probability at least $1-\epsack/2$, a successful local broadcast happens at least once.
\end{proof}

Lemmas \ref{perf1} and \ref{lem:succ-lb} together imply Thm.~\ref{mainth1} with probability guarantee $(1-\epsack/2)(1-\epsack/2)\geq(1-\epsack)$.

\section{Useful Lemmas and Proofs from~\cite{daumfull} Adapted to our Needs}\label{app:daum}
We restate two lemmas and proofs from~\cite{halldorsson2012towards} adapted to our needs for completeness. This is done only for the convenience of the reader with the goal of making it simpler to verify our claim. Compared to the adapted proofs in the main-body of the paper, the proofs presented here have only minor modifications and are adapted to our notation.

\begin{lemma}[Version of Lemma 4.4 of~\cite{daumfull}]\label{applem:daum5}
Given node $i\in N_\Gar(S_1)$ and assume Properties 1 and 2 of Definition~\ref{def:suc} of a successful epoch at point $i$ are satisfied. Then for any $\phi\in\{1,\dots,\Phi\}$, the minimum distance between any two nodes in $S_{\phi,i}$ is at least $d_\phi\geq 2^{\phi-1}\cdot d_{\min}$.
\end{lemma}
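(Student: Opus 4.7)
The plan is to proceed by induction on $\phi$, leveraging Lemma~\ref{applem:daum4.3} (the graph $H_p^\mu[S]$ contains all edges of length at most $\min\{2d_{\min}(S), \R_{1-2\eps}\}$) together with the hypothesis that Properties 1 and 2 of Definition~\ref{def:suc} hold at point $i$. For $\phi=1$ the bound is immediate since by assumption the minimum inter-node distance in the network, and hence in $S_{1,i}\subseteq S_1$, is at least $d_{\min}=2^0\cdot d_{\min}$.

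For the inductive step, assume the minimum distance between any two nodes in $S_{\phi,i}$ is at least $d_\phi\geq 2^{\phi-1}\cdot d_{\min}$. The goal is to show that for any pair $u,v\in S_{\phi+1,i}$ we have $d(u,v)\geq 2d_\phi$ (and hence at least $2^\phi d_{\min}$), provided $2d_\phi\leq \R_{1-2\eps}$; once the bound saturates at $\R_{1-2\eps}$ the claim is vacuously satisfied for the remaining phases. The idea is that $S_{\phi+1}$ is independent in $\tilde{\tilde{H}}_p^\mu[S_\phi]$ by definition of the (modified) MIS algorithm. Since Property 1 holds at $i$, the subgraph $\tilde{\tilde{H}}_p^\mu[S_\phi]|_{S_{\phi,i}}$ is a $(1-\gamma)$-approximation of $H_p^\mu[S_\phi]|_{S_{\phi,i}}$, so every edge of $H_p^\mu[S_\phi]$ with both endpoints in $S_{\phi,i}$ is also present in $\tilde{\tilde{H}}_p^\mu[S_\phi]$. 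Applying Lemma~\ref{applem:daum4.3} to $S_\phi$ (whose minimum pairwise distance is at least $d_\phi$ by induction), every pair $u,v\in S_\phi$ with $d(u,v)\leq \min\{2d_\phi,\R_{1-2\eps}\}$ is connected by an edge in $H_p^\mu[S_\phi]$, and therefore (locally, via Property 1) in $\tilde{\tilde{H}}_p^\mu[S_\phi]$. Independence of $S_{\phi+1}$ in $\tilde{\tilde{H}}_p^\mu[S_\phi]$ then forbids any two of its elements from lying within distance $\min\{2d_\phi,\R_{1-2\eps}\}$, yielding $d_{\phi+1}\geq 2d_\phi \geq 2^\phi\cdot d_{\min}$.

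The main subtlety, and where the local framing is essential, is to make sure that the pairs of nodes whose distance we need to bound actually lie in the region where the approximation of Property 1 is in force. Here this is immediate from the definitions: $S_{\phi+1,i}\subseteq S_{\phi,i}$ and all pairs at distance at most $\min\{2d_\phi,\R_{1-2\eps}\}\leq \R_{1-\eps}$ are within the $h_\phi$-hop neighborhood of $U_{\phi,i}$ that defines $S_{\phi,i}$, because $h_\phi$ was chosen so that the relevant two-hop interactions used by Lemma~\ref{applem:daum4.3} remain inside $S_{\phi,i}$. Thus the local $(1-\gamma)$-approximation guaranteed by Property 1 suffices to port the purely geometric argument of the global proof in~\cite{daumfull} to our localized setting, completing the induction.
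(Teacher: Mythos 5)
Your proposal is correct and follows essentially the same route as the paper's proof: induction on $\phi$ with base case $d_1\geq d_{\min}$, then using Lemma~\ref{FULL:lem:daum4.3} together with Property 1 (so that the relevant $H_p^\mu[S_\phi]$-edges appear in $\tilde{\tilde{H}}_p^\mu[S_\phi]|_{S_{\phi,i}}$) and the independence of $S_{\phi+1}$ to force $d_{\phi+1}\geq 2d_\phi$. Your explicit attention to the saturation at $\Ra$ and to keeping the argument inside the region where the local approximation holds is slightly more careful than the paper's own write-up, but it is the same argument.
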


\begin{proof}
For completeness and clarity we restate the full proof of~\cite{daumfull} and extend it to our setting. We prove the lemma by induction on $\phi$. By definition of $d_{\min}$, it is $d_1\geq 2^0\cdot d_{\min}=d_{\min}$. By the definition of $\gamma'$-close approximation of $H_p^\mu[S]$ and as we assume that Properties 1 and 2 of Definition~\ref{def:suc} of a successful epoch at point $i$ are satisfied, we can apply Lemma~\ref{lem:daum4.3} and conclude that $H_p^\mu[S_\phi]|_{S_{\phi,i}}$ contains edges between all pairs of nodes $S_{\phi,i}$ at distance $d(u,v)\leq 2\cdot d_\phi$. As $S_{\phi+1,i}$ is $(\phi+1,i)$-locally maximal in $\tilde{\tilde{H}}_p^\mu[S_\phi]|_{S_{\phi,i}}$, nodes in $S_{\phi+1,i}$ are at distance more than $2\cdot d_\phi$. Using the induction hypothesis, it is $d_{\phi+1}>2\cdot d_\phi\geq 2^\phi \cdot d_{\min}$. 
\end{proof}

\begin{lemma}[Version of Lemma 4.5. of~\cite{daumfull}]\label{applem:daumfull4.5} Assume Property 2 of the For all $p\in (0,1/2]$, there is a $\hat{Q}, \gamma=\Theta(1)$, such that for all $Q\geq \hat{Q}$ the following holds. Consider a round $r$ in phase $\phi$ where each node in $S_\phi$ transmits a bcast-message with probability $p/Q$ (Line~\ref{line:1}). Let $i\in N_{\Ga}(S_1)$ and let $u_\phi\in S_\phi\setminus \{v\}$ be the closest node to $v$ in $S_\phi$. Assume Property 1 of Definition~\ref{def:suc} of a successful epoch at point $i$ are satisfied. Let $d_{u_\phi}$ be the distance between $u_\phi$ and its farthest neighbor in $\tilde{\tilde{H}}_p^\mu[S_\phi]$. If $d(u_\phi, v)\leq (1+\eps)\Ra$ and $d_{u_\phi}\geq \gamma Q^{-1/\alpha}\cdot d(u_\phi,v)$, node $v$ receives a bcast-message from $u_\phi$ in round $r$ with probability $\Theta(1/Q)$.
\end{lemma}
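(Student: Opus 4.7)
The plan is to show the lemma essentially as in the proof of Lemma~4.5 of~\cite{daumfull}, but with two extensions: (i) replacing the use of their global $(1-\gamma)$-approximation $\tilde{H}_p^\mu[S_\phi]$ by the local guarantee provided by Property~1 of Definition~\ref{def:suc}, and (ii) absorbing the additional interference contribution from the set $W$ of Definition~\ref{def:W} into the existing slack of the SINR calculation. The overall strategy remains the standard one: condition on $u_\phi$ transmitting in round~$r$ (probability $p/Q$), then show that conditionally on this event the SINR at $v$ meets the threshold $\beta$ with constant probability. Multiplying the two probabilities gives the desired $\Theta(1/Q)$ bound.

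First I would identify the ``close'' neighbors of $u_\phi$: the nodes of $S_\phi$ within distance at most $2d_{u_\phi}$ of $u_\phi$. By Lemma~\ref{lem:daum4.3} applied to $H_p^\mu[S_\phi]$ together with Property~1, every such node is an $\tilde{\tilde{H}}_p^\mu[S_\phi]$-neighbor of $u_\phi$; here the local $(1-\gamma)$-approximation inside $S_{\phi,i}$ is enough because the $2d_{u_\phi}$-ball around $u_\phi$ is contained in the $h_\phi$-hop neighborhood of $U_{\phi,i}$, and hence in $S_{\phi,i}$, by the definition of $h_\phi$ (Definition~\ref{def:h}) and constant degree of $\tilde{\tilde{H}}_p^\mu[S_\phi]$ (see footnote~\ref{foot:3}). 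Therefore there are only $O(1)$ close neighbors, and with probability $(1-p/Q)^{O(1)}=\Theta(1)$ none of them transmits in round $r$.

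Conditioning on this event, I would bound the expected interference at $v$ from the remaining transmitters. Nodes of $S_\phi$ outside the ball of radius $2d_{u_\phi}$ around $u_\phi$ are pairwise at distance at least $d_\phi \geq 2^{\phi-1} d_{\min}$ by Lemma~\ref{applem:daum5}, so a standard annulus decomposition around $v$ (each annulus of width $\Theta(d_{u_\phi})$ contains $O(1)$ nodes since $\tilde{\tilde{H}}_p^\mu[S_\phi]$ is growth bounded) yields
\[
\Ex\!\left[I_{S_\phi\setminus B(u_\phi,2d_{u_\phi})}(v)\right] \;=\; O\!\left(\frac{p}{Q}\cdot\frac{P}{d_{u_\phi}^{\,\alpha}}\cdot\sum_{k\ge 1}k^{1-\alpha}\right) \;=\; O\!\left(\frac{P}{Q\,d_{u_\phi}^{\,\alpha}}\right),
\]
where the $p/Q$ factor comes from each interferer transmitting independently with probability $p/Q$ and convergence of the tail uses $\alpha>2$. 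Using $d_{u_\phi}\ge \gamma Q^{-1/\alpha}d(u_\phi,v)$, the right-hand side is $O\!\left(\gamma^{-\alpha}\cdot P/d(u_\phi,v)^{\alpha}\right)$. Adding the contribution from $W$, which by Lemma~\ref{lem:wronginterference} has expectation $(\epsapprog/\Ratio)^{\Theta(1)}$ and is therefore asymptotically negligible compared to the $\gamma^{-\alpha}\,P/d(u_\phi,v)^\alpha$ term, and adding ambient noise $N\le P/(\beta \R_1^{\alpha})$, the total expected interference is at most $c\gamma^{-\alpha}\cdot P/d(u_\phi,v)^\alpha$ for an absolute constant $c$. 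Choosing $\gamma$ large enough (and hence $\hat Q$ large enough for the hypothesis $d_{u_\phi}\geq \gamma Q^{-1/\alpha}d(u_\phi,v)$ to give the required slack), Markov's inequality then guarantees that, with constant probability, the interference at $v$ is at most $\frac{1}{2\beta}\cdot P/d(u_\phi,v)^\alpha$, so $\mathrm{SINR}_v(u_\phi)\geq \beta$ and $v$ decodes $u_\phi$'s message.

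The main obstacle will be the book-keeping around the interference from $W$: because Lemma~\ref{lem:wronginterference} controls $\Ex[I_W(v)]$ only in expectation rather than deterministically, we must combine it with the Markov argument for the remaining far-field interference so that a single constant-probability event dominates both. This is the only real departure from~\cite{daumfull}, where the corresponding bound was deterministic w.h.p.; here the very strong bound $(\epsapprog/\Ratio)^{\Theta(1)}$ from Lemma~\ref{lem:wronginterference} makes the $W$-contribution a lower-order term, so the same constant-probability conclusion follows. A secondary subtlety is checking that Lemma~\ref{lem:daum4.3}'s hypothesis on the minimum distance in $S_\phi$ is satisfied within the localized region we consider; this is provided by Lemma~\ref{applem:daum5}, whose proof in turn only relies on Properties~1 and~2 of Definition~\ref{def:suc} at point $i$.
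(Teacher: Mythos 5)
There is a genuine gap, and it sits at the heart of the argument: your proposal never uses the $\mu$-reliability of the link between $u_\phi$ and its farthest $\tilde{\tilde{H}}_p^\mu[S_\phi]$-neighbor, which is the only source of interference control available here. The paper's proof (following~\cite{daumfull}) is a \emph{transfer} argument: because that link is reliable with probability $\geq(1-\gamma)\mu$ when nodes of $S_\phi$ transmit with probability $p$, one can deduce (a) that with probability $\geq(1-\gamma)\mu$ at most $2^\alpha/\beta$ nodes of $A=B(u_\phi,2d_{u_\phi})\cap S_\phi$ transmit simultaneously, and (b) by a contradiction/concentration argument, that the far-field interference satisfies $I_{\bar A}(u_\phi)\leq c\,P/(p\beta d_{u_\phi}^\alpha)$, which transfers to $v$ via $I_{S'}(u_\phi)\geq 2^{-\alpha}I_{S'}(v)$ (valid because $u_\phi$ is the closest $S_\phi$-node to $v$). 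Your substitutes for (a) and (b) do not hold. For (a), Lemma~\ref{lem:daum4.3} only guarantees edges of $H_p^\mu[S_\phi]$ up to distance $2d_{\min}$ where $d_{\min}$ is the \emph{minimum pairwise distance} in $S_\phi$; it says nothing about nodes at distance up to $2d_{u_\phi}$, and $d_{u_\phi}$ (distance to the \emph{farthest} neighbor) can greatly exceed $d_{\min}$. Hence $|A|$ is not $O(1)$ by your reasoning — a packing argument only gives $O((d_{u_\phi}/d_\phi)^2)$, an uncontrolled quantity. For (b), growth-boundedness bounds independent sets in \emph{hop}-neighborhoods of $\tilde{\tilde{H}}_p^\mu[S_\phi]$, not the number of $S_\phi$-nodes in a Euclidean annulus of width $\Theta(d_{u_\phi})$; the honest annulus count is again $O(k\,(d_{u_\phi}/d_\phi)^2)$, so your claimed bound $\Ex[I_{\text{far}}(v)]=O\bigl(\tfrac{p}{Q}P/d_{u_\phi}^\alpha\bigr)$ acquires an extra factor $(d_{u_\phi}/d_\phi)^2$ that you cannot absorb. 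Without invoking reliability of the $(u_\phi,u')$ link there is simply no a priori density or interference bound on $S_\phi$, so the proof cannot close.

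A secondary, quantitative slip: you ask that noise plus interference at $v$ be at most $\tfrac{1}{2\beta}P/d(u_\phi,v)^\alpha$, but the ambient noise alone is $N=P/(\beta\R_1^\alpha)$, and since $d(u_\phi,v)$ may be as large as $(1+\eps)\R_{1-2\eps}$, the noise already consumes a $(1-\Theta(\eps))$ fraction of the full budget $\tfrac{1}{\beta}P/d(u_\phi,v)^\alpha$. The interference must fit into the remaining $\Theta(\alpha\eps)$ sliver (this is exactly the $\tfrac{\alpha\rho}{2(1+\rho)^\alpha}$ term in the paper's final display), which is why the choices of $\gamma$ and $\hat Q$ are tied to $\eps$ and $\alpha$ rather than being free constants. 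Your treatment of the set $W$ (absorbing its expected interference, which is $(\epsapprog/\Ratio)^{\Theta(1)}$ by Lemma~\ref{lem:wronginterference}, into the Markov step) is consistent with what the paper does and is not the problem; the missing reliability-transfer step is.
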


\begin{proof}
For completeness and we restate the full proof of~\cite{daumfull} and extend it based on the ideas summarized in the main-body of our paper. 
The lemma states under what conditions in round $r$ of block 2 in phase $\phi$ a node $v \in
N(S)\backslash S$ can receive the message. The roadmap for this proof is to show that if $u$ is able to
communicate with probability $(1-\varepsilon)\mu$ with its farthest neighbor $u'$ in some round $r'$ of block 1 in phase $\phi$, using the broadcast probability $p$, then $u$ must also be able to reach $v$ with probability $\Theta(1/Q)$ in round $r$ of block 2, in which it transmits with probability $p/Q$. We start with some notations and continue with a connection between the interference at $u$ and at $v$. We then analyze the interference at $u$ created in a ball of radius $2d_u$ around $u$, as well as the remaining interference coming from outside that ball. Finally, we transfer all the knowledge we gained for round $r'$ to round $r$ to conclude the proof.

For a node $w \in V$ , let $I_{S_\phi\cup W}(w) = \sum_{x\in S_\phi\cup W} \frac{P}{d(x,w)^\alpha}$, i.e., the amount of interference at node $w$ if all nodes of $S_\phi \cup W$ transmit. For round $r'$, the random variable $X_x^p(w)$ denotes the actual interference at node $w$ coming from a node $x\in S$ (the superscript $^p$ indicates the broadcasting probability of nodes in round $r'$). The total interference at node $w$ is thus $X^p(w) := \sum_{x\in S_\phi\cup W} X_x^p(w)$. If we only want to look at the interference stemming from nodes within a subset $A \subseteq S_\phi$, we use $I_A(w)$ and $X_A^p(w)$ respectively. For round $r$, in which nodes use the broadcasting probability $p/Q$, we use the superscript $^{p/Q}$. Finally, for a set $A\subseteq S_\phi$, we define $\bar{A}:= S_\phi \backslash A$.

For any $w \in S_\phi$, the triangle inequality implies that $d(u,w) \leq d(u,v) + d(v,w) \leq 2d(v,w)$.
By comparing $I_{S'}(u)$ and $I_{S'}(v)$ for an arbitrary set $S' \subseteq S_\phi$ we obtain the following observation:
\begin{equation}
\label{eq:lemma45_1}
I_{S'}(u) \geq 2^{-\alpha}I_{S'}(v).
\end{equation}

Let $u'$ be the farthest neighbor of node $u$ in $\tilde{H}_p^\mu[S_\phi]$. Because $\tilde{H}_p^\mu[S_\phi]$ is an $\varepsilon$-close approximation of $H_p^\mu[S_\phi]$, we know that $\tilde{H}_p^\mu[S_\phi]$ and that this is a subgraph of $H_p^{(1-\varepsilon)\mu}[S_\phi]$. Therefore~\cite{daumfull} now argues that in round $r'$, $u$ receives a message from $u_\phi$ with probability at least $(1-\varepsilon)\mu$. 
In our case, we can only claim that $\tilde{\tilde{H}}_p^\mu[S_\phi]$ is a subgraph of $H_p^{(1-\varepsilon)\mu}[S_\phi]$ in a certain area around $u$. It turns out that this is sufficient, as we argue below. 

Let $A \subseteq S_\phi$ be the set of nodes at distance at most $2d_u$ from $u$. Note that $d(u,u_\phi) = d_u$ and therefore both $u$ and $u_\phi$ are in $A$. In round $r'$, if more than $2^\alpha/\beta = \BO(1)$ nodes $u' \in A$ transmit, then node $u$ cannot receive a message from $u_\phi$. Since node $u$ receives a message from $u_\phi$ with probability at least $(1-\varepsilon)\mu$ in round $r'$, we can conclude that fewer than $2^\alpha/\beta$ nodes transmit with at least the same probability.

We show that this disc of radius $2d_u$ around $u$ is covered by a $\BO(1)$-neighborhood of $u_\phi$ in $\tilde{\tilde{H}}_p^\mu[S_\phi]$. For sake of contradiction assume Lemma~\ref{lem:daum4.5} is not true while  Property 1 of Definition~\ref{def:suc} of a successful epoch at point $i$ is satisfied. Then the communication link between $u_\phi$ and its furthest neighbor in $H_p^\mu[S_\phi]$ could not be $\mu$-reliable, as there are $\omega(1)$ nodes within distance $2d_{u_\phi}$ that are sending with probability $p$ each.
We now bound the interference from nodes outside of $A$. The authors of~\cite{daumfull} prove that $I_{\bar{A}}(u)\leq c\cdot\frac{P}{p\beta d_v^{\alpha}}$, where $c$ is a constant. However, compared to~\cite{daumfull} we need to take interference from nodes $W$ into account, as already pointed out in the proof of Lemma~\ref{lem:daum4.3}, and we modify their proof to derive $I_{\bar{A}\cup W}(u)\leq c'\cdot\frac{P}{p\beta d_v^{\alpha}}$ for some constant $c'$. Using the fact that node $u$ receives a message from node $u_\phi$ with constant probability at least $(1-\varepsilon)\mu$ allows us to upper bound $I_{\bar{A}\cup W}(u)$ and by~(\ref{eq:lemma45_1}) also $I_{\bar{A}\cup W}(v)$. For node $u$ to be able to receive a message from $u_\phi$, two things must hold:\\
(a) $\frac{P}{d_u^\alpha(N + X_{\bar{A}\cup W}^p(u))} \geq \frac{P}{d_u^\alpha(N + X^p(u))} \geq \beta$ and\\
(b) $u_\phi$ transmits and $u$ listens (event $R^{u,u_\phi})$.\\

Due to Lemma~\ref{lem:wronginterference} we know that $I_{W}(u)\leq \left(\frac{\epsapprog}{\Ratio}\right)^{\Theta(1)}$. This implies, that we can transform (a) to $\frac{P}{d_u^\alpha(N + X_{\bar{A}}^p(u))} \leq c'\cdot\frac{P}{p\beta d_v^{\alpha}}$ for some $c'$, when we choose the exponent hidden in $\Theta$-notation to match the choice of constant $c'$. Thus we have for $X_{\bar{A}}^p(u)$ that
\begin{equation}
\label{eq:lemma45_2}
(1-\varepsilon)\mu \leq \mathbb{P}R^{u,u_\phi} \cdot \mathbb{P} \left(X_{\bar{A}}^p(u) \leq \frac{P}{\beta d_u^\alpha}-N\right) \leq p(1-p)\cdot \mathbb{P} \left(X_{\bar{A}}^p(u) \leq \frac{P}{\beta d_u^\alpha}\right).
\end{equation}
Using Lemma B.1, we can therefore bound $X_{\bar{A}}^p(u)$ as
\begin{equation}
\label{eq:lemma45_3}
\mathbb{P}\left(X_{\bar{A}}^p(u) \leq \frac{\mathbb{E}[X_{\bar{A}}^p(u)]}{2}\right) = \mathbb{P}\left(X_{\bar{A}}^p(u) \leq \frac{pI_{\bar{A}}(u)}{2}\right) \leq e^{-\frac{p2^\alpha d_u^\alpha}{8p}\cdot I_{\bar{A}}(u)}.
\end{equation}
For the sake of contradiction, assume that $I_{\bar{A}}(u) > c \cdot \frac{P}{p\beta d_u^\alpha}$ for $c = \text{max}\{2, \frac{16\beta}{2^\alpha} \cdot \ln\frac{p(1-p)}{(1-\varepsilon)\mu}\}$. Combining~(\ref{eq:lemma45_2}) and~(\ref{eq:lemma45_3}), we obtain
\begin{equation*}
\frac{(1-\varepsilon)\mu}{p(1-p)}\overset{(2)}{\leq} \mathbb{P} \left(X_{\bar{A}}^p(u) \leq \frac{P}{\beta d_u^\alpha}\right) \leq \mathbb{P} \left(X_{\bar{A}}^p(u) \leq \frac{cP}{2\beta d_u^\alpha}\right)< \mathbb{P}\left(X_{\bar{A}}^p(u) \leq \frac{pI_{\bar{A}}(u)}{2}\right) \leq e^{-\frac{2^\alpha c}{16\beta}},
\end{equation*}
which is a contradiction to the definition of $c$. We therefore have $I_{\bar{A}}(u) \leq c\cdot \frac{P}{p\beta d_u^\alpha}$ and $I_{\bar{A}\cup W}(u) \leq c'\cdot \frac{P}{p\beta d_u^\alpha}$.

We now have all tools to show that $v$ receives a message from $u$ in round $r$, with broadcasting probabilities $p/Q$. From the fact that the link $\{u,u_\phi\}\in E[\tilde{H}_p^\mu[S_\phi]]$ is reliable, we have seen that with probability at least $(1-\varepsilon)\mu$ fewer than $\frac{2^\alpha}{\beta}$ nodes in $A$ send in round $r'$. But then in round $r$ with the same probability no more than $\frac{2^\alpha}{\beta Q}$ send within $A$. Markov's inequality shows that $\mathbb{P}\left(X_{\bar{A}\cup W}^{p/Q}(v) < 2\frac{p}{Q}I_{\bar{A}\cup W}(v)\right) \geq 1/2$. Finally, $u$ sends with probability $p/Q$. All those events are independent, thus all of them happen with probability at least $\frac{(1-\varepsilon)\mu p}{2Q} = \Theta(1/Q)$. Let us assume that this is the case. To see that $v$ indeed gets $u$'s message under those conditions, we check whether $SINR(u,v,I) = \frac{Pd(u,v)^{-\alpha}}{N + X_{\bar{A}\cup W}^{p/Q}(v) +  X_{A}^{p/Q}} \geq \beta$:

$$\beta d(u,v)^\alpha(N + X_{\bar{A}\cup W}^{p/Q}(v) +  X_{A}^{p/Q})$$
\begin{eqnarray*}
&\overset{(*)}{\leq} & 
\beta d(u,v)^\alpha N + 2^{\alpha+1}c_\beta P\frac{d(u,v)^\alpha}{d_u^\alpha} + \beta \sum\limits_{w\in A,w\text{ sends}} P\frac{d(u,v)^\alpha}{Qd(w,v)^\alpha}
\\
&\overset{d(u,v)^\alpha \leq Q\frac{d_u^\alpha}{\gamma^\alpha}}{\leq} &
\left(1+ \frac{\rho}{2}\right)^\alpha r_s^\alpha N \beta + \frac{2^{\alpha+1}c_\beta}{\gamma^\alpha}P + \frac{2^\alpha}{Q}P
\\
&\overset{(1+\rho)^\alpha \geq \left(1+\frac{\rho}{2}\right)^\alpha + \alpha\frac{\rho}{2}}{\leq} &
\left(1-\frac{\alpha\rho}{2(1+\rho)^\alpha}\right)(1+\rho)^\alpha r_s^\alpha N\beta + \frac{2^{\alpha+1}c_\beta}{\gamma^\alpha}P + \frac{2^\alpha}{\hat{Q}}P
\\
&\overset{P = N\beta(1+\rho)^\alpha r_s^\alpha}{\leq} &
P + P \left(\frac{2^{\alpha+1}c_\beta}{\gamma^\alpha} + \frac{2^\alpha}{\hat{Q}} - \frac{\alpha\rho}{2(1+\rho)^\alpha}\right) 
\\
&\overset{(**)}{\leq}& 
P
\end{eqnarray*}

Inequality $(*)$ holds due to the assumption that $X_{\bar{A}\cup W}^{p/Q}(v) < 2I_{\bar{A}\cup W}(v)p/Q$ and \eqref{eq:lemma45_2}. Inequality $(**)$ holds for properly chosen $\gamma = \Theta(1)$ and $\hat{Q} = \Theta(2^\alpha) = \BO(\log^\alpha_{\max}R_s)$.
\end{proof} 

\begin{lemma}[Version of Lemma 4.6. of~\cite{daumfull}]\label{applem:daumfull4.6} Assume Property 2 of the Definition~\ref{def:suc} of a successful epoch is satisfied. With probability $1-\epsapprog/3$, either $u_\phi$'s bcast-message reaches $i$ in phase $\phi$, or $d(u_{\phi+1}, i) \leq \Ree\left(1+\phi\frac{\eps}{\log \Ratio}\right)$.
\end{lemma}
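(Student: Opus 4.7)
The plan is to follow the skeleton of Lemma~4.6 of~\cite{daumfull} and structure the proof as a dichotomy based on whether Lemma~\ref{applem:daumfull4.5} applies at $u_\phi$. Concretely, fix the threshold $\gamma Q^{-1/\alpha}$ used in that lemma and split on the value of $d_{u_\phi}$ — the distance between $u_\phi$ and its farthest $\tilde{\tilde{H}}_p^\mu[S_\phi]$-neighbor. I will proceed inductively on $\phi$; for $\phi=1$ the base case $d(u_1,i)\leq\Ree$ holds since $i\in N_\Ga(S_1)\subseteq N_\Ge(S_1)$, and for the inductive step we may assume $d(u_\phi,i)\leq\Ree(1+(\phi-1)\eps/\log\Ratio)\leq(1+\eps)\Ree$, which keeps us within the hypothesis window of Lemma~\ref{applem:daumfull4.5}.

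In the first case, $d_{u_\phi}\geq \gamma Q^{-1/\alpha}\,d(u_\phi,i)$, I would invoke Lemma~\ref{applem:daumfull4.5}: in each of the $\BO(Q\log(1/\epsapprog))$ rounds of phase $\phi$ (see Lines~\ref{line:22}--\ref{line:23} of Algorithm~\ref{alg:AdHoc2}), node $i$ receives $u_\phi$'s bcast-message with probability $\Theta(1/Q)$. Independence across rounds then bounds the probability that $i$ fails to receive $u_\phi$'s message by $(1-\Theta(1/Q))^{\BO(Q\log(1/\epsapprog))}\leq\epsapprog^{\Theta(1)}$, which can be forced below $\epsapprog/3$ (indeed below $\epsapprog/(3\Phi)$, which is what the outer union bound in Lemma~\ref{lem:3} will require) by choosing the hidden constant in the repetition count sufficiently large.

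In the second case, $d_{u_\phi}<\gamma Q^{-1/\alpha}\,d(u_\phi,i)$, the message need not arrive but I can bound $d(u_{\phi+1},i)$ directly via the local maximality guaranteed by Property~2 of Definition~\ref{def:suc}. Since $S_{\phi+1}$ is $(\phi,i)$-locally maximal in $\tilde{\tilde{H}}_p^\mu[S_\phi]$, either $u_\phi\in S_{\phi+1}$ — in which case $u_{\phi+1}=u_\phi$ and there is nothing to prove — or there exists $u'\in S_{\phi+1}\cap N_{\tilde{\tilde{H}}_p^\mu[S_\phi]}(u_\phi)$, giving $d(u_\phi,u')\leq d_{u_\phi}$. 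The triangle inequality then yields
\[
d(u_{\phi+1},i)\;\leq\;d(u_\phi,i)+d_{u_\phi}\;\leq\;d(u_\phi,i)+\gamma Q^{-1/\alpha}\,d(u_\phi,i),
\]
and the choice $Q=\Theta(\log^{\alpha}\Ratio)$ (from Line~\ref{line:1}) makes $2\gamma Q^{-1/\alpha}\leq\eps/\log\Ratio$ for the appropriate hidden constant. Combined with the inductive hypothesis $d(u_\phi,i)\leq\Ree(1+(\phi-1)\eps/\log\Ratio)\leq 2\Ree$, this upgrades the multiplicative step to the additive increment $d(u_{\phi+1},i)\leq d(u_\phi,i)+\Ree\cdot\eps/\log\Ratio\leq\Ree(1+\phi\,\eps/\log\Ratio)$.

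The main obstacles are two bookkeeping items rather than any new ideas. First, Lemma~\ref{applem:daumfull4.5} only gives its $\Theta(1/Q)$ bound under $d(u_\phi,i)\leq(1+\eps)\Ree$, so the induction must be carried jointly with the distance bound — losing control of $d(u_\phi,i)$ at any intermediate phase would invalidate the Case~1 analysis for later phases; this is the reason the lemma is stated as a phase-by-phase disjunction rather than as a global claim. Second, the probability budget $\epsapprog/3$ quoted here must actually be split across phases when Lemma~\ref{lem:3} union-bounds over all $\phi\in\{1,\dots,\Phi\}$, so Case~1 must deliver failure probability at most $\epsapprog/(3\Phi)$; this just tunes the constant hidden inside the $\BO(Q\log(1/\epsapprog))$ round count and does not change the asymptotic runtime of a phase bounded in Lemma~\ref{lem:4}.
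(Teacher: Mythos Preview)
Your proposal is correct and follows essentially the same route as the paper: an induction on $\phi$, the same dichotomy on $d_{u_\phi}$ versus $\gamma Q^{-1/\alpha}d(u_\phi,i)$, Lemma~\ref{applem:daumfull4.5} plus the $\BO(Q\log(1/\epsapprog))$ repetitions for Case~1, and local maximality together with the triangle inequality and the choice $Q=\Theta(\log^\alpha\Ratio)$ for Case~2. One small remark: your second ``obstacle'' about needing $\epsapprog/(3\Phi)$ per phase is more caution than the paper exercises---in Lemma~\ref{lem:3} the paper applies this lemma at a single phase (the one forced by the minimum-distance growth of Lemma~\ref{lem:daum5}) rather than union-bounding over all phases, so $\epsapprog/3$ suffices as stated; your stronger per-phase bound is of course also fine and costs nothing asymptotically.
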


\begin{proof}
For completeness and clarity we restate the full proof of~\cite{daumfull} and extend it to our setting. 
Clearly, $d(u_1, i) \leq \Ra$.
Let $\phi$ be any phase. If $d_{u_\phi}\geq Q^{-1/\alpha}d(u_\phi, i)$, then we can apply Lemma~\ref{lem:daum4.5} and we are done, because $u_\phi$ sends for $\BO(Q \log (1/\epsapprog))$ rounds in Line~\ref{line:1}. To see this, we choose the constant hidden in the $\BO$-notation large enough, and derive that with probability
$1-(1-(1/\Theta(Q))^{\BO(Q\cdot \log(1/\epsapprog))}\geq 1-e^{-\log(3/\epsapprog)} =1-\epsapprog/3$ the bcast-message sent by $u_\phi$ reaches $i$ during the execution of Lines~\ref{line:22}--\ref{line:23}. Now let this not be the case and let $u_{\phi+1}$ be the closest neighbor to $i$ in $S_{\phi+1}$. Due to the assumption that Property 2 of Definition~\ref{def:suc} of a successful epoch at point $i$ is satisfied, $S_{\phi+1}$ is $(\phi,i)$-locally maximal in $\tilde{\tilde{H}}_p^\mu[S_{\phi}]$. Using this maximality property of our construction, it is $d(u_{\phi+1}, i)\leq  d(u_\phi, i) + d_{u_\phi}$, and therefore

\begin{eqnarray*}
d(u_{\phi+1}, i)
&\leq&
\left(1 + \frac{\gamma'}{Q^{1/\alpha}}\right)d(u_\phi,i)\leq \Ra\left(1+\phi\frac{\eps}{\log \Ratio}+\frac{2\gamma'}{Q^{1/\alpha}}\right)
\\
&\leq&
\Ra\left(1+\frac{(\phi+1)\eps}{\log \Ratio}\right).
\end{eqnarray*}

The last inequality holds for properly chosen $Q = \Theta\left(\log^{\alpha} \Ra\right)$, $Q\geq\hat{Q}$, proving the Lemma.
\end{proof}

\end{document}